\DeclareMathAlphabet{\mathbbm}{U}{bbm}{m}{n}
\DeclareMathAlphabet{\mathbsf}{\encodingdefault}{\sfdefault}{bx}{n}
\newcommand{\LangName}[1]{\textsf{#1}}
\newcommand{\Met}{\textsf{\textsc{Met}}\xspace}
\newcommand{\SurfaceMet}{\textsf{\textsc{Metl}}\xspace}
\newcommand{\Metl}{\textsf{\textsc{Metl}}\xspace}
\newcommand{\CalcM}{\Met}
\newcommand{\Leff}{\ensuremath{\mathrm{F}^1_{\mathrm{eff}}}\xspace}
\newcommand{\OCaml}{\LangName{OCaml}\xspace}
\newcommand{\Scala}{\LangName{Scala}\xspace}
\newcommand{\Links}{\LangName{Links}\xspace}
\newcommand{\Eff}{\LangName{Eff}\xspace}
\newcommand{\Koka}{\LangName{Koka}\xspace}
\newcommand{\Effekt}{\LangName{Effekt}\xspace}
\newcommand{\CaptureCalculus}{\ensuremath{\LangName{CC}_{<:\BoxSym}}\xspace}
\newcommand{\Frank}{\LangName{Frank}\xspace}
\newcommand{\Helium}{\LangName{Helium}\xspace}
\definecolor{myred}{HTML}{BB0000}
\definecolor{myblue}{HTML}{003399}
\definecolor{dred}{HTML}{EB212E}
\definecolor{dblue}{HTML}{2E67F8}
\newcommand{\red}[1]{{\color{myred}{#1}}}
\newcommand{\dblue}[1]{{\color{dblue}{#1}}}
\newcommand{\dred}[1]{{\color{dred}{#1}}}
\newcommand{\gray}[1]{{\color{codegray}{#1}}}
\newcommand{\black}[1]{{\color{black}{#1}}}
\colorlet{hlcolor}{lightgray}
\newcommand{\highlightwithstyle}[2]{
  {\setlength{\fboxsep}{2pt} \pgfsetfillopacity{0.4} \colorbox{hlcolor}{\pgfsetfillopacity{1}$#1#2$}}
}
\newcommand{\hl}[1]{\mathpalette\highlightwithstyle{#1}}
\newcommand{\sigb}[2]{
  \@ifmtarg{#1}{:}{:^{#1}} #2
}
\newcommand{\ogeneric}[2][0.7]{%
  \vphantom{{\oplus}}\mathpalette\o@generic{{#1}{#2}}%
}
\newcommand{\o@generic}[2]{\o@@generic#1#2}
\newcommand{\o@@generic}[3]{%
  \begingroup
  \sbox\z@{$\m@th#1\oplus$}%
  \dimen@=\dimexpr\ht\z@+\dp\z@\relax
  \savebox\tw@[\totalheight]{$\m@th#1\bigcirc$}%
  \makebox[\wd\z@]{%
    \ooalign{%
      $#1\vcenter{\hbox{\resizebox{\dimen@}{!}{\usebox\tw@}}}$\cr
      \hidewidth
      $#1\vcenter{\hbox{\resizebox{#2\dimen@}{!}{$#1\vphantom{\oplus}{#3}$}}}$%
      \hidewidth
      \cr
    }%
  }%
  \endgroup
}
\newcommand{\meta}[1]{\mathsf{#1}}
\newcommand{\all}{\spadesuit}
\newcommand{\one}{\aid}
\let\BoxSym\Box
\newcommand{\squareop}[1]{%
  {\mathrel{\ooalign{\hss\raise-0.1ex\hbox{\scalebox{1.1}{$\BoxSym$}}\hss\cr%
  \kern0.3ex\raise0.12ex\hbox{\scalebox{0.65}{$#1$}}}}}
}
\DeclareSymbolFont{arrows3}{LS2}{stixtt}{m}{n}
\DeclareMathSymbol{\squarelrblackbin}{\mathord}{arrows3}{"89}
\DeclareSymbolFont{symbols2}{LS1}{stixfrak}{m}{n}
\DeclareMathSymbol{\typecolon}{\mathbin}{symbols2}{"25}
\newcommand{\Scale}[2][1]{\scalebox{#1}{$\m@th#2$}}
\newcommand{\boxwith}[1]{{\dblue{#1}}}
\newcommand{\mind}[2]{#1_{#2}}
\newcommand{\updlock}[2]{{\llparenthesis #1 \rrparenthesis}_{#2}}
\newcommand{\lock}{\text{\faLock}}
\newcommand{\mlock}{\lock}
\newcommand{\lockwith}[1]{\mlock_{\dblue{#1}}}
\newcommand{\varb}[2]{
  \@ifmtarg{#1}{:}{:_{\dblue{#1}}} #2
}
\newcommand{\Letm}[2]{
  \@ifmtarg{#1}{\keyw{let}}{\keyw{let}_{#1}} \; \keyw{mod}_{#2}\;
}
\newcommand{\Letmf}[2]{
  \@ifmtarg{#1}{\keyw{let}}{\keyw{let}_{#1}} \; #2\;
}
\newcommand{\Casem}[1]{
  \@ifmtarg{#1}{\keyw{case}}{\keyw{case}_{#1}} \;
}
\renewcommand{\Box}{\keyw{mod}}
\newcommand{\amk}[1]{{\langle {#1}{\mkern 1mu \mid\!}\rangle}}
\newcommand{\aex}[1]{{\langle {\!\mid\mkern 1mu} #1\rangle}}
\newcommand{\adj}[2]{{\langle {#1} {\mkern 1mu \mid\mkern 1mu} #2 \rangle}}
\newcommand{\effrow}[2]{{\{ {#1} {\mkern 1mu \mid\mkern 1mu} #2 \}}}
\newcommand{\geffrow}[2]{\gray{\{ \black{#1} {\mkern 1mu \mid\mkern 1mu} #2 \}}}
\newcommand{\eapp}{\#}
\newcommand{\aid}{{\langle\rangle}}
\newcommand{\aeq}[1]{[#1]}
\newcommand{\aconst}[1]{\red{dontuseme}}
\newcommand{\aremove}[1]{\red{dontuseme}}
\newcommand{\act}[2]{#1(#2)}
\newcommand{\superimpose}[2]{{%
  \ooalign{%
    \hfil$\m@th#1\@firstoftwo#2$\hfil\cr
    \hfil$\m@th#1\@secondoftwo#2$\hfil\cr
  }%
}}
\newcommand{\Mod}{\keyw{mod}}
\newcommand{\Letmod}{\keyw{let\; mod}}
\newcommand{\locks}[1]{\meta{locks}(#1)}
\newcommand{\To}{\Rightarrow}
\newcommand{\typm}[3]{#1 \vdash #2 \;\gray{@}\, \dblue{#3}}
\newcommand{\atmode}[1]{\;\gray{@}\, \dblue{#1}}
\newcommand{\oset}[3][0ex]{%
  \mathrel{\mathop{#3}\limits^{
    \vbox to#1{\kern-2\ex@
    \hbox{$\scriptstyle#2$}\vss}}}}
\newcommand{\typmi}[5]{#1 \vdash #2 \mathrel{\gray{\Rightarrow}} #3 \;\gray{@}\, \dblue{#4}}
\newcommand{\typmc}[4]{#1 \vdash #2 \mathrel{\gray{\Leftarrow}}    #3 \;\gray{@}\, \dblue{#4}}
\newcommand{\typmp}[5]{#1 \vdash #2 \mathrel{\gray{\Leftarrow}}
  \dred{{\renewcommand{\color}[1]{} #5}} \mathrel{\gray{\Rightarrow}} #3 \;\gray{@}\, \dblue{#4}}
\newcommand{\free}[3]{#1{-}\meta{free}(#2,#3)}
\newcommand{\marker}[2]{\gray{\vardiamondsuit^{#1}_{#2}}}
\newcommand{\typl}[3]{#1 \vdash #2 \,!\, #3}
\newcommand{\encode}[3]{#1 \,!\, #2 \transto{#3}}
\newcommand{\Yield}{{\codesize\texttt{yield}}}
\newcommand{\topmod}[1]{\meta{topmod}(#1)}
\newcommand{\transto}[1]{\ \dashrightarrow #1}
\newcommand{\fortrans}[1]{#1}
\newcommand{\stransl}[2]{\ensuremath{\llbracket #1 \rrbracket}_{#2}}
\newcommand{\across}[4]{\meta{across}(#1;#2;#3;#4)}
\newcommand{\rebox}[3]{\meta{rebox}(#1;#2;#3)}
\newcommand{\inc}[3]{
  \@ifmtarg{#1}{}{#1 \mathrel{\typecolon}}
  #2 \sqsubseteq #3
}
\newcommand{\Effect}{\meta{Effect}}
\newcommand{\Pure}{\meta{Abs}}
\newcommand{\Any}{\meta{Any}}
\newcommand{\Abs}{-}
\newcommand{\earr}[3]{#1 \to^{#3} #2}
\newcommand{\elambda}[1]{\lambda^{#1}}
\DeclareRobustCommand{\Circle}{%
  \mathbin{\mathpalette\on@ntimes\relax}%
}
\newcommand{\on@ntimes}[2]{%
  \vcenter{\hbox{%
    \sbox0{\m@th$#1\otimes$}%
    \setlength\unitlength{\wd0}%
    \begin{picture}(1,1)
    \linethickness{0.35pt}
    \put(.5,.5){\circle{.8}}
    \end{picture}%
  }}%
}
\newcommand{\txmark}{\text{\textcolor{black}{\ding{51}}{\textcolor{black}{\kern-0.7em\ding{55}}}}}
\newcommand{\subtype}{\leqslant}
\newcommand{\ftv}[1]{\meta{ftv}(#1)}
\newcommand{\fv}[1]{\meta{fv}(#1)}
\newcommand{\dom}[1]{\meta{dom}(#1)}
\newcommand{\ol}[1]{\overline{#1}}
\newcommand{\evar}{\varepsilon}
\newcommand{\eminus}[2]{#1\backslash #2}
\newcommand{\Fork}{\keyw{fork}}
\NewDocumentCommand{\ForkC}{ m m O{dual} }{\Fork^{#2}_{#1}} %
\newcommand{\refa}[1]{{\color{red}    \renewcommand{\color}[1]{}{#1}\ifthenelse{\equal{#1}{}}{}{\,}(1)}}
\newcommand{\refb}[1]{{\color{blue}   \renewcommand{\color}[1]{}{#1}\ifthenelse{\equal{#1}{}}{}{\,}(2)}}
\newcommand{\refc}[1]{{\color{violet} \renewcommand{\color}[1]{}{#1}\ifthenelse{\equal{#1}{}}{}{\,}(3)}}
\newcommand{\refd}[1]{{\color{purple} \renewcommand{\color}[1]{}{#1}\ifthenelse{\equal{#1}{}}{}{\,}(4)}}
\newcommand{\refe}[1]{{\color{cyan}   \renewcommand{\color}[1]{}{#1}\ifthenelse{\equal{#1}{}}{}{\,}(5)}}
\newcommand{\reff}[1]{{\color{magenta}\renewcommand{\color}[1]{}{#1}\ifthenelse{\equal{#1}{}}{}{\,}(6)}}
\newcommand{\refg}[1]{{\color{brown}  \renewcommand{\color}[1]{}{#1}\ifthenelse{\equal{#1}{}}{}{\,}(7)}}
\newcommand{\refh}[1]{{\color{orange} \renewcommand{\color}[1]{}{#1}\ifthenelse{\equal{#1}{}}{}{\,}(8)}}
\newcommand{\notsmall}{}
\newcommand{\slab}[1]{\textrm{#1}}
\newcommand{\semlab}[1]{\text{\scshape{E-#1}}}
\newcommand{\tylab}[1]{\text{\scshape{T-#1}}}
\newcommand{\btylab}[1]{\text{\scshape{B-#1}}}
\newcommand{\mtylab}[1]{\text{\scshape{MT-#1}}}
\newcommand{\rowlab}[1]{\text{\scshape{R-#1}}}
\newcommand{\var}[1]{\mathit{#1}}
\newcommand{\keyw}[1]{{{\mathbsf{#1}}}}
\newcommand{\Handle}{\keyw{handle}}
\newcommand{\Adapt}{\keyw{mask}}
\newcommand{\Mask}{\keyw{mask}}
\newcommand{\With}{\;\keyw{with}\;}
\newcommand{\In}{\;\keyw{in}\;}
\newcommand{\Do}{\keyw{do}\;}
\newcommand{\Ret}{\keyw{return}\;}
\renewcommand{\Case}{\keyw{case}\;}
\newcommand{\Casey}{\keyw{case}}
\newcommand{\Of}{\;\keyw{of}\;}
\newcommand{\Inl}{\keyw{inl}\;}
\newcommand{\Inr}{\keyw{inr}\;}
\newcommand{\Pair}[2]{#1 \ast #2}
\DeclareRobustCommand{\circbullet}{\mathbin{\vphantom{\circ}\text{\circbullet@}}}
\newcommand{\circbullet@}{%
  \check@mathfonts
  \m@th\ooalign{%
    \clipbox{0 0 0 {\dimexpr\height-\fontdimen22\textfont2}}{$\bullet$}\cr
    $\circ$\cr
  }%
}
\DeclareRobustCommand{\bulletcirc}{\mathbin{\text{\bulletcirc@}}}
\newcommand{\bulletcirc@}{%
  \check@mathfonts
  \m@th\ooalign{%
    \raisebox{\fontdimen22\textfont2}{\clipbox{0 {\fontdimen22\textfont2} 0 0}{$\bullet$}}\cr
    $\circ$\cr
  }%
}
\newcommand{\Unit}{{()}}
\newcommand{\code}[1]{{\codesize\texttt{#1}}}
\newcommand{\codeatmode}[1]{{\codesize\gray{\texttt{@}}\; {\texttt{#1}}}}
\newcommand{\TUnit}{1}
\newcommand{\Int}{\code{Int}}
\newcommand{\sto}{\twoheadrightarrow}
\newcommand{\rresidual}[2]{#2\backslash #1}
\newcommand{\join}{\vee}
\newcommand{\toSet}[1]{\lfloor #1\rfloor}
\newcommand{\reducesto}{\leadsto}
\newcommand{\ba}{\begin{array}}
\newcommand{\ea}{\end{array}}
\newcommand{\bl}{\ba[t]{@{}l@{}}}
\newcommand{\el}{\ea}
\renewenvironment{displaymath}{\notsmall\[}{\]\normalsize\ignorespacesafterend}
\newenvironment{syntax}{\begin{displaymath}\ba{@{}l@{\quad}r@{~}c@{~}l@{}}}{\ea\end{displaymath}\ignorespacesafterend}
\newenvironment{reductions}{\begin{displaymath}\ba{@{}l@{\quad}@{}r@{~}c@{~}l@{}}}{\ea\end{displaymath}\ignorespacesafterend}
\newcommand{\EC}{\mathcal{E}}
\definecolor{codegreen}{rgb}{0,0.6,0}
\definecolor{codeblue}{rgb}{0,0,0.8}
\definecolor{codegray}{rgb}{0.4,0.4,0.4}
\definecolor{codepurple}{rgb}{0.58,0,0.82}
\definecolor{codeone}{HTML}{9B26B6}
\definecolor{codetwo}{HTML}{414C87}
\definecolor{backcolour}{rgb}{0.95,0.95,0.92}
\newcommand{\codesize}{\fontsize{8.7}{10.4}}
\newcommand{\rulesize}{\fontsize{9.3}{11.16}}
\lstdefinestyle{ran}{
    commentstyle=\color{codegray},
    numberstyle=\tiny,
    stringstyle=\ttfamily,
    basicstyle=\codesize\selectfont\ttfamily,
    breakatwhitespace=false,
    breaklines=true,
    captionpos=b,
    keepspaces=true,
    numbers=none,
    numbersep=5pt,
    showspaces=false,
    showstringspaces=false,
    showtabs=false,
    tabsize=2,
    columns=fullflexible,
    aboveskip=.9\medskipamount,
    belowskip=.9\medskipamount,
    keywords=[1]{
      do, mask, handle, with, let, in, resume, return, eff, type,
      fst, snd, if, then, else, case, of, fun, raise, maska, data
    },
    keywordstyle=[1]\bfseries,
    keywords=[2]{
      Int, List, Maybe, String, Bool, Pure, Any, Proc, Abs,
      true, false, nil, cons, just, nothing, proc
    },
    keywordstyle=[2]\color{codegreen},
    keywords=[3]{
      choose, fail, get, put, ufork, fork, yield, suspend, abort, log, throw, ask, foo, leak,
      bar, baz, Yield, Gen, State, Fork, Coop, Queue, UCoop
    },
    keywordstyle=[3]\color{dblue},
    morecomment = [l]{\#},
    literate={
      {->}{{$\to$}}{2}
      {=>}{{$\Rightarrow$}}{2}
      {>>}{{$\sto$}}{2}
      {~>}{{$\mapsto$}}{2}
      {\#>}{{$\sharparrow$}}{2}
      {<=}{{$\subtype$}}{1}
      {|-}{{$\vdash$}}{1}
      {|/-}{{$\nvdash$}}{1}
      {@}{{$\gray{\texttt{@}}$}}{1}
      {forall}{{$\forall$}}{1}
      {Unit}{{{\color{codegreen}1}}}{1}
      {Gamma}{{{$\Gamma$}}}{1}
      {earr}{{$\xrightarrow{\texttt{e}}$}}{2}
      {earrg}{{$\xrightarrow{\texttt{\leff{Gen} \ltype{Int}, e}}$}}{2}
    },
    escapeinside={<@}{@>}
}
\newcommand{\sharparrow}{\mathrel{\mkern3mu\raisebox{-.1ex}{\scalebox{1}[1]{\#}}\mkern-17mu\To}}
\newcommand{\leff}[1]{{\color{dblue}#1}}
\newcommand{\ltype}[1]{{\color{codegreen}#1}}
\lstdefinestyle{koka}{
    commentstyle=\color{codegray},
    numberstyle=\tiny,
    stringstyle=\ttfamily\small,
    basicstyle=\codesize\selectfont\ttfamily,
    breakatwhitespace=false,
    breaklines=true,
    captionpos=b,
    keepspaces=true,
    numbers=none,
    numbersep=5pt,
    showspaces=false,
    showstringspaces=false,
    showtabs=false,
    tabsize=2,
    columns=fullflexible,
    aboveskip=.9\medskipamount,
    belowskip=.9\medskipamount,
    keywords=[1]{
      do, mask, handle, with, let, in, resume, return, effect,
      fst, snd, if, then, else, case, of,
    },
    keywordstyle=[1]\bfseries,
    keywords=[2]{
      List, Int, Unit, int, Tuple2
    },
    keywordstyle=[2]\color{codegreen},
    keywords=[3]{
      choose, fail, get, put, ufork, fork, yield, suspend, abort, log, throw, ask, leak,
      bar, baz, Yield, Gen, State, Fork, Coop, gen
    },
    keywordstyle=[3]\color{dblue},
    morecomment = [l]{\#},
    literate={
      {=>}{{$\Rightarrow$}}{2}
      {->}{{$\to$}}{2}
    }
}
\begin{document}

\title{Modal Effect Types}   %

\author{Wenhao Tang}
\orcid{0009-0000-6589-3821}
\email{wenhao.tang@ed.ac.uk}
\affiliation{%
  \institution{The University of Edinburgh}
  \country{United Kingdom}}

\author{Leo White}
\orcid{0009-0003-7046-3035}
\email{lwhite@janestreet.com}
\affiliation{%
  \institution{Jane Street}
  \country{United Kingdom}
}

\author{Stephen Dolan}
\orcid{0000-0002-4609-9101}
\email{sdolan@janestreet.com}
\affiliation{%
  \institution{Jane Street}
  \country{United Kingdom}
}

\author{Daniel Hillerström}
\orcid{0000-0003-4730-9315}
\email{daniel.hillerstrom@ed.ac.uk}
\affiliation{%
  \institution{The University of Edinburgh}
  \country{United Kingdom}}

\author{Sam Lindley}
\orcid{0000-0002-1360-4714}
\email{sam.lindley@ed.ac.uk}
\affiliation{%
  \institution{The University of Edinburgh}
  \country{United Kingdom}}

\author{Anton Lorenzen}
\orcid{0000-0003-3538-9688}
\email{anton.lorenzen@ed.ac.uk}
\affiliation{%
  \institution{The University of Edinburgh}
  \country{United Kingdom}}

\begin{abstract}
  Effect handlers are a powerful abstraction for defining,
  customising, and composing computational effects.
  Statically ensuring that all effect operations are handled requires
  some form of effect system, but using a traditional effect system
  would require adding extensive effect annotations to the millions of
  lines of existing code in these languages.
  Recent proposals seek to address this problem by removing the need
  for explicit effect polymorphism. However, they typically rely on
  fragile syntactic mechanisms or on introducing a separate notion of
  second-class function.
  We introduce a novel approach based on modal effect types.

\end{abstract}

\begin{CCSXML}
<ccs2012>
   <concept>
       <concept_id>10003752.10010124.10010125.10010130</concept_id>
       <concept_desc>Theory of computation~Type structures</concept_desc>
       <concept_significance>500</concept_significance>
       </concept>
   <concept>
       <concept_id>10003752.10003790.10011740</concept_id>
       <concept_desc>Theory of computation~Type theory</concept_desc>
       <concept_significance>500</concept_significance>
       </concept>
   <concept>
       <concept_id>10003752.10010124.10010125.10010126</concept_id>
       <concept_desc>Theory of computation~Control primitives</concept_desc>
       <concept_significance>500</concept_significance>
       </concept>
 </ccs2012>
\end{CCSXML}

\ccsdesc[500]{Theory of computation~Type structures}
\ccsdesc[500]{Theory of computation~Type theory}
\ccsdesc[500]{Theory of computation~Control primitives}

\keywords{effect handlers, effect types, modal types, multimodal type theory}

\maketitle

\section{Introduction}
\label{sec:introduction}

Effect handlers~\cite{PlotkinP13} allow programmers to define,
customise, and compose a range of computational effects including
concurrency, exceptions, state, backtracking, and probability, in
direct-style inside the programming language.
Following their pioneering use in languages such
as \Eff~\citep{BauerP15},
\Effekt~\citep{BrachthauserSO20,BrachthauserSLB22},
\Frank~\citep{frank,ConventLMM20}, \Koka~\citep{koka}, and
\Links~\citep{linksrow}, they are now increasingly being adopted in
production languages and systems such as OCaml~\citep{multicore},
Scala~\citep{BoruchGruszeckiOLLB23}, and
WebAssembly~\citep{Phipps-CostinRGLHSPL23}.

In a statically typed programming language with effect handlers some
form of effect system to track effectful operations is necessary in
order to ensure that a given program handles all of its effects.
However, traditional effect systems require extensive effect
annotations even for code that does not use effects. Consider the
standard \lstinline{map} function:
\begin{lstlisting}
  map : forall a b . (a -> b) -> List a -> List b
\end{lstlisting}
This type is a statement about the values that map accepts and
returns, but is silent about which effects may occur during its
evaluation. In the effect system of \Koka, for instance, this
\lstinline{map} function is thus presumed to be a total function that
takes a function which cannot perform any effects and itself does not
perform any effects.

However, this would prevent programmers from passing any effectful
function to \lstinline{map}. To use \lstinline{map} in effectful code,
in \Koka we must give it a more permissive type such as:
\begin{lstlisting}
  map' : forall a b e . (a earr b) earr List a earr List b
\end{lstlisting}
This type uses \emph{effect polymorphism}, quantifying over
an \emph{effect variable} \lstinline{e}, which occurs on every arrow.
Such effect annotations pollute the type signature of \lstinline{map}
to convey the obvious: \lstinline{map'} is polymorphic in its effects,
that is, the effects of
\lstinline{map' f xs}
depend on the effects of the function argument \lstinline{f}.
Effect annotations impose a mild burden to authors of new code, but
pose a significant problem when extending an existing language with
effectful features.

Type signatures of existing library code must be rewritten to support
effect polymorphism~\citep{BoruchGruszeckiOLLB23,NewGL23}, even in
legacy libraries that do not use effects, making it challenging to
retrofit such an effect system onto an existing language in a
backwards-compatible way without causing friction for existing
codebases.
However, if we can eliminate the need to annotate effect polymorphism,
then retrofitting an effect system ought to become a tractable
problem. Our goal is to design a principled effect system, where
effect polymorphism silence is a virtue.

An important step towards that goal was taken by the
language \Frank~\citep{frank,ConventLMM20}.
\Frank gives \lstinline{map} its original unannotated type, whilst
still allowing it to be passed effectful functions. The key idea is
that expressions are typed assuming an unknown set of possible
effects---the \emph{ambient effects}---will be provided by the context
in which the expression occurs. Rather than assuming unannotated
function types perform no effects, they are assumed to perform the
ambient effects.

\Frank still uses effect variables behind the scenes, implicitly
inserting effect variables for passing the ambient effects around.
For instance, \Frank simply treats the type
signature of \lstinline{map} above as syntactic sugar for
\lstinline{map'}
(\Frank has certain other syntactic idiosyncrasies, so in order to
ease readability, we render \Frank code in similar syntax to \Metl, which
we introduce in Section~\ref{sec:overview}).
This syntactic mechanism is fragile. For instance, effect variables
can appear in error messages as in the following example in which we
use a \lstinline{yield} effect to write a function that yields all
values in a list.
\begin{lstlisting}
  gen : List Int <@\larr{\lop{yield}}@> Unit
  gen xs = map (fun x -> do yield x) xs; ()
\end{lstlisting}
If the user forgets the \lstinline{yield} annotation, Frank
complains:
\begin{lstlisting}
  cannot unify effects e and yield, <@£@>
\end{lstlisting}
Here \code{£} and \lstinline{e} are the underlying effect variables
inserted by the Frank compiler. They do not appear in the source program
and in larger programs it can be unclear how to fix such errors.

\Effekt~\citep{BrachthauserSO20} and
\Scala~\citep{BoruchGruszeckiOLLB23} also make use of ambient effects
to avoid effect polymorphism by tracking effects as capabilities.
However, they either restrict functions to be second-class or require
having capability variables in types for certain use cases, as we
discuss further in \Cref{sec:capability-based-effect-systems}.

We build on the insight that ambient effect contexts can substantially
reduce the annotation burden.
Instead of relying on desugaring to traditional effect polymorphism
like \Frank, we develop \Met (Modal Effect Types), a novel effect
system with a theoretical foundation based on modal types.
We follow multimodal type theory (MTT)~\citep{Gratzer23,GratzerKNB20}
in tracking \emph{modes} for types and terms and
consider \emph{modalities} as the transitions between such modes.
We treat each possible ambient effect context as a mode, and each
possible transition between effect contexts as a modality.
Our system provides two kinds of modalities: 1) \emph{absolute}
modalities, which override the ambient effect context; and 2)
\emph{relative} modalities, which describe a local change to the
ambient effect context, as exemplified by effect handlers which
handles certain effects and forwards others unchanged.

\Met precludes hidden effect variables in
error messages as there are no hidden effect variables.
Moreover, \Met works smoothly with pure first-class higher-order
functions, which require neither hidden effect variables nor extra
annotations, and can be applied to effectful arguments.
Both \Frank and \Met strive to capture the essence of modular
programming with effects.
\Frank relies on a fragile syntactic characterisation based on
polymorphic types.
In contrast, \Met provides a more robust characterisation based on
simple types and modal types.

The main contributions of this paper are as follows.
\begin{itemize}
  \item We give a high-level overview of the key ideas of modal effect
  types: effect contexts, absolute and relative modalities.
  We provide a series of practical examples to show how modal effect
  types enable us to write modular effectful programs without effect
  polymorphism %
  (\Cref{sec:overview}).
  \item We briefly recall the design of multimodal type theory (MTT),
  the basis of modal effect types, and outline why MTT works well for
  designing an effect system (\Cref{sec:lock-and-key}).
  \item We introduce \Met, a simply-typed core calculus with effect
  handlers and modal effect types (\Cref{sec:core-calculus}). We prove
  its type soundness and effect safety.
  \item Intuitively, \Met can type check all functions that can be
  written in traditional effect systems using a single effect
  variable, which is the most common case in practice. We formally
  prove this intuition by presenting a calculus for row-based effect
  systems with a single effect variable and encoding it in \Met
  (\Cref{sec:encodings}).
  \item We extend \Met with data types and
  polymorphism for value types. To recover the full power of traditional effect
  systems, we also extend \Met with effect polymorphism which can be
  seamlessly used alongside modal effect types to express
  effectful programs that use higher-order effects modularly
  (\Cref{sec:extensions}).
  \item We outline and prototype a surface language \SurfaceMet which
  uses bidirectional type checking to infer the introduction
  and elimination of modalities (\Cref{sec:surface-lang}).
\item We present a direct comparison of type signatures in
  \SurfaceMet, \Koka, and \Effekt in order to highlight the practical
  appeal of modal effect types (\Cref{sec:comparison}).
\end{itemize}
\Cref{sec:related-work} also discusses related and future work.
The full specifications, proofs, and appendices can be found in the
extended version of the paper~\citep{TangWDHLL25}.

\FloatBarrier
\section{Programming with Modal Effect Types}
\label{sec:overview}

In this section we illustrate the main ideas of modal effect types
through a series of examples. We demonstrate how modal effect types
support modular composition of higher-order functions and effect
handlers without effect polymorphism.
The examples are written in \SurfaceMet, which translates to the core
calculus \Met via a simple type-directed elaboration.
In order to elucidate the core idea that modal effect types support
modular effectful programming without polymorphism we begin with
examples in the simply-typed fragment of \SurfaceMet.

\subsection{From Function Arrows to Effect Contexts}
\label{sec:effect-context}

Traditional effect systems annotate a function type with the effects
that the function may perform when invoked.
For instance, consider the following typing judgement for the
\lstinline{app} function specialised to take a pair of a function from
integers to the unit type and an integer.

\begin{lstlisting}
  |-  fun (f, x) -> f x  :  (Int <@\larr{E}@> Unit, Int) <@\larr{E}@> Unit
\end{lstlisting}
The effect annotation \lstinline{E} is a row of typed operations that
\lstinline{f} may perform.
For instance, if \lstinline{E} is
\lstinline{get : Unit >> Int, put : Int >> Unit} then \lstinline{f}
may perform a \lstinline{get} operation which takes a unit value and
returns an integer and a \lstinline{put} operation which takes an
integer and returns a unit value.
As in Frank~\citep{ConventLMM20} and Koka~\citep{koka}, rows are
\emph{scoped}~\citep{Leijen05} meaning that they allow duplicate
operations with the same name (but possibly different types).
The order of duplicates matters, but the relative order of distinct
operations does not.

Since \lstinline{app} invokes its argument, \lstinline{E} also denotes
the operations that invoking \lstinline{app} might perform.
As we saw in the introduction, the standard way to support modularity
is to be polymorphic in \lstinline{E}.
But this introduces an annotation burden for all higher-order
functions, including those (like \lstinline{app}) which do not
themselves perform effects.

In the spirit of \Frank, \Met decouples effects from function
types and tracks \emph{effect contexts} in typing judgements.
All components of the term and type share the same effect context
(unless manipulated by modalities as we will see in
\Cref{sec:absolute-modalities,sec:relative-modalities}).
For instance, we have the following typing judgement for the
same \lstinline{app} function as above.
\newsavebox\boxIntToUnit \savebox\boxIntToUnit{\lstinline{Int -> Unit}}
\newsavebox\boxToUnit \savebox\boxToUnit{\lstinline{ -> Unit}}
\newsavebox\boxAppBody \savebox\boxAppBody{\lstinline{ -> f x}}
\newsavebox\boxF \savebox\boxF{\lstinline{f}}
\newcommand{\underanno}[2]{$\gray{\underbrace{\usebox{#1}}_{\codeatmode{#2}}}$}
\begin{lstlisting}
  |-  fun (<@\underanno{\boxF}{E}@>, x) <@\underanno{\boxAppBody}{E}@>  :  (<@\underanno{\boxIntToUnit}{E}@>, Int) <@\underanno{\boxToUnit}{E}@>  @  E
\end{lstlisting}

As a visual aid, we use braces to explicitly annotate the effect
contexts for the argument \lstinline{f} and the whole function in the
term and type.
The $\codeatmode{E}$ annotation belongs to the judgement and
indicates the effect context \lstinline{E}.
This is the \emph{ambient effect context} for the whole term and type
of this typing judgement.
An effect context specifies which operations may be performed.
In this example, the effect contexts are all the same as the ambient
effect context.
We know that \lstinline{app} can perform the same effects as its
argument \lstinline{f} as they share the same effect context.

\subsection{Overriding the Ambient Effect Context with Absolute Modalities}
\label{sec:absolute-modalities}

An \emph{absolute modality} \lstinline{[E]} defines a new effect
context \lstinline{E} that overrides the ambient effect context.
For instance, the following function invokes the operation
\lstinline{yield} via the \lstinline{do} keyword.
The \lstinline{yield} operation takes an integer and returns a unit
value.
\newsavebox\boxDoYieldX \savebox\boxDoYieldX{\lstinline{do yield x}}
\newsavebox\boxFunDoYieldX \savebox\boxFunDoYieldX{\lstinline{fun x -> do yield x}}
\begin{lstlisting}
  |-  fun x -> <@\underanno{\boxDoYieldX}{\lop{yield} : \ltype{Int} $\sto$ \ltype{1}}@>  :  [yield : Int >> Unit](<@\underanno{\boxIntToUnit}{\lop{yield} : \ltype{Int} $\sto$ \ltype{1}}@>)  @  .
\end{lstlisting}
The absolute modality \lstinline{[yield : Int >> Unit]} specifies a
singleton effect context in which the \lstinline{yield} operation with
type \lstinline{Int >> Unit} may be performed.
Here, it overrides the empty ambient effect context (\lstinline{.}),
allowing \lstinline{yield} to be performed in the function body.

Effect contexts percolate through the structure of a type.
For example, a function of type \lstinline{[E](A -> B)} may perform
effects \lstinline{E} when invoked, and a list of type
\lstinline{[E](List (A -> B))} may perform effects \lstinline{E} when
its components are invoked.
For brevity, we define an effect context abbreviation.
\begin{lstlisting}
  eff Gen a = yield : a >> Unit
\end{lstlisting}
Such abbreviations are merely macros, such that,
for instance, \lstinline{[Gen Int]} denotes the modality
\lstinline{[yield : Int >> Unit]} and \lstinline{[Gen Int, E]} denotes
the modality
\lstinline{[yield : Int >> Unit, E]}.

For higher-order functions like \lstinline{map} and \lstinline{app}
which do not directly perform any effects, we use the empty absolute
modality \lstinline{[]}.
For instance, in \SurfaceMet, the curried first-class higher-order
\lstinline{iter} function, specialised to iterate over a list of
integers, is defined as follows.
\begin{lstlisting}
  iter : []((Int -> Unit) -> List Int -> Unit)
  iter f nil         = ()
  iter f (cons x xs) = f x; iter f xs
\end{lstlisting}
The empty absolute modality \lstinline{[]} specifies an empty effect
context in which the function is defined.
However, due to subeffecting, \lstinline{iter} is not limited to only
the empty effect context.
For instance, we can apply \lstinline{iter} to the previous function
which uses \lstinline{yield}.
\begin{lstlisting}
  |-  iter (fun x -> do yield x)  :  List Int -> Unit  @  Gen Int
\end{lstlisting}
\SurfaceMet allows us to use \lstinline{iter} here directly even
though its type contains an absolute modality. This is allowed since
an elaboration step implicitly eliminates the modality of
\lstinline{iter} before it is applied to the function that invokes
\lstinline{yield}.
Following the literature on modal types, we refer to introduction of
modalities as \emph{boxing} and elimination as \emph{unboxing}.
Though the modality \lstinline{[]} requires us to use \lstinline{iter}
under the empty effect context, subeffecting then upcasts it to the
singleton one \lstinline{Gen Int}.

To achieve the same flexibility of applying \lstinline{iter} to any
effectful arguments in a traditional row-based effect system, we would
need effect polymorphism:
\begin{lstlisting}
  iter : forall e . (Int earr Unit) earr List Int earr Unit
\end{lstlisting}

\subsection{Transforming the Ambient Effect Context with Relative Modalities}
\label{sec:relative-modalities}

So far we have only seen examples that are either pure or just perform
effects.
Absolute modalities suffice for modular programming with such examples
without requiring any use of effect polymorphism.
However, the situation becomes more interesting when we introduce
constructs that manipulate effect contexts non-trivially, such as
effect handlers.
Effect handlers provide a way of interpreting effects inside the
language itself.
For instance, we can use an effect handler to interpret operations
\lstinline{yield : Int >> Unit} by collecting their arguments into a
list of integers.
\begin{lstlisting}
  asList f = handle f () with
    return ()              ~> nil
    (yield : Int >> Unit) x r ~> cons x (r ())
\end{lstlisting}
The body of \lstinline$asList$ invokes the function \lstinline$f$
inside a \emph{handler}.
The handler has two clauses that account for two cases: 1) what
happens when \lstinline$f$ returns; and 2) what happens when
\lstinline$f$ performs \lstinline$yield$. In the first case, it
returns the list \lstinline$nil$. In the second case, it prepends the
integer \lstinline$x$ onto the head of the list returned by the
application of \lstinline$r$.
Here \lstinline$r$ is bound to the continuation of the
\lstinline$yield$ invocation inside \lstinline$f$.
The argument type of \lstinline$r$ is determined by the return type of
the operation being handled (unit in the case of \lstinline$yield$)
and its return type is determined by the return type of the
handler. Thus \lstinline$r : Unit -> List Int$.
When the continuation \lstinline{r} is invoked, the handler is
reinstalled around it to handle the residual effects of \lstinline{f}
(this kind of handler is known as \emph{deep} in the
literature~\citep{KammarLO13}).
We write \lstinline{H} for the handler clauses in \lstinline{asList}.

What type should \lstinline{asList} have?
Naively, we might simply expect to ignore the handler:
\begin{lstlisting}
  asList : []((Unit -> Unit) -> List Int)
\end{lstlisting}
This would be unsound as it would allow us to write:
\begin{lstlisting}
  crash : [Gen String](String -> List Int)
  crash s = asList (fun () -> do yield s)
\end{lstlisting}
The function passed to \lstinline{asList} yields a string. This is
then accidentally handled by the handler in \lstinline|asList|, which
expects an integer.

A possible fix is to box the argument of \lstinline{asList} with an
absolute modality \lstinline{[Gen Int]}:
\begin{lstlisting}
  asList : []([Gen Int](Unit -> Unit) -> List Int)
\end{lstlisting}
To see what happens here, consider the following typing judgement for
the inlined function body of \lstinline{asList} under some effect
context \lstinline{E}.
\newsavebox\boxM \savebox\boxM{\lstinline{f}}
\newsavebox\boxAppM \savebox\boxAppM{~\;\lstinline|f ()|~\;}
\newsavebox\boxUnitUnit \savebox\boxUnitUnit{\lstinline|Unit -> Unit|}
\begin{lstlisting}
  |-  fun <@\underanno{\boxM}{\leff{Gen} \ltype{Int}}@> -> handle<@\underanno{\boxAppM}{\leff{Gen} \ltype{Int}, E}@>with H  :  [Gen Int](<@\underanno{\boxUnitUnit}{\leff{Gen} \ltype{Int}}@>) -> List Int  @  E
\end{lstlisting}
The effect handler extends the ambient effect context \lstinline{E} with
a \lstinline{yield} operation to give an effect context of
\lstinline{Gen Int, E}. Meanwhile, the argument \lstinline{f} has
the effect context \lstinline{Gen Int} specified by the absolute
modality \lstinline{[Gen Int]}. This is sound, because it is safe
to invoke a function which can only use \lstinline{Gen Int} under the
effect context \lstinline{Gen Int, E}.

However, the restriction that the argument can only use
\lstinline{Gen Int} severely hinders reusability. We would like to apply
\lstinline{asList} to arguments that may perform other operations in
addition to \lstinline{yield}.
To this end, we introduce \emph{relative modalities} which enable us
to describe the relative change that a handler makes to the effect
context. For instance, consider:
\begin{lstlisting}
  asList : [](<Gen Int>(Unit -> Unit) -> List Int)
\end{lstlisting}
The relative modality \lstinline{<Gen Int>} is part of the argument
type and extends the ambient effect context with \lstinline{Gen Int} for
the inner function \lstinline{Unit -> Unit}.
The typing judgement becomes:
\begin{lstlisting}
  |-  fun <@\underanno{\boxM}{\leff{Gen} \ltype{Int}, E}@> -> handle<@\underanno{\boxAppM}{\leff{Gen} \ltype{Int}, E}@>with H  :  <Gen Int>(<@\underanno{\boxUnitUnit}{\leff{Gen} \ltype{Int}, E}@>) -> List Int  @  E
\end{lstlisting}
Now, the effect context for the function of argument \lstinline{f} is
also \lstinline{Gen Int, E}, matching the effect context at its
invocation.
This allows the argument \lstinline$f$ to perform other effects from
the ambient effect context \lstinline{E} (which will be forwarded to
outer handlers).

In practice relative modalities often appear in an argument position
and specify which effects of an argument will be handled in the
function body.
A higher-order function that handles effects \lstinline{D} of its
argument typically has a type of the form %
\lstinline{<D>(1 -> A) -> B}.

In a traditional row-based effect system, in order to be able to use
\lstinline{asList} across different effect contexts, we would
typically require effect polymorphism~\cite{HillerstromL16,koka}.
\begin{lstlisting}
  asList : forall e . (Unit earrg Unit) earr List Int
\end{lstlisting}

\subsection{Coercions Between Modalities}
\label{sec:overview-effect-coercions}

The implicit unboxing and boxing performed by \SurfaceMet allows
values to be coerced between different modalities. For instance, we
can extend an absolute modality:
\begin{lstlisting}
  |- fun f -> f : [Gen Int](<@\underanno{\boxUnitUnit}{\leff{Gen} \ltype{Int}}@>) -> [Gen Int, Gen String](<@\underanno{\boxUnitUnit}{\leff{Gen} \ltype{Int}, \leff{Gen} \ltype{String}}@>) @ E
\end{lstlisting}
Not all modalities can be coerced to one another. For example, we
cannot extend a relative modality
\begin{lstlisting}
  |/- fun f -> f : <>(<@\underanno{\boxUnitUnit}{E}@>) -> <Gen Int>(<@\underanno{\boxUnitUnit}{\leff{Gen} \ltype{Int}, E}@>) @ E  # Ill-typed
\end{lstlisting}
as doing so would insert a fresh \lstinline{yield : Int >> Unit}
operation which may shadow other \lstinline{yield} operations in
\lstinline{E}, consequently permitting bad programs like
\lstinline{crash} in \Cref{sec:relative-modalities}.

An absolute modality can be coerced into the corresponding relative
modality:
\begin{lstlisting}
  |- fun f -> f : [Gen Int](<@\underanno{\boxUnitUnit}{\leff{Gen} \ltype{Int}}@>) -> <Gen Int>(<@\underanno{\boxUnitUnit}{\leff{Gen} \ltype{Int}, E}@>) @ E
\end{lstlisting}
However, the converse is not permitted
\begin{lstlisting}
  |/- fun f -> f : <Gen Int>(<@\underanno{\boxUnitUnit}{\leff{Gen} \ltype{Int}, E}@>) -> [Gen Int](<@\underanno{\boxUnitUnit}{\leff{Gen} \ltype{Int}}@>) @ E  # Ill-typed
\end{lstlisting}
because the argument \lstinline{f} may also use effects from the
ambient effect context \lstinline{E}.

Similarly, the following typing judgement is invalid
\begin{lstlisting}
  |/- fun <@\underanno{\boxM}{\leff{Gen} \ltype{Int}, E}@> -> <@\underanno{\boxAppM}{E}@> : <Gen Int>(<@\underanno{\boxUnitUnit}{\leff{Gen} \ltype{Int}, E}@>) -> Unit @ E  # Ill-typed
\end{lstlisting}
because the argument \lstinline{f} may use \lstinline{Gen Int} in
addition to the ambient effect context \lstinline{E}.

\subsection{Composing Handlers}
\label{sec:composing-handlers}

We can compose handlers modularly.
For example, consider state operations \lstinline{get} and
\lstinline{put}.
\begin{lstlisting}
  eff State s = get : Unit >> s, put : s >> Unit
\end{lstlisting}

Specialising the state for integers, we can give the standard
state-passing interpretation of %
\lstinline{State Int} as follows~\cite{PlotkinP13}.

\begin{lstlisting}
  state : [](<State Int>(Unit -> Unit) -> Int -> Unit)
  state m = handle m () with
    return x               ~> fun s -> x
    (get : Unit >> Int) () r  ~> fun s -> r s s
    (put : Int >> Unit) s' r  ~> fun s -> r () s'
\end{lstlisting}

Using integer state we can write a generator which yields the prefix
sum of a list.
\begin{lstlisting}
  prefixSum : [Gen Int, State Int](List Int -> Unit)
  prefixSum xs = iter (fun x -> do put (do get () + x); do yield (do get ())) xs
\end{lstlisting}
The absolute modality \lstinline{[Gen Int, State Int]}
aggregates all effects performed in %
\lstinline{prefixSum}.

We can now handle \lstinline{prefixSum} by composing two handlers in
sequence.
\begin{lstlisting}
> asList (fun () -> state (fun () -> prefixSum [3,1,4,1,5,9]) 0)
# [3,4,8,9,14,23] : List Int
\end{lstlisting}
The type signature of \lstinline{state} mentions only
\lstinline{State Int} even though it is applied to a computation which
invokes \lstinline{prefixSum}, which also uses \lstinline{Gen Int}.
In contrast, to achieve the same modularity, conventional row-based
effect systems would ascribe the following type to
\lstinline{state}:
\begin{lstlisting}
  state : forall e . (Unit <@$\xrightarrow{\texttt{\leff{State} \ltype{Int}, e}}$@> Unit) earr Int earr Unit
\end{lstlisting}

\subsection{Storing Effectful Functions in Data Types}
\label{sec:unix-fork}

We show how modal effect types allow us to smoothly store effectful
functions into data types. We consider a richer effect handler example
that implements cooperative concurrency using a UNIX-style fork
operation~\cite{RitchieT74,Hillerstrom22}.
A \lstinline{Coop} effect context includes two operations.
\begin{lstlisting}
  eff Coop = ufork : Unit >> Bool, suspend : Unit >> Unit
\end{lstlisting}
The \lstinline{ufork} operation returns a boolean.
As we shall see, concurrency can be implemented by a handler that
invokes the continuation twice.
The idea is that passing true to the continuation defines the
behaviour of the parent, whereas passing false defines the behaviour
of the child.
The \lstinline{suspend} operation suspends the current process
allowing another process to run.

We model a process as a data type that embeds a continuation function
which takes a list of suspended processes and returns unit. In
addition, we define auxiliary functions \lstinline$push$ to append a
process onto the end of the list and \lstinline$next$ to remove and
then run the process at the head of the list.

\vspace{-.6\baselineskip}
\begin{minipage}[t]{0.55\textwidth}
\begin{lstlisting}
  data Proc = proc (List Proc -> 1)

  push : [](Proc -> List Proc -> List Proc)
  push x xs = xs ++ cons x nil
\end{lstlisting}
\end{minipage}
\begin{minipage}[t]{0.45\textwidth}
\begin{lstlisting}
  next : [](List Proc -> 1)
  next ps = case ps of
    nil               -> ()
    cons (proc p) ps' -> p ps'
\end{lstlisting}
\end{minipage}

The following handler implements a scheduler parameterised by a list
of suspended processes.
\begin{lstlisting}
  schedule : [](<Coop>(Unit -> Unit) -> List Proc -> Unit)
  schedule m = handle m () with
    return ()              ~> fun ps -> next ps
    (suspend : Unit>>Unit)  () r ~> fun ps -> next (push (proc (fun ps' -> r () ps')) ps)
    (ufork : Unit>>Bool) () r ~> fun ps -> r true (push (proc (fun ps' -> r false ps')) ps)
\end{lstlisting}
The \lstinline$return$-case is triggered when a process finishes, and
runs the next available process.
The \lstinline$suspend$-case pushes the continuation onto the end of
the list, before running the next available process.
The \lstinline$ufork$-case implements the process duplication
behaviour of UNIX fork by first pushing one application of the
continuation onto the end of the list, and then immediately applying
the other.
Observe that the above code seamlessly stores continuation functions
in \lstinline{Proc} and then puts \lstinline{Proc} in \lstinline{List}
without even mentioning any effects. These functions are not
restricted to be pure; they may use any effects from the ambient
effect context.

The \lstinline{schedule} function allows processes to use any other
effects.
To achieve this flexibility, a traditional row-based effect system
requires effect polymorphism and a parameterised data type.
\begin{lstlisting}
  data Proc e = proc (List Proc <@\larr{e}@> Unit)
  schedule : forall e . (1 <@\larr{\leff{Coop}, e}@> 1) <@\larr{e}@> List (Proc e) <@\larr{e}@> 1
\end{lstlisting}

\subsection{Masking}
\label{sec:overview-mask}

Whereas handlers extend the effect context, masking restricts the
effect context~\citep{BiernackiPPS18}. Masking is a useful device to
conceal private implementation details~\cite{LucassenG88}.
We illustrate masking by using a generator to implement a function to
find an integer satisfying a predicate.
\begin{lstlisting}
  findWrong : []((Int -> Bool) -> List Int -> Maybe Int)  # ill-typed
  findWrong p xs = handle (iter (fun x -> if p x then do yield x) xs) with
    return _               ~> nothing
    (yield : Int >> Unit) x _ ~> just x
\end{lstlisting}
The \lstinline{findWrong} program is ill-typed because it is unsound
to invoke predicate \lstinline{p} inside the handler, as this would
accidentally handle any \lstinline{yield} operations performed by
\lstinline{p}.
\newsavebox\boxPX \savebox\boxPX{\lstinline{(p x)}}
\begin{lstlisting}
  |- ... handle (iter (fun x -> if<@\underanno{\boxPX}{Gen Int, E}@>then do yield x) xs) with ...  :  _  @  E
\end{lstlisting}
Changing the type of \lstinline{p} from \lstinline{Int -> Bool} to
\lstinline{<Gen Int>(Int -> Bool)} would fix the type error but leak
the implementation detail that \lstinline{findWrong} uses
\lstinline{yield}.
A better solution is to mask \lstinline{yield} for the argument
\lstinline{p} and rewrite the handled expression as follows.
\begin{lstlisting}
  |- ... handle (iter (fun x -> if mask<yield><@\underanno{\boxPX}{E}@> ... ) with ...  :  _  @  E
\end{lstlisting}
The term \lstinline{mask<yield>(M)} masks the operation
\lstinline{yield} from the ambient context of the subterm
\lstinline{M}. In doing so, it conceals any \lstinline{yield}
invocations of \lstinline{M} from its immediate enclosing handler,
thus deferring handling of those operations to the second-nearest
dynamically enclosing handler.
Now, the effect context for \lstinline{p} is equivalent to the ambient
context \lstinline{E}, since the transformations of extending with \lstinline{yield}
(performed by the immediate enclosing handler) followed by masking
with \lstinline{yield} (performed by the mask) cancel each other out.
Like a handler, a mask wraps its return value in a relative modality.
The term \lstinline{mask<yield>(p x)} initially returns a value of
type \lstinline{<yield|>Bool} instead of \lstinline{Bool}, where
\lstinline{<yield|>} is a relative modality masking \lstinline{yield}
from the ambient context.
\SurfaceMet automatically unboxes the relative modality,
because pure values (e.g. booleans) are oblivious of the effect
context.

In general, relative modalities have the form \lstinline{<L|D>} which
specifies a local transformation on the effect context. Here,
\lstinline{L} represents the row of effect labels that are being
removed from the context, whilst \lstinline{D} represents the row of
effects being added to it.
We write \lstinline{<D>} as a shorthand for \lstinline{<|D>}.

\subsection{Kinds}
\label{sec:leaving-handlers}
\label{sec:kinds}

A handler extends the effect context with those effects it handles.
When a value leaves the scope of a handler, its effect context
changes, and we must keep track of this change.
This requires the introduction of a simple kinding system.

For example, let us consider \lstinline{state'}, a variation of the
\lstinline{state} function defined in \Cref{sec:composing-handlers} in
which the return type of the handled computation is changed from
\lstinline{Unit} to \lstinline{Unit -> Unit}.
The body of \lstinline{state'} is exactly the same as that of
\lstinline{state}.
We might naively expect its type signature to be the following.
\begin{lstlisting}
  state' : [](<State Int>(Unit -> (Unit -> Unit)) -> Int -> (Unit -> Unit))
\end{lstlisting}
However, this typing is not sound. Suppose we apply \lstinline{state'}
as follows.
\begin{lstlisting}
  state' (fun () -> fun () -> do put (do get () + 42)) 0  :  Unit -> Unit
\end{lstlisting}
The function \lstinline{fun () -> do put (do get () + 42)} is returned by
the return clause of \lstinline{state'}, escaping the scope of their
handler.
To guarantee effect safety, we must capture the fact that the returned
function is a thunk which might perform \lstinline{get} and
\lstinline{put} when invoked.
The following typing is sound.
\begin{lstlisting}
  state' : [](<State Int>(Unit -> (Unit -> Unit)) -> Int -> <State Int>(Unit -> Unit))
\end{lstlisting}

Let us contrast the types of \lstinline{state : [](<State Int>(Unit -> Unit) -> Int -> Unit}
and \lstinline{state'}.
The crucial difference is that the former cannot leak the state effect
as the handled computation has unit type, whereas the latter can as
the handled computation is a function.

In practice, it is useful to allow a value of base type or an
algebraic data type that contains only base types or types boxed with
absolute modalities to appear anywhere, including escaping the scope
of a handler.
Such values can never depend on the effect context in which they are
used.
We introduce a kind system in which the
\lstinline{Abs} kind classifies such \emph{absolute types}, whereas
the \lstinline{Any} kind classifies unrestricted types.
Subkinding allows absolute types to be treated as unrestricted.

\subsection{Polymorphism for Value Types}
\label{sec:poly-kinds}

Now that we have explored the simply-typed fragment of modal effect
types, we briefly outline its extension with polymorphism for value
types.
For simplicity, \SurfaceMet requires explicit type abstraction and
type application.
We write explicit type abstractions and applications using braces.
For instance, we can define the polymorphic iterate function as follows.
\begin{lstlisting}
  iter : forall a . []((a -> Unit) -> List a -> Unit)
  iter {a} f nil         = ()
  iter {a} f (cons x xs) = f x; iter {a} f xs
\end{lstlisting}

The extension is mostly routine, however, we must respect kinds. The
\lstinline{state} and \lstinline{state'} examples in \Cref{sec:kinds}
illustrate a non-uniformity that we must account for.
We may generalise them such that the former allows any absolute return
type and the latter allows any return type at all.
\begin{lstlisting}
  state  : forall [a] . [](<State Int>(Unit -> a) -> Int -> a)
  state' : forall  a  . [](<State Int>(Unit -> a) -> Int -> <State Int>a)
\end{lstlisting}
The syntax \lstinline{forall [a]} ascribes kind \lstinline{Abs} to
\lstinline{a}, allowing values of type \lstinline{a} to escape the
handler.
The syntax \lstinline{forall a} ascribes kind \lstinline{Any} to
\lstinline{a}, not allowing values of type \lstinline{a} to escape the
handler.
Though in practice it is usually desirable for return types of
computations inside handler scopes to be absolute. The latter type
signature is the more general in that simply by $\eta$-expanding we
can coerce it to the former.
\begin{lstlisting}
  |- fun {a} m s -> state' {a} m s : forall [a] . [](<State Int>(Unit -> a) -> Int -> a) @ .
\end{lstlisting}

\subsection{Effect Polymorphsim}
\label{sec:higher-order-fork}

Though modal effect types alone suffice for writing a remarkably rich
class of modular effectful programs, occasionally effect variables are
still useful.
In particular, they are required for the implementation of
higher-order effects~\citep{scope14,BauerP15,BergS23,YangW23}, which
take closures as arguments.

Modal effect types restrict the parameter type and result
type of an operation to be absolute.
This is because effect handlers provide non-trivial manipulation of
control-flow, which allows the parameter and result of an
operation to jump between different effect contexts.
For example, if we were to allow an operation
\lstinline{leak : (Unit->Unit) >> Unit},
then we could write the following unsafe program.
\begin{lstlisting}
  handle asList (fun () -> do leak (fun () -> do yield 42)) with
    return                 _   ~> fun () -> 37
    (leak : (Unit -> Unit) >> Unit) p _ ~> p
\end{lstlisting}
The \lstinline{asList} handler extends the ambient effect context with
\lstinline{yield}.
However, the \lstinline{leak} handler binds the closure
\lstinline{(fun () -> do yield 42)} to \lstinline{p} and returns this
closure, leaking the \lstinline{yield} operation.

Consider then a higher-order fork operation whose parameter is a thunk
(this operation is realisable with one-shot continuations; whereas the
fork operation of \Cref{sec:unix-fork} requires multi-shot
continuations).
We may define a recursive effect context for cooperative processes as
follows.
\begin{lstlisting}
  eff Coop = fork : [Coop](Unit -> Unit) >> Unit, suspend : Unit >> Unit
\end{lstlisting}
This is sound because the parameter of \lstinline{fork} is under an
absolute modality. However, this stops the forked process from using
any effect other than the two concurrency primitives \lstinline{fork}
and \lstinline{suspend}. To allow it to use other effects, we
re-introduce effect variables and polymorphism.
With an effect variable \lstinline{e}, we can define the following
higher-order \lstinline{Coop} parameterised over effect context
\lstinline{e}.
\begin{lstlisting}
  eff Coop e = fork : [Coop e, e](Unit -> Unit) >> Unit, suspend : Unit >> Unit
\end{lstlisting}
In \Cref{sec:meet-polymorphism} we show that the extension of effect
variables is sound and backward compatible.
Nonetheless, effect variables are only necessary for use-cases such as
higher-order effects in which a computation needs to be stored for use
in an effect context different from the ambient one.

\section{A Tale of Locks and Keys: Elaborating \SurfaceMet into \Met}
\label{sec:lock-and-key}

So far, we have presented a series of examples in \SurfaceMet
illustrating the core ideas of modal effect types.
While \SurfaceMet is an easy-to-use surface language, it contains
too many implicit coercions to be a suitable basis for a core calculus.
Instead, our core calculus \Met is a more explicit language based
on (simply-typed) multimodal type theory (MTT)~\citep{GratzerKNB20,Gratzer23}.
In this section, we motivate the design of \Met and introduce
core concepts of MTT.
For a more detailed account of the simply-typed fragment of MTT,
we refer the reader to the work of \citet{KavvosG23}.

MTT extends type theory with the notions of \emph{modes} and
\emph{modalities} and is parametric in them.
A typing judgement has the form $\typm{\Gamma}{M:A}{E}$,
which means that term $M$ has type $A$ under context $\Gamma$ at mode
$E$.
In this work we consider \emph{effect contexts} as modes, and use
\emph{absolute} modalities \lstinline{[E]} and \emph{relative}
modalities \lstinline{<L|D>} to move between them.
In MTT, most type constructors are mode-local: the components have the same
mode as the whole type.
For example, if a function type $A \to B$ is at mode $E$, then both
$A$ and $B$ are at mode $E$.
This is exactly the behaviour needed for effect contexts in \Met
(\Cref{sec:effect-context}), and is a primary motivation for treating
effect contexts as modes in \Met.

In MTT, modes can be transformed by modalities.
A modality $\mu : E \to F$ transforms types and terms from mode $E$ to
mode $F$.
While this is implicit in \SurfaceMet, \Met requires explicit terms
for modality introduction and elimination. We write $\Mod_\mu$ for
introducing the modality $\mu$.
For example, the function \lstinline{fun x -> do yield x} of type
\lstinline{[yield : Int >> Unit](Int -> Unit)} from
\Cref{sec:absolute-modalities} is elaborated to the following term in
\Met which explicitly introduces the absolute modality.
\[\bl
  \Mod_{\aeq{\Yield : \Int \sto \TUnit}}\;(\lambda x^{\Int} . \Do\Yield\;x)
\el\]

In order to invoke this function, we need to eliminate its modality first.
In \SurfaceMet, we need not manually unbox variables; elaboration does
so for us.
\Met adopts let-style unboxing, which requires binding a
term in order to unbox it.
For example, consider that we bind the above function to the variable
\lstinline{gen} and then apply it to $42$.
The binding and application are elaborated to \Met as follows.
\[\bl
  \Letm{}{\aeq{\Yield : \Int \sto \TUnit}} \code{gen} =
  \Mod_{\aeq{\Yield : \Int \sto \TUnit}}\;(\lambda x^{\Int} . \Do\Yield\;x)
\In
\code{gen}\;42
\el\]
The let-binding eliminates the absolute modality.
Whenever \lstinline{gen} is used, the type system ensures that the
effect context contains at least the operation ${\Yield : \Int \sto
\TUnit}$.
Both boxing and unboxing interact with the typing context non-trivially.
Their typing rules in MTT are as follows%
\footnote{As we will see in \Cref{sec:core-calculus}, in \Met
modalities on bindings and locks have indexes and the $\Letm{}{\mu}\!\!$
syntax also has an additional annotation. We opt for a simplified
version here to convey the core intuition.}.
\begin{mathpar}
\inferrule*
{
  \mu : E\to F \\
  \typm{\Gamma,\hl{\lockwith{\mu}}}{M:A}{E}
}
{\typm{\Gamma}{\Mod_\mu\;M : \boxwith{\mu}A}{F}}

\inferrule*
{
  \typm{\Gamma}{M : \boxwith{\mu} A}{E} \\
  \typm{\Gamma,x\hl{:_{\mu}}{A}}{N:B}{E}
}
{\typm{\Gamma}{\Letm{}{\mu} x = M \In N : B}{E}}
\end{mathpar}

Contexts $\Gamma$ are ordered.
Boxing is mediated by a locked context.
Unboxing binds a variable annotated with the corresponding modality.
These locks and annotations are crucial for controlling variable
access.
For instance, consider the following typing judgement
\[\bl
x:{\boxwith{\mu}} A \vdash
\Letm{}{\mu} x' = x \In \Mod_{\nu}\;x'
:
\boxwith{\nu}A
\atmode{F}
\el\]
which unboxes a variable of type $\boxwith{\mu}A$ and re-boxes it with
modality $\nu$.
If this term is typable, then its typing derivation would contain the
following judgement for variable $x'$:
\[\bl
x:{\boxwith{\mu}A},
x'\hl{:_{\mu}}A,
\hl{\lockwith{\nu}}
\vdash
x' : A \atmode{E}
\quad\text{where } \nu : E \to F
\el\]
Whether this usage of $x'$ is valid or not depends on the mode theory.
Beyond modes and modalities, a mode theory also specifies a set of
modality transformations $\alpha:\mu\To\nu$. (For those readers
familiar with category theory, the structure of modes, modalities, and
transformations forms a 2-category.)
We can use a variable $x':_\mu A$ across a lock $\lockwith{\nu}$ only
if there exists a transformation $\alpha:\mu\To\nu$.

In fact, the \Met kind system relaxes this constraint slightly by
allowing variables of absolute kind to cross locks even when such a
transformation does not exist.
Nonetheless, constraints on the form of modality transformations in
\Met (\Cref{sec:modalities}) are crucial for disallowing unsound
coercions, between certain relative modalities, for instance.
A case in point is the second example of
\Cref{sec:overview-effect-coercions}, where we saw that \Metl
disallows coercing a function of type
\lstinline{<>(Unit -> Unit)}
into a function of type \lstinline{<Gen Int>(Unit -> Unit)}.
This example is elaborated into \Met as follows:
\[\bl
\lambda f^{\boxwith{\aid{}}(\TUnit \to \TUnit)}. \Letm{}{\aid{}} \hat{f} = f \In \Mod_{\aex{\code{Gen Int}}}\;\hat{f} : \boxwith{\aex{\code{Gen Int}}}(\TUnit \to \TUnit) \atmode{E}
\el\]
First, $f$ is unboxed to obtain $\hat{f}\varb{\aid{}}{(\TUnit \to
  \TUnit)}$ in the context. Then, it is boxed again with the
$\aex{\code{Gen Int}}$ modality.  However, just as in \Metl, this
example is not well-typed in \Met. The reason for this is that the
term $\Mod_{\boxwith{\aex{\code{Gen Int}}}}$ introduces a lock
$\lockwith{\aex{\code{Gen Int}}}$ in the context. The variable
$\hat{f}$ can only be used under the lock if there is a modality
transformation of form $\aid{}\To\aex{\code{Gen Int}}$.  But as
explained in \Cref{sec:overview-effect-coercions}, such a
transformation would break type safety and is thus not permitted.

Locks are introduced in the context whenever a typing rule changes the effect context.
This happens not only during boxing but also in the rules for handlers and masks.
For example, the definition of \lstinline{asList : [](<Gen Int>(Unit -> Unit) -> List Int)}
from \Cref{sec:relative-modalities} is elaborated to
\[\bl
\Mod_{\aeq{}}\;(
\lambda {f}^{\boxwith{\aex{\code{Gen Int}}}(\TUnit \to \TUnit)} .  \Letm{}{\aex{\code{Gen Int}}} \hat{f} = f \In \Handle\; \hat{f}\;\Unit \With \hat{H})
\el\]
where $\hat{H}$ is the elaboration of handler clauses $H$.
The handler for the \lstinline{Gen} effect introduces
a lock $\lockwith{\aex{\code{Gen Int}}}$ in the context.
We can use the variable $\hat{f}$ under the lock if there is a modality transformation
$\alpha:\aex{\code{Gen Int}}\To\aex{\code{Gen Int}}$. This identity transformation always
exists.

\FloatBarrier
\section{A Multimodal Core Calculus with Effect Handlers}
\label{sec:core-calculus}

In this section we introduce \Met, a simply-typed call-by-value
calculus with effect handlers and modal effect types. We present its
static and dynamic semantics as well as its meta theory. We aim at a
minimal core calculus here and defer extensions such as data types,
alternative forms of handlers, and polymorphism (including both for
values and effects) to \Cref{sec:extensions}. %

\subsection{Syntax}
\label{sec:CalcM-syntax}

The syntax of \Met is as follows.

\vspace{-.5\baselineskip}
\noindent
  \begin{minipage}[t]{0.4\textwidth}
  \begin{syntax}
  \slab{Types}    &A,B  &::= & \TUnit \mid A\to B \mid \hl{\boxwith{\mu} A} \\
  \slab{Masks}          &L   &::= & \cdot \mid \ell,L \\
  \slab{Extensions}\hspace{-1em}      &D   &::= & \cdot \mid \ell : P, D \\
  \slab{Effect Contexts}\hspace{-1ex} &E,F &::= & \cdot \mid \ell : P, E \\
  \slab{Presence}\hspace{-1em}      &P   &::= & A\sto B \mid \hl{\Abs\vphantom{A}} \\
  \slab{Modalities}\hspace{-1em}      &\mu,\nu &::= & \hl{\aeq{E}}
                                    \mid \hl{\adj{L}{D}} \\
  \slab{Kinds}          &K &::= &  \hl{\Pure} \mid \Any \\ %
  \end{syntax}
  \end{minipage}
  \hspace{1.4em}
  \begin{minipage}[t]{0.5\textwidth}
    \vspace{-.2em}
  \begin{syntax}
  \slab{Contexts}\hspace{-3em}       &\Gamma &::=& \cdot \mid \hl{\Gamma, x\varb{\mu_F}{A}}
                                            \mid \hl{\Gamma,\lockwith{\mind{\mu}{F}}}
                                            \\
  \slab{Terms}\hspace{-.5em}   &M,N  &::= & \Unit \mid x \mid \lambda x^A.M \mid M\,N \mid \hl{\Mod_\mu\,V} \\
                        &     &\mid&  \hl{\Letm{\nu}{\mu} x = V\In M} \\
                        &     &\mid& \Do\ell\; M \mid \Mask_L\,M \\
                        &     &\mid& \Handle\;M\With H \\
  \slab{Values}\hspace{-.5em}  &V,W  &::= & \Unit \mid x \mid \lambda x^A.M \mid {\Mod_\mu\, V} \\
  \slab{Handlers}\hspace{-3em}       &H    &::= & \{ \Ret x \mapsto M \}
                              \mid  \{ \ell \; p \; r \mapsto M \} \uplus H
  \end{syntax}
  \end{minipage}
\vspace{.3\baselineskip}

\Met extends a simply-typed $\lambda$-calculus with standard
constructs for effects and handlers as well as the main novelty of
this work: modal effect types. We highlight the novel parts in
grey.

We have provided a brief introduction to MTT in \Cref{sec:lock-and-key}.
We present \Met without assuming deep familiarity with MTT and
discuss further in \Cref{sec:related-work-mtt}.
MTT provides us with the flexibility to define our own mode theory. In
the following, we first illustrate the structures of modes,
modalities, and modality transformations for \Met before presenting
the typing rules.

\subsection{Effect Contexts as Modes}
\label{sec:modes}

The \emph{modes} of \Met are effect contexts $E$.
Each type and term is at some effect context $E$, specifying the
available effects from the context.

Effect contexts $E$ are defined as scoped rows of effect
labels~\citep{Leijen05}.
Each label denotes an effectful operation.
An effect context may contain the same label multiple times.
Each label has a presence type $P$~\citep{remy1994type}.
A presence type $P$ can be an operation arrow of the form $A \sto B$,
which indicates that the operation takes an argument of type $A$ and
returns a value of type $B$, or absent $\Abs$, which indicates that
the operation of this label cannot be invoked.

Following \citet{remy1994type} and \citet{Leijen05}, we identify
effects up to reordering of distinct labels, and allow absent labels
to be freely added to or removed from the right of effect contexts.
For instance, $\ell:P,\ell':\Abs$ is equivalent to $\ell:P$.
We can think of an effect context as denoting a map from labels to
infinite sequences of presence types where a cofinite tail of each
sequence contains only $\Abs$.

Extensions $D$ and masks $L$ are used respectively to extend effect
contexts with more labels or removes some labels from them.
Extensions are like effect contexts except that we do not ignore
labels with absent types in their equivalence relation,
so $\ell:P,\ell':\Abs$ and $\ell:P$ are distinct.

We define a sub-effecting relation on effect contexts $E \subtype E'$
if we can replace the absent types in $E$ with proper operation arrows
to obtain $E'$. %
We also have a subtyping relation on extensions $D\subtype D'$. Like
sub-effecting on effect contexts, it requires $D$ and $D'$ to contain
the same row of labels, but it allows absent types in $D$ to be
replaced by concrete signatures in $D'$.
We give the full rules for type equivalence and sub-effecting in
\Cref{app:rules-meet}.

Masks $L$ are simply multisets of labels without presence types; we
only need labels when removing them from effect contexts.
We define three operations $D+E$, $E-L$, and $L\bowtie D$ as follows.
\[\ba{r@{~}c@{~}l}
D+E &=& D,E \\
\cdot - L &=& \cdot \\
(\ell:P,E) - L &=&
\begin{cases}
  E - L'         &\text{if } L\equiv \ell,L' \\
  \ell:P,(E-L)   &\text{otherwise}
\end{cases} \\
\ea
\quad
\ba{r@{~}c@{~}l}
L \bowtie \cdot &=& (L,\cdot) \\
L \bowtie (\ell:P,D) &=&
  \begin{cases}
    L'\bowtie D      &\text{if } L\equiv\ell,L' \\
    (L',(\ell:P,D'))  &\text{otherwise} \\
    \span\text{where } (L',D') = L\bowtie D \\
  \end{cases}
\ea\]
The operation $D+E$ extends $E$ with $D$.
The operation $E-L$ removes the labels in $L$ from $E$.
The operation $L\bowtie D = (L',D')$ gives the difference between $L$
and $D$. The $L'$ are those labels in $L$ not appearing in the domain
of $D$, and the $D'$ are those entries in $D$ with labels not in $L$.

\subsection{Modalities Manipulating Effect Contexts}
\label{sec:modalities}

Components of types and terms may have different effect contexts from
the ambient one.
We use \emph{modalities} to manipulate effect contexts.
For the modal type $\boxwith{\mu} A$, the effect context for $A$ is
derived from the ambient effect context manipulated by the modality
$\mu$ as follows.
\[
  \act{\aeq{E}}{F} = E \qquad\qquad\qquad \act{\adj{L}{D}}{F} = D+(F-L)
\]
The absolute modality $\aeq{E}$ completely replaces the effect context
$F$ with $E$, similar to effect annotations on function types in
traditional effect systems.
The relative modality $\adj{L}{D}$ is the key novelty of \Met.
It specifies a transformation on the input effect context. It masks
the labels $L$ in $F$ before extending the resulting context with $D$.
We call $\adj{}{}$ the identity modality and write $\one$ for it.
Modalities are monotone total functions on effect contexts.
If $E\subtype F$, we have $\act{\mu}{E}\subtype\act{\mu}{F}$.

We write $\mu_F$ for the pair of $\mu$ and $F$ where $F$ is the effect
context that $\mu$ acts on.
We refer to such a pair as a concrete modality.
We write $\mu_F : E\to F$ if $\act{\mu}{F} = E$.
The arrow goes from $E$ to $F$ instead of the other direction to be
consistent with MTT.
Note that our terminology here differs slightly from that of MTT
introduced in \Cref{sec:lock-and-key}.
Concrete modalities $\mu_F$ correspond to the notion of modalities in
MTT and our modalities $\mu$ are actually indexed families of
modalities in MTT.

\paragraph{Modality Composition}

We can compose the actions of modalities in the intuitive way.
\[\ba{rclcll}
\mu&\circ&{\aeq{E}} &=& \aeq{E}
\\
\aeq{E}&\circ&{\adj{L}{D}} &=& \aeq{D+(E-L)}
\\
\adj{L_1}{D_1}&\circ&\adj{L_2}{D_2} &=&
  \adj{L_1+L}{D_2+D}
  &\text{ where } (L,D) = L_2 \bowtie D_1
\ea\]
To keep close to MTT, our composition reads from left to right.
First, an absolute modality completely specifies the new effect
context, thus shadowing any other modality $\mu$.
Second, replacing the effect context with $E$ and then masking $L$ and
extending with $D$ is equivalent to just replacing with $D+(E-L)$.
Third, sequential masking and extending can be combined into one by
using $L_2\bowtie D_1$ to cancel the overlapping part of $L_2$ and
$D_1$.
For instance, we have $\aex{\Yield:\Int \sto \TUnit}\circ\amk{\Yield}
= \one$.

Composition is well-defined since
composing followed by applying is equivalent to sequentially applying
$\act{(\mu\circ\nu)}{E} = \act{\nu}{\act{\mu}{E}}$.
We also have associativity $(\mu\circ\nu)\circ\xi =
\mu\circ(\nu\circ\xi)$ and identity $\one$.
The definition of composition naturally generalises to concrete
modalities $\mu_F$.
We can compose $\mu_F : E\to F$ and $\nu_E : E'\to E$ to get
$\mu_F\circ\nu_E : E'\to F$ which is defined as $(\mu\circ\nu)_F$.

\paragraph{Modality Transformations}

Just as modalities allow us to manipulate effect contexts, we need
a transformation relation that tells us when we can change
modalities.

In \Met, there could only be at most one transformation between any
two modalities.
As a result, we do not need to give names to transformations.
We write $\mu_F \To \nu_{F}$ for a transformation between concrete
modalities $\mu_F : E \to F$ and $\nu_{F} : E' \to F$.
Intuitively, such a transformation indicates that under ambient effect
context $F$, the action of $\mu$ can be replaced by the action of
$\nu$.
This relation is used to control variable access as we have
demonstrated in \Cref{sec:lock-and-key}.
For instance, supposing we have a variable of type $\boxwith{\mu}
(\TUnit\to \TUnit)$ under ambient effect context $F$, we can rewrap it
to a function of type $\boxwith{\nu} (\TUnit\to \TUnit)$ if $\mu_F \To
\nu_F$.

Intuitively, $\mu_F \To \nu_F$ is safe when $\nu(F)$ is larger than
$\mu(F)$ so that we have not lost any operations. Moreover, subeffecting
should not break the safety guarantee of transformations. That is,
$\mu(F') \subtype \nu(F')$ should hold for any effect context $F'$ with
$F\subtype F'$.
We formally define $\mu_F\To\nu_F$ by the transitive closure of the
following four rules.
\begin{mathpar}
  \inferrule*[Lab=\mtylab{Abs}]
  {
    {\mu}_{F} : E' \to F \\\\
    E\subtype E' \\
  }
  {\aeq{E}_F\To {\mu}_{F}}
\ \ \
  \inferrule*[Lab=\mtylab{Upcast}]
  {
    D \subtype D'
  }
  {\adj{L}{D}_F \To \adj{L}{D'}_F}
\ \ \
  \inferrule*[Lab=\mtylab{Expand}]
  {
    (F-L) \equiv \ell:A\sto B,E
  }
  {\adj{L}{D}_F \Rightarrow \adj{\ell,L}{D,\ell:A\sto B}_F}
\ \ \
  \inferrule*[Lab=\mtylab{Shrink}]
  {
    (F-L) \equiv \ell:P,E
  }
  {\adj{\ell,L}{D,\ell:P}_F \Rightarrow \adj{L}{D}_F}
\end{mathpar}

\mtylab{Abs} allows us to transform an absolute modality to any other
modality as long as no effect leaks.
\mtylab{Upcast} allow us to upcast a label with an absent type in $D$
to an arbitrary presence type, since the corresponding operation is
unused.
Recall that the subtyping relation between extensions only upcasts
presence types.
\mtylab{Expand} allows us to simultaneously mask and extend some
present operations given that these operations exist in the ambient
effect context $F$.
\mtylab{Shrink} allows us to do the reverse for any operations
regardless of their presence.

The following lemma shows that the syntactic definition of
transformation matches our intuition.
The proof is in \Cref{app:lemmas-modes}.
\begin{restatable}[Semantics of modality transformation]{lemma}{semanticModTrans}
  \label{lemma:semantic-modtrans}
  We have $\mu_F\To\nu_F$ if and only if
  $\act{\mu}{F'}\subtype\act{\nu}{F'}$ for all $F'$ with $F\subtype F'$.
\end{restatable}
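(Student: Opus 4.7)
The plan is to prove both directions separately, with the forward direction being a straightforward induction and the backward direction requiring a case analysis on the shapes of $\mu$ and $\nu$.

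For soundness ($\Rightarrow$), I would induct on the derivation of $\mu_F \To \nu_F$. Transitive closure composes, so it suffices to check each of the four base rules. For \mtylab{Abs}, the inequality $\aeq{E}(F') = E \subtype E' = \mu(F) \subtype \mu(F')$ follows from the hypothesis $E \subtype E'$ and monotonicity of $\mu$. For \mtylab{Upcast}, $D + (F' - L) \subtype D' + (F' - L)$ is immediate from $D \subtype D'$. For \mtylab{Expand} and \mtylab{Shrink} (which are mutual inverses), I would in fact show the stronger statement that $\adj{L}{D}(F') \equiv \adj{\ell,L}{D,\ell:A\sto B}(F')$ for every $F' \supseteq F$; the key point is that since $F \subtype F'$ only upgrades absent presences, the label $\ell$ occurring with concrete presence $A\sto B$ in $F - L$ occurs with the same presence in $F' - L$, so it can be consistently pulled out into the extension.

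For completeness ($\Leftarrow$), I would do a case split on whether $\mu$ and $\nu$ are absolute or relative. If $\mu = \aeq{E}$, take $E' = \act{\nu}{F}$ and invoke the semantic hypothesis at $F' = F$ to conclude $E \subtype E'$; then \mtylab{Abs} gives the transformation directly. The mixed case where $\mu = \adj{L_1}{D_1}$ is relative but $\nu = \aeq{E}$ is absolute cannot occur: pick a label $\ell$ fresh for $F$, $L_1$, $D_1$, and $E$, and consider $F' \equiv F, \ell : A\sto B$; then $\act{\mu}{F'}$ contains $\ell:A\sto B$ whereas $\act{\nu}{F'} = E$ does not, contradicting the hypothesis.

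The main obstacle is the relative-relative case $\mu = \adj{L_1}{D_1}$ and $\nu = \adj{L_2}{D_2}$. My plan is to use \mtylab{Expand} and \mtylab{Shrink} to bring both modalities to a normal form with a common mask component $L \supseteq L_1 \cup L_2$, after which the semantic hypothesis at $F' = F$ reduces the goal to a single application of \mtylab{Upcast}. Concretely, for each label $\ell$ with concrete presence in $L_2 \setminus L_1$ against $F - L_1$, \mtylab{Expand} moves it into the mask of $\mu$ while adding the matching entry to $D_1$; symmetric use of \mtylab{Expand} aligns $\nu$. Before any normalization, however, the hypothesis must rule out mismatches that would survive every $F'$: if some $\ell$ lies in $L_1$ but not in $L_2$ and is absent in $F$ with no matching entry in $D_2$, then enlarging $F$ to turn $\ell$ into a concrete operation contradicts $\act{\mu}{F'} \subtype \act{\nu}{F'}$; this rules out the problematic cases and guarantees that the normalization terminates with $D_1' \subtype D_2'$. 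The delicate bookkeeping is ensuring that scoped-row ordering is preserved during the Expand steps, which follows because each expanded label is pulled from a specific position in $F - L$ and inserted at the front of the extension uniformly on both sides.
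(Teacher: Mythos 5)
Your overall route is the same as the paper's: soundness by checking each of the four generating rules (with the same key observation that \mtylab{Expand} is safe precisely because a label that is concretely present in $F-L$ keeps that presence in every $F'-L$), and completeness by a shape analysis in which the absolute source case is immediate from \mtylab{Abs}, the relative-to-absolute case is refuted with a fresh label, and the relative-relative case is reduced to a normalization by \mtylab{Expand}/\mtylab{Upcast}/\mtylab{Shrink} --- which is exactly the paper's derived rule \mtylab{Rel}, read as Shrink-then-Upcast-then-Expand around a common mask (you take the join of the masks where the paper takes the shared part; the two are interderivable).

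There is, however, a concrete slip in the hardest case that would make the written-out argument fail as stated. Your normalization only Expands ``each label $\ell$ with concrete presence in $L_2 \setminus L_1$'', leaving the \emph{absent} labels of $L_2 \setminus L_1$ unaccounted for; and your ruling-out step instead targets $\ell \in L_1 \setminus L_2$ (and refers to a ``matching entry in $D_2$'' where $D_1$ is what matters). The presence conditions sit on the other half of the symmetric difference: a label masked by $\nu$ but not by $\mu$ (i.e.\ $\ell \in L_2\setminus L_1$) is the one the semantic hypothesis must force to be concretely present --- if it were absent in $F$, some $F'$ upgrades it to an arrow that $\act{\nu}{-}$ has deleted or pinned while $\act{\mu}{-}$ retains it, breaking the subtyping --- and only then does \mtylab{Expand} fire. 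Conversely, a label in $L_1\setminus L_2$ may be absent and is dispatched by \mtylab{Shrink}, which accepts any presence, provided $D_1$ carries an entry matching its presence in $F-L$; what the hypothesis must rule out there is a \emph{missing} entry in $D_1$, and the counterexample witnessing that needs a duplicate occurrence of $\ell$ in $F'$ with a distinct arrow (a single upgraded occurrence can still be absorbed by padding the source with $\ell:\Abs$ on the right, so ``turn $\ell$ into a concrete operation'' alone does not yield a contradiction). Relatedly, your claim that \mtylab{Expand} and \mtylab{Shrink} are mutual inverses witnessing an equivalence only holds for concrete presences; \mtylab{Shrink} also applies to $\ell:\Abs$, where only the one-directional inclusion $\act{\mu}{F'}\subtype\act{\nu}{F'}$ (which is all soundness needs) survives.
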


Let us give some examples. First, $\aeq{}_E \To \mu_E$ always
holds, consistent with the intuition that pure values can be used
anywhere safely.
Second, $\aex{\ell:\Abs}_E \To \aex{\ell:P}_E$ always holds.
Third, we have $\adj{\ell}{\ell:A\sto B}_{\ell:A\sto B,E} \Leftrightarrow
\one_{\ell:A\sto B,E}$ in both directions.
Last, $\one_E \To \aex{\ell:P}_E$ does not hold for any
$E$.

\subsection{Kinds and Contexts}
\label{sec:met-kinds-contexts}

\begin{figure}[htbp]
\raggedright
\boxed{\Gamma\vdash A : K\vphantom{\mu}}
\boxed{\Gamma\vdash P\vphantom{\mu}}
\boxed{\Gamma\vdash (\mu,A)\To\nu \atmode{F}}
\hfill
\begin{mathpar}
\inferrule*
{ }
{\Gamma \vdash \TUnit : \Pure}

\inferrule*
{
  \Gamma \vdash A : \Pure
}
{\Gamma \vdash A : \Any}

\inferrule*
{
  \Gamma \vdash \aeq{E} \\
  \Gamma \vdash A : \Any \\
}
{\Gamma \vdash \boxwith{\aeq{E}} A : \Pure}

\inferrule*
{
  \Gamma \vdash \adj{L}{D} \\
  \Gamma \vdash A : K \\
}
{\Gamma \vdash \boxwith{\adj{L}{D}} A : K}

\inferrule*
{
  \Gamma \vdash A : \Any \\\\
  \Gamma \vdash B : \Any
}
{\Gamma \vdash A\to B : \Any}

\inferrule*
{
  \Gamma\vdash A : \Pure \\\\
  \Gamma\vdash B : \Pure
}
{\Gamma \vdash A\sto B}

\inferrule*
{
  \Gamma \vdash A:\Pure
}
{\Gamma\vdash (\mu,A)\To\nu \atmode{F}}

\inferrule*
{
  \mu_F\To\nu_F
}
{\Gamma\vdash (\mu,A)\To\nu \atmode{F}}
\end{mathpar}

\raggedright
\boxed{\Gamma \atmode{E}}
\hfill
\vspace{-.5\baselineskip}
\begin{mathpar}
\inferrule*
{ }
{\cdot\atmode{E}}

\inferrule*
{
  \Gamma\atmode{F} \\
  \mu_F : E\to F \\
  \Gamma \vdash A : K \\
}
{\Gamma, x\varb{\mu_F}{A} \atmode{F}}

\inferrule*
{
  \Gamma\atmode{F} \\
  \mind{\mu}{F} : E\to F \\
}
{\Gamma, \lockwith{\mind{\mu}{F}} \atmode{E}}
\end{mathpar}
\caption{Representative kinding, well-formedness, and auxiliary rules for \Met.}
\label{fig:kinding-modalities}
\end{figure}

As illustrated in \Cref{sec:leaving-handlers}, we have two kinds
$\Pure$ and $\Any$.
The $\Pure$ kind is a sub-kind of the kind of all types $\Any$, and
denotes types of values that are guaranteed not to use operations from the ambient
effect context.
We show the kinding and well-formedness rules for types and presence
types in \Cref{fig:kinding-modalities}, relying on the well-formedness
of modalities $\Gamma\vdash\mu$ and effect contexts $\Gamma\vdash E$,
which is standard and defined in \Cref{app:rules-meet}.
Function arrows have kind $\Any$ due to the possibility of using
operations from the ambient effect context. A modal type
$\boxwith{\aeq{E}}A$ is absolute as it cannot depend on the ambient
effect context.
We restrict the kind of the argument and return value of effects to be
$\Pure$ in order to prevent effect leakage as discussed in
\Cref{sec:higher-order-fork}.

Contexts are ordered.
Each term variable binding $x\varb{\mu_F}{A}$ in contexts is tagged
with an concrete modality $\mu_F$. %
We omit this annotation when $\mu$ is identity.
Contexts contain locks $\lockwith{\mu_F}$ carrying concrete modalities
$\mu_F$. %
As shown in \Cref{sec:lock-and-key}, they track the introduction and
elimination of modalities and play an important role in controlling
variable access.
We omitted indexes of modalities on bindings and locks in
\Cref{sec:lock-and-key} for brevity; they are obvious from the
context.

We define the relation $\Gamma\atmode{E}$ that
context $\Gamma$ is well-formed at effect context $E$ in
\Cref{fig:kinding-modalities}.
For instance, given some modalities $\mu_F : E_1\to F, \nu_F : E_2\to
F,$ and $\xi_E : E_3 \to E$, the following context is well-formed at
effect context $E$.
Reading from left to right, the lock $\lockwith{\aeq{E}_F}$ switches
the effect context from $F$ to $E$ as $\aeq{E}(F) = E$.
\[
x\varb{\mu_F}{A_1},y\varb{\nu_F}{A_2},
{\lockwith{\aeq{E}_F}},
z\varb{\xi_E}{A_3}\atmode{E}
\]

Following MTT, we define $\locks{-}$ to compose all the modalities on
the locks in a context.
\[\bl
\locks{\cdot} = {\one} \qquad
\locks{\Gamma,\lockwith{\mind{\mu}{F}}} = \locks{\Gamma}\circ\mind{\mu}{F} \qquad
\locks{\Gamma,x\varb{\mu_F}{A}} = \locks{\Gamma} \\
\el\]

Following MTT, we identify contexts up to the following two equations.
\[\bl
\Gamma,\lockwith{{\one}_E}\atmode{E} = \Gamma\atmode{E} \qquad\qquad
\Gamma,\lockwith{{\mu}_F},\lockwith{{\nu}_{F'}} \atmode{E} = \Gamma,\lockwith{{\mu}_F\circ{\nu}_{F'}}\atmode{E} \\
\el\]

\subsection{Typing}
\label{sec:CalcM-typing}

The typing rules of \Met are shown in
\Cref{fig:typing-met}.
The typing judgement $\typm{\Gamma}{M:A}{E}$ means that the term $M$
has type $A$ under context $\Gamma$ and effect context $E$.
As usual, we require $\Gamma\atmode{E}$,
$\Gamma\vdash E$, $\Gamma\vdash A:K$ for some
$K$, and well-formedness for type annotations as well-formedness conditions.
We explain the interesting rules, which are highlighted in grey; the
other rules are standard.

\begin{figure}[htbp]

\raggedright
\boxed{\typm{\Gamma}{M : A}{E}}
\hfill

\begin{mathpar}
\hl{
\inferrule*[Lab=\tylab{Var}]
{
  {\nu}_{F}= \locks{\Gamma'} : E\to F \\\\
  \Gamma\vdash (\mu,A)\To\nu \atmode{F}
}
{\typm{\Gamma,x\varb{\mu_F}{A},\Gamma'}{x:A}{E}}
}

\hl{
\inferrule*[Lab=\tylab{Mod}]
{
  \mind{\mu}{F} : E \to F \\\\
  \typm{\Gamma,\lockwith{\mind{\mu}{F}}}{V:A}{E}
}
{\typm{\Gamma}{\Box_\mu\,V : \boxwith{\mu} A}{F}}
}

\hl{
\inferrule*[Lab=\tylab{Letmod}]
{
  \nu_F : E\to F \\
  \typm{\Gamma,\lockwith{\nu_F}}{V : \boxwith{\mu} A}{E} \\\\
  \typm{\Gamma,x\varb{\nu_F\circ\mu_E}{A}}{M:B}{F}
}
{\typm{\Gamma}{\Letm{\nu}{\mu} x = V \In M : B}{F}}
}
\\

\inferrule*[Lab=\tylab{Abs}]
{
  \typm{\Gamma, x : A}{M : B}{E}
}
{\typm{\Gamma}{\lambda x^A .M : A \to B}{E}}

\inferrule*[Lab=\tylab{App}]
{
  \typm{\Gamma}{M : A \to B}{E} \\\\
  \typm{\Gamma}{N : A}{E}
}
{\typm{\Gamma}{M\; N: B}{E}}

\inferrule*[Lab=\tylab{Do}]
{
  E = \ell:A\sto B,F \\\\
  \typm{\Gamma}{N : A}{E} \\
}
{\typm{\Gamma}{\Do \ell \; N : B}{E}}

\hl{
\inferrule*[Lab=\tylab{Mask}]
{
  \typm{\Gamma,\lockwith{\mind{\amk{L}}{F}}}{M: A}{F-L} \\
}
{\typm{\Gamma}{\Mask_{L}\; M: \boxwith{\amk{L}} A}{F}}
}

\hl{
\inferrule*[Lab=\tylab{Handler}]
{
  H = \{\Ret x \mapsto N\} \uplus \{ \ell_i\;p_i\;r_i \mapsto N_i \}_i \\\\
  \typm{\Gamma, \lockwith{\mind{\aex{D}}{F}}}{M : A}{D+F} \\
  \typm{\Gamma, x : \boxwith{\aex{D}} A}{N : B}{F} \\\\
  {D = \{\ell_i : A_i \sto B_i\}_i} \\
  [\typm{\Gamma, p_i : A_i, {r_i}: B_i \to B}{N_i : B}{F}]_i
}
{\typm{\Gamma}{\Handle\;M\With
  H : B}{F}}
}
\end{mathpar}

\caption{Typing rules for \Met.}
\label{fig:typing-met}
\end{figure}

\paragraph{Modality Introduction and Elimination}

Modalities are introduced by \tylab{Mod} and eliminated by
\tylab{Letmod}.
The term $\Box_\mu\,V$ introduces modality $\mu$ to the type of the
conclusion and lock $\lockwith{\mu_F}$ into the context of the
premise, and requires the value $V$ to be well-typed under the new
effect context $E$ manipulated by $\mu$.
The lock $\lockwith{\mu_F}$ tracks the change to the effect context.
Specialising the modality $\mu$ to either absolute or relative
modalities, we get the following two rules.

\noindent\vspace{-.5\baselineskip}
\begin{mathpar}
  \inferrule*%
  {
    \typm{\Gamma,\lockwith{{\aeq{E}}_{F}}}{V:A}{E}
  }
  {\typm{\Gamma}{\Box_{\aeq{E}}\,V : \boxwith{\aeq{E}} A}{F}}

  \inferrule*%
  {
    \typm{\Gamma,\lockwith{{\adj{L}{D}}_{F}}}{V:A}{D+(F-L)}
  }
  {\typm{\Gamma}{\Box_{\adj{L}{D}}\,V : \boxwith{\adj{L}{D}} A}{F}}
\end{mathpar}

Note that we use modality $\mu$ instead of concrete modality $\mu_F$ in
types and terms, because the index can always be inferred from the
effect context.
We restrict $\Box$ to values to avoid effect leakage~\citep{LorenzenWDEL24,Ahman23}.
Otherwise, a term such as $\Box_{\aex{\ell:P}}\,(\Do\ell\,V)$ would type
check under the empty effect context but get stuck due to the unhandled
operation $\ell$.

The term $\Letm{\nu}{\mu} x = V\In M$ moves the modality $\mu$ from
the type of $V$ to the binding of $x$.
As with boxing, unboxing is restricted to values.
Following MTT, we use let-style modality elimination
which takes another modality $\nu$ in addition to the modality $\mu$
that is eliminated from $V$.
This is crucial for sequential unboxing.
For instance, the following term sequentially unboxes $x :
\boxwith{\nu}\boxwith{\mu} A$. The variables $y$ and $z$ are bound as
$y\varb{\nu}{\boxwith{\mu} A}$ and $z\varb{\nu\circ\mu}{A}$,
respectively.
\[
\Letm{}{\nu} y = x \In
\Letm{\nu}{\mu} z = y \In M
\]

\paragraph{Masking and Handling}

Masking and handling also introduce relative modalities.
Unlike $\Mod$, these constructs can apply to computations as they
perform masking and handling semantically.
In \tylab{Mask}, the mask $\Mask_L\;M$ removes effects $L$ from the
ambient effect context for $M$.
For the return value of $M$, we need to box it with $\amk{L}$ to
reconcile the mismatch between $F-L$ and $F$.
In \tylab{Handler}, the handler $\Handle\; M\With H$ extends the ambient
effect context with effects $D$ for $M$.
For the return value of $M$ which is bound as $x$ in the return
clause, we need to box it with $\aex{D}$ to reconcile the mismatch
between $D+F$ and $F$.
The other parts of the handler rule are standard.

\paragraph{Accessing Variables}

The \tylab{Var} rule uses the auxiliary judgement
$\Gamma\vdash(\mu,A)\To\nu\atmode{F}$ defined in
\Cref{fig:kinding-modalities}.
Variables of absolute types can always be used as they do not depend
on the effect context.
For a non-absolute term variable binding $x\varb{\mu_F}{A}$ from
context $\Gamma,x\varb{\mu_F}{A},\Gamma'$, we must guarantee that it
is safe to use $x$ in the current effect context.
The term bound to $x$ is defined inside $\mu$ under the effect context
$F$.
As we track all transformations on effect contexts up to the binding
of $x$ as locks in $\Gamma'$, the current effect context $E$ is
obtained by applying $\locks{\Gamma'}$ to $F$.
Thus, we need the transformation $\mu_F \To \locks{\Gamma'}_F$ to hold
for effect safety.

\paragraph{Subeffecting}

Subeffecting is incorporated into the \tylab{Var} rule within the
transformation relation $\mu_F\To\nu_F$.
We have seen how subeffecting works in
\Cref{sec:overview-effect-coercions}.
We give another example here which upcasts the empty effect context to
$E$.
It is well-typed because $\aeq{}_\cdot \To \aeq{E}_\cdot$ holds.
\[
\lambda x^{\boxwith{\aeq{}}(\Int \to \Int)} .
  \Letm{}{\aeq{}} y = x \In
  \Box_{\aeq{E}}\, y
: {\boxwith{\aeq{}}(\Int \to
\Int)} \to {\boxwith{\aeq{E}}(\Int \to \Int)}
\]

\subsection{Operational Semantics}
\label{sec:semantics}

The operational semantics for \Met is quite
standard~\citep{HillerstromLA20}.
We first define evaluation contexts $\EC$:
\begin{syntax}

  \slab{Evaluation contexts} &  \EC &::= & [~]
    \mid \EC\; N \mid V\;\EC \mid
    \Do\ell\;\EC \mid \Mask_L\;\EC \mid \Handle\; \EC \With H \\
  \end{syntax}

\noindent The reduction rules are as follows.
  \begin{reductions}
  \semlab{App}   & (\lambda x^A.M)\,V &\reducesto& M[V/x] \\
  \semlab{Letmod}  & \Letm{\nu}{\mu} x = \Box_\mu\,V \In M &\reducesto& M[V/x] \\
  \semlab{Mask} & \Mask_L\, V &\reducesto& \Box_{\amk{L}}\, V \\
  \semlab{Ret} &
    \Handle\; V \With H &\reducesto& N[(\Box_{\aex{D}}\,V)/x], \hfill
    \text{where } (\Ret x \mapsto N) \in H
  \\
  \semlab{Op} &
    \Handle\; \EC[\Do\ell \; V] \With H
      &\reducesto& N[V/p, (\lambda y.\Handle\; \EC[y] \With H)/r],\\
  \multicolumn{4}{@{}r@{}}{
        \text{ where } \free{0}{\ell}{\EC} \text{ and } (\ell \; p \; r \mapsto N) \in H
  } \\
  \semlab{Lift} &
    \EC[M] &\reducesto& \EC[N],  \hfill\text{if } M \reducesto N \\
  \end{reductions}
The only slightly non-standard aspect of the rules is the boxing of
values escaping masks and handlers in \semlab{Mask} and \semlab{Ret}.
They coincide with the typing rules for masks and handlers.
In \semlab{Ret}, we assume handlers are decorated with the operations
$D$ that they handle as in \Cref{sec:overview}.

Following \citet{BiernackiPPS18}, the predicate $\free{n}{\ell}{\EC}$
is defined inductively on evaluation contexts as follows.
We have $\free{n}{\ell}{\EC}$ if there are $n$ masks for
$\ell$ in $\EC$ without corresponding handlers.
The \semlab{Op} rule requires $\free{0}{\ell}{\EC}$ to guarantee that
the current handler is not masked by masks in $\EC$.
The meta function $\meta{count}(\ell;L)$ yields the number of $\ell$
labels in $L$.
We omit the inductive cases that do not change $n$.

\vspace{-.5\baselineskip}
\begin{mathpar}
  \inferrule*
  { }
  {\free{0}{\ell}{[~]}}

  \inferrule*
  {\free{n}{\ell}{\EC} }
  {\free{n}{\ell}{\Do\ell'\,\EC}}

  \inferrule*
  {\free{n}{\ell}{\EC} \\ \meta{count}(\ell;L) = m}
  {\free{(n+m)}{\ell}{\Mask_L\,\EC}}
  \\

  \inferrule*
  {\free{(n+1)}{\ell}{\EC} \\ \ell \in \dom{H}}
  {\free{n}{\ell}{\Handle\;\EC\With H}}

  \inferrule*
  {\free{n}{\ell}{\EC} \\ \ell \notin \dom{H}}
  {\free{n}{\ell}{\Handle\;\EC\With H}}
\end{mathpar}

\subsection{Type Soundness and Effect Safety}
\label{sec:metatheory}

We prove type soundness and effect safety for \Met. Our proofs cover
the extensions in \Cref{sec:extensions}.

\Met enjoys substitution properties along the lines of
\citet{KavvosG23}.
For example, we have the following rule for substituting values with
modalities into terms.
\begin{mathpar}
  \inferrule*
  {
    \typm{\Gamma,\lockwith{\mu_F}}{V:A}{F'} \\
    \typm{\Gamma,x\varb{\mu_F}{A},\Gamma'}{M:B}{E} \\
  }
  {\typm{\Gamma,\Gamma'}{M[V/x]:B}{E}}
\end{mathpar}
We state and prove the relevant properties in \Cref{app:CalcM-lemmas}.

To state syntactic type soundness, we first define normal forms.

\begin{definition}[Normal Forms]
  \label{def:normal-forms}
  We say a term $M$ is in a normal form with respect to effect type
  $E$, if it is either
  a value $V$
  or of the form $M =
  \EC[\Do\ell\;V]$ for $\ell\in E$ and $\free{n}{\ell}{\EC}$.
\end{definition}

The following together give type soundness and effect safety (proofs
in \Cref{app:progress,app:subject-reduction}).
\begin{restatable}[Progress]{theorem}{progress}
  \label{thm:progress}
  If $\typm{\,}{M:A}{E}$, then either there exists $N$ such that
  $M\reducesto N$ or $M$ is in a normal form with respect to $E$.
\end{restatable}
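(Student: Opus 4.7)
The plan is to prove \Cref{thm:progress} by induction on the typing derivation $\typm{\,}{M:A}{E}$. The empty context is important because it rules out the \tylab{Var} case. The value cases (\tylab{Unit}, \tylab{Abs}, \tylab{Mod}) are immediate, since these terms are already values. For the remaining elimination-form cases, I would combine canonical-forms reasoning with the inductive hypothesis, using \semlab{Lift} to propagate reductions through evaluation contexts.

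The first key supporting lemma is a canonical forms lemma: a closed value $V$ with $\typm{\,}{V : A \to B}{E}$ must be a $\lambda$-abstraction, and a closed value $V$ with $\typm{\,}{V : \boxwith{\mu} A}{E}$ must have the shape $\Mod_\mu\,V'$. Both follow by inspection of the value formation rules and the typing rules restricted to values. With these in hand, \tylab{App} is handled by the usual left-then-right analysis, producing either \semlab{Lift} or \semlab{App}; \tylab{Letmod} produces either \semlab{Lift} or \semlab{Letmod} using canonical forms on the boxed value. The \tylab{Do} case is straightforward: if the inner term reduces, use \semlab{Lift}; if it is a value, then $E$ contains $\ell$ by \tylab{Do}, so $\Do\ell\,V$ is already a normal form with the empty context $\EC = [~]$, for which $\free{0}{\ell}{[~]}$ holds.

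The interesting cases are \tylab{Mask} and \tylab{Handler}, where I must carefully maintain the normal-form invariant through the $\free{n}{\ell}{\EC}$ predicate. For $\Mask_L\,M$, if $M$ is a value, apply \semlab{Mask}; if $M$ reduces, use \semlab{Lift}; otherwise $M = \EC[\Do\ell\,V]$ with $\free{n}{\ell}{\EC}$ and $\ell \in F-L$ (where $F$ is the outer ambient context and $F-L$ is the premise's context). The mask rule for $\free{-}{-}{-}$ then gives $\free{(n + \meta{count}(\ell;L))}{\ell}{\Mask_L\,\EC}$, and $\ell \in F-L$ implies $\ell \in F$, so the whole term is in normal form with respect to $F$. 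For $\Handle\;M\With H$, if $M$ is a value apply \semlab{Ret}; if $M$ reduces use \semlab{Lift}; otherwise $M = \EC[\Do\ell\,V]$ with $\free{n}{\ell}{\EC}$ for $\ell \in D+F$. I split on whether $\ell \in \dom{H}$: if yes and $n = 0$ then \semlab{Op} fires; if yes and $n > 0$, the handler-rule for $\free{-}{-}{-}$ gives $\free{(n-1)}{\ell}{\Handle\;\EC\With H}$, which combined with $\ell \in F$ (since $\ell \in \dom{H}$ means $\ell$ gets removed from the visible context) shows the whole term is in normal form; if $\ell \notin \dom{H}$ then $\ell \in F$ and again the whole term is in normal form.

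The main obstacle I anticipate is precisely this bookkeeping in the mask/handler cases: when a handled operation appears underneath intervening masks and handlers, I must ensure that either (i) the indices of $\free{-}{-}{-}$ line up so that the outermost handler sees the operation and \semlab{Op} applies, or (ii) the operation genuinely remains unhandled in $E$, justifying the normal form. This essentially requires tracking how $\dom{H}$, $L$, and $\meta{count}(\ell;L)$ interact with effect-context manipulation, and verifying that the typing rules for \tylab{Do}, \tylab{Mask}, and \tylab{Handler} guarantee the membership $\ell \in E$ that the normal-form definition demands. Once this invariant is formalised as a side lemma (``if $\typm{\,}{\EC[\Do\ell\,V]}{A}{E}$ with $\free{0}{\ell}{\EC}$, then $\ell \in E$''), the case analysis goes through without further difficulty.
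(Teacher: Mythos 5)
Your proposal follows essentially the same route as the paper's proof: induction on the typing derivation (the empty context ruling out the variable case), a canonical-forms lemma for the elimination forms, and a three-way split (reducible / value / stuck on an operation) in the mask and handler cases. The only refinement needed is that your side lemma must be stated for arbitrary mask counts $n$ rather than just $\free{0}{\ell}{\EC}$ --- the invariant has to relate $n$ to the number of occurrences of $\ell$ remaining in the ambient effect context so that the $n>0$ handler subcase still yields $\ell\in F$ --- but you correctly identify this bookkeeping as the crux, and the paper's own proof elides it entirely.
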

\begin{restatable}[Subject Reduction]{theorem}{subjectReduction}
  \label{thm:subjrect-reduction}
  If\, $\typm{\Gamma}{M:A}{E}$ and $M \reducesto N$, then
  $\typm{\Gamma}{N:A}{E}$.
\end{restatable}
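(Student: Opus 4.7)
The plan is to prove subject reduction by induction on the derivation of $M\reducesto N$, relying on the substitution lemmas already developed for \Met (in particular, the modal substitution lemma stated above, which allows replacing a variable binding $x\varb{\mu_F}{A}$ by a value $V$ typed under context $\Gamma,\lockwith{\mu_F}$). For the uniform cases \semlab{App} and \semlab{Letmod}, the result follows almost immediately: \semlab{App} is the usual value substitution (the identity modality case), while \semlab{Letmod} is precisely the motivating instance of modal substitution, where the lock $\lockwith{\mu_F}$ inserted in the typing derivation of $\Box_\mu\,V$ matches the annotation on the bound variable.

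The two structural cases \semlab{Mask} and \semlab{Ret} require checking that the typing rules correctly anticipate the syntactic boxing performed on return. For \semlab{Mask}, inversion on \tylab{Mask} gives that $V$ has type $A$ at $F-L$ under $\Gamma,\lockwith{\amk{L}_F}$, which is exactly the premise of \tylab{Mod} for $\Box_{\amk{L}}\,V:\boxwith{\amk{L}}A$ at $F$. For \semlab{Ret}, inversion on \tylab{Handler} yields $V:A$ at $D+F$ under $\Gamma,\lockwith{\aex{D}_F}$ and $N:B$ at $F$ under $\Gamma,x:\boxwith{\aex{D}}A$; by the same reasoning $\Box_{\aex{D}}\,V$ has type $\boxwith{\aex{D}}A$ at $F$, and the ordinary substitution lemma delivers the result.

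The main obstacle is \semlab{Op}, and with it \semlab{Lift}. For \semlab{Lift} I will need a standard evaluation-context decomposition lemma: if $\typm{\Gamma}{\EC[M]:A}{E}$ then there exist $A'$ and $E'$ with $\typm{\Gamma,\Gamma_\EC}{M:A'}{E'}$, where $\Gamma_\EC$ records the locks accumulated by traversing frames such as $\Handle\;[~]\With H$ and $\Mask_L\,[~]$; moreover, plugging any term of type $A'$ at $E'$ into the hole recovers a term of type $A$ at $E$. The delicate point is that mask and handler frames introduce locks $\lockwith{\amk{L}_F}$ and $\lockwith{\aex{D}_F}$ respectively, so the extra context $\Gamma_\EC$ will be a sequence of such locks whose composition tracks the net effect transformation done by $\EC$. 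For \semlab{Op}, the side condition $\free{0}{\ell}{\EC}$ ensures the composed modality of $\Gamma_\EC$ extends the ambient context with $\ell:A\sto B$, so that from $\Do\ell\;V$ being typed I can recover $V:A$ at $E$, hence $p:A$ is well-typed in the handler clause. The continuation $\lambda y.\Handle\;\EC[y]\With H$ is then typable at $A\to B$ using the decomposition lemma in reverse, matching the expected type of $r$; a final application of value substitution delivers the typing of the reductum.

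I expect the administrative bookkeeping around lock composition inside $\Gamma_\EC$ to be the main technical challenge: one must verify that the identifications $\locks{\Gamma,\lockwith{\mu_F},\lockwith{\nu_{F'}}}=\locks{\Gamma}\circ\mu_F\circ\nu_{F'}$ and $\locks{\cdot,\lockwith{\one_E}}=\locks{\cdot}$ hold up to the context equations given in \Cref{sec:met-kinds-contexts}, and that \Cref{lemma:semantic-modtrans} together with the \mtylab{Expand}/\mtylab{Shrink} rules suffices to move variables across the synthesized composite modality. Once this decomposition lemma is in place, every reduction case reduces to a direct application of a substitution lemma, and subject reduction follows.
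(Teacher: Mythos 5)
Your overall strategy is workable, and your treatment of \semlab{App}, \semlab{Letmod}, \semlab{Mask}, and \semlab{Ret} matches what is needed (the paper inducts on the typing derivation rather than on the reduction derivation, which lets it absorb \semlab{Lift} into the induction hypotheses of each typing case instead of proving a separate evaluation-context decomposition lemma, but either organisation can be made to work). However, there is a genuine gap in your \semlab{Op} case, which you yourself identify as the crux.

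You claim that from the typing of $\Do\ell\;V$ deep inside $\EC$ you ``can recover $V:A$ at $E$'', and that the continuation variable $y$ and the handler-clause binding $p$ can be moved across ``the synthesized composite modality'' of $\Gamma_\EC$ using \Cref{lemma:semantic-modtrans} and the \mtylab{Expand}/\mtylab{Shrink} rules. This does not go through as stated: $V$ is typed at the extended effect context $D+F$ (further modified by any frames of $\EC$), under the lock $\lockwith{\aex{D}_F}$ and whatever locks $\EC$ contributes, and there is in general \emph{no} modality transformation that lets a value typed there be retyped at the outer context $F$ where the clause body $N_i$ lives --- that is exactly the unsoundness the \lstinline{leak} example in \Cref{sec:higher-order-fork} exhibits. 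The step is sound only because of the kinding restriction on operation signatures: the rule for $\Gamma \vdash A\sto B$ forces both the parameter type $A_i$ and the result type $B_i$ of every handled operation to have kind $\Pure$, and the paper then invokes \Cref{lemma:pure-promotion} (Pure Promotion) to move $U$ out from under the locks of $\EC$ and into the clause context, and symmetrically to push $y:B_j$ back under those locks when retyping $\EC[y]$. Your proposal never mentions purity of the operation types or any promotion lemma, and without that ingredient the central step of the \semlab{Op} case is unjustified. Everything else in your plan --- the decomposition lemma, the lock bookkeeping, the final value substitution --- is fine, but you need to add the purity argument to close this case.
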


\section{Encoding Effect Polymorphism in \Met}
\label{sec:encodings}

Even without effect variables, \Met is sufficiently expressive to
encode programs from conventional row-based effect systems provided
effect variables on function arrows always refer to the lexically
closest one.
This is an important special case, since most functions in practice
use at most one effect variable.
For example, as of July 2024, the \Koka repository contains 520
effectful functions across 112 files but only 86 functions across 5
files use more than one effect variable, almost all of them internal
primitives for handlers not exposed to programmers.
Moreover, almost all programs in the \Frank repository make no mention
of effect variables, relying on syntactic sugar to hide the single
effect variable.
We formally characterise and prove this intuition on the
expressiveness of \Met.

\subsection{Row Effect Types with a Single Effect Variable}
\label{sec:Leff}

We first define \Leff, a core calculus with row-based effect types in
the style of Koka~\citep{koka}, but where each scope can only refer to
the lexically closest effect variable.

\vspace{-.5\baselineskip}

\noindent\hspace{-1em}
\begin{minipage}[t]{0.5\textwidth}
\begin{syntax}
  \slab{Types}\hspace{-2.5em} &A,B  &::= & \TUnit \mid \earr{A}{B}{\geffrow{E}{\varepsilon}} \mid \forall \gray{\varepsilon}.A \\
  \slab{Effects}\hspace{-2.5em}  &L,D,E,F &::= & \cdot \mid \ell, E \\
  \slab{Contexts}\hspace{-1.5em} &\Gamma  &::= & \cdot
                  \mid \Gamma,x:_\varepsilon A
                  \mid \Gamma,\marker{}{E}
                  \mid \Gamma,\marker{\Lambda}{E}
                  \\
  \slab{Values}\hspace{-2.5em} &V,W  &::= & x \mid \elambda{\geffrow{E}{\varepsilon}} x^A . M \mid \Lambda \gray{\varepsilon} . V \\
\end{syntax}
\end{minipage}
\hspace{-1.8em}
\begin{minipage}[t]{0.45\textwidth}
\begin{syntax}
  \slab{Terms}\hspace{-2.5em}    &M,N  &::= & \Unit \mid x \mid \elambda{\geffrow{E}{\varepsilon}} x^A . M \mid M\,N \\
    & &\mid & \Lambda \gray{\varepsilon}. V \mid M\,\eapp\gray{\effrow{E}{\varepsilon}} \mid \Do \ell \; M  \\
    &     & \mid & \Mask_{L}\; M \mid \Handle\;M\With H \\
  \slab{Handlers}\hspace{-1.5em} &H &::= & \{\Return \; x \mapsto M\} \mid \{\ell\,p\,r \mapsto M\} \uplus H \\
\end{syntax}
\end{minipage}
\vspace{.5\baselineskip}

In types we include units, effectful functions, and effect abstraction
$\forall \gray{\varepsilon}. A$. As we consider only one effect
variable at a time, we need not track effect variables on function
types and effect abstraction.
Nonetheless, we include them in grey font for easier comparison with
existing calculi.
In $\Gamma$, each term variable is annotated with the effect variable
$\varepsilon$ that was referred to at the time of its introduction.
Further, we add markers $\marker{}{E}$ and $\marker{\Lambda}{E}$ to
the context, which track the change of effects due to functions,
masks, handlers, and effect abstraction.
These markers are not needed by the typing rules but help with the
encoding. As with \Met, we require contexts to be ordered.
For simplicity we assume operations are always present and defined by
a global context $\Sigma = \{\ol{\ell:A\sto B}\}$, thus unifying
extensions $D$, masks $L$, and effect contexts $E$ of \Met into one
syntactic category.
Mirroring our kind restriction for operation arrows in \Met, we
assume that these $A$ and $B$ are not function arrows, but they can be
effect abstractions (which may themselves contain function arrows).

\begin{figure}[t]
\raggedright
\boxed{\typl{\Gamma}{M:A}{\effrow{E}{\varepsilon}}}
\hfill
\begin{mathpar}
  \inferrule*[Lab=\rowlab{Var}]
  {
    \varepsilon = \varepsilon'
    \text{ or } \\\\
    A = \forall \gray{\varepsilon''}. A'
    \text{ or } A = \TUnit
  }
  {
    \typl{\Gamma_1,x:_{\varepsilon'} A,\Gamma_2}{x:A}{\effrow{E}{\varepsilon}}
  }
\quad\!
  \inferrule*[Lab=\rowlab{Abs}]
  {
    \typl{\Gamma, \marker{}{E}, x:_\varepsilon A}{M:B}{\effrow{F}{\varepsilon}}
  }
  {
    \typl{\Gamma}{\elambda{\geffrow{F}{\varepsilon}} x^A . M : \earr{A}{B}{\geffrow{F}{\varepsilon}}}{\effrow{E}{\varepsilon}}
  }
\quad\!
  \inferrule*[Lab=\rowlab{App}]
  {
    \typl{\Gamma}{M : \earr{A}{B}{\geffrow{E}{\varepsilon}}}{\effrow{E}{\varepsilon}} \\\\
    \typl{\Gamma}{N : A}{\effrow{E}{\varepsilon}} \\
  }
  {\typl{\Gamma}{M\, N : B}{\effrow{E}{\varepsilon}}}

  \inferrule*[Lab=\rowlab{EAbs}]
  {
    \varepsilon' \notin \ftv{\Gamma} \\\\
    \typl{\Gamma,\marker{\Lambda}{E}}{V : A}{\effrow{\cdot}{\varepsilon'}} \\
  }
  {
    \typl{\Gamma}{\Lambda \gray{\varepsilon'}. V : \forall \gray{\varepsilon'}.A}{\effrow{E}{\varepsilon}}
  }
\quad\ \
  \inferrule*[Lab=\rowlab{EApp}]
  {
    \typl{\Gamma}{M : \forall \gray{\varepsilon'}.A}{\effrow{E}{\varepsilon}}
  }
  {
    \typl{\Gamma}{M\,\eapp\gray{\effrow{E}{\varepsilon}} : A[\geffrow{E}{\varepsilon}/]}{\effrow{E}{\varepsilon}}
  }
\quad\ \
  \inferrule*[Lab=\rowlab{Mask}]
  {
    \typl{\Gamma,\marker{}{L+E}}{M: A}{\effrow{E}{\varepsilon}}
  }
  {
    \typl{\Gamma}{\Mask_{L}\; M: A}{\effrow{L+E}{\varepsilon}}
  }

  \inferrule*[Lab=\rowlab{Do}]
  {
    (\ell:A\sto B) \in \Sigma \\\\
    \typl{\Gamma}{M : A}{\effrow{\ell,E}{\varepsilon}}
  }
  {
    \typl{\Gamma}{\Do \ell \; M : B}{\effrow{\ell,E}{\varepsilon}}
  }
\quad\
  \inferrule*[Lab=\rowlab{Handler}]
  {
    \typl{\Gamma,\marker{}{E}}{M : A}{\effrow{\ol{\ell_i},E}{\varepsilon}} \\
    \typl{\Gamma, x :_\varepsilon A}{N : B}{\effrow{E}{\varepsilon}} \\\\
    \{\ell_i:A_i\sto B_i\} \subseteq \Sigma \\
    [\typl{\Gamma, p_i :_\varepsilon A_i, {r_i}:_\varepsilon B_i \to^{\geffrow{E}{\evar}} B}{N_i : B}{\effrow{E}{\varepsilon}}]_i \\
  }
  {
    \typl{\Gamma}{\Handle\;M\With \{\Ret x \mapsto N\} \uplus \{ \ell_i\;p_i\;r_i \mapsto N_i \}_i : B}{\effrow{E}{\varepsilon}}
  }
\end{mathpar}
\caption{Typing rules of \Leff.}
\label{fig:Feff}
\end{figure}

\Cref{fig:Feff} gives the typing rules of \Leff.
The judgement $\typl{\Gamma}{M : A}{\effrow{E}{\varepsilon}}$ states
that in context $\Gamma$, the term $M$ has type $A$ and might use
concrete effects $E$ extended with effect variable $\varepsilon$.
The typing rules are mostly standard for row-based effect systems.
The $\rowlab{Var}$ rule ensures that either the current effect
variable matches the effect variable at which the variable was
introduced or that the value is an effect abstraction or unit.
These constraints guarantee that programs can only refer to one effect
variable in one scope.
The \rowlab{App}, \rowlab{Do}, $\rowlab{Mask}$, and $\rowlab{Handler}$
rules are standard.
The $\rowlab{Abs}$ rule is standard except for requiring the effect
variable to remain unchanged.
The \rowlab{EAbs} rule introduces a new effect variable $\evar'$ and
the $\rowlab{EApp}$ rule instantiates an effect abstraction.
While conventional systems allow instantiating with any effect row,
\rowlab{EApp} only allows instantiation with the ambient effects $E$
and effect variable $\evar$.
The instantiation operator $[\geffrow{E}{\varepsilon}/]$ implements
standard type substitution for the single effect variable as follows.

\noindent\vspace{-.5\baselineskip}
\[
\ba{c}
\TUnit[\geffrow{E}{\varepsilon}/] \,=\, \TUnit \qquad\qquad
(\forall \gray{\varepsilon'}.A)[\geffrow{E}{\varepsilon}/] \,=\, \forall \gray{\varepsilon'}.A \\
(\earr{A}{B}{\geffrow{F}{\varepsilon'}})[\geffrow{E}{\varepsilon}/] \,=\, \earr{A[\geffrow{E}{\varepsilon}/]}{B[\geffrow{E}{\varepsilon}/]}{\geffrow{F,E}{\varepsilon}} \\
\ea
\]

For example, the following \Leff function (grey parts omitted) sums up
all yielded integers.
\[\ba{r@{~~}c@{~~}l}
  \text{\lstinline{asSum}} &:& \forall . \earr{(\earr{\TUnit}{\TUnit}{{\code{Gen Int}}{}})}{\Int}{} \\
  \text{\lstinline{asSum}}
  &=& \Lambda. \lambda f. \Handle\;f\;()\With \{
    \Ret x \mapsto 0,
    \Yield\; x\;r \mapsto x + r\;\Unit\}
\ea\]

\subsection{Encoding}
\label{sec:encoding-Leff}

We now give compositional translations for types and contexts of \Leff
into \Met.
We transform \Leff types at effect context $E$ to modal types in
\Met by the translation $\stransl{-}{E}$.

\begin{minipage}[t]{0.45\textwidth}
  \[\ba[t]{r@{\ \ }c@{\ \ }l}
    \stransl{\TUnit}{E} &=& \boxwith{\one} \TUnit \\
    \stransl{\earr{A}{B}{F}}{E} &=& \boxwith{\adj{E-F}{F-E}}(\stransl{A}{F}\to\stransl{B}{F}) \\
    \stransl{\forall.A}{E} &=& \boxwith{\aeq{}} \stransl{A}{\cdot} \\
  \ea
  \]
\end{minipage}
\begin{minipage}[t]{0.55\textwidth}
  \vspace{-.3em}
  \[\ba[t]{r@{\ \ }c@{\ \ }l}
    \stransl{\cdot}{E} &=& \cdot \\
    \stransl{\Gamma,x:A}{E} &=& \stransl{\Gamma}{E}, x\varb{\mu_E}{A'} \text{ for } \boxwith{\mu} A' = \stransl{A}{E} \\
    \stransl{\Gamma,\marker{}{F}}{E} &=& \stransl{\Gamma}{F}, \lockwith{\adj{F-E}{E-F}} \\
    \stransl{\Gamma,\marker{\Lambda}{F}}{\cdot} &=& \stransl{\Gamma}{F}, \lockwith{\aeq{}} \\
  \ea
  \]
  \\
\end{minipage}

For the unit type, we insert the identity modality for the uniformity
of the translation.
For a function arrow $\earr{A}{B}{F}$, we use a relative modality
$\adj{E-F}{F-E}$ to wrap the function arrow.
This modality transforms the outside effect context $E$ to $F$ which
is specified by the original function arrow $\earr{A}{B}{F}$.
For effect abstraction, we use an empty absolute modality to wrap the
type, simulating entering a new scope with different effect variables.
We translate contexts by translating each type and moving top-level
modalities to their bindings. For each marker, we insert a corresponding
lock to reflect the changes of effect context. %

As an example of type translation, the type of \lstinline{asSum} in
\Leff from \Cref{sec:Leff} is translated to
\[\ba{rcl}
\text{\lstinline{asSum}} &:& \boxwith{\aeq{}}(\boxwith{\aex{\code{Gen Int}}}(\boxwith{\one}\TUnit \to \boxwith{\one}\TUnit) \to \boxwith{\one}\Int). \\
\ea\]

Observe that not every valid typing judgement in \Leff can be
transformed to a valid typing judgement in \Met, because the translation
depends on markers in contexts, while the typing of \Leff does not.
We define well-scoped typing judgements, which characterise the typing
judgements for which our encoding is well-defined, as follows.

\begin{restatable}[Well-scoped]{definition}{LeffWellScoped}
  A typing judgement $\typl{\Gamma_1, x :_\varepsilon A, \Gamma_2}{M :
  B}{\effrow{E}{\evar}}$ is \textit{well-scoped for $x$} if either $x
  \notin \fv{M}$ or $\marker{\Lambda}{F} \notin \Gamma_2$ or $A =
  \forall. A'$ or $A = \TUnit$. A typing judgement $\typl{\Gamma}{M :
  A}{\effrow{E}{\evar}}$ is \textit{well-scoped} if it is well-scoped for all $x \in
  \Gamma$.
\end{restatable}

In particular, if the judgement at the leaf of a derivation tree is
well-scoped, then every judgement in the derivation tree is
well-scoped.
Also note that contexts with no markers are always well-scoped.
Consequently, restricting judgements to well-scoped ones in \Leff does
not reduce expressive power, as we can always ensure that contexts at
the leaves have no markers.

We write $\encode{M:A}{E}{M'}$ for the translation of a \Leff term $M$
of type $A$ with effect context $E$ to a \Met term $M'$.
We have the following type preservation theorem.
The translation of terms and proof of type preservation can be found
in \Cref{app:encoding}.
\begin{restatable}[Type preservation of encoding]{lemma}{LeffToMet}
  If $\typl{\Gamma}{M:A}{\effrow{E}{\varepsilon}}$ is well-scoped, then $\encode{M:A}{E}{M'}$
  and $\typm{\stransl{\Gamma}{E}}{M':\stransl{A}{E}}{E}$.
\end{restatable}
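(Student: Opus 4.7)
The plan is to proceed by induction on the typing derivation of $\typl{\Gamma}{M:A}{\effrow{E}{\varepsilon}}$, simultaneously defining the target \Met term $M'$ via the case on the last typing rule and verifying that $\typm{\stransl{\Gamma}{E}}{M':\stransl{A}{E}}{E}$. Before starting the induction, I would establish a handful of auxiliary facts: (i) well-formedness of $\stransl{\Gamma}{E}$ at mode $E$ in \Met, with each marker contributing exactly the lock needed to realign the ambient mode; (ii) that the concrete modality $\adj{F-E}{E-F}_F$ really is a modality $E \to F$ (so the arrow translation is well-kinded); and (iii) that the type translation commutes with effect substitution in the sense $\stransl{A[\effrow{E}{\varepsilon}/]}{E} = \stransl{A}{\cdot}$ (up to equivalences of contexts), which is needed for \rowlab{EApp}.

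For the case analysis, most rules elaborate straightforwardly. For \rowlab{Abs}, the marker $\marker{}{F}$ in the premise maps to a lock $\lockwith{\adj{F-E}{E-F}_E}$ in the translated context, so the body lies at mode $F$ and a single $\Box_{\adj{E-F}{F-E}}$ on the outside yields the translated function type. For \rowlab{App}, the translated function has type $\stransl{\earr{A}{B}{E}}{E} = \boxwith{\one}(\stransl{A}{E}\to\stransl{B}{E})$, which we unbox with \tylab{Letmod} before applying it to the translated argument. \rowlab{EAbs} is elaborated with $\Mod_{\aeq{}}$ matching the $\marker{\Lambda}{E}$ marker's lock $\lockwith{\aeq{}}$; \rowlab{EApp} is elaborated with a $\Letm{}{\aeq{}}$ that reinstates the absolute modality and then relies on the substitution lemma above. \rowlab{Mask}, \rowlab{Handler}, and \rowlab{Do} essentially map onto the identically named \Met constructors, with the observation that the translation already wraps escaping values with the correct modality and that the operation types, being purely value types in $\Sigma$, translate to absolute types and hence present no mode-mismatch for operation arguments and results.

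The hardest case—and the reason the well-scoped hypothesis is in the statement—is \rowlab{Var}. For a binding $x :_{\varepsilon'} A$ at position $\Gamma_1, x :_{\varepsilon'} A, \Gamma_2$, the translation records $x\varb{\mu_F}{A'}$ with $\boxwith{\mu}A' = \stransl{A}{F}$, where $F$ is the ambient effect row at that point, and introduces locks corresponding to every marker between the binding and the use. For the variable rule in \Met to fire, I need to exhibit the transformation $\Gamma \vdash (\mu, A') \To \locks{\stransl{\Gamma_2}{E}} \atmode{F}$. I split into the three sub-cases of well-scopedness: when $\varepsilon = \varepsilon'$, no $\marker{\Lambda}{\_}$ lies in $\Gamma_2$, so the composed locks reduce to a single relative modality $\adj{F-E}{E-F}_F$, and the required transformation follows from an explicit calculation showing $\mu \To \adj{F-E}{E-F}$ by successive \mtylab{Shrink}/\mtylab{Expand}/\mtylab{Upcast} steps; when $A = \forall. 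A'$ or $A = \TUnit$, the translated type $\stransl{A}{F}$ carries a top-level absolute modality, so $A'$ has kind $\Pure$ and the second premise of the $(\mu,A)\To\nu$ judgement fires unconditionally. The well-scoped condition precisely rules out the remaining case in which an $\marker{\Lambda}{\_}$ marker, translating to an absolute lock, would block access to a non-absolute variable.

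The main obstacle will be this bookkeeping of lock composition against markers, together with keeping track of how the two slightly different forms of markers ($\marker{}{E}$ and $\marker{\Lambda}{E}$) produce the correct kind of lock. I expect that most of the work is encapsulated by a single algebraic lemma stating that for any $\Gamma_2$ with no $\marker{\Lambda}{\_}$, if $\Gamma$ is at mode $F$ and $\Gamma,\Gamma_2$ is at mode $E$, then $\locks{\stransl{\Gamma_2}{E}} = \adj{F-E}{E-F}_F$, and a dual lemma handling $\marker{\Lambda}{\_}$ via $\aeq{}$. Once this lemma is in hand, all other cases reduce to mechanical checking, and the substitution lemma for \rowlab{EApp} can be proved by a routine induction on \Leff types using the fact that the translation only inspects concrete effects via set-theoretic differences $E-F$ and $F-E$, which commute with substituting a variable that does not appear in the scope of an inner $\forall$.
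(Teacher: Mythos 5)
Your proposal matches the paper's proof in essentially every respect: induction on the \Leff typing derivation, the telescoping lemma showing that a sequence of marker-induced relative locks composes (up to transformation) to the single modality $\adj{F-E}{E-F}$, the instantiation lemma $\stransl{A}{\cdot} = \stransl{A[E/]}{E}$ for \rowlab{EApp}, and the three-way case split on the bound variable's type in \rowlab{Var} with well-scopedness ruling out an intervening absolute lock. The only detail you gloss over — the explicit modality transformations needed to re-box values escaping masks and handler return clauses — is handled in the paper by two further small lemmas of exactly the kind you anticipate, so the approach is the same.
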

Intuitively, the term translation inserts boxing and unboxing
constructs in appropriate places in order to realise transition
between effect contexts as embodied by the type translation.
Specifically, for variables we unbox them immediately after they are
bound, and re-box them when they are used.
As an example of term translation, we can translate the
\lstinline{asSum} handler in \Leff defined in \Cref{sec:Leff} as
follows into \Met, omitting boxing and unboxing of the identity
modality $\one$.
\[\ba{r@{~~}c@{~~}l}
  \text{\lstinline{asSum}} &=&
  \bl
    \Box_{\aeq{}} (\lambda f. \Letm{}{\aex{\code{Gen Int}}} f = f \In
    \Handle\;f\;()\With \\
     \{
      \Ret x \mapsto \Letm{}{\aex{\code{Gen Int}}} x = x \In 0,
      \Yield\; x\;r \mapsto x + r\;\Unit\})
  \el
\ea\]
The explicit boxing and unboxing, as well as the identity modalities,
here are only generated to keep the encoding systematic.
We need not write them in practice as illustrated in
\Cref{sec:overview}.

Our encoding focuses on presenting the core idea and does not consider
advanced features including data types and polymorphism.
In \Cref{sec:extensions} we show how to extend \Met with these
features and in \Cref{app:extensible-encoding} we briefly outline how
to extend the encoding to cover them.

\section{Extensions}
\label{sec:extensions}

In this section we demonstrate that \Met scales to support data types
and polymorphism including both value and effect polymorphism.
Effect polymorphism helps deal with situations in which it is
useful to refer to one or more effect contexts that differ from the
ambient one (such as the higher-order fork operation
in~\Cref{sec:higher-order-fork}), recovering the full expressive power
of row-based effect systems.
We only discuss the key ideas of extensions here; their full
specification as well as more extensions including shallow
handlers~\citep{KammarLO13,HillerstromL18} are given in
\Cref{app:meet}.
We prove type soundness and effect safety for all extensions in
\Cref{app:CalcM}.

\subsection{Making Data Types Crisp}
\label{sec:data-types-crisp}

We demonstrate the extensibility of \Met with data types by extending
it with pair and sum types.
We expect no extra challenge to extend \Met with algebraic data types.
The syntax and typing rules are shown as follows.
\begin{mathpar}
\inferrule*[Lab=\tylab{Pair}]
{
  \typm{\Gamma}{M:A}{E} \\
  \typm{\Gamma}{N:B}{E} \\
}
{\typm{\Gamma}{(M,N):\Pair{A}{B}}{E}}

\inferrule*[Lab=\tylab{Inl}]
{
  \typm{\Gamma}{M:A}{E} \\
}
{\typm{\Gamma}{\Inl M:A+B}{E}}

\inferrule*[Lab=\tylab{Inr}]
{
  \typm{\Gamma}{M:B}{E} \\
}
{\typm{\Gamma}{\Inr M:A+B}{E}}

\hl{
\inferrule*[Lab=\tylab{CrispPair}]
{
  \nu_F : E\to F \\
  \typm{\Gamma, \lockwith{\nu_F}}{V:\Pair{A}{B}}{E} \\\\
  \typm{\Gamma, x\varb{\nu_F}{A}, y\varb{\nu_F}{B}}{M:A'}{F}
}
{\typm{\Gamma}{\Casem{\nu} V\Of (x,y)\mapsto M : A'}{F}}
}
\quad
\hl{
\inferrule*[Lab=\tylab{CrispSum}]
{
  \nu_F : E\to F \\
  \typm{\Gamma, \lockwith{\nu_F}}{V:A+B}{E} \\\\
  \typm{\Gamma, x\varb{\nu_F}{A}}{M_1:A'}{F} \\
  \typm{\Gamma, y\varb{\nu_F}{B}}{M_2:A'}{F}
}
{\typm{\Gamma}{\Casem{\nu}V\Of \{\Inl x \mapsto M_1, \Inr y\mapsto M_2\} : A'}{F}}
}
\end{mathpar}

The \tylab{Pair}, \tylab{Inl}, and \tylab{Inr} are standard
introduction rules.
The elimination rules \tylab{CrispPair} and \tylab{CrispSum} are more
interesting. In addition to normal pattern matching, they interpret
the value $V$ under the effect context transformed by certain
modalities $\nu$, which can then be tagged to the variable bindings in
case clauses.
They follow the crisp induction principles of multimodal type
theory~\citep{GratzerKNB20,Shulman18,GratzerKNB21}.
These crisp elimination rules provide extra expressiveness. For
example, we can write the following function which transforms a sum
of type $\boxwith{\mu}(A+B)$ to another sum of type $(\boxwith{\mu}A
+ \boxwith{\mu}B)$.
This function is not expressible without crisp elimination rules.
\[
\lambda x^{\boxwith{\mu}(A+B)} . \Letm{}{\mu} y = x \In \Casem{\mu} y \Of
\{
  \Inl x_1 \mapsto \Inl (\Box_\mu\;x_1),
  \Inr x_2 \mapsto \Inr (\Box_\mu\;x_2)
\}
\]

\subsection{Polymorphism for Values and Effects}
\label{sec:meet-polymorphism}

The extensions to syntax and typing rules with polymorphism
are as follows.

\noindent
\begin{minipage}[t]{0.45\textwidth}
\begin{syntax}
\slab{Types}    &A,B  &::= & \cdots \mid \forall\alpha^K.A\\
\slab{Effects} &E    &::= & \cdots
                      \mid {\evar \vphantom{\backslash}}
                      \mid {\eminus{E}{L}} \\
\slab{Kinds}          &K &::= &  \cdots \mid \Effect \\
\end{syntax}
\end{minipage}
\begin{minipage}[t]{0.5\textwidth}
  \vspace{-.5em}
\begin{syntax}
\slab{Contexts}      &\Gamma &::=& \cdots \mid \Gamma,\alpha:K \\
\slab{Terms}   &M,N  &::= & \cdots \mid \Lambda\alpha^K.V\mid M\,A \\
\slab{Values}         &V,W  &::= & \cdots \mid \Lambda \alpha^K.V \mid V\,A \\
\end{syntax}
\end{minipage}

\begin{mathpar}
\inferrule*[Lab=\tylab{TAbs}]
{
  \typm{\Gamma,\alpha:K}{V:A}{E}
}
{\typm{\Gamma}{\Lambda\alpha^K.V:\forall\alpha^K.A}{E}}

\inferrule*[Lab=\tylab{TApp}]
{
  \typm{\Gamma}{M:\forall\alpha^K.B}{E} \\
  \Gamma\vdash A : K \\
}
{\typm{\Gamma}{M\,A : B[A/\alpha]}{E}}
\end{mathpar}

It may appear surprising that we treat type application $V\,A$ as
values. This is useful in practice to allow instantiation inside
boxes. We also extend the semantics to allow reduction in values.

To support effect polymorphism, we extend the syntax of effect
contexts $E$ with effect variables $\evar$ and introduce a new kind
$\Effect$ for them.
As is typical for row polymorphism, we restrict each effect type to
contain at most one effect variable.
We also extend the syntax with effect masking $\eminus{E}{L}$, which
means the effect types given by masking $L$ from $E$.
The latter is needed to keep the syntax of effect contexts closed
under the masking operation $E-L$; otherwise we cannot define
$\evar-L$.
In other words, the syntax of effects is the free algebra generated
from extending $D,E$ and masking $\eminus{E}{L}$ with base elements
$\cdot$ and $\evar$.

The effect equivalence and subeffecting rules are extended in a
relatively standard way.
\begin{mathpar}
\inferrule*
{ }
{\eminus{E}{\cdot} \equiv E}

\inferrule*
{ }
{\eminus{\cdot}{L} \equiv \cdot}

\inferrule*
{ }
{\eminus{(\ell:P,E)}{(\ell,L)} \equiv \eminus{E}{L}}

\inferrule*
{ \ell \notin L }
{\eminus{(\ell:P,E)}{L} \equiv \ell:P,\eminus{E}{L}}

\inferrule*
{ }
{\eminus{(\eminus{\evar}{L})}{L'} \equiv \eminus{\evar}{(L,L')}}

\inferrule*
{ }
{\eminus{\evar}{L} \equiv \eminus{\evar}{L}}

\inferrule*
{ }
{\cdot \subtype \eminus{\evar}{L}}

\inferrule*
{ }
{\eminus{\evar}{L} \subtype \eminus{\evar}{L}}
\end{mathpar}
We do not allow non-trivial equivalence or subtyping between different
effect variables.
We always identify effects up to the equivalence relation. That is, we
can directly treat syntax of effects as the free algebra quotiented by
the equivalence relation $E\equiv F$.
Observe that using the equivalence relation, all open effect types
with effect variable $\evar$ can be simplified to an equivalent normal
form $D,\eminus{\evar}{L}$.
We assume the operation $E-L$ is defined for effects $E$ in normal
form and extend it with one case for effect variables as
$\eminus{\evar}{L} - L' = \eminus{\evar}{(L,L')}$.

\FloatBarrier
\section{Simple Bidirectional Type Checking and Elaboration}
\label{sec:surface-lang}

{
\renewcommand{\transto}[1]{\ \hl{\dashrightarrow #1}}
\renewcommand{\fortrans}[1]{\hl{#1}}

In this section we outline the design of \SurfaceMet, a basic surface
language on top of \Met, which uses a simple bidirectional typing
strategy to infer all boxing and unboxing~\citep{PierceT00,
DunfieldK21}.

The bidirectional typing rules for simply-typed $\lambda$-calculus and
modalities of \SurfaceMet and its type-directed elaboration to \Met
are shown in \Cref{fig:selected-typing-maetel}.
We show the full typing and elaboration rules and discuss our
implementation in \Cref{app:surface-lang}.

\begin{figure}[htbp]
\raggedright
\boxed{\typmi{\Gamma}{M}{A}{E}{}\transto{N}}
\boxed{\typmc{\Gamma}{M}{A}{E}\transto{N}}
\hfill

\vspace{-\baselineskip}
\begin{mathpar}
\inferrule*[Lab=\btylab{Var}]
{
  \nu_F = \locks{\Gamma'} : E\to F\\
  \boxwith{\ol{\zeta}}G = \across{\Gamma}{A}{\nu}{F}
}
{
  \typmi{\Gamma,x:\boxwith{\ol{\mu}}G,\Gamma'}{x}{\boxwith{\ol{\zeta}}G}{E}{}
  \transto{\Mod_{\ol\zeta}\;x}
}

\inferrule*[Lab=\btylab{Mod}]
{
  \mu_F : E \to F \\\\
  \typmc{\Gamma,\lockwith{\mu_F}}{V}{A}{E} \transto{V'} \\
}
{
  \typmc{\Gamma}{V}{\boxwith{\mu} A}{F}
  \transto{\Mod_\mu\;V'}
}

\inferrule*[Lab=\btylab{Abs}]
{
  \typmc{\Gamma, x : \boxwith{\ol{\mu}}G}{M}{B}{E} \transto{M'} \\
}
{
  \typmc{\Gamma}{\lambda x . M}{\boxwith{\ol{\mu}}G \to B}{E} \\\\
  \transto{\lambda x . \Letm{}{\ol{\mu}} \hat{x} = x \In M'}
}

\inferrule*[Lab=\btylab{App}]
{
  \typmi{\Gamma}{M}{\boxwith{\ol{\mu}}(A \to B)}{E}{} \transto{M'} \\\\
  {\ol{\mu}}_E \To \one_E \\
  \typmc{\Gamma}{N}{A}{E} \transto{N'} \\
}
{
  \typmi{\Gamma}{M\; N}{B}{E}{}
  \transto{
    (\Letm{}{\ol{\mu}} x = M' \In x)
  \;N'}
}

\inferrule*[Lab=\btylab{Annotation}]
{
  \typmc{\Gamma}{V}{A}{E} \transto{V'} \\
}
{\typmi{\Gamma}{V:A}{A}{E}{} \transto{V'}}

\inferrule*[Lab=\btylab{Switch}]
{
  \typmi{\Gamma}{M}{\boxwith{\ol{\mu}} G}{E}{} \transto{M'} \\
  \Gamma\vdash ({\ol{\mu}}, G) \To {\ol{\nu}} \atmode{E} \\
}
{
  \typmc{\Gamma}{M}{\boxwith{\ol{\nu}} G}{E}
  \transto{\Letm{}{\ol{\mu}} x = M' \In \Mod_{\ol{\nu}}\; x}
}
\end{mathpar}
\caption{Representative bidirectional typing and elaboration rules for \SurfaceMet.}
\label{fig:selected-typing-maetel}
\end{figure}

As with usual bidirectional typing, we have inference mode
$\typmi{\Gamma}{M}{A}{E}{}$ and checking mode
$\typmc{\Gamma}{M}{A}{E}$.
They both additionally take the effect context $E$ as an input.
The highlighted part $\transto{N}$ gives the elaborated term $N$ in
\Met.

We write $G$ for guarded types which do not have top-level modalities,
and $\ol{\mu}$ for a sequence of modalities which can be empty.
As a result, every type is in form $\ol{\mu} G$.
We use the following syntactic sugar for boxing and unboxing modality
sequences to simplify the elaboration.
\[\ba{rcl}
\Mod_{\ol{\mu}}\;V &\doteq& \ol{\Mod_{\mu}}\;V \\
\Letm{}{\ol{\mu}} x = M \In N &\doteq& (\lambda x . \ol{\Letm{}{\mu} x = x \In} N)\; M \\
\ea\]

Our bidirectional typing rules are mostly simple and standard.
The most novel part is the usage of an auxiliary function
$\meta{across}$ in \btylab{Var} defined as follows.
\[\ba{rcl}
\meta{across}(\Gamma,A,\nu,F) &=&
\begin{cases}
  A, &\text{if } \Gamma\vdash A : \Pure \\
  \boxwith{\zeta} G, &\text{otherwise, where } A = \boxwith{\ol{\mu}}{G}
    \text{ and } \rresidual{\ol{\mu}_F}{\nu_F} = \zeta_E
\end{cases}
\ea\]
When $A$ is absolute, we can always access the variable.
Otherwise, in order to know how far we should unbox the modalities
$\ol{\mu}$ of the variable, we define a right residual operation
$\rresidual{\mu_F}{\nu_F}$ for the modality transformation relation.
Given $\mu_F : E\to F$ and $\nu_F : F' \to F$, the partial operation
$\rresidual{\mu_F}{\nu_F}$ fails if there does not exist $\zeta_{F'}$
such that $\mu_F \To \nu_F \circ \zeta_{F'}$. Otherwise, it gives an
indexed modality such that $\mu_F \To \nu_F \circ
(\rresidual{\mu_F}{\nu_F})$ and for any $\zeta_{F'}$ with $\mu_F
\To \nu_F \circ \zeta_{F'}$, we have $\rresidual{\mu_F}{\nu_F} \To
\zeta_{F'}$.
Intuitively, $\rresidual{\mu_F}{\nu_F}$ gives the best solution
$\zeta_{F'}$ for the transformation $\mu_F \To \nu_F\circ\zeta_{F'}$
to hold.
The concrete definition of $\rresidual{\mu_F}{\nu_F}$ is given in
\Cref{app:surface-lang-spec}.

\btylab{App} unboxes $M$ when it has top-level modalities and inserts
explicit unboxing in the elaborated term.
\btylab{Mod} introduces a lock into the context and inserts
explicit boxing in the elaborated term.
\btylab{Annotation} is standard for bidirectional typing.
\btylab{Switch} not only switches the direction from checking to
inference, but also transforms the top-level modalities when there is
a mismatch by inserting unboxing and re-boxing.
It uses the judgement $\Gamma\vdash(\mu,A)\To\nu\atmode{E}$ defined in
\Cref{sec:met-kinds-contexts}.

Though incorporating polymorphic type inference is beyond the scope of
this paper, we are confident that modal effect types are compatible
with it.
The key observation here is that in the presence of polymorphism, the
problem of automatically boxing and unboxing is closely related to
that of inferring first-class polymorphism.
Modality introduction is analogous to type abstraction (which type
inference algorithms realise through generalisation).
Modality elimination is analogous to type application (which type
inference algorithms realise through instantiation).
As such, one can adapt any of the myriad techniques for combining
first-class polymorphism with Hindley-Milner type inference.
As we adopt a bidirectional type system, we could simply follow the
literature on extending bidirectional typing with sound and complete
inference for higher-rank polymorphism~\citep{DunfieldK13},
first-class polymorphism~\citep{ZhaoO22} and bounded
quantification~\citep{CuiJO23}.

In the future, we plan to further explore type inference for modal
effect types and in particular design an extension to \OCaml, building
on and complementing recent work on modal types for
\OCaml~\citep{LorenzenWDEL24} and making use of existing techniques
for supporting first-class polymorphism.

}

\FloatBarrier
\section{Related and Future Work}
\label{sec:related-work}

\subsection{Capability-based Effect Systems}
\label{sec:capability-based-effect-systems}

Capability-based effect systems such as
\Effekt~\citep{BrachthauserSO20,BrachthauserSLB22} and
\CaptureCalculus~\citep{BoruchGruszeckiOLLB23} interpret effects as
capabilities and offer a form of effect polymorphism through
capability passing.

For instance, in \Effekt the \lstinline{asList} handler in
\Cref{sec:relative-modalities} has the following type:
\begin{lstlisting}[style=koka]
  def asList{ f: Unit => List[Int] / { Gen[Int] } }: List[Int] / {}
\end{lstlisting}
The special \emph{block parameter} (or \emph{capability})
\lstinline|f| can use the effect \lstinline{Gen[Int]} in addition to
those from the context. The annotation \lstinline${ Gen[Int] }$ is
similar to our relative modalities.

A key difference between \Effekt and \Met is that \Effekt requires
blocks to be second-class, whereas \Met supports first-class functions
by default.
For instance, consider the standard composition function:
\lstinline{compose f g x = g (f x)}. We cannot write this function
naively in \Effekt as it relies on first-class functions.
One solution is to uncurry it.
\begin{lstlisting}
  def composeUncurried[A, B, C](x: A){ f: A => B }{ g: B => C }: C / {}
\end{lstlisting}
Note that we move \lstinline{x} to be the first argument, as \Effekt
requires value parameters to appear before block parameters. We cannot
partially apply \lstinline{composeUncurried}.

\citet{BrachthauserSLB22} recover first-class functions in \Effekt by
boxing blocks.
However, such boxed blocks can only use those capabilities specified
in the box types, similarly to our absolute modalities.
With boxes, we can write the curried \lstinline{compose} with the
following type signature
\begin{lstlisting}
  def compose[A, B, C]{ f: A => B }{ g: B => C }: A => C at {f, g} / {}
\end{lstlisting}
which returns a value of type \lstinline|A => C at {f, g}| --- a
first-class boxed block.
The annotation \lstinline|{f, g}| indicates that this block captures
the capabilities \lstinline{f} and \lstinline{g}.
This kind of annotation is reminiscent of effect variables, and indeed
such examples illustrate why \Met without effect polymorphism is not
expressive enough to encode all of \Effekt.
If we extend \Met with effect variables (\Cref{sec:meet-polymorphism})
then it is possible to encode capability variables like \lstinline{f}
and \lstinline{g}.

Another key difference between \Met and \Effekt is that \Effekt
uses named handlers~\citep{BiernackiPPS20,XieCIL22,ZhangM19}, in which
operations are dispatched to a specific named handler, whereas \Met
uses \citet{PlotkinP13}-style handlers, in which operations are
dispatched to the first matching handler in the dynamic context.
Named handlers also provide a form of effect generativity. In future
it would be interesting to explore variants of modal effect types with
named handlers and generative effects~\citep{VilhenaP23}.

\CaptureCalculus~\citep{BoruchGruszeckiOLLB23}, the basis for capture
tracking in \Scala 3, also provides succinct types for uncurried
higher-order functions like \lstinline{composeUncurried}.
As in \Effekt, the curried version requires the result function to be
explicitly annotated with its capture set \lstinline|{f,g}|.
Though \CaptureCalculus is a good fit for \Scala 3, it does rely on
existing advanced features like path-dependent types (especially the
ability of using term variables in types) and implicit parameters.
Modal effect types do not require the language to support such
advanced features.

\subsection{Direct Comparison between \SurfaceMet, \Koka, and \Effekt}
\label{sec:comparison}

In \Cref{sec:overview}, we compare the types of \lstinline{iter},
\lstinline{asList}, \lstinline{state}, and \lstinline{schedule} in
\Met with their counterparts in a row-based effect system similar to
\Koka.
In \Cref{sec:capability-based-effect-systems}, we discuss the
differences between modal effect types and capability-based effect
systems such as \Effekt.
Here we present a more direct comparison of the type signatures of
typical programs in \SurfaceMet, \Koka, and \Effekt, in order to
demonstrate the practicality of modal effect types.

\paragraph{Invoking Effects}
\label{sec:invoking-effects}

An effect system tracks which effects are invoked.
Consider a function \lstinline{foo x =do yield (x + 42)} which uses
the \lstinline{yield} operation from \Cref{sec:absolute-modalities}.
Its types in \Metl, \Koka, and \Effekt are as follows.

\begin{lstlisting}[style=koka]
  foo : [Gen Int](Int -> 1)                      # METL
  foo : forall<e>. (x : int) -> <gen<int>|e> ()  # Koka
  def foo(x : Int) : Unit / { Gen[Int] }         # Effekt
\end{lstlisting}

Both \Metl and \Effekt support effect subtyping which enables us to
apply \lstinline{foo} with other effects.
\Koka does not support effect subtyping and uses an effect variable
\lstinline{e} for modularity.
\Koka supports a special typing rule which implicitly inserts effect
variables on function types in covariant position.
Consequently, we can in fact simply write the following type in \Koka.
\begin{lstlisting}
  foo : (x : int) -> (gen<int>) ()               # Koka
\end{lstlisting}

\paragraph{Handling Effects}

An effect system tracks which effects are handled.
Recall the \lstinline{asList} handler from \Cref{sec:relative-modalities}.
Its types in \Metl, \Koka, and \Effekt are as follows.

\begin{lstlisting}[style=koka]
  asList : [](<Gen Int>(1 -> 1) -> List Int)                           # METL
  asList : forall<e>. (action : () -> <gen<int>|e> ()) -> e list<int>  # Koka
  def asList{ f: Unit => List[Int] / { Gen[Int] } }: List[Int] / {}    # Effekt
\end{lstlisting}

Both \Metl and \Effekt allow the argument of \lstinline{asList} to use
the ambient effects in addition to \lstinline{Gen Int}.
In \Koka, we must make the argument polymorphic over other effects.

\paragraph{Functions in Data Types}

An effect system should be compatible with algebraic data types.
Consider a pair of \lstinline|foo| functions as defined in
\Cref{sec:invoking-effects}.

In \SurfaceMet, we can choose to either share a single absolute
modality between both components, or to have a separate absolute
modality for each component.
\begin{lstlisting}
  pair1 : [Gen Int](Int -> 1, Int -> 1)
  pair2 : ([Gen Int](Int -> 1), [Gen Int](Int -> 1))
\end{lstlisting}
The values \lstinline|pair1| and \lstinline|pair2| can be easily
converted between one another.

In \Koka, we must make both functions effect-polymorphic.
We can choose either to use the same or different effect variables:

\begin{lstlisting}[style=koka]
  pair3 : forall<e,e1>. ((x : int) -> <gen<int>|e> (), (y : int) -> <gen<int>|e1> ())
  pair4 : forall<e>.    ((x : int) -> <gen<int>|e> (), (y : int) -> <gen<int>|e> ())
\end{lstlisting}

In \Effekt, as discussed in
\Cref{sec:capability-based-effect-systems}, we cannot express this
pair of functions directly, as functions are second-class. We must box
the functions to make them first-class.

\begin{lstlisting}[style=koka]
  pair5 : Tuple2[Int => Unit / { Gen[Int] } at {}, Int => Unit / { Gen[Int] } at {}]
\end{lstlisting}

The syntax \lstinline$at {}$ means that preceding function type is
boxed with no captured capability.

\paragraph{Higher-Order Functions}

An effect system should be compatible with higher-order functions.
Consider the standard sequential composition function
\lstinline{compose f g x = g (f x)}.
Its type in \Metl and \Koka are as follows.

\begin{lstlisting}[style=koka]
  compose : forall a b c . []((a -> b) -> (b -> c) -> (a -> c))                   # METL
  compose : forall<a,b,c,e>. (f : (a) ->e b) -> (g : (b) ->e c) -> ((x : a) ->e c)# Koka
\end{lstlisting}

In \Effekt, as we discussed in
\Cref{sec:capability-based-effect-systems}, we must either switch to
an uncurried version or box the result with capability variables.

\begin{lstlisting}[style=koka]
  def composeUncurried[A, B, C](x: A){ f: A => B }{ g: B => C }: C / {}  # Effekt
  def compose[A, B, C]{ f: A => B }{ g: B => C }: A => C at {f, g} / {}  # Effekt
\end{lstlisting}

In summary, compared to \Koka, \SurfaceMet provides more succinct
types without effect variables for a rich class of programs.
Compared to \Effekt, \SurfaceMet supports first-class functions
smoothly with no extra restrictions or capability variables in types.

\subsection{Frank}

Our absolute and relative modalities are inspired by \emph{abilities}
and \emph{adjustments} in \Frank~\citep{frank,ConventLMM20}.
Absolute modalities and abilities specify the whole effect context
required to run some computation. Relative modalities and adjustments
specify changes to the ambient effect context.
The key difference is that \Frank is based on a traditional row-based
effect system and implicitly inserts effect variables into
higher-order programs.
This is a fragile syntactic abstraction as discussed in
\Cref{sec:introduction}.
In contrast, \Met exploits modal types to robustly capture the essence
of modular effect programming without effect polymorphism.
As demonstrated in \Cref{sec:encodings}, a core \Frank-like calculus
with implicit effect variables is expressible in \Met.
\Frank's \emph{adaptors} are richer than \Met's masking, though we
expect relative modalities to extend readily to encompass the full
power of adaptors.

Unlike adjustments in \Frank, modal types are first-class types just
like data types and can appear anywhere.
For instance, we can put two functions with modal types in a pair.
\begin{lstlisting}
  handleTwo : []((<Gen Int>(Unit -> Unit), <State Int>(Unit -> Unit)) -> (List Int, Unit))
  handleTwo (f, g) = (asList f, state g 42)
\end{lstlisting}

\subsection{Relationship Between \Met and Multimodal Type Theory}
\label{sec:related-work-mtt}

The literature on multimodal type theory organises the structure of
modes (objects), modalities (morphisms between objects), and their
transformations (2-cells between morphisms) in a
\emph{2-category}~\citep{GratzerKNB20,Gratzer23,KavvosG23}
(or, in the case of a single mode, a semiring~\citep{AbelB20,ChoudhuryEEW21,OrchardLE19,PetricekOM14}).
In \Met, modes are effect contexts $E$, modalities are of the form
$\mu_F : E\to F$, and transformations are of the form $\mu_F \To
\nu_F$.
However, 2-categories are insufficient in a system that also includes
submoding.
The extra structure can be captured by moving to \emph{double
categories}, which have an additional kind of vertical morphism
between objects (in \Met, vertical morphisms are given by the
subeffecting relation $E\subtype F$), as also proposed
by~\citet{Katsumata18}.
Consequently, the transformations do not strictly require the two
modalities to have the same sources and targets, enabling us to have
$\aeq{}_F \To \aeq{E}_F$ in \Met.
The relationship between \Met and MTT is explained in more detail in
\Cref{app:double-category-effects}.

\subsection{Other Related Work}

We discuss other related work on effect systems and modal types.

\paragraph{Row-based Effect Systems}
Row polymorphism is one popular approach to implementing effect
systems for effect handlers.
\Links \citep{linksrow} uses R{\'e}my-style row polymorphism with
presence types \citep{remy1994type}, whereas \Koka~\citep{koka} and
\Frank~\citep{frank} use scoped rows \citep{Leijen05} which allow
duplicated labels.
\citet{rose} propose a general framework for comparing different
styles of row types, and \citet{YoshiokaSI24} propose a similar
framework focusing on comparing effect rows.
\Met adopts Leijen-style scoped rows, but also allows operation labels
to be absent in the spirit of R{\'e}my-style presence types.

\paragraph{Subtyping-based Effect Systems}
\Eff~\citep{bauer13, pretnar14} is equipped with an effect system with
both effect variables and sub-effecting based on the type inference
and elaboration described in~\citet{KarachaliasPSVS20}.
The effect system of \Helium~\citep{BiernackiPPS20} is based on finite
sets, offering a natural sub-effecting relation corresponding to
set-inclusion. As such, their system aligns closely with
\citet{LucassenG88}-style effect systems.
\citet{SHL} propose a calculus for effect handlers with effect
polymorphism and sub-effecting via qualified
types~\citep{jones94,rose}.

\paragraph{Modal Types and Effects.}
\citet{ChoudhuryK20} propose to use the necessity modality to recover
purity from an effectful calculus, which is similar to our empty
absolute modality.
\citet{ZyuzinN21} extend
contextual modal types~\citep{NanevskiPP08} to algebraic effects and
handlers.
Their system lacks anything like our relative modality and thus cannot
benefit from ambient effect contexts due to strict syntactic
restrictions.
Consequently, they cannot provide concise modular types for
higher-order functions and handlers as \Met does.

\subsection{Future Work}

Future work includes: implementing our system as an extension to
\OCaml; exploring extensions of modal effect types with Fitch-style
unboxing, named handlers, and capabilities; combining modal effect
types with control-flow linearity~\citep{SHL}; and developing a
denotational semantics.

\begin{acks}
Sam Lindley was supported by UKRI Future Leaders Fellowship ``Effect
Handler Oriented Programming'' (MR/T043830/1 and MR/Z000351/1).
\end{acks}

\section*{Data-Availability Statement}

The prototype of our surface language \SurfaceMet is
available on Zenodo~\citep{MetArtifact}.
This prototype implements and extends the bidirectional type checker
outlined in \Cref{sec:surface-lang,app:surface-lang}.
It type checks all examples in the paper.

\bibliography{reference}


\begin{thebibliography}{61}


\ifx \showCODEN    \undefined \def \showCODEN     #1{\unskip}     \fi
\ifx \showISBNx    \undefined \def \showISBNx     #1{\unskip}     \fi
\ifx \showISBNxiii \undefined \def \showISBNxiii  #1{\unskip}     \fi
\ifx \showISSN     \undefined \def \showISSN      #1{\unskip}     \fi
\ifx \showLCCN     \undefined \def \showLCCN      #1{\unskip}     \fi
\ifx \shownote     \undefined \def \shownote      #1{#1}          \fi
\ifx \showarticletitle \undefined \def \showarticletitle #1{#1}   \fi
\ifx \showURL      \undefined \def \showURL       {\relax}        \fi
\providecommand\bibfield[2]{#2}
\providecommand\bibinfo[2]{#2}
\providecommand\natexlab[1]{#1}
\providecommand\showeprint[2][]{arXiv:#2}

\bibitem[Abel and Bernardy(2020)]%
        {AbelB20}
\bibfield{author}{\bibinfo{person}{Andreas Abel} {and}
  \bibinfo{person}{Jean-Philippe Bernardy}.} \bibinfo{year}{2020}\natexlab{}.
\newblock \showarticletitle{A unified view of modalities in type systems}.
\newblock \bibinfo{journal}{\emph{Proc. ACM Program. Lang.}}
  \bibinfo{volume}{4}, \bibinfo{number}{ICFP}, Article \bibinfo{articleno}{90}
  (\bibinfo{date}{aug} \bibinfo{year}{2020}), \bibinfo{numpages}{28}~pages.
\newblock
\href{https://doi.org/10.1145/3408972}{doi:\nolinkurl{10.1145/3408972}}


\bibitem[Ahman(2023)]%
        {Ahman23}
\bibfield{author}{\bibinfo{person}{Danel Ahman}.}
  \bibinfo{year}{2023}\natexlab{}.
\newblock \showarticletitle{When Programs Have to Watch Paint Dry}. In
  \bibinfo{booktitle}{\emph{Foundations of Software Science and Computation
  Structures - 26th International Conference, FoSSaCS 2023, Held as Part of the
  European Joint Conferences on Theory and Practice of Software, {ETAPS} 2023,
  Paris, France, April 22-27, 2023, Proceedings}}
  \emph{(\bibinfo{series}{Lecture Notes in Computer Science},
  Vol.~\bibinfo{volume}{13992})}, \bibfield{editor}{\bibinfo{person}{Orna
  Kupferman} {and} \bibinfo{person}{Pawel Sobocinski}} (Eds.).
  \bibinfo{publisher}{Springer}, \bibinfo{pages}{1--23}.
\newblock
\href{https://doi.org/10.1007/978-3-031-30829-1\_1}{doi:\nolinkurl{10.1007/978-3-031-30829-1\_1}}


\bibitem[Bauer and Pretnar(2013)]%
        {bauer13}
\bibfield{author}{\bibinfo{person}{Andrej Bauer} {and} \bibinfo{person}{Matija
  Pretnar}.} \bibinfo{year}{2013}\natexlab{}.
\newblock \showarticletitle{An Effect System for Algebraic Effects and
  Handlers}. In \bibinfo{booktitle}{\emph{Algebra and Coalgebra in Computer
  Science}}, \bibfield{editor}{\bibinfo{person}{Reiko Heckel} {and}
  \bibinfo{person}{Stefan Milius}} (Eds.). \bibinfo{publisher}{Springer Berlin
  Heidelberg}, \bibinfo{address}{Berlin, Heidelberg}, \bibinfo{pages}{1--16}.
\newblock
\showISBNx{978-3-642-40206-7}


\bibitem[Bauer and Pretnar(2015)]%
        {BauerP15}
\bibfield{author}{\bibinfo{person}{Andrej Bauer} {and} \bibinfo{person}{Matija
  Pretnar}.} \bibinfo{year}{2015}\natexlab{}.
\newblock \showarticletitle{Programming with algebraic effects and handlers}.
\newblock \bibinfo{journal}{\emph{J. Log. Algebraic Methods Program.}}
  \bibinfo{volume}{84}, \bibinfo{number}{1} (\bibinfo{year}{2015}),
  \bibinfo{pages}{108--123}.
\newblock


\bibitem[Biernacki et~al\mbox{.}(2018)]%
        {BiernackiPPS18}
\bibfield{author}{\bibinfo{person}{Dariusz Biernacki}, \bibinfo{person}{Maciej
  Pir{\'{o}}g}, \bibinfo{person}{Piotr Polesiuk}, {and} \bibinfo{person}{Filip
  Sieczkowski}.} \bibinfo{year}{2018}\natexlab{}.
\newblock \showarticletitle{Handle with care: relational interpretation of
  algebraic effects and handlers}.
\newblock \bibinfo{journal}{\emph{Proc. {ACM} Program. Lang.}}
  \bibinfo{volume}{2}, \bibinfo{number}{{POPL}} (\bibinfo{year}{2018}),
  \bibinfo{pages}{8:1--8:30}.
\newblock
\href{https://doi.org/10.1145/3158096}{doi:\nolinkurl{10.1145/3158096}}


\bibitem[Biernacki et~al\mbox{.}(2020)]%
        {BiernackiPPS20}
\bibfield{author}{\bibinfo{person}{Dariusz Biernacki}, \bibinfo{person}{Maciej
  Pir{\'{o}}g}, \bibinfo{person}{Piotr Polesiuk}, {and} \bibinfo{person}{Filip
  Sieczkowski}.} \bibinfo{year}{2020}\natexlab{}.
\newblock \showarticletitle{Binders by day, labels by night: effect instances
  via lexically scoped handlers}.
\newblock \bibinfo{journal}{\emph{Proc. {ACM} Program. Lang.}}
  \bibinfo{volume}{4}, \bibinfo{number}{{POPL}} (\bibinfo{year}{2020}),
  \bibinfo{pages}{48:1--48:29}.
\newblock
\href{https://doi.org/10.1145/3371116}{doi:\nolinkurl{10.1145/3371116}}


\bibitem[Boruch{-}Gruszecki et~al\mbox{.}(2023)]%
        {BoruchGruszeckiOLLB23}
\bibfield{author}{\bibinfo{person}{Aleksander Boruch{-}Gruszecki},
  \bibinfo{person}{Martin Odersky}, \bibinfo{person}{Edward Lee},
  \bibinfo{person}{Ondrej Lhot{\'{a}}k}, {and}
  \bibinfo{person}{Jonathan~Immanuel Brachth{\"{a}}user}.}
  \bibinfo{year}{2023}\natexlab{}.
\newblock \showarticletitle{Capturing Types}.
\newblock \bibinfo{journal}{\emph{{ACM} Trans. Program. Lang. Syst.}}
  \bibinfo{volume}{45}, \bibinfo{number}{4} (\bibinfo{year}{2023}),
  \bibinfo{pages}{21:1--21:52}.
\newblock
\href{https://doi.org/10.1145/3618003}{doi:\nolinkurl{10.1145/3618003}}


\bibitem[Brachth{\"{a}}user et~al\mbox{.}(2022)]%
        {BrachthauserSLB22}
\bibfield{author}{\bibinfo{person}{Jonathan~Immanuel Brachth{\"{a}}user},
  \bibinfo{person}{Philipp Schuster}, \bibinfo{person}{Edward Lee}, {and}
  \bibinfo{person}{Aleksander Boruch{-}Gruszecki}.}
  \bibinfo{year}{2022}\natexlab{}.
\newblock \showarticletitle{Effects, capabilities, and boxes: from scope-based
  reasoning to type-based reasoning and back}.
\newblock \bibinfo{journal}{\emph{Proc. {ACM} Program. Lang.}}
  \bibinfo{volume}{6}, \bibinfo{number}{{OOPSLA1}} (\bibinfo{year}{2022}),
  \bibinfo{pages}{1--30}.
\newblock
\href{https://doi.org/10.1145/3527320}{doi:\nolinkurl{10.1145/3527320}}


\bibitem[Brachth{\"{a}}user et~al\mbox{.}(2020)]%
        {BrachthauserSO20}
\bibfield{author}{\bibinfo{person}{Jonathan~Immanuel Brachth{\"{a}}user},
  \bibinfo{person}{Philipp Schuster}, {and} \bibinfo{person}{Klaus Ostermann}.}
  \bibinfo{year}{2020}\natexlab{}.
\newblock \showarticletitle{Effects as capabilities: effect handlers and
  lightweight effect polymorphism}.
\newblock \bibinfo{journal}{\emph{Proc. {ACM} Program. Lang.}}
  \bibinfo{volume}{4}, \bibinfo{number}{{OOPSLA}} (\bibinfo{year}{2020}),
  \bibinfo{pages}{126:1--126:30}.
\newblock
\href{https://doi.org/10.1145/3428194}{doi:\nolinkurl{10.1145/3428194}}


\bibitem[Choudhury et~al\mbox{.}(2021)]%
        {ChoudhuryEEW21}
\bibfield{author}{\bibinfo{person}{Pritam Choudhury}, \bibinfo{person}{Harley
  Eades~III}, \bibinfo{person}{Richard~A. Eisenberg}, {and}
  \bibinfo{person}{Stephanie Weirich}.} \bibinfo{year}{2021}\natexlab{}.
\newblock \showarticletitle{A graded dependent type system with a usage-aware
  semantics}.
\newblock \bibinfo{journal}{\emph{Proc. ACM Program. Lang.}}
  \bibinfo{volume}{5}, \bibinfo{number}{POPL}, Article \bibinfo{articleno}{50}
  (\bibinfo{date}{jan} \bibinfo{year}{2021}), \bibinfo{numpages}{32}~pages.
\newblock
\href{https://doi.org/10.1145/3434331}{doi:\nolinkurl{10.1145/3434331}}


\bibitem[Choudhury and Krishnaswami(2020)]%
        {ChoudhuryK20}
\bibfield{author}{\bibinfo{person}{Vikraman Choudhury} {and}
  \bibinfo{person}{Neel Krishnaswami}.} \bibinfo{year}{2020}\natexlab{}.
\newblock \showarticletitle{Recovering purity with comonads and capabilities}.
\newblock \bibinfo{journal}{\emph{Proc. {ACM} Program. Lang.}}
  \bibinfo{volume}{4}, \bibinfo{number}{{ICFP}} (\bibinfo{year}{2020}),
  \bibinfo{pages}{111:1--111:28}.
\newblock
\href{https://doi.org/10.1145/3408993}{doi:\nolinkurl{10.1145/3408993}}


\bibitem[Clouston(2018)]%
        {Clouston18}
\bibfield{author}{\bibinfo{person}{Ranald Clouston}.}
  \bibinfo{year}{2018}\natexlab{}.
\newblock \showarticletitle{Fitch-Style Modal Lambda Calculi}. In
  \bibinfo{booktitle}{\emph{Foundations of Software Science and Computation
  Structures - 21st International Conference, {FOSSACS} 2018, Held as Part of
  the European Joint Conferences on Theory and Practice of Software, {ETAPS}
  2018, Thessaloniki, Greece, April 14-20, 2018, Proceedings}}
  \emph{(\bibinfo{series}{Lecture Notes in Computer Science},
  Vol.~\bibinfo{volume}{10803})}, \bibfield{editor}{\bibinfo{person}{Christel
  Baier} {and} \bibinfo{person}{Ugo~Dal Lago}} (Eds.).
  \bibinfo{publisher}{Springer}, \bibinfo{pages}{258--275}.
\newblock
\href{https://doi.org/10.1007/978-3-319-89366-2\_14}{doi:\nolinkurl{10.1007/978-3-319-89366-2\_14}}


\bibitem[Convent et~al\mbox{.}(2020)]%
        {ConventLMM20}
\bibfield{author}{\bibinfo{person}{Lukas Convent}, \bibinfo{person}{Sam
  Lindley}, \bibinfo{person}{Conor McBride}, {and} \bibinfo{person}{Craig
  McLaughlin}.} \bibinfo{year}{2020}\natexlab{}.
\newblock \showarticletitle{Doo bee doo bee doo}.
\newblock \bibinfo{journal}{\emph{J. Funct. Program.}}  \bibinfo{volume}{30}
  (\bibinfo{year}{2020}), \bibinfo{pages}{e9}.
\newblock
\href{https://doi.org/10.1017/S0956796820000039}{doi:\nolinkurl{10.1017/S0956796820000039}}


\bibitem[Cui et~al\mbox{.}(2023)]%
        {CuiJO23}
\bibfield{author}{\bibinfo{person}{Chen Cui}, \bibinfo{person}{Shengyi Jiang},
  {and} \bibinfo{person}{Bruno~C. d. S.~Oliveira}.}
  \bibinfo{year}{2023}\natexlab{}.
\newblock \showarticletitle{Greedy Implicit Bounded Quantification}.
\newblock \bibinfo{journal}{\emph{Proc. {ACM} Program. Lang.}}
  \bibinfo{volume}{7}, \bibinfo{number}{{OOPSLA2}} (\bibinfo{year}{2023}),
  \bibinfo{pages}{2083--2111}.
\newblock
\href{https://doi.org/10.1145/3622871}{doi:\nolinkurl{10.1145/3622871}}


\bibitem[de~Vilhena and Pottier(2023)]%
        {VilhenaP23}
\bibfield{author}{\bibinfo{person}{Paulo~Em{\'{\i}}lio de Vilhena} {and}
  \bibinfo{person}{Fran{\c{c}}ois Pottier}.} \bibinfo{year}{2023}\natexlab{}.
\newblock \showarticletitle{A Type System for Effect Handlers and Dynamic
  Labels}. In \bibinfo{booktitle}{\emph{Programming Languages and Systems -
  32nd European Symposium on Programming, {ESOP} 2023, Held as Part of the
  European Joint Conferences on Theory and Practice of Software, {ETAPS} 2023,
  Paris, France, April 22-27, 2023, Proceedings}}
  \emph{(\bibinfo{series}{Lecture Notes in Computer Science},
  Vol.~\bibinfo{volume}{13990})}, \bibfield{editor}{\bibinfo{person}{Thomas
  Wies}} (Ed.). \bibinfo{publisher}{Springer}, \bibinfo{pages}{225--252}.
\newblock
\href{https://doi.org/10.1007/978-3-031-30044-8\_9}{doi:\nolinkurl{10.1007/978-3-031-30044-8\_9}}


\bibitem[Dunfield and Krishnaswami(2022)]%
        {DunfieldK21}
\bibfield{author}{\bibinfo{person}{Jana Dunfield} {and} \bibinfo{person}{Neel
  Krishnaswami}.} \bibinfo{year}{2022}\natexlab{}.
\newblock \showarticletitle{Bidirectional Typing}.
\newblock \bibinfo{journal}{\emph{{ACM} Comput. Surv.}} \bibinfo{volume}{54},
  \bibinfo{number}{5} (\bibinfo{year}{2022}), \bibinfo{pages}{98:1--98:38}.
\newblock
\href{https://doi.org/10.1145/3450952}{doi:\nolinkurl{10.1145/3450952}}


\bibitem[Dunfield and Krishnaswami(2013)]%
        {DunfieldK13}
\bibfield{author}{\bibinfo{person}{Jana Dunfield} {and}
  \bibinfo{person}{Neelakantan~R. Krishnaswami}.}
  \bibinfo{year}{2013}\natexlab{}.
\newblock \showarticletitle{Complete and easy bidirectional typechecking for
  higher-rank polymorphism}. In \bibinfo{booktitle}{\emph{{ACM} {SIGPLAN}
  International Conference on Functional Programming, ICFP'13, Boston, MA,
  {USA} - September 25 - 27, 2013}}, \bibfield{editor}{\bibinfo{person}{Greg
  Morrisett} {and} \bibinfo{person}{Tarmo Uustalu}} (Eds.).
  \bibinfo{publisher}{{ACM}}, \bibinfo{pages}{429--442}.
\newblock
\href{https://doi.org/10.1145/2500365.2500582}{doi:\nolinkurl{10.1145/2500365.2500582}}


\bibitem[Gratzer(2023)]%
        {Gratzer23}
\bibfield{author}{\bibinfo{person}{Daniel Gratzer}.}
  \bibinfo{year}{2023}\natexlab{}.
\newblock \emph{\bibinfo{title}{Syntax and semantics of modal type theory}}.
\newblock \bibinfo{thesistype}{Ph.\,D. Dissertation}. \bibinfo{school}{Aarhus
  University}.
\newblock


\bibitem[Gratzer et~al\mbox{.}(2020)]%
        {GratzerKNB20}
\bibfield{author}{\bibinfo{person}{Daniel Gratzer}, \bibinfo{person}{G.~A.
  Kavvos}, \bibinfo{person}{Andreas Nuyts}, {and} \bibinfo{person}{Lars
  Birkedal}.} \bibinfo{year}{2020}\natexlab{}.
\newblock \showarticletitle{Multimodal Dependent Type Theory}. In
  \bibinfo{booktitle}{\emph{{LICS} '20: 35th Annual {ACM/IEEE} Symposium on
  Logic in Computer Science, Saarbr{\"{u}}cken, Germany, July 8-11, 2020}},
  \bibfield{editor}{\bibinfo{person}{Holger Hermanns}, \bibinfo{person}{Lijun
  Zhang}, \bibinfo{person}{Naoki Kobayashi}, {and} \bibinfo{person}{Dale
  Miller}} (Eds.). \bibinfo{publisher}{{ACM}}, \bibinfo{pages}{492--506}.
\newblock
\href{https://doi.org/10.1145/3373718.3394736}{doi:\nolinkurl{10.1145/3373718.3394736}}


\bibitem[Gratzer et~al\mbox{.}(2021)]%
        {GratzerKNB21}
\bibfield{author}{\bibinfo{person}{Daniel Gratzer}, \bibinfo{person}{G.~A.
  Kavvos}, \bibinfo{person}{Andreas Nuyts}, {and} \bibinfo{person}{Lars
  Birkedal}.} \bibinfo{year}{2021}\natexlab{}.
\newblock \showarticletitle{Multimodal Dependent Type Theory}.
\newblock \bibinfo{journal}{\emph{Log. Methods Comput. Sci.}}
  \bibinfo{volume}{17}, \bibinfo{number}{3} (\bibinfo{year}{2021}).
\newblock
\href{https://doi.org/10.46298/LMCS-17(3:11)2021}{doi:\nolinkurl{10.46298/LMCS-17(3:11)2021}}


\bibitem[Hillerstr\"{o}m and Lindley(2016)]%
        {linksrow}
\bibfield{author}{\bibinfo{person}{Daniel Hillerstr\"{o}m} {and}
  \bibinfo{person}{Sam Lindley}.} \bibinfo{year}{2016}\natexlab{}.
\newblock \showarticletitle{Liberating Effects with Rows and Handlers}
  \emph{(\bibinfo{series}{TyDe 2016})}. \bibinfo{publisher}{Association for
  Computing Machinery}, \bibinfo{address}{New York, NY, USA},
  \bibinfo{pages}{15–27}.
\newblock
\showISBNx{9781450344357}
\href{https://doi.org/10.1145/2976022.2976033}{doi:\nolinkurl{10.1145/2976022.2976033}}


\bibitem[Hillerstr{\"{o}}m and Lindley(2016)]%
        {HillerstromL16}
\bibfield{author}{\bibinfo{person}{Daniel Hillerstr{\"{o}}m} {and}
  \bibinfo{person}{Sam Lindley}.} \bibinfo{year}{2016}\natexlab{}.
\newblock \showarticletitle{Liberating effects with rows and handlers}. In
  \bibinfo{booktitle}{\emph{TyDe}}. \bibinfo{publisher}{{ACM}},
  \bibinfo{pages}{15--27}.
\newblock


\bibitem[Hillerstr{\"{o}}m and Lindley(2018)]%
        {HillerstromL18}
\bibfield{author}{\bibinfo{person}{Daniel Hillerstr{\"{o}}m} {and}
  \bibinfo{person}{Sam Lindley}.} \bibinfo{year}{2018}\natexlab{}.
\newblock \showarticletitle{Shallow Effect Handlers}. In
  \bibinfo{booktitle}{\emph{Programming Languages and Systems - 16th Asian
  Symposium, {APLAS} 2018, Wellington, New Zealand, December 2-6, 2018,
  Proceedings}} \emph{(\bibinfo{series}{Lecture Notes in Computer Science},
  Vol.~\bibinfo{volume}{11275})}, \bibfield{editor}{\bibinfo{person}{Sukyoung
  Ryu}} (Ed.). \bibinfo{publisher}{Springer}, \bibinfo{pages}{415--435}.
\newblock
\href{https://doi.org/10.1007/978-3-030-02768-1\_22}{doi:\nolinkurl{10.1007/978-3-030-02768-1\_22}}


\bibitem[Hillerstr{\"{o}}m et~al\mbox{.}(2020)]%
        {HillerstromLA20}
\bibfield{author}{\bibinfo{person}{Daniel Hillerstr{\"{o}}m},
  \bibinfo{person}{Sam Lindley}, {and} \bibinfo{person}{Robert Atkey}.}
  \bibinfo{year}{2020}\natexlab{}.
\newblock \showarticletitle{Effect handlers via generalised continuations}.
\newblock \bibinfo{journal}{\emph{J. Funct. Program.}}  \bibinfo{volume}{30}
  (\bibinfo{year}{2020}), \bibinfo{pages}{e5}.
\newblock
\href{https://doi.org/10.1017/S0956796820000040}{doi:\nolinkurl{10.1017/S0956796820000040}}


\bibitem[Hillerström(2022)]%
        {Hillerstrom22}
\bibfield{author}{\bibinfo{person}{Daniel Hillerström}.}
  \bibinfo{year}{2022}\natexlab{}.
\newblock \emph{\bibinfo{title}{Foundations for Programming and Implementing
  Effect Handlers}}.
\newblock \bibinfo{thesistype}{Ph.\,D. Dissertation}. \bibinfo{school}{The
  University of Edinburgh, UK}.
\newblock
\href{https://doi.org/10.7488/era/2122}{doi:\nolinkurl{10.7488/era/2122}}


\bibitem[Jones(1994)]%
        {jones94}
\bibfield{author}{\bibinfo{person}{Mark~P. Jones}.}
  \bibinfo{year}{1994}\natexlab{}.
\newblock \showarticletitle{A Theory of Qualified Types}.
\newblock \bibinfo{journal}{\emph{Sci. Comput. Program.}} \bibinfo{volume}{22},
  \bibinfo{number}{3} (\bibinfo{year}{1994}), \bibinfo{pages}{231--256}.
\newblock
\href{https://doi.org/10.1016/0167-6423(94)00005-0}{doi:\nolinkurl{10.1016/0167-6423(94)00005-0}}


\bibitem[Kammar et~al\mbox{.}(2013)]%
        {KammarLO13}
\bibfield{author}{\bibinfo{person}{Ohad Kammar}, \bibinfo{person}{Sam Lindley},
  {and} \bibinfo{person}{Nicolas Oury}.} \bibinfo{year}{2013}\natexlab{}.
\newblock \showarticletitle{Handlers in action}. In
  \bibinfo{booktitle}{\emph{{ACM} {SIGPLAN} International Conference on
  Functional Programming, ICFP'13, Boston, MA, {USA} - September 25 - 27,
  2013}}, \bibfield{editor}{\bibinfo{person}{Greg Morrisett} {and}
  \bibinfo{person}{Tarmo Uustalu}} (Eds.). \bibinfo{publisher}{{ACM}},
  \bibinfo{pages}{145--158}.
\newblock
\href{https://doi.org/10.1145/2500365.2500590}{doi:\nolinkurl{10.1145/2500365.2500590}}


\bibitem[Karachalias et~al\mbox{.}(2020)]%
        {KarachaliasPSVS20}
\bibfield{author}{\bibinfo{person}{Georgios Karachalias},
  \bibinfo{person}{Matija Pretnar}, \bibinfo{person}{Amr~Hany Saleh},
  \bibinfo{person}{Stien Vanderhallen}, {and} \bibinfo{person}{Tom
  Schrijvers}.} \bibinfo{year}{2020}\natexlab{}.
\newblock \showarticletitle{Explicit effect subtyping}.
\newblock \bibinfo{journal}{\emph{J. Funct. Program.}}  \bibinfo{volume}{30}
  (\bibinfo{year}{2020}), \bibinfo{pages}{e15}.
\newblock
\href{https://doi.org/10.1017/S0956796820000131}{doi:\nolinkurl{10.1017/S0956796820000131}}


\bibitem[Katsumata(2018)]%
        {Katsumata18}
\bibfield{author}{\bibinfo{person}{Shin-ya Katsumata}.}
  \bibinfo{year}{2018}\natexlab{}.
\newblock \showarticletitle{A Double Category Theoretic Analysis of Graded
  Linear Exponential Comonads}. In \bibinfo{booktitle}{\emph{Foundations of
  Software Science and Computation Structures}},
  \bibfield{editor}{\bibinfo{person}{Christel Baier} {and} \bibinfo{person}{Ugo
  Dal~Lago}} (Eds.). \bibinfo{publisher}{Springer International Publishing},
  \bibinfo{address}{Cham}, \bibinfo{pages}{110--127}.
\newblock
\showISBNx{978-3-319-89366-2}


\bibitem[Kavvos and Gratzer(2023)]%
        {KavvosG23}
\bibfield{author}{\bibinfo{person}{G.~A. Kavvos} {and} \bibinfo{person}{Daniel
  Gratzer}.} \bibinfo{year}{2023}\natexlab{}.
\newblock \showarticletitle{Under Lock and Key: a Proof System for a Multimodal
  Logic}.
\newblock \bibinfo{journal}{\emph{Bull. Symb. Log.}} \bibinfo{volume}{29},
  \bibinfo{number}{2} (\bibinfo{year}{2023}), \bibinfo{pages}{264--293}.
\newblock
\href{https://doi.org/10.1017/BSL.2023.14}{doi:\nolinkurl{10.1017/BSL.2023.14}}


\bibitem[Leijen(2005)]%
        {Leijen05}
\bibfield{author}{\bibinfo{person}{Daan Leijen}.}
  \bibinfo{year}{2005}\natexlab{}.
\newblock \showarticletitle{Extensible records with scoped labels}. In
  \bibinfo{booktitle}{\emph{Revised Selected Papers from the Sixth Symposium on
  Trends in Functional Programming, {TFP} 2005, Tallinn, Estonia, 23-24
  September 2005}} \emph{(\bibinfo{series}{Trends in Functional Programming},
  Vol.~\bibinfo{volume}{6})}, \bibfield{editor}{\bibinfo{person}{Marko C. J.~D.
  van Eekelen}} (Ed.). \bibinfo{publisher}{Intellect},
  \bibinfo{pages}{179--194}.
\newblock


\bibitem[Leijen(2017)]%
        {koka}
\bibfield{author}{\bibinfo{person}{Daan Leijen}.}
  \bibinfo{year}{2017}\natexlab{}.
\newblock \showarticletitle{Type Directed Compilation of Row-Typed Algebraic
  Effects}. In \bibinfo{booktitle}{\emph{Proceedings of the 44th ACM SIGPLAN
  Symposium on Principles of Programming Languages}} (Paris, France)
  \emph{(\bibinfo{series}{POPL '17})}. \bibinfo{publisher}{Association for
  Computing Machinery}, \bibinfo{address}{New York, NY, USA},
  \bibinfo{pages}{486–499}.
\newblock
\showISBNx{9781450346603}
\href{https://doi.org/10.1145/3009837.3009872}{doi:\nolinkurl{10.1145/3009837.3009872}}


\bibitem[Lindley et~al\mbox{.}(2017)]%
        {frank}
\bibfield{author}{\bibinfo{person}{Sam Lindley}, \bibinfo{person}{Conor
  McBride}, {and} \bibinfo{person}{Craig McLaughlin}.}
  \bibinfo{year}{2017}\natexlab{}.
\newblock \showarticletitle{Do Be Do Be Do}. In
  \bibinfo{booktitle}{\emph{Proceedings of the 44th ACM SIGPLAN Symposium on
  Principles of Programming Languages}} (Paris, France)
  \emph{(\bibinfo{series}{POPL 2017})}. \bibinfo{publisher}{Association for
  Computing Machinery}, \bibinfo{address}{New York, NY, USA},
  \bibinfo{pages}{500–514}.
\newblock
\showISBNx{9781450346603}
\href{https://doi.org/10.1145/3009837.3009897}{doi:\nolinkurl{10.1145/3009837.3009897}}


\bibitem[Lorenzen et~al\mbox{.}(2024)]%
        {LorenzenWDEL24}
\bibfield{author}{\bibinfo{person}{Anton Lorenzen}, \bibinfo{person}{Leo
  White}, \bibinfo{person}{Stephen Dolan}, \bibinfo{person}{Richard~A.
  Eisenberg}, {and} \bibinfo{person}{Sam Lindley}.}
  \bibinfo{year}{2024}\natexlab{}.
\newblock \showarticletitle{Oxidizing OCaml with Modal Memory Management}.
\newblock \bibinfo{journal}{\emph{Proc. {ACM} Program. Lang.}}
  \bibinfo{volume}{8}, \bibinfo{number}{{ICFP}} (\bibinfo{year}{2024}).
\newblock
\urldef\tempurl%
\url{https://antonlorenzen.de/oxidizing-ocaml-modal-memory-management.pdf}
\showURL{%
\tempurl}


\bibitem[Lucassen and Gifford(1988)]%
        {LucassenG88}
\bibfield{author}{\bibinfo{person}{John~M. Lucassen} {and}
  \bibinfo{person}{David~K. Gifford}.} \bibinfo{year}{1988}\natexlab{}.
\newblock \showarticletitle{Polymorphic Effect Systems}. In
  \bibinfo{booktitle}{\emph{{POPL}}}. \bibinfo{publisher}{{ACM} Press},
  \bibinfo{pages}{47--57}.
\newblock
\href{https://doi.org/10.1145/73560.73564}{doi:\nolinkurl{10.1145/73560.73564}}


\bibitem[Morris and McKinna(2019)]%
        {rose}
\bibfield{author}{\bibinfo{person}{J.~Garrett Morris} {and}
  \bibinfo{person}{James McKinna}.} \bibinfo{year}{2019}\natexlab{}.
\newblock \showarticletitle{Abstracting Extensible Data Types: Or, Rows by Any
  Other Name}.
\newblock \bibinfo{journal}{\emph{Proc. ACM Program. Lang.}}
  \bibinfo{volume}{3}, \bibinfo{number}{POPL}, Article \bibinfo{articleno}{12}
  (\bibinfo{date}{jan} \bibinfo{year}{2019}), \bibinfo{numpages}{28}~pages.
\newblock
\href{https://doi.org/10.1145/3290325}{doi:\nolinkurl{10.1145/3290325}}


\bibitem[Nanevski et~al\mbox{.}(2008)]%
        {NanevskiPP08}
\bibfield{author}{\bibinfo{person}{Aleksandar Nanevski}, \bibinfo{person}{Frank
  Pfenning}, {and} \bibinfo{person}{Brigitte Pientka}.}
  \bibinfo{year}{2008}\natexlab{}.
\newblock \showarticletitle{Contextual modal type theory}.
\newblock \bibinfo{journal}{\emph{{ACM} Trans. Comput. Log.}}
  \bibinfo{volume}{9}, \bibinfo{number}{3} (\bibinfo{year}{2008}),
  \bibinfo{pages}{23:1--23:49}.
\newblock
\href{https://doi.org/10.1145/1352582.1352591}{doi:\nolinkurl{10.1145/1352582.1352591}}


\bibitem[New et~al\mbox{.}(2023)]%
        {NewGL23}
\bibfield{author}{\bibinfo{person}{Max~S. New}, \bibinfo{person}{Eric
  Giovannini}, {and} \bibinfo{person}{Daniel~R. Licata}.}
  \bibinfo{year}{2023}\natexlab{}.
\newblock \showarticletitle{Gradual Typing for Effect Handlers}.
\newblock \bibinfo{journal}{\emph{Proc. {ACM} Program. Lang.}}
  \bibinfo{volume}{7}, \bibinfo{number}{{OOPSLA2}} (\bibinfo{year}{2023}),
  \bibinfo{pages}{1758--1786}.
\newblock
\href{https://doi.org/10.1145/3622860}{doi:\nolinkurl{10.1145/3622860}}


\bibitem[Orchard et~al\mbox{.}(2019)]%
        {OrchardLE19}
\bibfield{author}{\bibinfo{person}{Dominic Orchard},
  \bibinfo{person}{Vilem-Benjamin Liepelt}, {and} \bibinfo{person}{Harley
  Eades~III}.} \bibinfo{year}{2019}\natexlab{}.
\newblock \showarticletitle{Quantitative program reasoning with graded modal
  types}.
\newblock \bibinfo{journal}{\emph{Proc. ACM Program. Lang.}}
  \bibinfo{volume}{3}, \bibinfo{number}{ICFP}, Article \bibinfo{articleno}{110}
  (\bibinfo{date}{jul} \bibinfo{year}{2019}), \bibinfo{numpages}{30}~pages.
\newblock
\href{https://doi.org/10.1145/3341714}{doi:\nolinkurl{10.1145/3341714}}


\bibitem[Petricek et~al\mbox{.}(2014)]%
        {PetricekOM14}
\bibfield{author}{\bibinfo{person}{Tomas Petricek}, \bibinfo{person}{Dominic
  Orchard}, {and} \bibinfo{person}{Alan Mycroft}.}
  \bibinfo{year}{2014}\natexlab{}.
\newblock \showarticletitle{Coeffects: a calculus of context-dependent
  computation}. In \bibinfo{booktitle}{\emph{Proceedings of the 19th ACM
  SIGPLAN International Conference on Functional Programming}} (Gothenburg,
  Sweden) \emph{(\bibinfo{series}{ICFP '14})}. \bibinfo{publisher}{Association
  for Computing Machinery}, \bibinfo{address}{New York, NY, USA},
  \bibinfo{pages}{123–135}.
\newblock
\showISBNx{9781450328739}
\href{https://doi.org/10.1145/2628136.2628160}{doi:\nolinkurl{10.1145/2628136.2628160}}


\bibitem[Phipps-Costin et~al\mbox{.}(2023)]%
        {Phipps-CostinRGLHSPL23}
\bibfield{author}{\bibinfo{person}{Luna Phipps-Costin},
  \bibinfo{person}{Andreas Rossberg}, \bibinfo{person}{Arjun Guha},
  \bibinfo{person}{Daan Leijen}, \bibinfo{person}{Daniel Hillerstr\"{o}m},
  \bibinfo{person}{KC Sivaramakrishnan}, \bibinfo{person}{Matija Pretnar},
  {and} \bibinfo{person}{Sam Lindley}.} \bibinfo{year}{2023}\natexlab{}.
\newblock \showarticletitle{Continuing WebAssembly with Effect Handlers}.
\newblock \bibinfo{journal}{\emph{Proc. ACM Program. Lang.}}
  \bibinfo{volume}{7}, \bibinfo{number}{{OOPSLA2}} (\bibinfo{year}{2023}),
  \bibinfo{pages}{460--485}.
\newblock
\href{https://doi.org/10.1145/3622814}{doi:\nolinkurl{10.1145/3622814}}


\bibitem[Pierce and Turner(2000)]%
        {PierceT00}
\bibfield{author}{\bibinfo{person}{Benjamin~C. Pierce} {and}
  \bibinfo{person}{David~N. Turner}.} \bibinfo{year}{2000}\natexlab{}.
\newblock \showarticletitle{Local type inference}.
\newblock \bibinfo{journal}{\emph{{ACM} Trans. Program. Lang. Syst.}}
  \bibinfo{volume}{22}, \bibinfo{number}{1} (\bibinfo{year}{2000}),
  \bibinfo{pages}{1--44}.
\newblock
\href{https://doi.org/10.1145/345099.345100}{doi:\nolinkurl{10.1145/345099.345100}}


\bibitem[Plotkin and Pretnar(2013)]%
        {PlotkinP13}
\bibfield{author}{\bibinfo{person}{Gordon~D. Plotkin} {and}
  \bibinfo{person}{Matija Pretnar}.} \bibinfo{year}{2013}\natexlab{}.
\newblock \showarticletitle{Handling Algebraic Effects}.
\newblock \bibinfo{journal}{\emph{Log. Methods Comput. Sci.}}
  \bibinfo{volume}{9}, \bibinfo{number}{4} (\bibinfo{year}{2013}).
\newblock


\bibitem[Pretnar(2014)]%
        {pretnar14}
\bibfield{author}{\bibinfo{person}{Matija Pretnar}.}
  \bibinfo{year}{2014}\natexlab{}.
\newblock \showarticletitle{Inferring Algebraic Effects}.
\newblock \bibinfo{journal}{\emph{Log. Methods Comput. Sci.}}
  \bibinfo{volume}{10}, \bibinfo{number}{3} (\bibinfo{year}{2014}).
\newblock
\href{https://doi.org/10.2168/LMCS-10(3:21)2014}{doi:\nolinkurl{10.2168/LMCS-10(3:21)2014}}


\bibitem[R{\'e}my(1994)]%
        {remy1994type}
\bibfield{author}{\bibinfo{person}{Didier R{\'e}my}.}
  \bibinfo{year}{1994}\natexlab{}.
\newblock \showarticletitle{Type Inference for Records in a Natural Extension
  of ML}. In \bibinfo{booktitle}{\emph{Theoretical Aspects of Object-Oriented
  Programming: Types, Semantics, and Language Design}}. Citeseer.
\newblock


\bibitem[Ritchie and Thompson(1974)]%
        {RitchieT74}
\bibfield{author}{\bibinfo{person}{Dennis Ritchie} {and} \bibinfo{person}{Ken
  Thompson}.} \bibinfo{year}{1974}\natexlab{}.
\newblock \showarticletitle{The {UNIX} Time-Sharing System}.
\newblock \bibinfo{journal}{\emph{Commun. {ACM}}} \bibinfo{volume}{17},
  \bibinfo{number}{7} (\bibinfo{year}{1974}), \bibinfo{pages}{365--375}.
\newblock


\bibitem[Shulman(2018)]%
        {Shulman18}
\bibfield{author}{\bibinfo{person}{Michael Shulman}.}
  \bibinfo{year}{2018}\natexlab{}.
\newblock \showarticletitle{Brouwer's fixed-point theorem in real-cohesive
  homotopy type theory}.
\newblock \bibinfo{journal}{\emph{Math. Struct. Comput. Sci.}}
  \bibinfo{volume}{28}, \bibinfo{number}{6} (\bibinfo{year}{2018}),
  \bibinfo{pages}{856--941}.
\newblock
\href{https://doi.org/10.1017/S0960129517000147}{doi:\nolinkurl{10.1017/S0960129517000147}}


\bibitem[Shulman(2023)]%
        {abs-2303-02572}
\bibfield{author}{\bibinfo{person}{Michael Shulman}.}
  \bibinfo{year}{2023}\natexlab{}.
\newblock \showarticletitle{Semantics of multimodal adjoint type theory}. In
  \bibinfo{booktitle}{\emph{Proceedings of the 39th Conference on the
  Mathematical Foundations of Programming Semantics, {MFPS} XXXIX, Indiana
  University, Bloomington, IN, USA, June 21-23, 2023}}
  \emph{(\bibinfo{series}{{EPTICS}}, Vol.~\bibinfo{volume}{3})},
  \bibfield{editor}{\bibinfo{person}{Marie Kerjean} {and}
  \bibinfo{person}{Paul~Blain Levy}} (Eds.). \bibinfo{publisher}{EpiSciences}.
\newblock
\href{https://doi.org/10.46298/ENTICS.12300}{doi:\nolinkurl{10.46298/ENTICS.12300}}


\bibitem[Sivaramakrishnan et~al\mbox{.}(2021)]%
        {multicore}
\bibfield{author}{\bibinfo{person}{K.~C. Sivaramakrishnan},
  \bibinfo{person}{Stephen Dolan}, \bibinfo{person}{Leo White},
  \bibinfo{person}{Tom Kelly}, \bibinfo{person}{Sadiq Jaffer}, {and}
  \bibinfo{person}{Anil Madhavapeddy}.} \bibinfo{year}{2021}\natexlab{}.
\newblock \showarticletitle{Retrofitting effect handlers onto OCaml}. In
  \bibinfo{booktitle}{\emph{{PLDI} '21: 42nd {ACM} {SIGPLAN} International
  Conference on Programming Language Design and Implementation, Virtual Event,
  Canada, June 20-25, 2021}}, \bibfield{editor}{\bibinfo{person}{Stephen~N.
  Freund} {and} \bibinfo{person}{Eran Yahav}} (Eds.).
  \bibinfo{publisher}{{ACM}}, \bibinfo{pages}{206--221}.
\newblock
\href{https://doi.org/10.1145/3453483.3454039}{doi:\nolinkurl{10.1145/3453483.3454039}}


\bibitem[Tang et~al\mbox{.}(2024)]%
        {SHL}
\bibfield{author}{\bibinfo{person}{Wenhao Tang}, \bibinfo{person}{Daniel
  Hillerstr\"{o}m}, \bibinfo{person}{Sam Lindley}, {and}
  \bibinfo{person}{J.~Garrett Morris}.} \bibinfo{year}{2024}\natexlab{}.
\newblock \showarticletitle{Soundly Handling Linearity}.
\newblock \bibinfo{journal}{\emph{Proc. ACM Program. Lang.}}
  \bibinfo{volume}{8}, \bibinfo{number}{POPL}, Article \bibinfo{articleno}{54}
  (\bibinfo{date}{jan} \bibinfo{year}{2024}), \bibinfo{numpages}{29}~pages.
\newblock
\href{https://doi.org/10.1145/3632896}{doi:\nolinkurl{10.1145/3632896}}


\bibitem[Tang et~al\mbox{.}(2025a)]%
        {TangWDHLL25}
\bibfield{author}{\bibinfo{person}{Wenhao Tang}, \bibinfo{person}{Leo White},
  \bibinfo{person}{Stephen Dolan}, \bibinfo{person}{Daniel Hillerstr{\"{o}}m},
  \bibinfo{person}{Sam Lindley}, {and} \bibinfo{person}{Anton Lorenzen}.}
  \bibinfo{year}{2025}\natexlab{a}.
\newblock \showarticletitle{Modal Effect Types}.
\newblock \bibinfo{journal}{\emph{CoRR}}  \bibinfo{volume}{abs/2407.11816}
  (\bibinfo{year}{2025}).
\newblock
\href{https://doi.org/10.48550/ARXIV.2407.11816}{doi:\nolinkurl{10.48550/ARXIV.2407.11816}}


\bibitem[Tang et~al\mbox{.}(2025b)]%
        {MetArtifact}
\bibfield{author}{\bibinfo{person}{Wenhao Tang}, \bibinfo{person}{Leo White},
  \bibinfo{person}{Stephen Dolan}, \bibinfo{person}{Daniel Hillerstr{\"{o}}m},
  \bibinfo{person}{Sam Lindley}, {and} \bibinfo{person}{Anton Lorenzen}.}
  \bibinfo{year}{2025}\natexlab{b}.
\newblock \bibinfo{booktitle}{\emph{OOPSLA25 Artifact for Modal Effect Types}}.
\newblock
\href{https://doi.org/10.5281/zenodo.14982742}{doi:\nolinkurl{10.5281/zenodo.14982742}}


\bibitem[van~den Berg and Schrijvers(2023)]%
        {BergS23}
\bibfield{author}{\bibinfo{person}{Birthe van~den Berg} {and}
  \bibinfo{person}{Tom Schrijvers}.} \bibinfo{year}{2023}\natexlab{}.
\newblock \showarticletitle{A Framework for Higher-Order Effects {\&}
  Handlers}.
\newblock \bibinfo{journal}{\emph{CoRR}}  \bibinfo{volume}{abs/2302.01415}
  (\bibinfo{year}{2023}).
\newblock
\href{https://doi.org/10.48550/arXiv.2302.01415}{doi:\nolinkurl{10.48550/arXiv.2302.01415}}
\showeprint[arXiv]{2302.01415}


\bibitem[Wu et~al\mbox{.}(2014)]%
        {scope14}
\bibfield{author}{\bibinfo{person}{Nicolas Wu}, \bibinfo{person}{Tom
  Schrijvers}, {and} \bibinfo{person}{Ralf Hinze}.}
  \bibinfo{year}{2014}\natexlab{}.
\newblock \showarticletitle{Effect Handlers in Scope}.
\newblock \bibinfo{journal}{\emph{SIGPLAN Not.}} \bibinfo{volume}{49},
  \bibinfo{number}{12} (\bibinfo{date}{Sept.} \bibinfo{year}{2014}),
  \bibinfo{pages}{1–12}.
\newblock
\showISSN{0362-1340}
\href{https://doi.org/10.1145/2775050.2633358}{doi:\nolinkurl{10.1145/2775050.2633358}}


\bibitem[Xie et~al\mbox{.}(2022)]%
        {XieCIL22}
\bibfield{author}{\bibinfo{person}{Ningning Xie}, \bibinfo{person}{Youyou
  Cong}, \bibinfo{person}{Kazuki Ikemori}, {and} \bibinfo{person}{Daan
  Leijen}.} \bibinfo{year}{2022}\natexlab{}.
\newblock \showarticletitle{First-class names for effect handlers}.
\newblock \bibinfo{journal}{\emph{Proc. {ACM} Program. Lang.}}
  \bibinfo{volume}{6}, \bibinfo{number}{{OOPSLA2}} (\bibinfo{year}{2022}),
  \bibinfo{pages}{30--59}.
\newblock
\href{https://doi.org/10.1145/3563289}{doi:\nolinkurl{10.1145/3563289}}


\bibitem[Xue and Oliveira(2024)]%
        {XuO24}
\bibfield{author}{\bibinfo{person}{Xu Xue} {and} \bibinfo{person}{Bruno C.
  d.~S. Oliveira}.} \bibinfo{year}{2024}\natexlab{}.
\newblock \showarticletitle{Contextual Typing}.
\newblock \bibinfo{journal}{\emph{Proc. ACM Program. Lang.}}
  \bibinfo{volume}{8}, \bibinfo{number}{ICFP}, Article \bibinfo{articleno}{266}
  (\bibinfo{date}{Aug.} \bibinfo{year}{2024}), \bibinfo{numpages}{29}~pages.
\newblock
\href{https://doi.org/10.1145/3674655}{doi:\nolinkurl{10.1145/3674655}}


\bibitem[Yang and Wu(2023)]%
        {YangW23}
\bibfield{author}{\bibinfo{person}{Zhixuan Yang} {and} \bibinfo{person}{Nicolas
  Wu}.} \bibinfo{year}{2023}\natexlab{}.
\newblock \showarticletitle{Modular Models of Monoids with Operations}.
\newblock \bibinfo{journal}{\emph{Proc. {ACM} Program. Lang.}}
  \bibinfo{volume}{7}, \bibinfo{number}{{ICFP}} (\bibinfo{year}{2023}),
  \bibinfo{pages}{566--603}.
\newblock
\href{https://doi.org/10.1145/3607850}{doi:\nolinkurl{10.1145/3607850}}


\bibitem[Yoshioka et~al\mbox{.}(2024)]%
        {YoshiokaSI24}
\bibfield{author}{\bibinfo{person}{Takuma Yoshioka}, \bibinfo{person}{Taro
  Sekiyama}, {and} \bibinfo{person}{Atsushi Igarashi}.}
  \bibinfo{year}{2024}\natexlab{}.
\newblock \showarticletitle{Abstracting Effect Systems for Algebraic Effect
  Handlers}.
\newblock \bibinfo{journal}{\emph{CoRR}}  \bibinfo{volume}{abs/2404.16381}
  (\bibinfo{year}{2024}).
\newblock
\href{https://doi.org/10.48550/ARXIV.2404.16381}{doi:\nolinkurl{10.48550/ARXIV.2404.16381}}
\showeprint[arXiv]{2404.16381}


\bibitem[Zhang and Myers(2019)]%
        {ZhangM19}
\bibfield{author}{\bibinfo{person}{Yizhou Zhang} {and}
  \bibinfo{person}{Andrew~C. Myers}.} \bibinfo{year}{2019}\natexlab{}.
\newblock \showarticletitle{Abstraction-safe effect handlers via tunneling}.
\newblock \bibinfo{journal}{\emph{Proc. {ACM} Program. Lang.}}
  \bibinfo{volume}{3}, \bibinfo{number}{{POPL}} (\bibinfo{year}{2019}),
  \bibinfo{pages}{5:1--5:29}.
\newblock
\href{https://doi.org/10.1145/3290318}{doi:\nolinkurl{10.1145/3290318}}


\bibitem[Zhao and d.~S.~Oliveira(2022)]%
        {ZhaoO22}
\bibfield{author}{\bibinfo{person}{Jinxu Zhao} {and} \bibinfo{person}{Bruno~C.
  d. S.~Oliveira}.} \bibinfo{year}{2022}\natexlab{}.
\newblock \showarticletitle{Elementary Type Inference}. In
  \bibinfo{booktitle}{\emph{36th European Conference on Object-Oriented
  Programming, {ECOOP} 2022, June 6-10, 2022, Berlin, Germany}}
  \emph{(\bibinfo{series}{LIPIcs}, Vol.~\bibinfo{volume}{222})},
  \bibfield{editor}{\bibinfo{person}{Karim Ali} {and} \bibinfo{person}{Jan
  Vitek}} (Eds.). \bibinfo{publisher}{Schloss Dagstuhl - Leibniz-Zentrum
  f{\"{u}}r Informatik}, \bibinfo{pages}{2:1--2:28}.
\newblock
\href{https://doi.org/10.4230/LIPICS.ECOOP.2022.2}{doi:\nolinkurl{10.4230/LIPICS.ECOOP.2022.2}}


\bibitem[Zyuzin and Nanevski(2021)]%
        {ZyuzinN21}
\bibfield{author}{\bibinfo{person}{Nikita Zyuzin} {and}
  \bibinfo{person}{Aleksandar Nanevski}.} \bibinfo{year}{2021}\natexlab{}.
\newblock \showarticletitle{Contextual modal types for algebraic effects and
  handlers}.
\newblock \bibinfo{journal}{\emph{Proc. {ACM} Program. Lang.}}
  \bibinfo{volume}{5}, \bibinfo{number}{{ICFP}} (\bibinfo{year}{2021}),
  \bibinfo{pages}{1--29}.
\newblock
\href{https://doi.org/10.1145/3473580}{doi:\nolinkurl{10.1145/3473580}}


\end{thebibliography}

\appendix
\section{Full Specification of \Met with Extensions}
\label{app:meet}

We provide the full specification of \Met including all
extensions.

\subsection{More Extensions}
\label{app:more-extensions}

We first present three more extensions of modal effect types: richer
forms of handlers, boxing pure computations, and commuting
modalities with type abstraction.
We discuss the key ideas of these extensions below and show their full
specification in the following sub-sections.

\paragraph{Absolute and Shallow Handlers}
\label{sec:shallow-handlers}

Up to now we have considered only \emph{deep} handlers of the form
$\Handle\;M\With H$ where $M$ depends on the ambient effect contexts.
Deep handlers automatically wrap the handler around the body of the
continuation $r$ captured in a handler clause, and thus $r$ depends on
the ambient effect context.
Though this usually suffices in practice, in some cases we may want
the computation $M$ or the continuation to be absolute, i.e.,
independent from the ambient effect context.
We call such handlers \emph{absolute} handlers.
This situation is more prevalent with effect polymorphism.

To support absolute handlers, we extend the handler syntax and typing
rules as follows.
\begin{mathpar}
{
\inferrule*[Lab=\tylab{Handler${}^\all$}]
{
  {D = \{\ell_i : A_i \sto B_i\}_i} \\
  \typm{\Gamma,\lockwith{\aeq{D+E}_F}}{M: A}{D+E} \\\\
  \typm{\Gamma, \lockwith{\aeq{E}_F}, x : \boxwith{\aeq{D+E}} A}{N : B}{E}\\
  [\typm{\Gamma, \lockwith{\aeq{E}_F}, p_i : A_i, r_i :
    \boxwith{\aeq{E}}
    (B_i \to B)}{N_i : B}{E}]_i
  \\
  [E]_F \To \one_F
}
{\typm{\Gamma}{\Handle^\all\;M\With
  \{\Ret x \mapsto N\} \uplus \{ \ell_i\;p_i\;r_i \mapsto N_i \}_i : B}{F}}
}
\end{mathpar}
The \tylab{Handler${}^\all$} rule extends the context with an absolute
lock $\lockwith{\aeq{D+E}_F}$ specifying the effect context for $M$,
and boxes the continuation $r$ with the absolute modality $\aeq{E}$,
where $E$ exactly gives the effect context after handling.
We put the lock $\lockwith{\aeq{E}_F}$ in handler clauses as deep
handlers capture themselves into continuations.
We also extend the handler syntax with \emph{shallow} handlers
$\Handle^\dagger\,M\With H$, in which the handler is not automatically
wrapped around the body of continuations, and \emph{absolute shallow}
handlers
$\Handle^{\all\dagger}\,M\With H$~\citep{KammarLO13,HillerstromL18}.

\paragraph{Boxing Computations under Empty Effect Contexts}
\label{sec:axiom-k}

We have restricted boxes to values in order to guarantee effect
safety.
This restriction is not essential for $\boxwith{\aeq{}}$.
For example, suppose we have
$f\varb{\boxwith{\aeq{}_F}}{(A\to B)}$ and
$x\varb{\boxwith{\aeq{}_F}}{A}$,
it is sound to treat $\Box_{\aeq{}}\,(f\,x)$ as a computation which
returns a value of type $\boxwith{\aeq{}}B$.
As $f\,x$ is evaluated under the empty effect context, we can
guarantee that it cannot get stuck on unhandled operations.

We extend the introduction rule for the empty absolute modality to
allow non-value terms with the following typing rule.
\begin{mathpar}
\inferrule*[Lab=\tylab{ModAbs}]
{\typm{\Gamma,\lockwith{\aeq{}_F}}{M:A}{\cdot}}
{\typm{\Gamma}{\Box_{\aeq{}}\,M : \boxwith{\aeq{}} A}{F}}
\end{mathpar}
As an example, we can write the following $\var{app}$ function.
\[\ba{rcl}
\var{app} &:& \forall \alpha.\forall\beta.\boxwith{\aeq{}}(\alpha\to \beta) \to \boxwith{\aeq{}} \alpha \to \boxwith{\aeq{}} \beta \\
\var{app} &=& \Lambda\alpha.\Lambda\beta.\lambda f.\lambda x.
  \Letm{}{\aeq{}} f = f \In
  \Letm{}{\aeq{}} x = x \In
  \Box_{\aeq{}}\,(f\,x) \\
\ea\]
The formula corresponding to the type of this function is commonly
referred to as Axiom K in modal logic and is also satisfied by other
similar modalities such as the safety modality of~\citet{ChoudhuryK20}.

\paragraph{Commuting Modalities and Type Abstraction}
\label{sec:commuting-mod-poly}

Crisp elimination rules in \Cref{sec:data-types-crisp} allow us to
commute modalities and data types.
Similarly, it is also sound and useful to commute type abstractions
and modalities.
However, the current modality elimination rule cannot do so, for a
similar reason to why it is not possible to transform
$\forall\alpha.A+B$ to $(\forall\alpha.A) + (\forall\alpha.B)$ in
System F.
We extend modality elimination to the form
$\Letm{\nu}{\mu}\Lambda\ol{\alpha^K} x = V \In M$ which allows $V$ to
use additional type variables in $\ol{\alpha^K}$ which are abstracted
when bound to $x$.
The extended typing and reduction rules are as follows.
\begin{mathpar}
\inferrule*[Lab=\tylab{Letmod'}]
{
  \nu_F : E\to F \\
  \typm{\Gamma,\lockwith{\nu_F},\ol{\alpha:K}}{V : \boxwith{\mu} A}{E} \\
  \typm{\Gamma,x\varb{\nu_F\circ\mu_E}{\forall\ol{\alpha^K}.A}}{M:B}{F}
}
{\typm{\Gamma}{\Letm{\nu}{\mu} \Lambda\ol{\alpha^K} . x = V \In M : B}{F}}
\end{mathpar}
\begin{reductions}
  \semlab{Letmod'}  & \Letm{\nu}{\mu} \Lambda\ol{\alpha^K} . x = \Box_\mu\,U \In M
    &\reducesto& M[(\Lambda\ol{\alpha^K}.U)/x] \\
\end{reductions}
For instance, we can now write a function of type
${\forall\alpha^K.\boxwith{\mu}A} \to \boxwith{\mu}(\forall\alpha.A)$
where $\alpha\notin\ftv{\mu}$ as follows.
\[
\lambda x^{\forall\alpha^K.\boxwith{\mu}A} .
\Letm{}{\mu} \Lambda\alpha^K . y = x\;\alpha \In \Box_\mu\,y
\]

\subsection{Syntax}
\label{app:syntax-meet}

\Cref{fig:syntax-meet} gives the syntax of \Met with all extensions
including data types, polymorphism, and enriched handlers. We highlight
the syntax not present in core \Met.

\begin{figure}[htbp]
\begin{syntax}
\slab{Types}    &A,B  &::= & \TUnit \mid A\to B \mid {\boxwith{\mu} A} \mid
                        \hl{\alpha \mid \forall\alpha^K.A
                        \mid \Pair{A}{B} \mid A+B}
                          \\
\slab{Masks}          &L   &::= & \cdot \mid \ell,L \\
\slab{Extensions}\hspace{-1em}      &D   &::= & \cdot \mid \ell : P, D \\
\slab{Effect Contexts} &E,F &::= & \cdot \mid \ell:P,E
                      \mid \hl{\evar \vphantom{\backslash}
                      \mid \eminus{E}{L}} \\
\slab{Presence}\hspace{-1em}      &P   &::= & A\sto B \mid \Abs \\
\slab{Modalities}\hspace{-1em}      &\mu &::= & {\aeq{E}}
                                  \mid {\adj{L}{D}} \\
\slab{Kinds}          &K &::= &  {\Pure} \mid \Any \mid \hl{\Effect} \\
\slab{Contexts}\hspace{-2em}       &\Gamma &::=& \cdot
                                          \mid {\Gamma, x\varb{\mu_F}{A}}
                                          \mid {\Gamma,\lockwith{\mind{\mu}{F}}}
                                          \mid \hl{\Gamma,\alpha:K}
                                          \\
\slab{Terms}   &M,N  &::= & x \mid  \lambda x^A.M \mid M\,N \mid \hl{\Lambda\alpha^K.V \mid M\,A} \\
                      &     &\mid& {\Box_\mu\,V} \mid {\Letm{\nu}{\mu} x = V\In M} \\
                      &     &\mid& \Do\ell\; M \mid \Mask_L\,M \mid \Handle^{\hl{\delta}}\;M\With H \\
                      &     &\mid&  \hl{(M, N) \mid \Casem{\nu} V \Of (x, y) \mapsto M
                      } \\
                      &     &\mid&
                      \hl{
                        \Inl M \mid \Inr M \mid
                        \Casem{\nu} V \Of \{ \Inl x \mapsto M, \Inr y \mapsto N \}
                      } \\
                      &     &\mid& \hl{ {\Letm{\nu}{\mu} \Lambda\ol{\alpha^K} . x = V\In M}} \\
\slab{Values}         &V,W  &::= & x \mid \lambda x^A.M \mid {\Box_\mu\, V} \mid \hl{\Lambda \alpha^K.V \mid V\,A
   \mid (V, W) \mid \Inl V \mid \Inr V} \\
\slab{Handlers}\hspace{-2em}       &H    &::= & \{ \Ret x \mapsto M \}
                            \mid  \{ \ell \; p \; r \mapsto M \} \uplus H \\
\slab{Decorations}         &\delta& ::= & \hl{\cdot \mid \all \mid \dagger \mid \all\dagger} \\
\end{syntax}
\caption{Syntax of \Met with all extensions.}
\label{fig:syntax-meet}
\end{figure}

\subsection{Kinding, Well-Formedness, Type Equivalence and Sub-effecting}
\label{app:rules-meet}

The full kinding and well-formedness rules for \Met are shown in
\Cref{fig:kinding-meet}.
The type equivalence and sub-effecting rules are shown in
\Cref{fig:equiv-sub-meet}.
We highlight the special rule that allows us to add or remove absent
labels from the right of effect contexts.

\subsection{Auxiliary Operations}

Since we extend the syntax of effect contexts $E$, we also need to
define a new case for the operation $E - L$ as follows. The
definitions of other operations $D+E$ and $L\bowtie D$ remain
unchanged from those in \Cref{sec:modes}.
We include the full definition here for easy reference.

\boxed{D + E}
\boxed{E - L}
\boxed{L \bowtie D}
\hfill
\[\ba{r@{~}c@{~}l}
D+E &=& D,E \\
\cdot - L &=& \cdot \\
(\ell:P,E) - L &=&
\begin{cases}
  E - L'         &\text{if } L\equiv \ell,L' \\
  \ell:P,(E-L)   &\text{otherwise}
\end{cases} \\
\ea
\quad
\ba{r@{~}c@{~}l}
L \bowtie \cdot &=& (L,\cdot) \\
L \bowtie (\ell:P,D) &=&
  \begin{cases}
    L'\bowtie D      &\text{if } L\equiv\ell,L' \\
    (L',(\ell:P,D'))  &\text{otherwise} \\
    \span\text{where } (L',D') = L\bowtie D \\
  \end{cases}
\ea\]

Since we extend the syntax of contexts, we need to extend
$\locks{\Gamma}$ with one extra trivial case.

\[\ba{r@{~}c@{~}l}
\locks{\cdot} &=& {\one} \\
\locks{\Gamma,\lockwith{\mind{\mu}{F}}} &=& \locks{\Gamma}\circ\mind{\mu}{F} \\
\ea
\qquad
\ba{r@{~}c@{~}l}
\locks{\Gamma,x\varb{\mu_F}{A}} &=& \locks{\Gamma} \\
\locks{\Gamma,\alpha:K} &=& \locks{\Gamma} \\
\ea\]

\begin{figure}[htbp]
\raggedright
\boxed{\Gamma\vdash A : K\vphantom{\mu}}
\hfill
\begin{mathpar}
\inferrule*
{
  \Gamma \ni \alpha : K
}
{\Gamma \vdash \alpha : K}

\inferrule*
{
  \Gamma \vdash A : \Pure
}
{\Gamma \vdash A : \Any}

\inferrule*
{
  \Gamma \vdash \aeq{E} \\
  \Gamma \vdash A : \Any \\
}
{\Gamma \vdash \boxwith{\aeq{E}} A : \Pure}

\inferrule*
{
  \Gamma \vdash \adj{L}{D} \\
  \Gamma \vdash A : K \\
}
{\Gamma \vdash \boxwith{\adj{L}{D}} A : K}

\inferrule*
{
  \Gamma \vdash A : \Any \\\\
  \Gamma \vdash B : \Any
}
{\Gamma \vdash A\to B : \Any}

\inferrule*
{
  \Gamma, \alpha:K \vdash A : K'
}
{\Gamma \vdash \forall\alpha^{K}.A : K'}

\inferrule*
{ }
{\Gamma \vdash \TUnit : \Pure}

\inferrule*
{
  \Gamma \vdash A : K \\\\
  \Gamma \vdash B : K \\
}
{\Gamma \vdash \Pair{A}{B} : K}

\inferrule*
{
  \Gamma \vdash A : K \\\\
  \Gamma \vdash B : K \\
}
{\Gamma \vdash A + B : K}
\end{mathpar}
\raggedright
\boxed{\Gamma\vdash \mu}
\boxed{\Gamma\vdash E : K \vphantom{\mu}}
\boxed{\Gamma\vdash L\vphantom{\mu}}
\boxed{\Gamma\vdash D\vphantom{\mu}}
\boxed{\Gamma\vdash P\vphantom{\mu}}
\hfill
\begin{mathpar}
\inferrule*
{
  \Gamma \vdash L \\
  \Gamma \vdash D \\
}
{\Gamma \vdash \adj{L}{D}}

\inferrule*
{
  \Gamma \vdash E : \Effect
}
{\Gamma \vdash \aeq{E}}

\inferrule*
{ }
{\Gamma \vdash \cdot : \Effect }

\inferrule*
{
  \Gamma \vdash P \\
  \Gamma \vdash E : \Effect \\
}
{\Gamma \vdash \ell: P,E : \Effect}

\inferrule*
{
  \Gamma \vdash E : \Effect \\
  \Gamma \vdash L \\
}
{\Gamma \vdash \eminus{E}{L} : \Effect}

\inferrule*
{ }
{\Gamma \vdash L}

\inferrule*
{ }
{\Gamma \vdash \cdot} %

\inferrule*
{
  \Gamma \vdash P \\
  \Gamma \vdash D \\
}
{\Gamma \vdash \ell: P,D}
\\

\inferrule*
{ }
{\Gamma \vdash \Abs}

\inferrule*
{
  \Gamma\vdash A : \Pure \\
  \Gamma\vdash B : \Pure
}
{\Gamma \vdash A\sto B}
\end{mathpar}
\raggedright
\boxed{\Gamma\vdash (\mu,A)\To\nu \atmode{F}}
\hfill
\begin{mathpar}  
\inferrule*
{
  \Gamma \vdash A:\Pure
}
{\Gamma\vdash (\mu,A)\To\nu \atmode{F}}

\inferrule*
{
  \mu_F\To\nu_F
}
{\Gamma\vdash (\mu,A)\To\nu \atmode{F}}
\end{mathpar}
\raggedright
\boxed{\Gamma \atmode{E}}
\hfill
\begin{mathpar}
\inferrule*
{ }
{\cdot\atmode{E}}

\inferrule*
{
  \Gamma\atmode{F} \\
  \mu_F : E\to F \\
  \Gamma \vdash A : K \\
}
{\Gamma, x\varb{\mu_F}{A} \atmode{F}}

\inferrule*
{
  \Gamma\atmode{E} \\
}
{\Gamma, \alpha:K \atmode{E}}

\inferrule*
{
  \Gamma\atmode{F} \\
  \mind{\mu}{F} : E\to F \\
}
{\Gamma, \lockwith{\mind{\mu}{F}} \atmode{E}}
\end{mathpar}
\caption{Kinding, well-formedness, and auxiliary rules for \Met.}
\label{fig:kinding-meet}
\end{figure}

\begin{figure}[htbp]
\raggedright
\boxed{L\equiv L'\vphantom{\subtype P'\mu}}
\boxed{D\equiv D'\vphantom{\subtype P'\mu}}
\hfill
\begin{mathpar}
\inferrule*
{ }
{\cdot \equiv \cdot}

\inferrule*
{
  L_1\equiv L_2 \\
  L_2\equiv L_3
}
{L_1\equiv L_3}

\inferrule*
{
  L \equiv L'
}
{\ell,L \equiv \ell,L'}

\inferrule*
{
  \ell \neq \ell' \\
  L \equiv L'
}
{\ell,\ell',L \equiv \ell',\ell,L}
\\

\inferrule*
{ }
{ \cdot \equiv \cdot}

\inferrule*
{
   D_1 \equiv D_2 \\
   D_2 \equiv D_3
}
{ D_1 \equiv D_3}

\inferrule*
{
  P \equiv P' \\
  D \equiv D'
}
{ \ell:P,D \equiv \ell:P',D'}

\inferrule*
{
  \ell \neq \ell'
}
{ \ell:P,\ell':P',D \equiv \ell':P',\ell:P,D}
\end{mathpar}
\raggedright
\boxed{E\equiv F\vphantom{\subtype P'\mu}}
\hfill
\begin{mathpar}
\inferrule*
{ }
{ \cdot \equiv \cdot}

\inferrule*
{
   E_1 \equiv E_2 \\
   E_2 \equiv E_3
}
{ E_1 \equiv E_3}

\inferrule*
{
  P \equiv P' \\
  E \equiv E'
}
{ \ell:P,E \equiv \ell:P',E'}

\inferrule*
{
  \ell\neq\ell'
}
{ \ell:P,\ell':P',E \equiv \ell':P',\ell:P,E}
\\

\hl{
\inferrule*
{
}
{E,\ell:\Abs \equiv E}
}

\inferrule*
{ }
{\eminus{\evar}{L} \equiv \eminus{\evar}{L}} %

\inferrule*
{
  E \equiv E' \\
  L \equiv L'
}
{\eminus{E}{L} \equiv \eminus{E'}{L'}}

\inferrule*
{ }
{\eminus{E}{\cdot} \equiv E}

\inferrule*
{ }
{\eminus{\cdot}{L} \equiv \cdot}

\inferrule*
{ }
{\eminus{(\ell:P,E)}{(\ell,L)} \equiv \eminus{E}{L}}

\inferrule*
{ \ell \notin L }
{\eminus{(\ell:P,E)}{L} \equiv \ell:P,\eminus{E}{L}}

\inferrule*
{ }
{\eminus{(\eminus{\evar}{L})}{L'} \equiv \eminus{\evar}{(L,L')}}
\end{mathpar}
\raggedright
\boxed{P\equiv P'\vphantom{\subtype P'\mu}}
\boxed{\mu\equiv \nu\vphantom{\subtype P'}}
\hfill
\begin{mathpar}
\inferrule*
{
  A \equiv A' \\
  B \equiv B'
}
{A\sto B \equiv A'\sto B'}

\inferrule*
{ }
{\Abs \equiv \Abs}

\inferrule*
{
  E \equiv F
}
{\aeq{E} \equiv \aeq{F}}

\inferrule*
{
  L \equiv L' \\ %
  D \equiv D'
}
{\adj{L}{D} \equiv \adj{L'}{D'}}
\end{mathpar}
\raggedright
\boxed{A\equiv B\vphantom{\subtype P'\mu}}
\hfill
\begin{mathpar}
\inferrule*
{ }
{\alpha \equiv \alpha}

\inferrule*
{
  \mu \equiv \nu \\
  A \equiv B
}
{\boxwith{\mu}A \equiv \boxwith{\nu}B}
  
\inferrule*
{
  A \equiv A' \\
  B \equiv B'
}
{A\to B \equiv A'\to B'}

\inferrule*
{
  A \equiv B
}
{\forall\alpha^K.A \equiv \forall\alpha^K.B}

\inferrule*
{
  A \equiv A' \\
  B \equiv B'
}
{\Pair{A}{B} \equiv \Pair{A'}{B'}}

\inferrule*
{
  A \equiv A' \\
  B \equiv B'
}
{A + B \equiv A' + B'}
\end{mathpar}
\raggedright
\boxed{E\subtype F\vphantom{P'}}
\hfill
\begin{mathpar}
\inferrule*
{ }
{ \cdot \subtype E }

\inferrule*
{ E \equiv F }
{ E \subtype F }

\inferrule*
{
  E_1 \equiv \ell:P_1, E_1' \\
  E_2 \equiv \ell:P_2, E_2' \\\\
  P_1 \subtype P_2 \\
  E_1' \subtype E_2'
}
{ E_1 \subtype E_2 }
\end{mathpar}
\raggedright
\boxed{P\subtype P'}
\boxed{D\subtype D'}
\hfill
\begin{mathpar}
\inferrule*
{
}
{P \subtype P}

\inferrule*
{ }
{\Abs \subtype P}

\inferrule*
{ }
{\cdot \subtype \cdot}

\inferrule*
{
  D_1 \equiv \ell:P_1,D_1' \\
  D_2 \equiv \ell:P_2,D_2' \\\\
  P_1 \subtype P_2 \\
  D_1' \subtype D_2'
}
{D_1 \subtype D_2}
\end{mathpar}
\caption{Type equivalence and sub-effecting for \Met.}
\label{fig:equiv-sub-meet}
\end{figure}

\subsection{Typing Rules}
\label{app:typing-meet}

\Cref{fig:typing-meet} gives the typing rules of \Met. We only show
the extended rules with respect to the typing rules of core \Met in
\Cref{fig:typing-met}.

\begin{figure}[htbp]

\raggedright
\boxed{\typm{\Gamma}{M : A}{E}}
\hfill

\begin{mathpar}
\inferrule*[Lab=\tylab{TAbs}]
{
  \typm{\Gamma,\alpha:K}{V:A}{E}
}
{\typm{\Gamma}{\Lambda\alpha^K.V:\forall\alpha^K.A}{E}}

\inferrule*[Lab=\tylab{TApp}]
{
  \typm{\Gamma}{M:\forall\alpha^K.B}{E} \\
  \Gamma\vdash A : K \\
}
{\typm{\Gamma}{M\,A : B[A/\alpha]}{E}}

\inferrule*[Lab=\tylab{Pair}]
{
  \typm{\Gamma}{M:A}{E} \\
  \typm{\Gamma}{N:B}{E} \\
}
{\typm{\Gamma}{(M,N):\Pair{A}{B}}{E}}

\inferrule*[Lab=\tylab{Inl}]
{
  \typm{\Gamma}{M:A}{E} \\
}
{\typm{\Gamma}{\Inl M:A+B}{E}}

\inferrule*[Lab=\tylab{Inr}]
{
  \typm{\Gamma}{M:B}{E} \\
}
{\typm{\Gamma}{\Inr M:A+B}{E}}

\inferrule*[Lab=\tylab{CrispPair}]
{
  \nu_F : E\to F \\
  \typm{\Gamma, \lockwith{\nu_F}}{V:\Pair{A}{B}}{E} \\\\
  \typm{\Gamma, x\varb{\nu_F}{A}, y\varb{\nu_F}{B}}{M:A'}{F}
}
{\typm{\Gamma}{\Casem{\nu} V\Of (x,y)\mapsto M : A'}{F}}
\quad
\inferrule*[Lab=\tylab{CrispSum}]
{
  \nu_F : E\to F \\
  \typm{\Gamma, \lockwith{\nu_F}}{V:A+B}{E} \\\\
  \typm{\Gamma, x\varb{\nu_F}{A}}{M_1:A'}{F} \\
  \typm{\Gamma, y\varb{\nu_F}{B}}{M_2:A'}{F}
}
{\typm{\Gamma}{\Casem{\nu}V\Of \{\Inl x \mapsto M_1, \Inr y\mapsto M_2\} : A'}{F}}

{
\inferrule*[Lab=\tylab{Handler${}^\all$}]
{
  {D = \{\ell_i : A_i \sto B_i\}_i} \\
  \typm{\Gamma,\lockwith{\aeq{D+E}_F}}{M: A}{D+E} \\
  \typm{\Gamma, \lockwith{\aeq{E}_F}, x : \boxwith{\aeq{D+E}} A}{N : B}{E}\\
  [\typm{\Gamma, \lockwith{\aeq{E}_F}, p_i : A_i, r_i :
    \boxwith{\aeq{E}}
    (B_i \to B)}{N_i : B}{E}]_i
  \\
  [E]_F \To \one_F
}
{\typm{\Gamma}{\Handle^\all\;M\With
  \{\Ret x \mapsto N\} \uplus \{ \ell_i\;p_i\;r_i \mapsto N_i \}_i : B}{F}}
}

{
\inferrule*[Lab=\tylab{ShallowHandler}]
{
  {D = \{\ell_i : A_i \sto B_i\}_i} \\
  \typm{\Gamma,\lockwith{\aex{D}}}{M:A}{D+F} \\\\
  \typm{\Gamma, x : \boxwith{\aex{D}} A}{N : B}{F} \\
  [\typm{\Gamma, p_i : A_i, {r_i}:\boxwith{\aex{D}}(B_i \to A)}{N_i : B}{F}]_i
}
{\typm{\Gamma}{\Handle^\dagger\;M\With
  \{\Ret x \mapsto N\} \uplus \{ \ell_i\;p_i\;r_i \mapsto N_i \}_i : B}{F}}
}

{
\inferrule*[Lab=\tylab{ShallowHandler${}^\all$}]
{
  {D = \{\ell_i : A_i \sto B_i\}_i} \\
  \typm{\Gamma,\lockwith{\aeq{D+E}_F}}{M:A}{D+E} \\\\
  \typm{\Gamma, x : \boxwith{\aeq{D+E}} A}{N : B}{F}\\
  [\typm{\Gamma, p_i : A_i, r_i : \boxwith{[D+E]}(B_i \to A)}{N_i : B}{F}]_i
  \\
  [E]_F \To \one_F
}
{\typm{\Gamma}{\Handle^{\all\dagger}\;M\With
  \{\Ret x \mapsto N\} \uplus \{ \ell_i\;p_i\;r_i \mapsto N_i \}_i : B}{F}}
}

\inferrule*[Lab=\tylab{ModAbs}]
{\typm{\Gamma,\lockwith{\aeq{}_F}}{M:A}{\cdot}}
{\typm{\Gamma}{\Box_{\aeq{}}\,M : \boxwith{\aeq{}} A}{F}}

\inferrule*[Lab=\tylab{Letmod'}]
{
  \nu_F : E\to F \\
  \typm{\Gamma,\lockwith{\nu_F},\ol{\alpha:K}}{V : \boxwith{\mu} A}{E} \\\\
  \typm{\Gamma,x\varb{\nu_F\circ\mu_E}{\forall\ol{\alpha^K}.A}}{M:B}{F}
}
{\typm{\Gamma}{\Letm{\nu}{\mu} \Lambda\ol{\alpha^K} . x = V \In M : B}{F}}
\end{mathpar}

\caption{Typing rules for \Met (only showing extensions to core \Met in \Cref{fig:typing-met}).}
\label{fig:typing-meet}
\end{figure}

\subsection{Operational Semantics}

As type application are treated as values and can reduce, we first
define value normal forms $U$ that cannot reduce further as follows.
\begin{syntax}
  \slab{Value normal forms}         &U& ::= &
  x
  \mid \lambda x^A.M
  \mid \Lambda\alpha^K.V
  \mid \Box_\mu\,U
  \mid (U_1,U_2) \mid \Inl U \mid \Inr U
  \\
\end{syntax}

Consequently, the definition for normal forms in \Cref{sec:metatheory}
is updated as follows.

\begin{definition}[Normal Forms]
  We say a term $M$ is in a normal form with respect to effect type
  $E$, if it is either in value normal form $M = U$ or of form $M =
  \EC[\Do\ell\;U]$ for $\ell\in E$ and $\free{n}{\ell}{\EC}$.
\end{definition}

We also extend the definition of evaluation contexts. The full
definition is given as follows.
Notice that we use value normal forms instead of values.
\begin{syntax}
  \slab{Evaluation contexts} &  \EC &::= & [~]
    \mid \EC\;A\mid \EC\; N \mid U\;\EC \mid
    \Do\ell\;\EC \mid \Mask_L\;\EC \mid \Handle^\delta\; \EC \With H \\
  & & \mid & \Mod_\mu\,\EC
    \mid \Letm{\nu}{\mu} x = \EC \In M
    \mid \Letm{\nu}{\mu} \Lambda \ol{\alpha^K}. x = \EC \In M
    \\
  & & \mid & (\EC,N) \mid (U,\EC) 
    \mid \Casem{\nu}\EC\Of (x, y) \mapsto M \\
  & & \mid & \Inl \EC \mid \Inr \EC
    \mid \Casem{\nu}\EC\Of \{\Inl x \mapsto M, \Inr y \mapsto N\} \\
\end{syntax}

\Cref{fig:semantics-meet} shows the operational semantics of \Met.

\begin{figure}

\begin{reductions}
\semlab{App}   & (\lambda x^A.M)\,U &\reducesto& M[U/x] \\
\semlab{TApp} & (\Lambda \alpha.V)\, A &\reducesto& V[A/\alpha] \\
\semlab{Letmod}  & \Letm{\nu}{\mu} x = \Box_\mu\,U \In M &\reducesto& M[U/x] \\
\semlab{Letmod'}  & \Letm{\nu}{\mu} \Lambda\ol{\alpha^K} . x = \Box_\mu\,U \In M
    &\reducesto& M[(\Lambda\ol{\alpha^K}.U)/x] \\
\semlab{Mask} & \Mask_L\, U &\reducesto& \Box_{\amk{L}}\, U \\
{\semlab{Pair}} &
  {\Casey_\mu\; (U_1,U_2) \Of (x,y)\mapsto N} &{\reducesto}& {N[U_1/x,U_2/y]} \\
{\semlab{Inl}} &
  {\Casey_\mu\; \Inl U \Of \{\Inl x \mapsto N_1,\cdots\}} &{\reducesto}& {N_1[U/x]} \\
{\semlab{Inr}} &
  {\Casey_\mu\; \Inr U \Of \{\Inr y \mapsto N_2,\cdots\}} &{\reducesto}& {N_2[U/y]} \\
\semlab{Ret} &
  \Handle\; U \With H &\reducesto& N[(\Box_{\aex{D}}\,U)/x], \\
\multicolumn{4}{@{}r@{}}{
  \text{where } (\Ret x \mapsto N) \in H
}
\\
\semlab{Op} &
  \Handle\; \EC[\Do\ell \; U] \With H
    &\reducesto& N[U/p, (\lambda y.\Handle\; \EC[y] \With H)/r],\\
\multicolumn{4}{@{}r@{}}{
      \text{ where } \free{0}{\ell}{\EC} \text{ and } (\ell \; p \; r \mapsto N) \in H
} \\
\semlab{Ret}^\all &
  \Handle\; U \With H &\reducesto& N[(\Box_{\aeq{D+E}}\,U)/x] \\
\multicolumn{4}{@{}r@{}}{
  \text{ where } (\Ret x \mapsto N) \in H
} \\
\semlab{Op}^\all &
  \Handle^\all\; \EC[\Do\ell \; U] \With H
    &\reducesto& \\
    & \span \span N[U/p, (\Box_{\aeq{E}}\,(\lambda y.\Handle^\all\; \EC[y] \With H))/r]\\
\multicolumn{4}{@{}r@{}}{
      \text{ where } \free{0}{\ell}{\EC} \text{ and } (\ell \; p \; r \mapsto N) \in H
} \\
\semlab{Ret}^\dagger &
  \Handle^\dagger\; U \With H &\reducesto& N[(\Box_{\aex{D}}\,U)/x] \\ \multicolumn{4}{@{}r@{}}{
  \text{ where } (\Ret x \mapsto N) \in H
} \\
\semlab{Op}^\dagger &
  \Handle^\dagger\; \EC[\Do\ell \; U] \With H
    &\reducesto&
    N[U/p, (\lambda y.\EC[y])/r]\\
\multicolumn{4}{@{}r@{}}{
      \text{ where } \free{0}{\ell}{\EC} \text{ and } (\ell \; p \; r \mapsto N) \in H
} \\
\semlab{Ret}^{\all\dagger} &
  \Handle^{\all\dagger}\; U \With H &\reducesto& N[(\Box_{\aeq{D+E}}\,U)/x] \\
\multicolumn{4}{@{}r@{}}{
  \text{ where } (\Ret x \mapsto N) \in H
} \\
\semlab{Op}^{\all\dagger} &
  \Handle^{\all\dagger}\; \EC[\Do\ell \; U] \With H
    &\reducesto&
    N[U/p, (\Box_{\aeq{D+E}}\,(\lambda y.\EC[y]))/r]\\
\multicolumn{4}{@{}r@{}}{
      \text{ where } \free{0}{\ell}{\EC} \text{ and } (\ell \; p \; r \mapsto N) \in H
} \\
\semlab{Lift} &
  \EC[M] &\reducesto& \EC[N],  \hfill\text{if } M \reducesto N \\
\end{reductions}

\caption{Operational semantics for \Met.}
\label{fig:semantics-meet}
\end{figure}

\FloatBarrier
\section{Meta Theory and Proofs for \Met}
\label{app:CalcM}

We provide meta theory and proofs for \Met in
\Cref{sec:core-calculus,sec:extensions} including all extensions.

\FloatBarrier
\subsection{The Double Category of Effects}
\label{app:double-category-effects}

\begin{figure}[htbp]
\begin{minipage}{0.45\textwidth}
\centering
\begin{tikzcd}
  E \arrow[r, bend left, "\mu_F", ""{name=U,inner sep=1pt,below}]
  \arrow[r, bend right, "\nu_F"{below}, ""{name=D,inner sep=1pt}]
  & F
  \arrow[Rightarrow, from=U, to=D]
\end{tikzcd}
\end{minipage}
\begin{minipage}{0.45\textwidth}
\centering
\begin{tikzcd}
  E \arrow[rr, "\mu_F",  ""{name=U,inner sep=1pt}] \arrow[d, "\subtype"] &  & F \arrow[d, "\subtype"] \\
  E' \arrow[rr, "\nu_{F'}"{below}, ""{name=D,inner sep=1pt}]          &  & F'
  \arrow[Rightarrow, from=U, to=D, shorten <=3pt]
\end{tikzcd}
\end{minipage}
\caption{2-cells in a 2-category compared to 2-cells in a double category.}
\label{fig:two-cells}
\end{figure}

A double category extends a 2-category with an additional kind of morphisms.
Alongside the regular morphisms, now called \textit{horizontal} morphisms,
there are also \textit{vertical} morphisms that connect the objects of the 2-category.
This makes it possible to generalise the 2-cells to transform arbitrary morphisms,
whose source and target are connected by vertical morphisms.
\Cref{fig:two-cells} shows the differences between 2-cells in a
2-category and those in a double category using syntax of \CalcM.

In \CalcM, objects/modes are given by effect contexts, the
horizontal morphisms by modalities, the vertical morphisms by the
sub-effecting relation, and 2-cells by the modality transformations.

Now we show that it indeed has the structure of a double category.

Since the sub-effecting relation is a preorder, effect contexts
(objects) $E$ and sub-effecting (vertical morphisms) $E\subtype F$
obviously form a category given by the poset.

We repeat the definition of modalities and modality composition
from \Cref{sec:modalities} here for easy reference.
We define them directly in terms of morphisms between modes.

\[\ba{rcr@{\ \ }c@{\ \ }l}
\aeq{E}_F &:& E    &\to& F   \\
\adj{L}{D}_F &:& D+(F-L) &\to& F \\
\ea\]

\[\ba{rclcll}
\aeq{E'}_F&\circ&\aeq{E}_{E'} &=& \aeq{E}_F
\\
\adj{L}{D}_F&\circ&\aeq{E}_{D+(F-L)} &=& \aeq{E}_F
\\
\aeq{E}_F&\circ&\adj{L}{D}_E &=& \aeq{D+(E-L)}_F
\\
\adj{L_1}{D_1}_F&\circ&\adj{L_2}{D_2}_{D_1+(F-L_1)} &=&
  \adj{L_1+L}{D_2+D}_F
  &\text{ where } (L,D) = L_2 \bowtie D_1
\\
\ea\]

The effect contexts (objects) and modalities (horizontal morphisms)
also form a category since modality composition possesses
associativity and identity.
We have the following lemma.

\begin{restatable}[Modes and modalities form a category]{lemma}{modCat}
  \label{lemma:mod-cat}
  Modes and modalities form a category with the identity morphism
  $\one_E = \aid{}_E : E\to E$ and the morphism composition
  $\mu_F\circ\nu_{F'}$ such that
  \begin{enumerate}
    \item Identity: $\one_F\circ\mind{\mu}{F} = \mind{\mu}{F} =
    \mind{\mu}{F}\circ\one_E$ for $\mind{\mu}{F}:E\to F$.
    \item Associativity: $(\mu_{E_1}\circ\nu_{E_2})\circ\xi_{E_3} = \mu_{E_1}\circ(\nu_{E_2}\circ\xi_{E_3})$
    for $\mind{\mu}{E_1}:E_2\to E_1$, $\mind{\nu}{E_2}:E_3\to E_2$, and $\mind{\xi}{E_3}:E\to E_3$.
  \end{enumerate}
\end{restatable}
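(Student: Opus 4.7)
The plan is to prove both identity and associativity by direct case analysis on the structural form of the modalities, since every modality is either absolute $\aeq{E}$ or relative $\adj{L}{D}$ and the composition rules are defined directly on this structure. The verification then reduces to purely syntactic calculations over extensions, masks, and the $\bowtie$ operator, using the arithmetic laws $D + \cdot = D$, $E - \cdot = E$, $\cdot - L = \cdot$, $\cdot \bowtie D = (\cdot, D)$, and $L \bowtie \cdot = (L, \cdot)$ that are immediate from the definitions.

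For the identity laws, I would verify $\one_F \circ \mu_F = \mu_F$ and $\mu_F \circ \one_E = \mu_F$ separately by case analysis on $\mu$, recalling that $\one = \adj{}{}$. The absolute case reduces immediately by the absorbing rule $\eta \circ \aeq{E} = \aeq{E}$ (and dually $\aeq{E} \circ \adj{}{} = \aeq{\cdot + (E - \cdot)} = \aeq{E}$); the relative case unfolds via $\adj{}{} \circ \adj{L}{D}$ and $\adj{L}{D} \circ \adj{}{}$, each using a base case of $\bowtie$ to collapse back to $\adj{L}{D}$. For associativity, I would enumerate the eight cases for $\mu, \nu, \xi$ each being absolute or relative. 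Cases in which $\xi$ is absolute collapse at once by the absorbing rule, so both sides equal the same $\aeq{-}$. Cases in which $\xi$ is relative but either $\mu$ or $\nu$ is absolute reduce by short direct calculations, combining the mixed rule $\aeq{E} \circ \adj{L}{D} = \aeq{D + (E - L)}$ with the absorbing rule.

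The main obstacle will be the all-relative case, where all three modalities have form $\adj{-}{-}$ and both sides produce relative modalities whose equality unfolds to a pair of identities over nested $\bowtie$ computations. Introducing the intermediate quantities $(L, D) = L_2 \bowtie D_1$, $(L', D') = L_3 \bowtie D_2$, $(L'', D'') = L_3 \bowtie (D_2 + D)$, and $(L''', D''') = (L_2 + L') \bowtie D_1$, the task reduces to showing $L_1 + L + L'' = L_1 + L'''$ and $D_3 + D'' = D_3 + D' + D'''$, i.e., $L + L'' = L'''$ and $D'' = D' + D'''$, up to the equivalence relation on extensions and masks. I would establish these by induction on the structure of $D_1$, relying on auxiliary lemmas characterising how $\bowtie$ distributes over $+$ on its right argument, so that matching $L_3$ against $D_2 + D$ is accounted for by first matching $L_3$ against $D_2$ (yielding $L'$) and then matching $L_2 + L'$ against $D_1$. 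As an alternative route, one could define a semantic interpretation $\mu \mapsto \act{\mu}{-}$ on ambient contexts and reduce to associativity of function composition using Lemma~\ref{lemma:semantic-modtrans}; but since the correspondence between modalities and their actions is not fully injective without care (for instance, $\aex{\ell:\Abs}$ acts like $\one$ on any context where $\ell$ is already absent), extracting a clean extensionality principle is itself nontrivial, and the direct syntactic approach is likely cleaner.
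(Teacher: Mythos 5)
Your proposal is correct and takes the same route as the paper, whose entire proof of this lemma is the one-liner ``by inlining the definitions of modalities and checking each case''; your case analysis, the identity-law calculations, and the reduction of the all-relative associativity case to $L + L'' = L'''$ and $D'' = D' + D'''$ via $\bowtie$-distributivity lemmas correctly fill in the details the paper omits. The only nitpick is that in the hard case you will need $\bowtie$ to decompose over $+$ in \emph{both} arguments (e.g.\ $(L_2+L')\bowtie D_1$ must be related to $L_2\bowtie D_1$ followed by cancelling $L'$ against the residual extension), not just the right one, but this falls out of the same style of induction.
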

\begin{proof}
  By inlining the definitions of modalities and checking each case.
\end{proof}

In \Cref{sec:core-calculus}, we only define the modality
transformations of shape $\mu_F\To\nu_F$ where the targets of $\mu$
and $\nu$ are required to be the same effect context $F$.
This is enough for presenting the calculus, but we can further extend
it to allow $\mu_F\To\nu_{F'}$ where $F\subtype F'$.
This is used in the meta theory for \CalcM such as the
lock weakening lemma (\Cref{lemma:structural-rules}.3).

The extended modality transformation relation is defined by the
transitive closure of the following rules.
Compared to the definition in \Cref{sec:modalities}, the only new rule is \mtylab{Mono}.
\begin{mathpar}
  \inferrule*[Lab=\mtylab{Abs}]
  {
    {\mu}_{F} : E' \to F \\\\
    E\subtype E' \\
  }
  {\aeq{E}_F\To {\mu}_{F}}

  \inferrule*[Lab=\mtylab{Upcast}]
  {
    D \subtype D'
  }
  {\adj{L}{D}_F \To \adj{L}{D'}_F}

  \inferrule*[Lab=\mtylab{Expand}]
  {
    (F-L) \equiv \ell:A\sto B,E
  }
  {\adj{L}{D}_F \Rightarrow \adj{\ell,L}{D,\ell:A\sto B}_F}

  \inferrule*[Lab=\mtylab{Shrink}]
  {
    (F-L) \equiv \ell:P,E
  }
  {\adj{\ell,L}{D,\ell:P}_F \Rightarrow \adj{L}{D}_F}

  \hl{
  \inferrule*[Lab=\mtylab{Mono}]
  {
    F \subtype F' \\
  }
  {\mu_F \To \mu_{F'}}
  }
\end{mathpar}

The following lemmas shows that the transformation $\mu_F\To\nu_{F'}$
satisfies the requirement of being 2-cells in the double category of
effects with well-defined vertical and horizontal composition.

\begin{restatable}[Modality transformations are 2-cells]{lemma}{modTransTwoCells}
  \label{lemma:modtrans-two-cells}
  If $\mu_F \To \nu_{F'}$, $\mu_F:E\to F$, and $\nu_{F'}:E'\to F'$,
  then $E\subtype E'$ and $F\subtype F'$.
  Moreover, the transformation relation is closed under vertical and
  horizontal composition as shown by the following admissible rules.
  \begin{mathpar}
  \inferrule*
  {
    \mu_{F_1} \To \nu_{F_2} \\
    \nu_{F_2} \To \xi_{F_3}
  }
  {\mu_{F_1} \To \xi_{F_3}}

  \inferrule*
  {
    \mu_F\To\mu'_{F'} \\
    \nu_E\To\nu'_{E'} \\
    \mu_F : E \to F \\
    \mu'_{F'} : E' \to F' \\
  }
  {\mu_F\circ\nu_E \To \mu'_{F'}\circ\nu'_{E'}}
  \end{mathpar}
\end{restatable}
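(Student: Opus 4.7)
My plan is to address the three claims in order. For the first claim, I would induct on the derivation of $\mu_F \To \nu_{F'}$. In each of the base rules \mtylab{Abs}, \mtylab{Upcast}, \mtylab{Expand}, and \mtylab{Shrink}, the target is unchanged ($F = F'$), while the side condition of each rule directly furnishes the sub-effecting of sources: \mtylab{Abs} has an explicit $E \subtype E'$; \mtylab{Upcast} has $D \subtype D'$ yielding $D + (F{-}L) \subtype D' + (F{-}L)$; and \mtylab{Expand}/\mtylab{Shrink} leave the source unchanged up to equivalence by the definition of $F{-}L$. The \mtylab{Mono} rule gives $F \subtype F'$ directly, and matching sources follows from monotonicity of modality action $\act{\mu}{-}$, already stated in \Cref{sec:modalities}. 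Transitivity of $\subtype$ closes the inductive case. For the second claim there is nothing to show: $\mu_F \To \nu_{F'}$ is literally the transitive closure of the base rules, so vertical composition is definitional.

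For the third claim, I would first strengthen \Cref{lemma:semantic-modtrans} to the extended relation, proving that $\mu_F \To \nu_{F'}$ holds iff $F \subtype F'$ and $\act{\mu}{F''} \subtype \act{\nu}{F''}$ for every $F''$ with $F' \subtype F''$. Soundness falls out by induction (with \mtylab{Mono} handled by the observation that enlarging $F'$ only shrinks the universe of $F''$), and completeness follows by applying the original characterization at target $F$ and then composing with \mtylab{Mono} to raise the target up to $F'$. The horizontal composition claim then reduces to a pointwise calculation: for any $F''$ with $F' \subtype F''$, using the identity $\act{\mu \circ \nu}{F''} = \act{\nu}{\act{\mu}{F''}}$, we compute
\[
  \act{\mu \circ \nu}{F''} = \act{\nu}{\act{\mu}{F''}} \subtype \act{\nu}{\act{\mu'}{F''}} \subtype \act{\nu'}{\act{\mu'}{F''}} = \act{\mu' \circ \nu'}{F''},
\]
where the first step applies monotonicity of $\nu$ to the semantic form of $\mu_F \To \mu'_{F'}$ (noting $F \subtype F'' $ by transitivity from Part 1), and the second invokes the semantic form of $\nu_E \To \nu'_{E'}$ at $\act{\mu'}{F''}$. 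This invocation is valid because $\act{\mu'}{F'} = E'$ and monotonicity of $\mu'$ on $F' \subtype F''$ yields $E' \subtype \act{\mu'}{F''}$, placing the test point in the required range.

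The main obstacle will be the extended semantic characterization: carefully justifying the interaction between \mtylab{Mono} and the source-changing rules so that semantic reasoning at a common sufficiently-large target is legitimate, and ensuring that modality application is well-defined and monotone on the possibly different effect contexts that appear in the chain of inequalities. Once that characterization is in hand, horizontal composition reduces to the pointwise calculation above, and the \mtylab{Mono} rule absorbs any residual mismatch between the targets of the two composed concrete modalities without further syntactic gymnastics.
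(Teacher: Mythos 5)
Your proposal is correct but takes a genuinely different route from the paper. The paper's proof works entirely syntactically: it computes an explicit closed form of the transitive closure of the transformation rules (a generalised \mtylab{Abs} rule plus a single \mtylab{Rel} rule with a $\meta{present}$ side condition), reads off the first claim and vertical composition from that closed form, and establishes horizontal composition by case analysis on the shapes of the four modalities, the crux being that \mtylab{Rel} always expands or shrinks mask and extension simultaneously. You instead route everything through a generalisation of \Cref{lemma:semantic-modtrans} to mismatched targets and obtain horizontal composition from the pointwise monotonicity chain $\act{\mu\circ\nu}{F''}=\act{\nu}{\act{\mu}{F''}}\subtype\act{\nu}{\act{\mu'}{F''}}\subtype\act{\nu'}{\act{\mu'}{F''}}$; this is shape-agnostic and avoids the case analysis entirely, at the cost of first proving the extended semantic characterisation (which the paper only asserts in passing after \Cref{lemma:semantic-modtrans}). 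Your chain is sound, including the necessary check that $E'\subtype\act{\mu'}{F''}$ places the test point in the range where the semantic form of $\nu_E\To\nu'_{E'}$ applies. One small slip: in the completeness direction of your extended characterisation you propose to apply the original characterisation at target $F$ and then raise the target with \mtylab{Mono}; that order does not work, because the hypothesis only supplies the semantic inequality for $F''$ above $F'$, not above $F$. Apply \mtylab{Mono} first to get $\mu_F\To\mu_{F'}$ and then the original characterisation at target $F'$; with that reordering the argument goes through.
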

\begin{proof}
We take the transitive closure of the modality transformation rules.
\begin{mathpar}
  \inferrule*[Lab=\mtylab{Abs}]
  {
    {\mu}_{F'} : E' \to F' \\
    E\subtype E' \\
    F\subtype F' \\
  }
  {\aeq{E}_F\To {\mu}_{F'}}

  \inferrule*[Lab=\mtylab{Rel}]
  {
    D\subtype D' \\
    (F'-L) \equiv D_1,E_1 \equiv D_2,E_2 \\
    F\subtype F' \\
    L_1 = \dom{D_1} \\
    L_2 = \dom{D_2} \\
    \meta{present}(D_2) \\
  }
  {\adj{L_1,L}{D,D_1}_F \To \adj{L_2,L}{D',D_2}_{F'}}
\end{mathpar}

  The predicate $\meta{present}(D)$ checks if all labels in $D$ are present.
  Vertical composition follows directly from the fact that we take the
  transitive closure.
  Horizontal compositions follows from a case analysis on shapes of
  modalities being composed.
  The most nontrivial case is when all of $\mu_F$, $\nu_E$,
  $\mu'_{F'}$, and $\nu'_{E'}$ are relative modalities.
  The key observation is that \mtylab{Rel} always expand or shrink the
  mask and extension of a relative modality simultaneously.
\end{proof}

\paragraph{More on Relationships between \CalcM and Multimodal Type Theory}
In addition to extending to a double category, \CalcM also differs
from MTT in the usage of morphism families.
In types and terms we use $\mu$, indexed families of morphisms between
modes, instead of concrete morphisms $\mu_F$.
We do not lose any information. Given a typing judgement
$\typm{\Gamma}{M:A}{E}$, the indexes for all modalities in $M$ and $A$
are determined by $E$.
Similarly, given a variable binding $x\varb{\mu_F}{A}$, the indexes of
all modalities in $A$ are determined by $\mu_F$.

Using indexed families of modalities in types and terms is very useful
to allow term variables to be used flexibly in different effect
contexts larger than where they are defined.
This greatly simplifies the support of subeffecting; we do not need to
update all indexes of modalities in a term or type when upcasting this
term or type to a larger effect context.
As a result, every type is always well-defined at any modes, which
means that we do not need to define the well-formedness judgement
$A\atmode{E}$ as in MTT.
Moreover, one important benefit of having types well-defined at any
modes is that when adding polymorphism for values, type quantifiers do not
need to carry the additional information about the modes at which the
type variables can be used, greatly simplifying the type system.
Otherwise, polymorphic types would need to have forms
$\forall\alpha^{K\atmode{E}}.A$, where $E$ indicates the mode of
the type variable $\alpha$.

In contexts, we still keep concrete morphisms $\mu_F$, which makes the
proof trees of terms much more structured than using morphism
families.

\subsection{Lemmas for Modes and Modalities}
\label{app:lemmas-modes}

Beyond the structure and properties of double categories shown in
\Cref{app:double-category-effects}, we have some extra properties on
modes and modalities in \Met.

The most important one is that horizontal morphisms (sub-effecting)
act functorially on vertical ones (modalities). In other words, the
action of $\mu$ on effect contexts gives a total monotone function.

\begin{restatable}[Monotone modalities]{lemma}{monoModality}
  \label{lemma:mono-mod}
  If $\mind{\mu}{F}:E\to F$ and $F\subtype F'$, then
  $\mind{\mu}{F'}:E'\to F'$ with $E\subtype E'$.
\end{restatable}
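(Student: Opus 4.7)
The plan is to proceed by case analysis on the shape of $\mu$, reducing the lemma to monotonicity of the underlying operations that define the action of a modality on an effect context. Recall that $\act{\aeq{G}}{F} = G$ and $\act{\adj{L}{D}}{F} = D + (F - L)$. In the absolute case, $\mu = \aeq{G}$, the source is $G$ regardless of the target, so both $\mu_F$ and $\mu_{F'}$ have source $E = E' = G$ and we are done immediately. The relative case $\mu = \adj{L}{D}$ is the one that requires work: here $E = D + (F - L)$ and $E' = D + (F' - L)$, and so the conclusion reduces to showing that $D + (F - L) \subtype D + (F' - L)$ given $F \subtype F'$.

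First I would isolate two auxiliary monotonicity facts. (i) Concatenation is monotone on its right argument: if $E_1 \subtype E_2$ then $D + E_1 \subtype D + E_2$. This is essentially immediate from the definition $D + E = D, E$ together with the inductive rule for sub-effecting that prepends a common prefix label-by-label. (ii) Masking is monotone: if $F \subtype F'$ then $F - L \subtype F' - L$. Given (ii), instantiating with $E_1 = F - L$ and $E_2 = F' - L$ in (i) gives the required inequality.

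The core technical step, and the main obstacle, is (ii). I would prove it by induction on $L$, with the base case $L = \cdot$ trivial. For the inductive step $L = \ell, L'$, I would analyse the derivation $F \subtype F'$ using the rules in Figure~\ref{fig:equiv-sub-meet}. The subtlety is that sub-effecting is defined modulo the equivalence on effect contexts, which permits reordering of distinct labels and silently adding or removing $\ell' : \Abs$ on the right. The case analysis must therefore normalise both $F$ and $F'$ so that their leftmost $\ell$-occurrence is exposed, then invoke the defining clause $(\ell : P, E) - (\ell, L') = E - L'$ on both sides. When $\ell$ occurs in $F$ with presence $P_1$ and in $F'$ with presence $P_2$, sub-effecting forces $P_1 \subtype P_2$ and the tails to satisfy $E_1 \subtype E_2$, so the induction hypothesis applied to $L'$ delivers $E_1 - L' \subtype E_2 - L'$ as required. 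The equivalence clause $E, \ell' : \Abs \equiv E$ must be handled separately: a label absent in $F$ but present in $F'$ can always be masked by rewriting $F$ up to equivalence to expose a matching absent occurrence, after which the standard case applies.

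Once (i) and (ii) are established, the relative case of the main lemma follows by composition, yielding $E = D + (F - L) \subtype D + (F' - L) = E'$ and hence $\mu_{F'} : E' \to F'$ with $E \subtype E'$. I would also briefly observe that the proof justifies the well-formedness of the new rule \mtylab{Mono} in the extended transformation system of Appendix~\ref{app:double-category-effects}, since this monotonicity is exactly what is needed for $\mu_F \To \mu_{F'}$ to be a well-defined 2-cell in the double category of effects.
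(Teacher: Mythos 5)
Your proof is correct and follows the same route as the paper, whose entire proof is ``By definition'': you are simply unfolding that definition, reducing the claim to monotonicity of $D + (-)$ and $(-) - L$ and verifying each. The extra care you take with the equivalence on effect contexts (reordering and trailing $\ell:\Abs$) in the masking case is exactly the detail the paper elides.
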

\begin{proof}
  By definition.
\end{proof}

We prove the lemma on the equivalence between syntactic and semantic
definition of modality transformation in \Cref{sec:modalities}.
This lemma can be generalised to the general form of 2-cells in a
double category $\mu_F\To\nu_{F'}$ where $F\subtype F'$.

\semanticModTrans*
\begin{proof}
  From left to right, it is obvious that the semantics is preserved
  after taking the transitive closure. We only need to show the
  transformation given by each rule satisfies the semantics.
  \begin{description}
    \item[Case] \mtylab{Abs}. Follow from \Cref{lemma:mono-mod}.
    \item[Case] \mtylab{Upcast}. Since $D\subtype D'$, we have
    $D+(F-L) \subtype D'+(F-L)$ for any $F$.
    \item[Case] \mtylab{Expand}. Since $(F-L)\equiv\ell:A\sto B,E$, for any
    $F\subtype F'$ we have $(F'-L)\equiv\ell:A\sto B,E'$ for some $E'$.
    Both sides act on $F'$ give $D,\ell:A\sto B,E'$.
    Notice that it is important for $\ell$ to not be absent here;
    otherwise, in $F'$ we could upcast the absent type of $\ell$
    to any concrete operation arrows, which then breaks the condition
    $\mu(F')\subtype \nu(F')$.
    \item[Case] \mtylab{Shrink}. Similar to the above case.
  \end{description}

  From right to left, we need to show that for all pairs $\mu_F$ and
  $\nu_F$ satisfying the semantic definition, we have $\mu_F\To\nu_F$
  in the transitive closure of the syntactic rules.
  This obviously holds for those transformation starting from absolute
  modalities.
  For those transformation starting from relative modalities, observe
  that they can only be transformed to other relative modalities
  according to the semantic definition.
  By taking the transitive closure of the transformation rules for
  relative modalities, we have
  \begin{mathpar}
    \inferrule*[Lab=\mtylab{Rel}]
    {
      D\subtype D' \\
      (F-L) \equiv D_1,E_1 \equiv D_2,E_2 \\\\
      L_1 = \dom{D_1} \\
      L_2 = \dom{D_2} \\
      \meta{present}(D_2) \\
    }
    {\adj{L_1,L}{D,D_1}_F \To \adj{L_2,L}{D',D_2}_{F}}
  \end{mathpar}
  The predicate $\meta{present}(D)$ checks if all labels in $D$ are present.
  Suppose $\adj{L_3}{D_3}_F$ and $\adj{L_4}{D_4}_F$ satisfies that
  {$D_3+(F'-L_3)\subtype D_4+(F'-L_4)$} for all $F\subtype F'$.
  We need to show that it is generated by \mtylab{Rel}.
  The key observation is that for $\adj{L_3}{D_3}_F$ and
  $\adj{L_4}{D_4}_F$ to satisfy the semantic definition, we must have
  $L_3' \equiv L_4'$ and $D_3' \subtype D_4'$ for $L_3\bowtie D_3 =
  (L_3',D_3')$ and $L_4\bowtie D_4 = (L_4', D_4')$.
  Otherwise, we can always construct counterexamples by choosing a
  appropriate $F'$.
  Also notice that when $L_3$ and $L_4$ ($D_3$ and $D_4$) are
  different, $F$ should provide enough labels to fill the gap.
  Moreover, for all those labels in $D_4$ but not in $D_3$, they
  should be present to be stable under any $F'$ with $F\subtype F'$.
  It is not hard to verify that \mtylab{Rel} covers all such pair of
  relative modalities.

\end{proof}

Our proofs for type soundness and effect safety do not use ad-hoc case
analysis on shapes of modalities or rely on any specific properties
about the definition of composition and transformation
(except for the parts about effect handlers since they specify the
required modalities in the typing rules).
As a result, it should be possible to generalise our calculus and proofs
to other mode theories satisfying certain extra properties.
We state some properties of the mode theory as the following lemmas
for easier references in proofs. Most of them directly follow from the
definition.

\begin{restatable}[Vertical composition]{lemma}{modtransVertical}
  \label{lemma:modtrans-vertical}
  If $\mu_{F_1}\To\nu_{F_2}$ and $\nu_{F_2}\To\xi_{F_3}$, then $\mu_{F_1}\To\xi_{F_3}$.
\end{restatable}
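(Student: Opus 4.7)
The lemma will follow almost immediately from the way the relation $\To$ is introduced: in \Cref{sec:modalities} (and extended with \mtylab{Mono} in \Cref{app:double-category-effects}) the transformation relation is defined as the \emph{transitive closure} of the generating rules \mtylab{Abs}, \mtylab{Upcast}, \mtylab{Expand}, \mtylab{Shrink}, and \mtylab{Mono}. Consequently, a witness for $\mu_{F_1}\To\nu_{F_2}$ is a finite sequence of base-rule applications, and likewise for $\nu_{F_2}\To\xi_{F_3}$. The plan is to simply concatenate the two sequences to obtain a single chain witnessing $\mu_{F_1}\To\xi_{F_3}$.

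If one wishes to avoid reliance on the syntactic presentation, an equally short semantic argument is available via \Cref{lemma:semantic-modtrans}, lifted to the double-category form $\mu_F\To\nu_{F'}$ with $F\subtype F'$ (this lifting is compatible with the Mono rule). Under the semantic reading, the two hypotheses state that $\act{\mu}{F}\subtype\act{\nu}{F}$ for every $F\supseteq F_1$ (connecting $E$ to $E'$ via sub-effecting) and $\act{\nu}{F}\subtype\act{\xi}{F}$ for every $F\supseteq F_2$. By \Cref{lemma:modtrans-two-cells} we have $F_1\subtype F_2\subtype F_3$, so any $F\supseteq F_1$ in particular satisfies $F\supseteq F_2$, and transitivity of the sub-effecting preorder yields $\act{\mu}{F}\subtype\act{\xi}{F}$ for every $F\supseteq F_1$. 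Re-applying the semantic characterization delivers $\mu_{F_1}\To\xi_{F_3}$.

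There is no real obstacle in this proof: transitivity is baked into the definition, and the statement is included only to record it explicitly so that it can be cited in the meta-theoretic developments that follow (substitution lemmas, subject reduction, and the horizontal-composition argument in \Cref{lemma:modtrans-two-cells} all rely on being able to freely chain modality transformations). The only bookkeeping is to verify that the indexes line up, which is handled by the source/target tracking established in \Cref{lemma:modtrans-two-cells}.
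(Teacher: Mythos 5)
Your proposal is correct and matches the paper's approach: the paper proves this lemma by citing \Cref{lemma:modtrans-two-cells}, whose proof of the vertical-composition clause is precisely the observation that the (extended) transformation relation is defined as a transitive closure, so the two witnessing chains concatenate. Your alternative semantic argument via \Cref{lemma:semantic-modtrans} is a fine bonus but not needed.
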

\begin{proof}
  Follow from \Cref{lemma:modtrans-two-cells}
\end{proof}

\begin{restatable}[Horizontal composition]{lemma}{modtransHorizontal}
  \label{lemma:modtrans-horizontal}
  If $\mind{\mu}{F}:E\to F$, $\mind{\mu'}{F'}:E'\to F'$, $\mind{\mu}{F}\To\mind{\mu'}{F'}$, and
  $\mind{\nu}{E}\To\nu'_{E'}$, then $\mind{\mu}{F}\circ\mind{\nu}{E}\To\mind{\mu'}{F'}\circ\mind{\nu'}{E'}$.
\end{restatable}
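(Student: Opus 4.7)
The plan is to observe that this lemma is a direct restatement of the horizontal composition principle that was already established in the second conclusion of \Cref{lemma:modtrans-two-cells}, which asserts exactly that $\mu_F \circ \nu_E \To \mu'_{F'}\circ\nu'_{E'}$ whenever $\mu_F\To\mu'_{F'}$ and $\nu_E\To\nu'_{E'}$ (with the appropriate source/target matching). So the proof should simply invoke that lemma, and the role of \Cref{lemma:modtrans-horizontal} is essentially to provide a named hook for later references (as the surrounding text itself warns: ``Most of them directly follow from the definition'').

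That said, if a more self-contained argument is desired, the approach is a case analysis on the shapes of $\mu,\mu',\nu,\nu'$ (each being either absolute $\aeq{-}$ or relative $\adj{-}{-}$), using the composition equations from \Cref{sec:modalities}. The easy cases are those in which either $\mu'$ or $\nu'$ is absolute: whenever $\mu' = \aeq{-}$, the composition $\mu'_{F'}\circ \nu'_{E'}$ collapses by the equation $\mu \circ \aeq{E} = \aeq{E}$ combined with $\aeq{E}\circ\adj{L}{D} = \aeq{D+(E-L)}$, and the required transformation follows from \mtylab{Abs} plus \Cref{lemma:mono-mod}. The symmetric case where $\nu'$ is absolute is similar.

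The remaining substantive case is when all four modalities are relative, say $\mu = \adj{L_1}{D_1}$, $\mu' = \adj{L_1'}{D_1'}$, $\nu = \adj{L_2}{D_2}$, $\nu' = \adj{L_2'}{D_2'}$. Here one must show that composing the two $\bowtie$-combined modalities still satisfies the characterisation of transformation-generating pairs described in the \mtylab{Rel} schema used in the proof of \Cref{lemma:modtrans-two-cells}. The main obstacle would be verifying that the ``present'' side-condition and the compatibility of the $\bowtie$-cancellations are preserved by horizontal composition; the cleanest way to discharge this is to use the semantic characterisation (\Cref{lemma:semantic-modtrans}): it suffices to check pointwise that $\act{(\mu\circ\nu)}{F''} \subtype \act{(\mu'\circ\nu')}{F''}$ for every $F''$ with $F\subtype F''$, which reduces via $\act{(\mu\circ\nu)}{F''} = \act{\nu}{\act{\mu}{F''}}$ to applying the monotonicity of $\nu'$ to the inclusion $\act{\mu}{F''}\subtype\act{\mu'}{F''}$ (again by \Cref{lemma:semantic-modtrans}), followed by the hypothesis $\nu_E\To\nu'_{E'}$ on the result.

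In summary, the proposed proof is one line invoking \Cref{lemma:modtrans-two-cells}; if a direct argument is preferred, I would route through the semantic criterion \Cref{lemma:semantic-modtrans} rather than performing the syntactic case analysis, since monotonicity (\Cref{lemma:mono-mod}) makes the pointwise argument essentially immediate and avoids wrestling with the combinatorics of $\bowtie$.
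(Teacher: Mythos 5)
Your proposal matches the paper exactly: the paper's proof of this lemma is the one-line ``Follow from \Cref{lemma:modtrans-two-cells}'', which is precisely your primary suggestion. The additional semantic argument you sketch is a reasonable fallback (modulo needing the generalised form of \Cref{lemma:semantic-modtrans} for transformations with $F\subtype F'$), but it is not needed.
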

\begin{proof}
  Follow from \Cref{lemma:modtrans-two-cells}
\end{proof}

\begin{lemma}[Monotone modality transformation]
  \label{lemma:mono-modtrans}
  If $\mu_F \To \nu_F$ and $F\subtype F'$,
  then $\mu_{F'}\To \nu_{F'}$.
\end{lemma}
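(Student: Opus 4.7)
The plan is to appeal to the semantic characterisation of modality transformation already established in Lemma~\ref{lemma:semantic-modtrans}, rather than performing a syntactic induction on the derivation of $\mu_F \To \nu_F$. The semantic characterisation states $\mu_F \To \nu_F$ iff $\act{\mu}{F''} \subtype \act{\nu}{F''}$ for every $F''$ with $F \subtype F''$. Since the semantic condition is stated uniformly over all sufficiently large effect contexts, monotonicity should fall out by mere transitivity of $\subtype$.

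Concretely, first I would unfold the assumption $\mu_F \To \nu_F$ via the forward direction of Lemma~\ref{lemma:semantic-modtrans}, obtaining
\[
\act{\mu}{F''} \subtype \act{\nu}{F''} \quad \text{for every } F'' \text{ with } F \subtype F''.
\]
Second, fix any $F''$ with $F' \subtype F''$. By the additional hypothesis $F \subtype F'$ and transitivity of the subeffecting preorder, $F \subtype F''$, so the inequality above applies and yields $\act{\mu}{F''} \subtype \act{\nu}{F''}$. Third, applying the reverse direction of Lemma~\ref{lemma:semantic-modtrans} to this universally quantified statement (now indexed over $F'' \supseteq F'$) delivers $\mu_{F'} \To \nu_{F'}$, as required.

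There is essentially no obstacle: the proof is a one-line transitivity argument once the semantic characterisation is in hand, and it avoids the tedium of reproving closure under each of \mtylab{Abs}, \mtylab{Upcast}, \mtylab{Expand}, and \mtylab{Shrink}. A purely syntactic alternative would proceed by induction on the derivation of $\mu_F \To \nu_F$, using \mtylab{Mono} (together with vertical composition from Lemma~\ref{lemma:modtrans-vertical}) to lift each base rule from acting at $F$ to acting at $F'$; but this is strictly more work and is unnecessary given the semantic lemma.
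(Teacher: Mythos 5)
Your proposal is correct and matches the paper's own proof, which simply cites Lemma~\ref{lemma:semantic-modtrans}; you have just spelled out the (one-line) transitivity argument that the paper leaves implicit. No issues.
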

\begin{proof}
  Follow from \Cref{lemma:semantic-modtrans}
\end{proof}

\begin{lemma}[Asymmetric reflexivity of modality transformation]
  \label{lemma:self-modtrans}
  If $F\subtype F'$ and $\mu_F:E\to F$,
  then $\mu_F\To\mu_{F'}$.
\end{lemma}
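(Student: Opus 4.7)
The plan is to observe that this lemma is an immediate consequence of the \mtylab{Mono} rule that was added to the extended modality-transformation relation at the beginning of \Cref{app:double-category-effects}. That rule states precisely that whenever $F \subtype F'$, we have $\mu_F \To \mu_{F'}$ for any modality $\mu$. So given the hypotheses $F \subtype F'$ and $\mu_F : E \to F$, a single instance of \mtylab{Mono} discharges the goal.

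Alternatively, the result can be obtained by appealing to the semantic characterisation of modality transformations (\Cref{lemma:semantic-modtrans}) generalised to the double-category setting as discussed around \Cref{lemma:modtrans-two-cells}. Under that characterisation, $\mu_F \To \mu_{F'}$ amounts to requiring $F \subtype F'$ (supplied by hypothesis) together with $\act{\mu}{F''} \subtype \act{\mu}{F''}$ for every $F'' \supseteq F'$; the latter holds trivially by reflexivity of sub-effecting, with \Cref{lemma:mono-mod} ensuring that $\act{\mu}{-}$ is well-defined on all such $F''$.

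Since both approaches are essentially one-line arguments, there is no serious obstacle. The only subtlety to be wary of is to make sure we use the version of the transformation relation that includes \mtylab{Mono}: the narrower relation introduced in \Cref{sec:modalities} only compares modalities with identical indices $F$ and so does not even type-check the conclusion $\mu_F \To \mu_{F'}$. Given that this lemma appears in the appendix after \mtylab{Mono} has been introduced, this caveat is easily satisfied, and the proof amounts to a direct invocation of the rule.
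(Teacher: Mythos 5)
Your primary argument—a single invocation of the \mtylab{Mono} rule from the extended transformation relation—is exactly the paper's proof, which reads simply ``By definition.'' Your observation that the conclusion only type-checks under the extended relation (not the narrower one from \Cref{sec:modalities}) is a correct and worthwhile clarification, but the approach is the same.
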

\begin{proof}
  By definition.
\end{proof}

\subsection{Lemmas for the Calculus}
\label{app:CalcM-lemmas}

We prove structural and substitution lemmas for \Met as well as some
other auxiliary lemmas for proving type soundness.

\begin{lemma}[Canonical forms]~
  \label{lemma:canonical-forms}
  \begin{enumerate}[label=\arabic*.]
    \item If $\typm{\,}{U:\boxwith{\mu}A}{E}$, then $U$ is of shape $\Box_\mu\,U'$.
    \item If $\typm{\,}{U:A\to B}{E}$, then $U$ is of shape $\lambda x^A.M$.
    \item If $\typm{\,}{U:\forall\alpha. A}{E}$, then $U$ is of shape $\Lambda \alpha.V$.
    \item If $\typm{\,}{U:(A,B)}{E}$, then $U$ is of shape $(U_1,U_2)$.
    \item If $\typm{\,}{U:A+B}{E}$, then $U$ is either of shape $\Inl
    U'$ or of shape $\Inr U'$.
  \end{enumerate}
\end{lemma}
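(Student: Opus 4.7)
The plan is to prove each clause by inversion on the typing derivation combined with a case analysis on the syntactic shape of the value normal form $U$. The definition of value normal forms in \Cref{app:meet} enumerates exactly seven shapes: $x$, $\lambda x^A.M$, $\Lambda\alpha^K.V$, $\Box_\mu\,U'$, $(U_1,U_2)$, $\Inl U'$, and $\Inr U'$. Note that $V\,A$, $\Handle$, $\Do$, $\Mask$, $\Letm{\nu}{\mu}\!\!$, and $\Casem{\nu}\!\!$ are not value normal forms; any remaining application, type application, or elimination form can reduce, and any operation or handler construct is a computation. So those constructs cannot arise.

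First I would dispose of the variable case: by the \tylab{Var} rule, a well-typed term $x$ requires a binding in $\Gamma$, but $\Gamma$ is empty. For each remaining shape of $U$, I would look at which introduction rule in \Cref{fig:typing-met,fig:typing-meet} is the only rule whose conclusion has a term of that shape: $\tylab{Abs}$ for $\lambda x^A.M$, $\tylab{TAbs}$ for $\Lambda\alpha^K.V$, $\tylab{Mod}$ (or, in the extension of \Cref{sec:axiom-k}, $\tylab{ModAbs}$) for $\Box_\mu\,U'$, $\tylab{Pair}$ for $(U_1,U_2)$, and $\tylab{Inl}$/$\tylab{Inr}$ for the two injection forms. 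Reading off the conclusion of each rule determines the head type constructor produced (arrow, $\forall$, $\boxwith{\mu}\cdot$, product, sum, respectively). The five clauses of the lemma then follow by matching on the head constructor of the stated type and discarding the incompatible shapes.

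The one subtlety to check is that no other typing rule could assign, say, a $\boxwith{\mu}A$ type to a term of a different value-normal-form shape. Since \Met has no general subsumption rule, the only route for a type mismatch is the built-in type equivalence relation of \Cref{fig:equiv-sub-meet}. I would observe that this equivalence is purely structural: $\TUnit$, arrows, modal types, $\forall$-types, products, and sums each form closed equivalence classes preserving the outermost type constructor. Sub-effecting only manipulates effect contexts and is confined to \tylab{Var}, which is already ruled out. Hence no equivalence-induced rewriting can turn, for example, an arrow type into a modal type.

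The only step that might require care is the modal case, since $\Box_\mu\,U'$ is typed both by $\tylab{Mod}$ (restricted to values) and, when $\mu = \aeq{}$, by $\tylab{ModAbs}$ (allowing arbitrary $M$). But in both rules the conclusion already has the shape $\Box_\mu\,(\cdot)$, so the conclusion of the lemma is unaffected; the inner term is whatever is tracked inside the box, and being a value normal form forces $M$ in $\tylab{ModAbs}$ to itself be some $U'$. I expect no genuine obstacle: the proof is a short, mechanical inversion once the value-normal-form grammar and the structural nature of type equivalence are in hand.
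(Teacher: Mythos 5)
Your proposal is correct and matches the paper's approach: the paper's entire proof is the one-liner ``Directly follows from the typing rules,'' and your write-up is just that inversion argument spelled out in full (case analysis on the value-normal-form grammar, uniqueness of the introduction rule for each head constructor, absence of subsumption, and the structural nature of type equivalence). The extra care you take with \tylab{ModAbs} and with ruling out $V\,A$ as a value normal form is sound but not something the paper bothers to record.
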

\begin{proof}
  Directly follows from the typing rules.
\end{proof}

In order to define the lock weakening lemma, we first define a context
update operation $\updlock{\Gamma}{F'}$ which gives a new context
derived from updating the indexes of all locks and variable bindings
in $\Gamma$ such that $\locks{\updlock{\Gamma}{F'}} : \_ \to F'$.

\[\ba{rcl}
\updlock{\cdot}{F} &=& \cdot \\
\updlock{\lockwith{\mind{\aeq{E}}{F'}},\Gamma'}{F} &=& \lockwith{\mind{\aeq{E}}{F}},\Gamma' \\
\updlock{\lockwith{\mind{\adj{L}{D}}{F'}},\Gamma'}{F} &=& \lockwith{\mind{\adj{L}{D}}{F}},\updlock{\Gamma'}{D+(F-L)} \\
\updlock{x\varb{\mu_{F'}}{A},\Gamma'}{F} &=& x\varb{\mu_F}{A},\updlock{\Gamma'}{F} \\
\updlock{\alpha:K,\Gamma'}{F} &=& \alpha:K,\updlock{\Gamma'}{F} \\
\ea\]

We have the following lemma showing that the index update operation
preserves the $\locks{-}$ operation except for updating the index.

\begin{lemma}[Index update preserves composition]
  \label{lemma:index-update}
  If $\mu_F = \locks{\Gamma} : E\to F$, $F \subtype F'$, and
  $\locks{\updlock{\Gamma}{F'}}:E'\to F'$, then $\locks{\updlock{\Gamma}{F'}} = \mu_{F'}$.
\end{lemma}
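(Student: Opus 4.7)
\begin{hproof}
I would proceed by structural induction on $\Gamma$, case-analysing the leftmost element, since $\updlock{-}{F'}$ is defined by recursion on the left. There is a minor mismatch in that $\locks{-}$ is defined by right-recursion, but by associativity of $\circ$ (\Cref{lemma:mod-cat}) one has the equivalent left-recursive characterisation $\locks{\lockwith{\nu_G}, \Gamma''} = \nu_G \circ \locks{\Gamma''}$ (and $\locks{x\varb{\nu_G}{A}, \Gamma''} = \locks{\Gamma''}$, $\locks{\alpha:K, \Gamma''} = \locks{\Gamma''}$), which I will use implicitly.

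The base case $\Gamma = \cdot$ is immediate: both sides evaluate to $\one$, the latter indexed at $F'$. The variable-binding and type-binding cases are easy: neither form contributes to $\locks{-}$, so after peeling off the leftmost binding (whose index update does not affect the value of $\locks{-}$) both sides reduce to the inductive hypothesis applied to the tail at the same outer mode $F'$.

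The absolute-lock case $\Gamma = \lockwith{\aeq{E_0}_F}, \Gamma'$ relies on the composition laws $\mu \circ \aeq{E} = \aeq{E}$ and $\aeq{E} \circ \adj{L}{D} = \aeq{D+(E-L)}$, specialised to concrete modalities. Together they yield $\aeq{E_0}_F \circ \locks{\Gamma'} = \aeq{\act{\nu}{E_0}}_F$ where $\nu$ is the underlying modality of $\locks{\Gamma'}$. Since $\updlock{-}{F'}$ leaves $\Gamma'$ untouched after an absolute lock and merely rewrites the lock's index to $F'$, the analogous calculation gives $\aeq{\act{\nu}{E_0}}_{F'}$, as required.

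The relative-lock case $\Gamma = \lockwith{\adj{L}{D}_F}, \Gamma'$ is the only place where the hypothesis $F \subtype F'$ is genuinely used. Here $\Gamma'$ starts at mode $D + (F-L)$; monotonicity of modality action (\Cref{lemma:mono-mod}) gives $D + (F-L) \subtype D + (F'-L)$, which is precisely what is needed to invoke the inductive hypothesis on $\Gamma'$ updated to $D + (F'-L)$. Composing on the left with $\adj{L}{D}$ at the updated index, and using that composition of concrete modalities is determined by the underlying modalities together with the outermost target index, yields the desired equality. The main obstacle is purely notational: keeping track of which concrete index each lock and binding carries before and after update, and verifying that the recursion of $\updlock{-}{-}$ mirrors the action of modalities on effect contexts. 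No ingredient beyond \Cref{lemma:mod-cat} and \Cref{lemma:mono-mod} is required.
\end{hproof}
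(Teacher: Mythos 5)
Your proposal is correct and follows essentially the same route as the paper, which disposes of the lemma by ``straightforward induction on the context'' using exactly the property you isolate, namely that $(\mu\circ\nu)_F = \mu_F\circ\nu_E$, i.e.\ that composition of concrete modalities is determined by the underlying modalities plus the outer index. Your write-up merely makes explicit the case analysis (absolute lock terminating the recursion, relative lock invoking the inductive hypothesis at $D+(F'-L)$ via \Cref{lemma:mono-mod}) that the paper leaves implicit.
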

\begin{proof}
  By straightforward induction on the context and using the property
  that $(\mu\circ\nu)_F = \mu_F\circ\nu_E$ for $\mu_F : E\to F$.
\end{proof}

\begin{corollary}[Index update preserves transformation]
  \label{lemma:update-modtrans}
  If $\locks{\Gamma}:E\to F$, $F \subtype F'$, and
  $\locks{\updlock{\Gamma}{F'}}:E'\to F'$, then $\locks{\Gamma}\To
  \locks{\updlock{\Gamma}{F'}}$.
\end{corollary}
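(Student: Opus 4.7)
The plan is to chain the preceding Lemma~\ref{lemma:index-update} with Lemma~\ref{lemma:self-modtrans} (asymmetric reflexivity of modality transformation). First, I would set $\mu := \locks{\Gamma}$, viewed as an indexed family of concrete modalities, so that the hypothesis $\locks{\Gamma} : E \to F$ reads as $\mu_F : E \to F$. Applying Lemma~\ref{lemma:index-update} with the given $F \subtype F'$ and $\locks{\updlock{\Gamma}{F'}} : E' \to F'$ yields the equality $\locks{\updlock{\Gamma}{F'}} = \mu_{F'}$, which in particular forces $E' = E$ (up to the action of $\mu$) and rewrites the goal to $\mu_F \To \mu_{F'}$.

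At that point, Lemma~\ref{lemma:self-modtrans} applied to $F \subtype F'$ and $\mu_F : E \to F$ gives exactly $\mu_F \To \mu_{F'}$, and substituting back the definitions delivers $\locks{\Gamma} \To \locks{\updlock{\Gamma}{F'}}$. Since both underlying lemmas are already established, there is no real obstacle here; the only subtlety worth spelling out is the notational one of keeping straight the distinction between a modality $\mu$ as an indexed family and its instantiation $\mu_F$ at a specific target, so that the equation supplied by Lemma~\ref{lemma:index-update} can be read as naming a single $\mu$ under two different indices. I expect the proof to occupy no more than a couple of lines in the final write-up.
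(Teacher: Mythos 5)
Your proposal is correct and matches the paper's own proof, which likewise derives the result immediately by chaining Lemma~\ref{lemma:index-update} (to identify $\locks{\updlock{\Gamma}{F'}}$ with $\mu_{F'}$) and Lemma~\ref{lemma:self-modtrans} (to obtain $\mu_F \To \mu_{F'}$). The only quibble is your parenthetical that the equality ``forces $E' = E$'': in general $E'$ is just $\act{\mu}{F'}$ with $E \subtype E'$ by monotonicity, but this does not affect the argument.
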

\begin{proof}
  Immediately follow from \Cref{lemma:index-update} and
  \Cref{lemma:self-modtrans}.
\end{proof}

We have the following structural lemmas.

\begin{restatable}[Structural rules]{lemma}{structuralRules} ~
  \label{lemma:structural-rules}
  The following structural rules are admissible.
  \begin{enumerate}[label=\arabic*.]
    \item Variable weakening.
    \begin{mathpar}
      \inferrule*
      {
        \typm{\Gamma,\Gamma'}{M:B}{E} \\
        \Gamma,x\varb{\mu_F}{A},\Gamma'\atmode{E}
      }
      {\typm{\Gamma,x\varb{\mu_F}{A},\Gamma'}{M:B}{E}}
    \end{mathpar}
    \item Variable swapping.
    \begin{mathpar}
      \inferrule*
      {
        \typm{\Gamma,x\varb{\mu_F}{A},y\varb{\nu_F}{B},\Gamma'}{M:A'}{E}
      }
      {\typm{\Gamma,y\varb{\nu_F}{B},x\varb{\mu_F}{A},\Gamma'}{M:A'}{E}}
    \end{mathpar}
    \item Lock weakening.
    \begin{mathpar}
      \inferrule*
      {
        \typm{\Gamma,\lockwith{\mind{\mu}{F}},\Gamma'}{M:A}{E} \\
        \mu_F \To \nu_F \\
        \nu_F : F'\to F \\
        \locks{\updlock{\Gamma'}{F'}}:E'\to F' \\
      }
      {\typm{\Gamma,\lockwith{\mind{\nu}{F}},\updlock{\Gamma'}{F'}}{M:A}{E'}}
    \end{mathpar}
    \item Type variable weakening.
    \begin{mathpar}
      \inferrule*
      {
        \typm{\Gamma,\Gamma'}{M:B}{E}
      }
      {\typm{\Gamma,\alpha:K,\Gamma'}{M:B}{E}}
    \end{mathpar}
    \item Type variable swapping.
    \begin{mathpar}
      \inferrule*
      {
        \typm{\Gamma_1,\Gamma_2,\alpha:K,\Gamma_3}{M:A}{E}
      }
      {\typm{\Gamma_1,\alpha:K,\Gamma_3}{M:A}{E}}

      \inferrule*
      {
        \alpha\notin\ftv{\Gamma_2} \\
        {\typm{\Gamma_1,\alpha:K,\Gamma_3}{M:A}{E}}
      }
      {\typm{\Gamma_1,\Gamma_2,\alpha:K,\Gamma_3}{M:A}{E}}
    \end{mathpar}
  \end{enumerate}
\end{restatable}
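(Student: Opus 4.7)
The plan is to prove each of the five rules by induction on the typing derivation, handling each typing rule as a separate case. Parts 1, 2, 4 and 5 are routine: inserting, removing, or permuting a (type) variable binding at a position where no lock separates it from the bindings on either side does not change the mode of any component, so \tylab{Var} looks up the correct binding with an unchanged transformation premise, and every other rule follows by straightforward induction. Context well-formedness is checked pointwise from the hypothesis (or from the freshness side-conditions, in the case of type variable swapping). For part 2 it is essential that the two swapped bindings share the same lock-index base $F$, so they sit at the same mode and may be reordered freely; for parts 4 and 5, since $\locks{-}$ ignores type-variable bindings, none of the mode-level structure is disturbed at all.

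The main obstacle is part 3 (lock weakening), since replacing $\lockwith{\mu_F}$ with $\lockwith{\nu_F}$ changes the mode immediately after the lock from $F''$ (with $\mu_F : F'' \to F$) to $F'$ (with $\nu_F : F' \to F$), where $F'' \subtype F'$ by \Cref{lemma:modtrans-two-cells}. This mode shift has to be propagated through the suffix $\Gamma'$ using the index-update operation $\updlock{\Gamma'}{F'}$. The critical case is \tylab{Var}. When the accessed variable $x\varb{\zeta_{F_0}}{A}$ lies in the outer prefix $\Gamma$, the original derivation supplies a transformation $\zeta_{F_0}\To (\mu_F \circ \locks{\Gamma'_x})_{F_0}$, where $\Gamma'_x$ is the portion of the suffix after $x$. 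We combine $\mu_F \To \nu_F$ (by hypothesis) with $\locks{\Gamma'_x} \To \locks{\updlock{\Gamma'_x}{F'}}$ (by \Cref{lemma:update-modtrans}) via horizontal composition (\Cref{lemma:modtrans-horizontal}), then compose vertically with the original premise (\Cref{lemma:modtrans-vertical}) to obtain $\zeta_{F_0} \To (\nu_F \circ \locks{\updlock{\Gamma'_x}{F'}})_{F_0}$, which is exactly the premise needed to rebuild \tylab{Var} in the updated context. When the accessed variable instead lies inside the suffix, its index is updated by $\updlock{}{}$ to the same modality at the new mode, and \Cref{lemma:self-modtrans} together with \Cref{lemma:index-update} supplies the corresponding transformation at the new mode.

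The remaining cases of part 3, namely \tylab{Mod}, \tylab{Letmod}, \tylab{Mask}, \tylab{Handler}, \tylab{CrispPair} and \tylab{CrispSum}, each introduce further locks and bindings in the suffix, but these are propagated transparently by $\updlock{}{}$: the index shift commutes with lock composition by \Cref{lemma:index-update}, so the side-conditions on lock composition and source-target agreement in each rule's premise continue to hold after the update, and the induction hypothesis produces sub-derivations that reassemble into a valid instance of the rule in the updated context. Hence, once the \tylab{Var} case is settled using the horizontal and vertical composition machinery of \Cref{app:lemmas-modes}, the remainder is essentially bookkeeping on how $\updlock{}{}$ distributes over context concatenation.
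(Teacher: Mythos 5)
Your proposal is correct and follows essentially the same route as the paper's proof: straightforward induction for parts 1, 2, 4, and 5, and for lock weakening an induction on the typing derivation whose critical case is \tylab{Var}, split on whether the weakened lock sits after the accessed binding (handled by horizontal composition via \Cref{lemma:update-modtrans} and \Cref{lemma:modtrans-horizontal} followed by vertical composition with the original premise) or before it (handled by re-indexing the binding via $\updlock{}{}$). The one slip is in the latter sub-case: the transformation $\mu'_{F_1'}\To\nu'_{F_1'}$ at the raised index $F_1\subtype F_1'$ follows from \Cref{lemma:mono-modtrans} (monotonicity of the transformation relation in the index), not from \Cref{lemma:self-modtrans}, which only relates a modality to itself at a larger index and cannot by composition alone move the \emph{source} of the original transformation up to $F_1'$.
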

\begin{proof}
  1, 2, 4, and 5 follow from straightforward induction on the typing derivation.
  For 3, we also proceed by induction on the typing derivation.
  The most interesting case is \tylab{Var}.
  Other cases mostly follow from IHs.
\begin{description}
\item[Case]
  \begin{mathpar}
  \inferrule*[Lab=\tylab{Var}]
  {
    \nu'_{F_1}= \locks{\Gamma_2} : E\to F_1 \\
    \refa{\mu'_{F_1}\To\nu'_{F_1}} \text{ or } \Gamma\vdash A:\Pure
  }
  {\typm{\Gamma_1,x\varb{\mu'_{F_1}},\Gamma_2}{x:A}{E}}
  \end{mathpar}
  Trivial when $A$ is pure. Otherwise, case analysis on where the lock
  weakening happens.
  \begin{description}
    \item[Case] $\Gamma$. Supposing $\Gamma_1 =
    \Gamma,\lockwith{\mu_F},\Gamma_0$ and after lock weakening we have
    $\Gamma,\lockwith{\nu_F},\Gamma_0',x\varb{\mu'_{F_1'}},\Gamma_2'$
    where $\Gamma_2' = \updlock{\Gamma_2}{F_1'} : E'\to F_1'$ and
    $\Gamma_0' = \updlock{\Gamma_0}{F'}:F_1'\to F'$.
    By \Cref{lemma:index-update} on $\Gamma_0$, $F\subtype F'$, and
    \Cref{lemma:mono-mod}, we have $F_1\subtype F_1'$.
    Then by \refa{} and \Cref{lemma:mono-modtrans}, we have
    $\mu'_{F_1'}\To\nu'_{F_1'}$.
    Then by \Cref{lemma:index-update} we have $\nu'_{F_1'} =
    \locks{\Gamma_2'}$.
    Finally by \tylab{Var} we have
    \[
      {\typm{\Gamma,\lockwith{\nu_F},\Gamma_0',x\varb{\mu'_{F_1'}},\Gamma_2'}{x:A}{E'}}
    \]
    \item[Case] $\Gamma_2$. Suppose $\Gamma_2 =
    \Gamma_0,\lockwith{\mu_F},\Gamma'$.
    is weakened to $\Gamma_2'=\Gamma_0,\lockwith{\nu_F},\updlock{\Gamma'}{F'}$.
    By \Cref{lemma:update-modtrans} we have
    $\locks{\Gamma'}\To\locks{\updlock{\Gamma'}{F'}}$.
    Then by \Cref{lemma:modtrans-horizontal} we have we have
    $\locks{\Gamma_2} \To \locks{\Gamma_2'}$.
    By \Cref{lemma:modtrans-vertical} and \refa{}, we have $\mu'_{F_1}
    \To \locks{\Gamma_2'}$.
    Finally by \tylab{Var} we have
    \[
      {\typm{\Gamma,x\varb{\mu'_{F_1}},\Gamma_2'}{x:A}{E'}}
    \]
  \end{description}
\item[Case]
  \begin{mathpar}
    \inferrule*[Lab=\tylab{Mod}]
    {
      \mind{\mu'}{E} : F_1 \to E \\
      \refa{\typm{\Gamma,\lockwith{\mind{\mu}{F}},\Gamma',\lockwith{\mind{\mu'}{E}}}{V:A}{F_1}}
    }
    {\typm{\Gamma,\lockwith{\mind{\mu}{F}},\Gamma'}{\Box_{\mu'}\,V : \boxwith{\mu'} A}{E}}
  \end{mathpar}
  We have
  \[
   \updlock{\Gamma',\lockwith{\mind{\mu'}{E}}}{F'}
  = \updlock{\Gamma'}{F'}, \updlock{\lockwith{\mind{\mu'}{E}}}{E'}
  = \updlock{\Gamma'}{F'}, \lockwith{\mind{\mu'}{E'}}.
  \]
  Supposing $\mind{\mu'}{E'}:F_1' \to E'$, by
  $\locks{\updlock{\Gamma'}{F'}, \lockwith{\mind{\mu'}{E'}}}:F_1'\to F'$ and
  IH on \refa{}, we have
  \[
    \typm{\Gamma,\lockwith{\mind{\mu}{F}},\updlock{\Gamma'}{F'}, \lockwith{\mind{\mu'}{E'}}}{V:A}{F_1'}.
  \]
  Then by \tylab{Mod} we have
  \[
  \typm{\Gamma,\lockwith{\mind{\mu}{F}},\updlock{\Gamma'}{F'}}{\Box_{\mu'}\,V : \boxwith{\mu'} A}{E'}.
  \]
  \item[Case]
  \begin{mathpar}
    \inferrule*[Lab=\tylab{Letmod}]
    {
      \nu'_E : F_1\to E \\
      \refa{\typm{\Gamma,\lockwith{\mu_F},\Gamma',\lockwith{\nu'_E}}{V : \boxwith{\mu'} A}{F_1}} \\
      \refb{\typm{\Gamma,\lockwith{\mu_F},\Gamma',x\varb{\nu'_E\circ\mu'_{F_1}}{A}}{M:B}{E}}
    }
    {\typm{\Gamma,\lockwith{\mu_F},\Gamma'}{\Letm{\nu'}{\mu'} x = V \In M : B}{E}}
  \end{mathpar}
  By IH on \refa{}, we have
  \[
    \typm{\Gamma,\lockwith{\nu_F},\updlock{\Gamma'}{F'},\lockwith{\nu'_{E'}}}{V : \boxwith{\mu'} A}{F_1'}
  \]
  where $\nu'_{E'} : F_1' \to E'$.
  By IH on \refb{}, we have
  \[
    \typm{\Gamma,\lockwith{\nu_F},\updlock{\Gamma'}{F'},x\varb{\nu'_{E'}\circ\mu'_{F_1'}}{A}}{M:B}{E'}.
  \]
  Then by \tylab{Letmod}, we have
  \[
    {\typm{\Gamma,\lockwith{\mu_F},\updlock{\Gamma'}{F'}}{\Letm{\nu'}{\mu'} x = V \In M : B}{E'}}
  \]
\item[Case]
\begin{mathpar}
\inferrule*[Lab=\tylab{Letmod'}]
{
  \nu'_E : F_1\to E \\
  \refa{\typm{\Gamma,\lockwith{\mu_F},\Gamma',\lockwith{\nu'_E},\ol{\alpha:K}}{V : \boxwith{\mu'} A}{F_1}} \\
  \refb{\typm{\Gamma,\lockwith{\mu_F},\Gamma',x\varb{\nu'_E\circ\mu'_{F_1}}{\forall\ol{\alpha^K}.A}}{M:B}{E}}
}
{\typm{\Gamma,\lockwith{\mu_F},\Gamma'}{\Letm{\nu'}{\mu'} \Lambda\ol{\alpha^K} . x = V \In M : B}{E}}
\end{mathpar}
Similar to the case for \tylab{Letmod}.
BY IH on \refa{}, we have
\[
  {\typm{\Gamma,\lockwith{\nu_F},\updlock{\Gamma'}{F'},\lockwith{\nu'_{E'}},\ol{\alpha:K}}{V : \boxwith{\mu'} A}{F_1'}}
\]
where $\nu'_{E'} : F_1' \to E'$. By IH on \refb{}, we have
\[
  {\typm{\Gamma,\lockwith{\nu_F},\updlock{\Gamma'}{F'},x\varb{\nu'_{E'}\circ\mu'_{F_1'}}{\forall\ol{\alpha^K}.A}}{M:B}{E'}}.
\]
Then by \tylab{Letmod'}, we have
\[
{\typm{\Gamma,\lockwith{\nu_F},\updlock{\Gamma'}{F'}}{\Letm{\nu'}{\mu'} \Lambda\ol{\alpha^K} . x = V \In M : B}{E'}}
\]
\item[Case] \tylab{TAbs}, \tylab{Abs}, \tylab{TApp}, \tylab{App},
\tylab{Do}, \tylab{Mask}, \tylab{Handler}, \tylab{ModAbs}, other
handlers and data types. Follow from IH. Similar to other cases we
have shown.
\end{description}
\end{proof}

As a corollary of \Cref{lemma:structural-rules}.3, the following
sub-effecting rule is admissible.
\begin{restatable}[Sub-effecting]{corollary}{subEffecting}
  \label{lemma:sub-effecting}
  The following rule is admissible.
  \begin{mathpar}
    \inferrule*
    {
      \typm{\Gamma}{M:A}{E} \\
      \locks{\Gamma} : E\to F \\
      F\subtype F' \\
      \locks{\updlock{\Gamma}{F'}} : E'\to F' \\
    }
    {\typm{\updlock{\Gamma}{F'}}{M:A}{E'}}
  \end{mathpar}
\end{restatable}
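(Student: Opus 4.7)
The plan is to derive the corollary as an immediate consequence of the lock weakening rule (\Cref{lemma:structural-rules}.3), using the observation that an identity lock can be virtually prepended to any context without altering the judgment. First I would note that the given derivation $\typm{\Gamma}{M:A}{E}$ is equivalent to $\typm{\lockwith{\one_F}, \Gamma}{M:A}{E}$: the typing rules inspect contexts only through the composition $\locks{\Gamma'}$ of locks standing between a variable binding and its use site (most visibly in \tylab{Var}), and prepending the identity lock $\lockwith{\one_F}$ never contributes to such a composition. Hence each rule application in the original derivation carries over verbatim to the prepended context.

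Next I would apply lock weakening, instantiating it with the leading lock $\lockwith{\one_F}$, taking $\mu_F = \one_F$ as the lock to be weakened and $\nu_F = \one_{F'}$ as its replacement. The required transformation $\one_F \To \one_{F'}$ holds by rule \mtylab{Mono} given the hypothesis $F \subtype F'$. Lock weakening then propagates the index shift through the remainder of the context via $\updlock{\Gamma}{F'}$, yielding $\typm{\lockwith{\one_{F'}}, \updlock{\Gamma}{F'}}{M:A}{E'}$; the side condition $\locks{\updlock{\Gamma}{F'}} : E' \to F'$ is exactly the hypothesis we were given. Dropping the now-redundant leading identity lock (by the same observation as in the first step) gives exactly $\typm{\updlock{\Gamma}{F'}}{M:A}{E'}$, as required.

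The main subtlety is the reliance on the extended form of modality transformations $\mu_F \To \nu_{F'}$ with possibly different targets formalised in \Cref{app:double-category-effects}: it is precisely rule \mtylab{Mono} in this extended setting that supplies the cross-target transformation $\one_F \To \one_{F'}$ from $F \subtype F'$. Since the meta-theoretic development of \Cref{lemma:structural-rules}.3 in the appendix is already carried out in this extended setting (as witnessed by its use of \Cref{lemma:mono-modtrans} in the \tylab{Var} case), this generalisation is available to us. Once this point is recognised, the remainder of the argument is a one-line instantiation and no further induction is needed.
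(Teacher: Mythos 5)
Your overall strategy --- prepend a dummy lock to the left of $\Gamma$, weaken it using \Cref{lemma:structural-rules}.3 so that the index update $\updlock{\Gamma}{F'}$ propagates through the whole context, then drop the leading lock --- is exactly the paper's strategy. However, your choice of the \emph{identity} lock $\lockwith{\one_F}$ creates a real problem in the instantiation. The lock weakening rule as stated requires a transformation $\mu_F \To \nu_F$ between two modalities \emph{at the same index} $F$, and the weakened lock in the conclusion is $\lockwith{\nu_F}$, still indexed by $F$. Your intended replacement $\one_{F'}$ has a different index, so the hypothesis $\one_F \To \one_{F'}$ is a cross-index transformation of the kind only defined in the extended setting of \Cref{app:double-category-effects} (via \mtylab{Mono}), and the stated lemma simply does not produce a lock whose index has changed. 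If you instead instantiate $\nu$ as $\one$ at index $F$, then $\one_F : F \to F$, the lemma's ``$F'$'' is $F$ itself, $\updlock{\Gamma}{F}$ changes nothing, and the application is a no-op. Your appeal to the lemma being ``already carried out in the extended setting'' does not close this: the extended relation appears inside the \emph{proof} of lock weakening (e.g.\ in \Cref{lemma:update-modtrans}), but the lemma's \emph{statement} fixes both modalities at index $F$, and that statement is what you are licensed to use. So as written there is a gap: you would first have to state and prove a generalised lock weakening lemma admitting $\mu_F \To \nu_{F'}$.

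The paper sidesteps this entirely by prepending the \emph{absolute} lock $\lockwith{\aeq{F}_\cdot}$ rather than the identity lock. Its index is $\cdot$, which is unaffected by the weakening; the whole change from $F$ to $F'$ is carried by the modality, as the same-index transformation $\aeq{F}_\cdot \To \aeq{F'}_\cdot$, which holds by \mtylab{Abs} from $F \subtype F'$. The stated lock weakening rule then applies verbatim with prefix $\cdot$ and suffix $\Gamma$, yielding $\typm{\lockwith{\aeq{F'}_\cdot},\updlock{\Gamma}{F'}}{M:A}{E'}$, and the leading lock is removed just as you propose. Your proof is repaired by this one substitution of the prepended lock; everything else in your argument, including the justification for adding and removing a leftmost lock, carries over.
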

\begin{proof}
  Follow from \Cref{lemma:structural-rules}.3 by adding the lock
  $\lockwith{\aeq{F}_\cdot}$ to the left of $\Gamma$ in
  $\typm{\Gamma}{M:A}{E}$, and weaken it to
  $\lockwith{\aeq{F'}_\cdot}$.
  Note that typing judgements still hold after adding a lock to or
  removing a lock from the left of the context, as long as the new
  contexts are still well-defined.
\end{proof}

The following lemma reflects the intuition that pure values can be
used in any effect context.

\begin{lemma}[Pure Promotion]
  \label{lemma:pure-promotion}
  The following promotion rule is admissible.
  \begin{mathpar}
    \inferrule*
    {
      \typm{\Gamma_1,\Gamma}{V:A}{E} \\
      \Gamma_1\vdash A : \Pure \\\\
      \locks{\Gamma}  : E\to F \\
      \locks{\Gamma'} : E'\to F \\
      \fv{V} \cap \dom{\Gamma'} = \emptyset
    }
    {\typm{\Gamma_1,\Gamma'}{V:A}{E'}}
  \end{mathpar}
\end{lemma}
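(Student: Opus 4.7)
The proof will proceed by structural induction on the value $V$, or equivalently on its typing derivation, with a useful preliminary observation. Since the desired conclusion $\typm{\Gamma_1,\Gamma'}{V:A}{E'}$ requires every free variable of $V$ to lie in $\dom{\Gamma_1,\Gamma'}$ and $\fv{V}\cap\dom{\Gamma'}=\emptyset$ by hypothesis, we have $\fv{V}\subseteq\dom{\Gamma_1}$. This invariant descends to every subterm, so at each recursive call the analogous free-variable condition still holds.

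The base cases are routine: $V=\Unit$ is immediate, and if $V$ is a variable it must lie in $\Gamma_1$, in which case the \tylab{Var} rule applies to the new context because purity of $A$ discharges its modality-transformation premise via the pure branch of $\Gamma\vdash(\mu,A)\To\nu\atmode{F}$. For composite pure values such as pairs $(V_1,V_2)$, injections $\Inl V_1$ / $\Inr V_1$, and type-abstractions $\Lambda\alpha^K.V_0$, the kinding rules of \Cref{fig:kinding-meet} force each component's type to be pure (or preserve purity under the extended type context $\Gamma_1,\alpha:K$), so the induction hypothesis applies componentwise and the relevant introduction rule then reassembles the value.

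The crux of the proof is the case $V=\Box_\mu\,V_0$ of pure type $\boxwith{\mu}A_0$. If $\mu$ is a relative modality $\adj{L}{D}$, the kinding rule forces $A_0$ itself to be pure, and the IH applies to $V_0$ under the extended contexts $\Gamma,\lockwith{\mu_F}$ and $\Gamma',\lockwith{\mu_F}$: both composed locks end at $\act{\mu}{F}$, so the hypothesis $\locks{-}:\_\to\act{\mu}{F}$ is satisfied. If $\mu$ is absolute, $\mu=\aeq{E_0}$, then $A_0$ need not be pure and the purity-based IH cannot be applied verbatim. The key observation is that $\locks{\Gamma}\circ\aeq{E_0}_F$ and $\locks{\Gamma'}\circ\aeq{E_0}_F$ are \emph{both equal to the same absolute modality} $\aeq{E_0}$, because absolute modalities annihilate whatever precedes them in composition. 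Thus the two extended contexts present identical composed locks to $V_0$. I plan to resolve this case by proving a strengthened auxiliary lemma whose hypothesis replaces $\Gamma_1\vdash A:\Pure$ with $\fv{V}\subseteq\dom{\Gamma_1}$ together with equality (up to modality transformation) of the composed locks on the two contexts; this version admits a clean induction covering every syntactic form, and the original statement follows by verifying that purity of $A$ plus the common target $F$ of $\locks{\Gamma}$ and $\locks{\Gamma'}$ produce the required equalities inductively.

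The main obstacle is exactly this absolute-box case: the inner type's kind degrades from $\Pure$ to $\Any$ while passing through an absolute lock, so the naive induction on the purity-based statement breaks. Once the strengthening is in place the remaining cases collapse to routine applications of the induction hypothesis, and the \tylab{Var} case hinges only on the fact that the modality-transformation side-condition is determined solely by the composed locks after $x$'s binding, which the strengthened hypothesis preserves.
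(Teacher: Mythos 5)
Your proposal is correct and follows essentially the same route as the paper's proof: induction on the typing derivation, with every case routine except $\Box_{\aeq{E_0}}\,V_0$, which both you and the paper resolve by observing that an absolute modality annihilates whatever locks precede it in composition, so the two contexts present identical composed locks to $V_0$. The only difference is one of packaging: the paper discharges that case by directly replacing the context prefix in the derivation tree of the premise and noting that no \tylab{Var} instance is affected, whereas you propose to formalise the same observation as a strengthened context-replacement lemma.
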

\begin{proof}
By induction on the typing derivation of $V$.
\begin{description}
\item[Case] \tylab{Var}. Trivial.
\item[Case]
\begin{mathpar}
\inferrule*[Lab=\tylab{Mod}]
{
  \mind{\mu}{E} : F_1 \to E \\
  \refa{\typm{\Gamma_1,\Gamma,\lockwith{\mind{\mu}{E}}}{V:A}{F_1}}
}
{\typm{\Gamma_1,\Gamma}{\Box_\mu\,V : \boxwith{\mu} A}{E}}
\end{mathpar}
Case analysis on the shape of $\mu$.
\begin{description}
  \item[Case] $\mu$ is relative. By kinding, $A$ is also pure.
  By IH on \refa{}, we have
  \[
    \typm{\Gamma_1,\Gamma',\lockwith{\mind{\mu}{E'}}}{V:A}{F_1'}
  \]
  where $\mu_{E'}:F_1'\to E'$.
  Then by \tylab{Mod} we have
  \[
    \typm{\Gamma_1,\Gamma'}{\Box_\mu\,V:\boxwith{\mu} A}{E'}
  \]
  \item[Case] $\mu$ is absolute. We have $\mu = \aeq{F_1}$ and
  $\locks{\Gamma',\lockwith{\mu_{E'}}} = \aeq{F_1}_{F} =
  \locks{\Gamma,\lockwith{\mu_E}}$.
  Thus, replacing the context $(\Gamma,\lockwith{\mu_E})$ with
  $(\Gamma',\lockwith{\mu_{E'}})$ in \refa{} does not influence all
  usages of \tylab{Var} in the derivation tree of \refa{}. We have
  \[
    {\typm{\Gamma_1,\Gamma',\lockwith{\mind{\mu}{E'}}}{V:A}{F_1}}
  \]
  Then by \tylab{Mod} we have
  \[
    \typm{\Gamma_1,\Gamma'}{\Box_\mu\,V:\boxwith{\mu} A}{E'}
  \]
\end{description}
\item[Case] \tylab{TAbs}. Follow from IH and
\Cref{lemma:structural-rules}.5.
\item[Case] \tylab{Abs}. Impossible since function types are impure.
\item[Case] Data types. Follow from IHs.
\end{description}
\end{proof}

\begin{restatable}[Substitution]{lemma}{substitution} ~
  \label{lemma:substitution}
  The following substitution rules are admissible.
  \begin{enumerate}[label=\arabic*.]
    \item Preservation of kinds under type substitution.
    \begin{mathpar}
      \inferrule*
      {
        \Gamma\vdash A : K \\
        \Gamma,\alpha:K,\Gamma'\vdash B : K'
      }
      {\Gamma,\Gamma'\vdash B[A/\alpha] : K'}
    \end{mathpar}
    \item Preservation of types under type substitution.
    \begin{mathpar}
      \inferrule*
      {
        \Gamma\vdash A : K \\
        \typm{\Gamma,\alpha:K,\Gamma'}{M:B}{E} \\
      }
      {\typm{\Gamma,\Gamma'}{M[A/\alpha] : B[A/\alpha]}{E}}
    \end{mathpar}
    \item Preservation of types under value substitution.
    \begin{mathpar}
      \inferrule*
      {
        \typm{\Gamma,\lockwith{\mu_F}}{V:A}{F'} \\
        \typm{\Gamma,x\varb{\mu_F}{A},\Gamma'}{M:B}{E} \\
      }
      {\typm{\Gamma,\Gamma'}{M[V/x]:B}{E}}
    \end{mathpar}
  \end{enumerate}
\end{restatable}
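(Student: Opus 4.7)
The plan is to prove the three parts in order by induction on the relevant derivation, reusing the structural and pure-promotion lemmas already established.

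For part 1 (type substitution preserves kinds), I would induct on the derivation $\Gamma, \alpha:K, \Gamma' \vdash B : K'$. The only non-trivial case is $B = \alpha$, where type variable weakening extends $\Gamma\vdash A:K$ with $\Gamma'$. The remaining cases distribute the substitution into sub-types, sub-effects, and sub-modalities and apply the IH.

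For part 2 (type substitution preserves types), I would induct on the typing derivation of $M$ under $\Gamma, \alpha:K, \Gamma'$, appealing to part 1 for the well-formedness side conditions. The \tylab{TAbs} case uses type variable swapping to push $\alpha$ past the freshly bound type variable. When $K = \Effect$, effect substitution has to commute with the effect-row operations $D+E$, $E-L$, $L\bowtie D$, as well as with modality composition and the transformation relation $\mu_F\To\nu_F$; I would verify these commutations as auxiliary observations first, and the rest of the induction then goes through routinely.

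For part 3 (value substitution preserves types), the main content, I would induct on the typing derivation of $M$ under $\Gamma, x\varb{\mu_F}{A}, \Gamma'$. The critical case is \tylab{Var} when $M = x$. The Var rule yields $\Gamma\vdash(\mu, A)\To\nu\atmode{F}$ where $\nu_F = \locks{\Gamma'} : E\to F$. If $A$ has kind $\Pure$, I invoke Pure Promotion to move $V$ into the current context. Otherwise I have $\mu_F\To\nu_F$, and lock weakening applied to $\typm{\Gamma, \lockwith{\mu_F}}{V:A}{F'}$ with an empty suffix yields $\typm{\Gamma, \lockwith{\nu_F}}{V:A}{E}$. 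I then bridge to $\typm{\Gamma, \Gamma'}{V:A}{E}$ using the context identification equations $\lockwith{\mu\circ\nu} = \lockwith{\mu}, \lockwith{\nu}$ and $\lockwith{{\one}_E}$ being absorbed, combined with variable weakening and swapping to interleave the bindings of $\Gamma'$ (none of which can appear in $V$ by the freshness convention). The remaining cases of the main induction are routine: rules such as \tylab{Mod}, \tylab{Letmod}, \tylab{Letmod'}, \tylab{Mask}, \tylab{Handler}, and the crisp eliminations introduce new locks or bindings beneath the position of $x$, and these are absorbed into an extended $\Gamma'$ for the IH; the modality index computations thread through consistently because the composition $\locks{\Gamma'}\circ\mu_E$ is independent of the specific shape of $\Gamma'$.

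The main obstacle is the bookkeeping in the \tylab{Var} case: bridging the gap between $\Gamma, \lockwith{\nu_F}$ obtained from lock weakening and the full ordered form $\Gamma, \Gamma'$. Although these share the same overall $\locks{-}$, the indices of locks and variable bindings inside $\Gamma'$ refer to intermediate effect contexts, so the sequence of structural manipulations must preserve context well-formedness at every step. A strengthened alternative I would consider is to induct on the shape of $\Gamma'$ itself, peeling off its rightmost entry (either a lock or a binding) and invoking the IH, which sidesteps the one-shot rewriting and makes the index computations entirely local.
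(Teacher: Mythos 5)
Your proposal is correct and follows essentially the same route as the paper's proof: parts 1 and 2 by straightforward induction on the respective derivations, and part 3 by induction on the typing derivation with the \tylab{Var} case split on purity, using Pure Promotion for the pure branch and lock weakening followed by context equivalence and the weakening/swapping structural rules for the impure branch. The bookkeeping concern you raise about reassembling $\Gamma,\Gamma'$ is handled in the paper exactly as you describe, via the context identification equations together with Lemma~\ref{lemma:structural-rules}.1 and \ref{lemma:structural-rules}.4.
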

\begin{proof} ~ \\
\noindent 1. By straightforward induction on the kinding derivation.

\noindent 2. By straightforward induction on the typing derivation of $M$.

\noindent 3. By induction on the typing derivation of $M$.
Trivial when variable $x$ is not used. In the following induction we
always assume $x$ is used.
\begin{description}
\item[Case]
\begin{mathpar}
\inferrule*[Lab=\tylab{Var}]
{
  \mind{\nu}{F}= \locks{\Gamma'} : E\to F \\
  \refa{\mind{\mu}{F}\To\nu_F} \text{ or } \Gamma\vdash A:\Pure
}
{\typm{\Gamma,x\varb{\mu_F}{A},\Gamma'}{x:A}{E}}
\end{mathpar}
Case analysis on the purity of $A$
\begin{description}
  \item[Case] Impure. By $\typm{\Gamma,\lockwith{\mu_F}}{V:A}{F'}$,
  \refa{}, and \Cref{lemma:structural-rules}.3, we have
  \[
    \typm{\Gamma,\lockwith{\nu_F}}{V:A}{E}.
  \]
  Then, by context equivalence, \Cref{lemma:structural-rules}.1, and
  \Cref{lemma:structural-rules}.4, we have
  \[
    \typm{\Gamma,\Gamma'}{V:A}{E}.
  \]
  \item[Case] Pure. By $\typm{\Gamma,\lockwith{\mu_F}}{V:A}{F'}$ and
  \Cref{lemma:pure-promotion}, we have
  \[
    \typm{\Gamma,\Gamma'}{V:A}{E}.
  \]
\end{description}
\item[Case]
\begin{mathpar}
\inferrule*[Lab=\tylab{Mod}]
{
  \mind{\mu'}{E} : F_1 \to E \\
  \refa{\typm{\Gamma,x\varb{\mu_F}{A},\Gamma',\lockwith{\mind{\mu'}{E}}}{W:B}{F_1}}
}
{\typm{\Gamma,x\varb{\mu_F}{A},\Gamma'}{\Box_{\mu'}\,W : \boxwith{\mu'} B}{E}}
\end{mathpar}
By IH on \refa{} we have
\[
  \typm{\Gamma,\Gamma',\lockwith{\mind{\mu'}{E}}}{W[V/x]:B}{F_1}.
\]
Then by \tylab{Mod} we have
\[
  \typm{\Gamma,\Gamma'}{(\Box_{\mu'}\,W)[V/x] : \boxwith{\mu'} B}{E}
\]
\item[Case]
\begin{mathpar}
\inferrule*[Lab=\tylab{Letmod}]
{
  \nu_E : F_1\to E \\
  \refa{\typm{\Gamma,x\varb{\mu_F}{A},\Gamma',\lockwith{\nu_E}}{W : \boxwith{\mu'} A'}{F_1}} \\
  \refb{\typm{\Gamma,x\varb{\mu_F}{A},\Gamma',y\varb{\nu_E\circ\mu'_{F_1}}{A'}}{M:B}{E}}
}
{\typm{\Gamma,x\varb{\mu_F}{A},\Gamma'}{\Letm{\nu}{\mu'} y = W \In M : B}{E}}
\end{mathpar}
By IH on \refa{}, we have
\[
  \typm{\Gamma,\Gamma',\lockwith{\nu_E}}{W[V/x] : \boxwith{\mu'} A'}{F_1}.
\]
By IH on \refb{}, we have
\[
  \typm{\Gamma,\Gamma',y\varb{\nu_E\circ\mu'_{F_1}}{A'}}{M[V/x]:B}{E}.
\]
Then by \tylab{Letmod}, we have
\[
  \typm{\Gamma,\Gamma'}{(\Letm{\nu}{\mu'} y = W \In M)[V/x] : B}{E}
\]
\item[Case]
\begin{mathpar}
\inferrule*[Lab=\tylab{Letmod'}]
{
  \nu_E : F_1\to E \\
  \refa{\typm{\Gamma,x\varb{\mu_F}{A},\Gamma',\lockwith{\nu_E},\ol{\alpha:K}}{V : \boxwith{\mu'} A'}{F_1}} \\
  \refb{\typm{\Gamma,x\varb{\mu_F}{A},\Gamma',y\varb{\nu_E\circ\mu'_{F_1}}{\forall\ol{\alpha^K}.A'}}{M:B}{E}}
}
{\typm{\Gamma,x\varb{\mu_F}{A},\Gamma'}{\Letm{\nu}{\mu'} \Lambda\ol{\alpha^K} . y = V \In M : B}{E}}
\end{mathpar}
Similar to the case for \tylab{Letmod}. Our goal follows from IH on
\refa{}, IH on \refb{}, and \tylab{Letmod'}.
\item[Case]
\begin{mathpar}
\inferrule*[Lab=\tylab{Mask}]
{
  \refa{\typm{\Gamma,x\varb{\mu_F}{A},\Gamma',\lockwith{\mind{\amk{L}}{E}}}{M: B}{E-L}} \\
}
{\typm{\Gamma,x\varb{\mu_F}{A},\Gamma'}{\Mask_{L}\; M: \boxwith{\amk{L}} B}{E}}
\end{mathpar}
By IH on \refa{} we have
\[
  \typm{\Gamma,\Gamma',\lockwith{\mind{\amk{L}}{E}}}{M[V/x]: B}{E-L}.
\]
Then by \tylab{Mask} we have
\[
  \typm{\Gamma,\Gamma'}{(\Mask_{L}\; M)[V/x]: \boxwith{\amk{L}} B}{E}
\]
\item[Case]
\begin{mathpar}
\inferrule*[Lab=\tylab{Handler}]
{
  {D = \{\ell_i : A_i \sto B_i\}_i} \\
  \refa{\typm{\Gamma, x\varb{\mu_F}{A},\Gamma', \lockwith{\aex{D}_E}}{M : A_0}{D+E}} \\\\
  \refb{\typm{\Gamma, x\varb{\mu_F}{A},\Gamma', y : \boxwith{\aex{D}} A_0}{N : B}{E}} \\
  [\refc{\typm{\Gamma,x\varb{\mu_F}{A},\Gamma', p_i : A_i, {r_i}: B_i \to B}{N_i : B}{E}}]_i
}
{\typm{\Gamma,x\varb{\mu_F}{A},\Gamma'}{\Handle\;M\With
  \{\Ret y \mapsto N\} \uplus \{ \ell_i\;p_i\;r_i \mapsto N_i \}_i: B}{E}}
\end{mathpar}
Follow from IH on \refa{},\refb{},\refc{}, and reapplying
\tylab{Handler}.
\item[Case] \tylab{TAbs}, \tylab{TApp}, \tylab{Abs}, \tylab{App},
\tylab{Do}. Follow from IH.
\item[Case] \tylab{ModAbs}, other handlers and data types. Follow from IH.
\end{description}
\end{proof}

\subsection{Progress}
\label{app:progress}

\progress*
\begin{proof}
By induction on the typing derivation $\typm{}{M:A}{E}$. The most
non-trivial cases are \tylab{Mask} and \tylab{Handler}. Other cases
follow from IHs and reduction rules, using
\Cref{lemma:canonical-forms}.
\begin{description}
\item[Case] $M$ is in a value normal form $U$. Trivial. Base case.
\item[Case] \tylab{Do}. Trivial. Base case. %
\item[Case] \tylab{Mod}. $\Box_\mu\, V$. By IH on $V$.
\item[Case] \tylab{Letmod}. $\Letm{\nu}{\mu} x = V \In N$. By IH on
$V$, if $V$ is reducible then $M$ is reducible; otherwise, $V$ is in a
value normal form, then by \Cref{lemma:canonical-forms} we have that
$M$ is reducible by \semlab{Letmod}.
\item[Case] \tylab{Letmod'}. Similar to the case for \tylab{Letmod}.
\item[Case] \tylab{TApp}. $M\,A$. Similarly by IH on $M$, \Cref{lemma:canonical-forms}, and \semlab{TApp}.
\item[Case] \tylab{App}. $M\,N$. Similarly by IH on $M$ and $N$, \Cref{lemma:canonical-forms}, and \semlab{App}.
\item[Case] \tylab{Mask}. $\Adapt^E\,M$. By IH on $M$.
\begin{description}
  \item[Case] $M$ is reducible. Trivial.
  \item[Case] $M$ is in a value normal form. By \semlab{Mask}.
  \item[Case] $M = \EC[\Do\ell\,U]$ with $\free{n}{\ell}{\EC}$. The
  whole term is in a normal form.
\end{description}
\item[Case] Handlers. The general form is $\Handle^\delta\;M\With H$. By IH on $M$.
\begin{description}
  \item[Case] $M$ is reducible. Trivial.
  \item[Case] $M$ is in a value normal form. By \semlab{Ret}.
  \item[Case] $M = \EC[\Do\ell\;U]$ with $\free{n}{\ell}{\EC}$. If
  $n=0$ and $\ell\in H$, then reducible by \semlab{Op}. Otherwise,
  the whole term is in a normal form.
\end{description}
\item[Case] \tylab{ModAbs}. $\Box_{\aeq{}}\, M$. If $M\reducesto N$,
follow by IH on $M$. Otherwise, $M$ must be in a value normal form
because the \tylab{ModAbs} requires $M$ to have the empty effect. In
this case, $\Box_{\aeq{}}\,M$ is also in a value normal form.
\item[Case] Other handlers and data types. Similar to other cases.
\end{description}
\end{proof}

\subsection{Subject Reduction}
\label{app:subject-reduction}

\subjectReduction*
\begin{proof}
By induction on the typing derivation $\typm{\Gamma}{M:A}{E}$.
\begin{description}
\item[Case] \tylab{Var}. Impossible as there is no further reduction.
\item[Case]
\begin{mathpar}
  \inferrule*[Lab=\tylab{Mod}]
  {
    \mind{\mu}{F} : E \to F \\
    \refa{\typm{\Gamma,\lockwith{\mind{\mu}{F}}}{V:A}{E}}
  }
  {\typm{\Gamma}{\Box_\mu\,V : \boxwith{\mu} A}{F}}
\end{mathpar}
The only way to reduce is by \semlab{Lift} and $V\reducesto W$. IH on
\refa{} gives
\[
  {\typm{\Gamma,\lockwith{\mu_F}}{W:A}{E}}.
\]
Then by \tylab{Mod} we have
\[
  {\typm{\Gamma}{\Box_\mu\,W : \boxwith{\mu} A}{F}}.
\]
\item[Case]
\begin{mathpar}
\inferrule*[Lab=\tylab{Letmod}]
{
  \nu_F : E\to F \\
  \refa{\typm{\Gamma,\lockwith{\nu_F}}{V : \boxwith{\mu} A}{E}} \\
  \refb{\typm{\Gamma,x\varb{\nu_F\circ\mu_E}{A}}{M:B}{F}}
}
{\typm{\Gamma}{\Letm{\nu}{\mu} x = V \In M : B}{F}}
\end{mathpar}
By case analysis on the reduction.
\begin{description}
  \item[Case] \semlab{Lift} with $V\reducesto W$. By IH on \refa{} and
  reapplying \tylab{Letmod}.
  \item[Case] \semlab{Letmod}. We have $V = \Box_\mu\,U$ and
  \[
    \Letm{\nu}{\mu} x = \Box_\mu\,U \In M \reducesto M[U/x].
  \]
  Inversion on \refa{} gives
  \[
    {\typm{\Gamma,\lockwith{\nu_F},\lockwith{\mu_E}}{U : A}{E'}}.
  \]
  where $\mu_E : E' \to E$.
  By context equivalence, we have
  \[
    {\typm{\Gamma,\lockwith{\nu_F\circ\mu_E}}{U : A}{E'}}
  \]
  where $\nu_F\circ\mu_E : E'\to F$. By \Cref{lemma:substitution}.3 and \refb{}, we have
  \[
    {\typm{\Gamma}{M[U/x] : B}{F}}.
  \]
\end{description}
\item[Case]
\begin{mathpar}
  \inferrule*[Lab=\tylab{Letmod'}]
  {
    \nu_F : E\to F \\
    \refa{\typm{\Gamma,\lockwith{\nu_F},\ol{\alpha:K}}{V : \boxwith{\mu} A}{E}} \\
    \refb{\typm{\Gamma,x\varb{\nu_F\circ\mu_E}{\forall\ol{\alpha^K}.A}}{M:B}{F}}
  }
  {\typm{\Gamma}{\Letm{\nu}{\mu} \Lambda\ol{\alpha^K} . x = V \In M : B}{F}}
\end{mathpar}
Similar to the case for \tylab{Letmod'}. By case analysis on the reduction.
\begin{description}
  \item[Case] \semlab{Lift} with $V\reducesto W$. By IH on \refa{} and
  reapplying \tylab{Letmod'}.
  \item[Case] \semlab{Letmod'}. We have $V = \Box_\mu\,U$ and
  \[
    \Letm{\nu}{\mu} \Lambda\ol{\alpha^K} . x = \Box_\mu\,U \In M \reducesto M[(\Lambda\ol{\alpha^K}.U)/x].
  \]
  Inversion on \refa{} gives
  \[
    {\typm{\Gamma,\lockwith{\nu_F},\ol{\alpha:K},\lockwith{\mu_E}}{U : \boxwith{\mu} A}{E'}}.
  \]
  where $\mu_E : E' \to E$.
  By \Cref{lemma:structural-rules}.5 we have
  \[
    {\typm{\Gamma,\lockwith{\nu_F},\lockwith{\mu_E},\ol{\alpha:K}}{U : A}{E'}}.
  \]
  By context equivalence, we have
  \[
    {\typm{\Gamma,\lockwith{\nu_F\circ\mu_E},\ol{\alpha:K}}{U : A}{E'}}.
  \]
  where $\nu_F\circ\mu_E : E'\to F$. By \tylab{TAbs} we have
  \[
    {\typm{\Gamma,\lockwith{\nu_F\circ\mu_E}}{\Lambda \ol{\alpha^K} . U : \forall\ol{\alpha^K}. A}{E'}}.
  \]
  By \Cref{lemma:substitution}.3 and \refb{}, we have
  \[
    {\typm{\Gamma}{M[(\Lambda \ol{\alpha^K} .U)/x] : B}{F}}.
  \]
\end{description}
\item[Case] \tylab{TAbs},\tylab{Abs}. Impossible as there is no further reduction.
\item[Case]
\begin{mathpar}
\inferrule*[Lab=\tylab{TApp}]
{
\refa{\typm{\Gamma}{M:\forall\alpha^K.B}{E}} \\
\refb{\Gamma\vdash A : K} \\
}
{\typm{\Gamma}{M\,A : B[A/\alpha]}{E}}
\end{mathpar}
By case analysis on the reduction.
\begin{description}
  \item[Case] \semlab{Lift} with $M\reducesto N$. By IH on \refa{}
  and reapplying \tylab{TApp}.
  \item[Case] \semlab{TApp}. We have $M = \Lambda\alpha^K.V$ and
  \[
    (\Lambda\alpha^K.V)\,A\reducesto V[A/\alpha].
  \]
  Inversion on \refa{} gives
  \[
    \typm{\Gamma,\alpha:K}{V:B}{E}.
  \]
  Then by \Cref{lemma:substitution}.2 on \refb{}, we have
  \[
    \typm{\Gamma}{V[A/\alpha]:B[A/\alpha]}{E}.
  \]
\end{description}
\item[Case]
\begin{mathpar}
\inferrule*[Lab=\tylab{App}]
{
  \refa{\typm{\Gamma}{M : A \to B}{E}} \\
  \refb{\typm{\Gamma}{N : A}{E}}
}
{\typm{\Gamma}{M\; N: B}{E}}
\end{mathpar}
By case analysis on the reduction.
\begin{description}
  \item[Case] \semlab{Lift} with $M\reducesto M'$. By IH on \refa{}
  and reapplying \tylab{App}.
  \item[Case] \semlab{Lift} with $N\reducesto N'$. By IH on \refb{}
  and reapplying \tylab{App}.
  \item[Case] \semlab{App}. We have $M = \lambda x^A.M'$, $N=U$, and
  \[
    M\,N\reducesto M'[U/x].
  \]
  Inversion on \refa{} gives
  \[
    \typm{\Gamma,x:A}{M':B}{E}.
  \]
  Then by \Cref{lemma:substitution}.3 we have
  \[
    \typm{\Gamma}{M'[U/x]:B}{E}.
  \]
\end{description}
\item[Case] \tylab{Do}. The only way to reduce is by \semlab{Lift}.
Follow from IH and reapplying \tylab{Do}.
\item[Case]
\begin{mathpar}
\inferrule*[Lab=\tylab{Mask}]
{
  \refa{\typm{\Gamma,\lockwith{\mind{\amk{L}}{F}}}{M: A}{F-L}} \\
}
{\typm{\Gamma}{\Mask_{L}\; M: \boxwith{\amk{L}} A}{F}}
\end{mathpar}
By case analysis on the reduction.
\begin{description}
  \item[Case] \semlab{Lift} with $M\reducesto N$. By IH on \refa{} and reapplying \tylab{Mask}.
  \item[Case] \semlab{Mask}. We have $M = U$ and
  \[
  \Mask_L\,U\reducesto \Box_{\amk{L}}\, U.
  \]
  By $\amk{L}_F: F-L\to F$ and \tylab{Mod}, we have
  \[
    \typm{\Gamma}{\Box_{\amk{L}}\,U:\boxwith{\amk{L}}A}{F}.
  \]
\end{description}
\item[Case]
\begin{mathpar}
  \inferrule*[Lab=\tylab{Handler}]
  {
    H = \{\Ret x \mapsto N\} \uplus \{ \ell_i\;p_i\;r_i \mapsto N_i \}_i \\
    \\{D = \{\ell_i : A_i \sto B_i\}_i} \\
    \refa{\typm{\Gamma, \lockwith{\aex{D}_F}}{M : A}{D+F}} \\\\
    \refb{\typm{\Gamma, x : \boxwith{\aex{D}} A}{N : B}{F}} \\
    [\refc{\typm{\Gamma, p_i : A_i, {r_i}: B_i \to B}{N_i : B}{F}}]_i
  }
  {\typm{\Gamma}{\Handle\;M\With H : B}{F}}
\end{mathpar}
By case analysis on the reduction.
\begin{description}
  \item[Case] \semlab{Lift} with $M\reducesto M'$. By IHs and reapplying \tylab{Handler}.
  \item[Case] \semlab{Ret}. We have $M = U$ and
  \[
    \Handle\;U\With H \reducesto N[(\Box_{\aex{D}}\,U)/x].
  \]
  By \refa{}, $\amk{D}_F : F\to D+F$, and \tylab{Mod}, we have
  \[
  \typm{\Gamma}{\Box_{\aex{D}}\,U:A}{F}.
  \]
  Then by \refb{} and \Cref{lemma:substitution}.3 we have
  \[
    \typm{\Gamma}{N[(\Box_{\aex{D}}\,U)/x]:B}{F}.
  \]
  \item[Case] \semlab{Op}. We have $M = \EC[\Do\ell_j\;U]$,
  $\free{0}{\ell_j}{\EC}$, $\ell_j\,p_j\,r_j\mapsto N_j$, and
  \[
    \Handle\;M\With H \reducesto N_j[U/p, (\lambda y.\Handle\;\EC[y]\With H)/r].
  \]
  Since $D$ is well-kinded, $A_j$ and $B_j$ are pure. By inversion on
  $\Do\ell_j\;U$ we have
  \[
    {\typm{\Gamma,\lockwith{\aex{D}_F}}{U:A_j}{D+F}}.
  \]
  By $A_j$ is pure and \Cref{lemma:pure-promotion}, we have
  \[
    {\typm{\Gamma,\lockwith{\aex{D}_F},\lockwith{\amk{L}_{D+F}}}{U:A_j}{F}}
  \]
  where $L = \dom{D}$.
  By context equivalence, we have
  \[
    \refd{\typm{\Gamma}{U:A_j}{F}}
  \]
  Observe that $B_j$ being pure allows $y:B_j$ to be accessed in any
  context. By \refa{} and a straightforward induction on $\EC$ we have
  \[
    \typm{\Gamma,y:B_j,\lockwith{\aex{D}_F}}{\EC[y] : A}{D+F}.
  \]
  Then by \tylab{Handler} and \tylab{Abs} we have
  \[\refe{\typm{\Gamma}{\lambda y . \Handle\;\EC[y]\With H : B_j \to B}{F}}.\]
  Finally, by \refc{}, \refd{}, \refe{}, and \Cref{lemma:substitution}.3 we have
  \[\typm{\Gamma}{N_j[U/p, (\lambda y.\Handle\;\EC[y]\With H)/r]:B}{F}.\]
\end{description}
\item[Case]
\begin{mathpar}
\inferrule*[Lab=\tylab{Handler${}^\all$}]
{
  {D = \{\ell_i : A_i \sto B_i\}_i} \\
  \refa{\typm{\Gamma,\lockwith{\aeq{D+E}_F}}{M: A}{D+E}} \\\\
  \refb{\typm{\Gamma, \lockwith{\aeq{E}_F}, x : \boxwith{\aeq{D+E}} A}{N : B}{E}} \\
  [\refc{\typm{\Gamma, \lockwith{\aeq{E}_F}, p_i : A_i, r_i :
    \boxwith{\aeq{E}}
    (B_i \to B)}{N_i : B}{E}}]_i
  \\
  {[E]_F \To \one_F}
}
{\typm{\Gamma}{\Handle^\all\;M\With
  \{\Ret x \mapsto N\} \uplus \{ \ell_i\;p_i\;r_i \mapsto N_i \}_i : B}{F}}
\end{mathpar}
By case analysis on the reduction.
\begin{description}
  \item[Case] \semlab{Lift} with $M\reducesto M'$. By IHs and reapplying \tylab{Handle${}^\all$}.
  \item[Case] \semlab{Ret${}^\all$}. We have $M=U$ and
  \[
    \Handle\;M\With H \reducesto N[(\Box_{\aeq{D+E}}U)/x].
  \]
  By \refa{} and $\aeq{D+E}_F = \aeq{E}_F \circ \aeq{D+E}_E$ we have
  \[
  \typm{\Gamma,\lockwith{\aeq{E}_F}, \lockwith{\aeq{D+E}_E}}{U:A}{E}.
  \]
  By $\aeq{D+E}_E : D+E\to E$ and \tylab{Mod}, we have
  \[
  \typm{\Gamma,\lockwith{\aeq{E}_F}}{\Box_{\aeq{D+E}}\,U:\boxwith{\aeq{D+E}}A}{E}.
  \]
  Then by \refb{} and \Cref{lemma:substitution}.3 we have
  \[
    \typm{\Gamma, \lockwith{\aeq{E}_F}}{N[(\Box_{\aeq{D+E}}\,U)/x]:B}{E}.
  \]
  By ${[E]_F \To \one_F}$ and \Cref{lemma:structural-rules}.3 we have
  \[
    \typm{\Gamma}{N[(\Box_{\aeq{D+E}}\,U)/x]:B}{F}.
  \]
  \item[Case] \semlab{Op${}^\all$}.
  We have $M = \EC[\Do\ell_j\;U]$,
  $\free{0}{\ell_j}{\EC}$, $\ell_j\,p_j\,r_j\mapsto N_j$, and
  \[\Handle^\all\;M\With H \reducesto N_j[U/p, (\Box_{\aeq{E}}\,(\lambda y.\Handle^\all\; \EC[y] \With H))/r].\]
  Since $D$ is well-kinded, $A_j$ and $B_j$ are pure.
  By inversion on $\Do\ell_j\;U$, we have
  \[
    \typm{\Gamma,\lockwith{\aeq{D+E}_F}}{U:A_j}{D+E}.
  \]
  By the fact that $A_j$ is pure, \Cref{lemma:pure-promotion}, and context equivalence, we have
  \[
    \refd{\typm{\Gamma,\lockwith{\aeq{E}_F}}{U:A_j}{E}}.
  \]
  Observe that $B_j$ being pure allows $y$ to be accessed in any
  context.
  By \refa{} and a straightforward induction on $\EC$ we have
  \[
    \typm{\Gamma,y:B_j,\lockwith{\aeq{D+E}_F}}{\EC[y] : A}{D+E}.
  \]
  By $\aeq{E}_F\circ\aeq{E}_E\circ\aeq{D+E}_E = \aeq{D+E}_F$ and context equivalence, we have
  \[
    \typm{\Gamma,y:B_j,\lockwith{\aeq{E}_F},\lockwith{\aeq{E}_E},\lockwith{\aeq{D+E}_E}}{\EC[y] : A}{D+E}.
  \]
  Since $B_j$ is pure, we can swap $y:B_j$ with locks and derive
  \[
    \typm{\Gamma,\lockwith{\aeq{E}_F},\lockwith{\aeq{E}_E},y:B_j,\lockwith{\aeq{D+E}_E}}{\EC[y] : A}{D+E}.
  \]
  By \tylab{Handler${}^\all$}, we have
  \[
    {\typm{\Gamma,\lockwith{\aeq{E}_F},\lockwith{\aeq{E}_E},y:B_j}{\Handle^\all\;\EC[y]\With H : B}{E}}.
  \]
  Notice that we can put $H$ after absolute locks
  because all clauses in $H$ have an absolute lock $\lockwith{\aeq{E}_E}$ in
  their contexts.
  Then by \tylab{Abs} and \tylab{Mod} we have
  \[\refe{\typm{\Gamma,\lockwith{\aeq{E}_F}}{\Box_{\aeq{E}}\,(\lambda y . \Handle^\all\;\EC[y]\With H) : \boxwith{\aeq{E}}(B_j\to B)}{E}}.\]
  By \refc{}, \refd{}, \refe{}, and \Cref{lemma:substitution}.3 we have
  \[\typm{\Gamma, \lockwith{\aeq{E}_F}}{N_j[U/p, (\Box_{\aeq{F}}\,(\lambda y.\Handle\;\EC[y]\With H))/r]:B}{E}.\]
  Finally, by ${[E]_F \To \one_F}$ and \Cref{lemma:structural-rules}.3 we have
  \[\typm{\Gamma}{N_j[U/p, (\Box_{\aeq{F}}\,(\lambda y.\Handle\;\EC[y]\With H))/r]:B}{F}.\]
\end{description}
\item[Case] Shallow handlers. Similar to the cases of deep handlers.
\item[Case] Data types. Nothing more special than the cases we have
already shown. Introduction rules follows from IHs and reapplying the
same typing rules. Elimination rules require to additionally consider
their corresponding reduction rules.
\end{description}
\end{proof}

\section{Specification, Proof, and Discussion of the Encoding}
\label{app:encoding}

In this section, we show the full encoding of \Leff into \Met, prove
its type preservation, and discuss its extensibility.
The definition of \Leff is in \Cref{sec:Leff} and the definition of
\Met is in \Cref{sec:core-calculus}.

\subsection{Full Encoding}
\label{app:spec-encoding}

We repeat the encoding of types and contexts here for easy reference.

\begin{minipage}[t]{0.45\textwidth}
  \[\ba[t]{r@{\ \ }c@{\ \ }l}
    \stransl{\TUnit}{E} &=& \boxwith{\one} \TUnit \\
    \stransl{\earr{A}{B}{F}}{E} &=& \boxwith{\adj{E-F}{F-E}}(\stransl{A}{F}\to\stransl{B}{F}) \\
    \stransl{\forall.A}{E} &=& \boxwith{\aeq{}} \stransl{A}{\cdot} \\
  \ea
  \]
\end{minipage}
\begin{minipage}[t]{0.55\textwidth}
  \vspace{-.3em}
  \[\ba[t]{r@{\ \ }c@{\ \ }l}
    \stransl{\cdot}{E} &=& \cdot \\
    \stransl{\Gamma,x:A}{E} &=& \stransl{\Gamma}{E}, x\varb{\mu_E}{A'} \text{ for } \boxwith{\mu} A' = \stransl{A}{E} \\
    \stransl{\Gamma,\marker{}{F}}{E} &=& \stransl{\Gamma}{F}, \lockwith{\adj{F-E}{E-F}} \\
    \stransl{\Gamma,\marker{\Lambda}{F}}{\cdot} &=& \stransl{\Gamma}{F}, \lockwith{\aeq{}} \\
  \ea
  \]
  \\
\end{minipage}

\Cref{fig:Feff-to-Met} shows the translation from \Leff terms with their
types and effect contexts to \Met terms.
We use the following syntactic sugar to simply the encoding.
\[\ba{rcl}
\Letm{}{\mu} x = M \In N &\doteq& (\lambda x . \Letm{}{\mu} x = x \In N)\; M \\
\Letm{}{\mu;\nu} x = V \In M &\doteq&
  \Letm{}{\mu} x = V \In \Letm{\mu}{\nu} x = x \In M \\
\ea\]
We use an auxiliary function $\topmod{-}$ to extract the top-level
modality.
\[\ba{rcl}
\topmod{\boxwith{\mu}A} &=& \mu
\ea\]

\begin{figure}[htb]
\raggedright
\boxed{\encode{M:A}{E}{M'}}
\hfill
\begin{mathpar}
  \inferrule*[Lab=\rowlab{Var}]
  { \mu \coloneq \topmod{\stransl{A}{E}} }
  {
    \encode{x:A}{E}{\Box_\mu\,x}
  }

  \inferrule*[Lab=\rowlab{App}]
  {
    \encode{M : \earr{A}{B}{E}}{E}{M'} \\\\
    \encode{N : A}{E}{N'} \\ x\,\meta{fresh}
  }
  {\encode{M\, N : B}{E}{\Letm{}{\one} x = M' \In x\,N'}}

  \inferrule*[Lab=\rowlab{Abs}]
  {
    \encode{M:B}{F}{M'}
    \\
    \fortrans{\nu \coloneq \adj{E-F}{F-E}} \\
    \fortrans{\mu \coloneq \topmod{\stransl{A}{F}}}
  }
  {
    \encode{\elambda{F} x^A . M : \earr{A}{B}{F}}{E}{
      \Box_{\nu}\,(\lambda x^{\stransl{A}{F}} .
      \Letm{}{\mu} x = x \In
      M')
    }
  }

  \inferrule*[Lab=\rowlab{EAbs}]
  {
    \encode{V : A}{\cdot}{V'}
  }
  {
    \encode{\Lambda . V : \forall.A}{E}{
      \Box_{\aeq{}}\, V'
    }
  }

  \inferrule*[Lab=\rowlab{EApp}]
  {
    \encode{M : \forall.A}{E}{M'} \\
    x\,\meta{fresh} \\
  }
  {
    \encode{M\,@ : A[E/]}{E}{\Letm{}{\aeq{}} x = M' \In x}
  }

  \inferrule*[Lab=\rowlab{Do}]
  {
    \encode{M : A}{\ell,E}{M'} \\
  }
  {
    \encode{\Do \ell \; M : B}{\ell,E}{\Do\ell\;M'}
  }

  \inferrule*[Lab=\rowlab{Mask}]
  {
    \encode{M: A}{E}{M'} \\
    \fortrans{\mu_1 \coloneq \topmod{\stransl{A}{E}}} \\
    \fortrans{\mu_2 \coloneq \topmod{\stransl{A}{L+E}}} \\
  }
  {
    \encode{\Mask_{L}\; M: A}{L+E}{
      \Letm{}{\amk{L};\mu_1} x = \Mask_L\;M' \In
      \Box_{\mu_2}\, x
    }
  }

  \inferrule*[Lab=\rowlab{Handler}]
  {
    \encode{M : A}{\ol{\ell_i},E}{M'} \\
    \encode{N : B}{E}{N'} \\
    [\encode{N_i : B}{E}{N_i'}]_i \\\\
    \fortrans{\mu \coloneq \topmod{\stransl{A}{\ol{\ell_i},E}}} \\
    \fortrans{\mu' \coloneq \topmod{\stransl{A}{E}}} \\\\
    \fortrans{N'' \coloneq \Letm{}{\aex{\ol{\ell_i}};\mu} x = x \In \Letm{\mu'}{\one} x = \Box_{\one}\, x \In N'} \\\\
    \fortrans{[\mu_i \coloneq \topmod{\stransl{A_i}{\cdot}} \qquad
               N_i'' \coloneq
                \Letm{}{\mu_i} p_i = p_i \In
                \Letm{}{\aid} r_i = r_i \In
                N_i']_i} \\\\
    H = \{\Ret x \mapsto N\} \uplus \{ \ell_i\;p_i\;r_i \mapsto N_i \}_i \\
    \fortrans{H' \coloneq \{\Ret x \mapsto N''\} \uplus \{ \ell_i\;p_i\;r_i \mapsto N_i'' \}_i} \\
  }
  {
    \encode{\Handle\;M\With H : B}{E}{
      \Handle\;M'\With H'
    }
  }
\end{mathpar}

\caption{Encoding of \Leff in \Met.}
\label{fig:Feff-to-Met}
\end{figure}

In the term translation, all terms are translated to boxed terms with
proper modalities consistent with those given by the type translation.
Recall that \Met uses let-style unboxing; we cannot immediately unbox
values at the place we need.
To get a systematic encoding, we \emph{greedily unbox} top-level
modalities for term variables when they are bound, and rebox them when
they are used.

Greedy unboxing happens for variable bindings such as $\lambda$-abstractions and handlers.
In the \rowlab{Abs} case, we unbox the top-level
modality of variable $x$. %
Additionally, we box the whole function with the relative modality
$\adj{E-F}{F-E}$, reflecting the effect context transition.
In the $\rowlab{Handler}$ case, we similarly unbox the bound variables
for return and operation clauses.
In the operation clauses ($N_i''$), we need only unbox the operation
argument $p_i$; the resumption function $r_i$ is introduced under the
current effect context $E$. In the return clause ($N''$), we unbox $x$
with $\aex{\ol{\ell_i}}\circ\mu$ and then transform this modality to
$\mu'$ given by $\topmod{\stransl{A}{E}}$ in order to match the
effect context $E$.

Similar to the \rowlab{Abs} case, the \rowlab{EAbs} case boxes the
translated value with the empty absolute modality.
Similar to the return clauses of the \rowlab{Handler} case, the
\rowlab{Mask} case %
transforms the modality $\amk{L}\circ\mu_1$ to $\mu_2$ in order to
match the current effect context $L+E$.

In \rowlab{Var}, we rebox the variable with the appropriate modality
given by the type translation.

As a result of translating all terms to boxed terms, we must insert
unboxing for elimination rules such as \rowlab{App} and \rowlab{EApp}.
Nothing special happens for the \rowlab{Do} case.

\subsection{Proof of Encoding}
\label{app:proof-encoding}

We prove the encoding from \Leff into \CalcM in \Cref{sec:encodings}.

\LeffWellScoped*

\begin{restatable}[Well-scopedness of Derivation Trees]{lemma}{wsDerivTree}
If the judgement at the bottom of a derivation tree is well-scoped,
then every judgement in the derivation tree is well-scoped.
\end{restatable}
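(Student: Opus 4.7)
The plan is to proceed by induction on the derivation tree of the bottom judgement $\typl{\Gamma}{M:A}{\effrow{E}{\evar}}$. For each typing rule of \Leff, I examine how the context changes between conclusion and premises, and verify that the well-scoped condition carries over to every premise---both for pre-existing variables in $\Gamma$ and for any freshly-bound ones.

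Most cases are immediate. The rules \rowlab{Var}, \rowlab{App}, \rowlab{EApp}, and \rowlab{Do} do not modify the context between conclusion and premises, so well-scopedness transfers unchanged. The rules \rowlab{Abs}, \rowlab{Mask}, and \rowlab{Handler} extend the context with a non-$\Lambda$ marker $\marker{}{F}$ and possibly a new term-variable binding. A non-$\Lambda$ marker cannot trigger the $\marker{\Lambda}{F} \in \Gamma_2$ disjunct of ill-scopedness, so pre-existing variables remain well-scoped; and newly-bound variables sit at the far right of the premise context, with no $\Lambda$-markers yet to their right, so they trivially satisfy the condition.

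The crux is \rowlab{EAbs}, whose premise extends the context with a fresh $\Lambda$-marker $\marker{\Lambda}{E}$. A pre-existing binding $x :_{\varepsilon_x} A_x$ in $\Gamma$ that previously relied on the disjunct $\marker{\Lambda}{F} \notin \Gamma_2$ may no longer satisfy it in the premise. To recover, I plan to prove a short auxiliary lemma: in any \Leff derivation of $\typl{\Gamma'}{N : B}{\effrow{F}{\varepsilon'}}$, every $x :_{\varepsilon_x} A_x \in \Gamma'$ with $x \in \fv{N}$ and $\varepsilon_x \neq \varepsilon'$ must have $A_x = \TUnit$ or $A_x = \forall.\,A'$. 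This follows by straightforward induction on the derivation of $N$, with the key case being \rowlab{Var}, which directly enforces exactly this disjunction. Applying the lemma to the premise of \rowlab{EAbs}, and noting that $\varepsilon'$ is chosen fresh relative to every effect-variable annotation in $\Gamma$, every $x \in \fv{V}$ drawn from $\Gamma$ must carry a type of one of the two permitted forms, satisfying well-scopedness via the third or fourth disjunct.

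The hard part will be isolating and stating this auxiliary lemma cleanly, because the well-scoped condition is tailored to accommodate exactly the variables that the typing rules allow across effect-variable-abstraction boundaries; getting the induction hypothesis to line up with the disjuncts of the definition is what does the real work. Once the lemma is in place, the remaining bookkeeping---checking the multi-premise cases such as \rowlab{Handler}, where return and operation clauses each extend the context only with non-$\Lambda$ markers and fresh variables---is routine and handled by the easy-case argument together with the inductive hypothesis on any nested \rowlab{EAbs} occurrences.
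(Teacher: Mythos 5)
Your proof is correct, but it is organized quite differently from the paper's. The paper argues by contradiction: it picks the top-most judgement in the tree that violates well-scopedness for some $x$, and then does a case analysis on whether the offending $\Lambda$-marker was already present in the root context (in which case the root itself is ill-scoped, contradicting the hypothesis) or was introduced by an application of $\rowlab{EAbs}$ inside the tree (in which case freshness of the introduced effect variable together with the side condition of $\rowlab{Var}$ yields the contradiction). You instead run a direct root-to-leaves induction and isolate the $\rowlab{EAbs}$ case into a standalone auxiliary lemma --- that any variable free in the term and annotated with an effect variable different from the ambient one must have type $\TUnit$ or $\forall.A'$ --- which is itself a straightforward induction bottoming out at the $\rowlab{Var}$ side condition. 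The two arguments use exactly the same ingredients (the disjunction in $\rowlab{Var}$, the freshness condition of $\rowlab{EAbs}$, and the fact that ordered contexts only grow to the right), but your decomposition is more modular: the auxiliary lemma is a reusable positive statement about \Leff derivations, whereas the paper threads the same facts through a single minimal-counterexample contradiction. Note that both proofs rely on the same implicit reading that $\varepsilon' \notin \ftv{\Gamma}$ covers the effect-variable annotations $:_{\varepsilon}$ on bindings, so your appeal to freshness there is no weaker than the paper's.
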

\begin{proof}
  Assume the contrary.
  Let $\typl{\Gamma_1, x :_\varepsilon A, \Gamma_2}{M : B}{E}$ be the top-most judgement in the derivation tree with
  $x \in \fv{M}$ and $\marker{\Lambda}{F} \in \Gamma_2$ and $A \neq \forall. A'$ and $A\neq\TUnit$.
  By case analysis on whether $\marker{\Lambda}{F} \in \Gamma_2$ was introduced in the derivation tree.
  \begin{description}
    \item[Case] not introduced in the derivation tree: Then the judgement at the bottom
      of the derivation tree must contain both the marker and $x$ and is not well-scoped for $x$. Contradiction.
    \item[Case] introduced in the derivation tree: since we chose the top-most judgement, the judgement must have
      introduced the marker by an application of the \rowlab{EAbs} rule. Let $\varepsilon'$ be the effect variable introduced at
      this judgement. Then $\varepsilon \neq \varepsilon'$ by the side-condition of the $\rowlab{EAbs}$ rule.
      We have that $\varepsilon$ is the ambient effect at the \rowlab{Var} rule where $x$ is used as a free variable,
      since we chose the top-most judgement. By the side-condition of the $\rowlab{Var}$ rule, then $\varepsilon = \varepsilon'$
      or $A = \forall. A'$ or $A = \TUnit$. Contradiction.
  \end{description}
\end{proof}

In the special case we consider there are no absent types.
This implies that submoding on effects can only add labels to the end.
Furthermore, all labels are drawn from a global environment and thus have
the same operation arrows. This allows us to freely permute them in the effect row.
In this case, we can strengthen the statement to the following:

\begin{restatable}[Transformation from Index]{corollary}{transFromIndex}
  \label{corollary:trans-from-index}
  If $\act{\adj{L_1}{D_1}}{F}\subtype\act{\adj{L_2}{D_2}}{F}$ and
  $L_1 \subtype F$ and $L_2 \subtype F$ and $L_1\bowtie D_1 = L_2\bowtie D_2$,
  then $\adj{L_1}{D_1}_F\To\adj{L_2}{D_2}_F$.
\end{restatable}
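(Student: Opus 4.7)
The strategy is to factor the transformation $\adj{L_1}{D_1}_F \To \adj{L_2}{D_2}_F$ through a common normal form $\adj{L}{D}_F$, where $(L,D) = L_1\bowtie D_1 = L_2\bowtie D_2$. The rules \mtylab{Shrink} and \mtylab{Expand} are precisely the syntactic counterparts of the cancellation performed by $\bowtie$: Shrink eliminates a matched label-extension pair, while Expand inserts one. So I will Shrink $\adj{L_1}{D_1}_F$ down to $\adj{L}{D}_F$, then Expand it back up to $\adj{L_2}{D_2}_F$, and combine the two chains using transitivity (from the transitive-closure definition of $\To$, or equivalently Lemma~\ref{lemma:modtrans-vertical}).

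First I would prove the Shrink direction: $\adj{L_1}{D_1}_F \To \adj{L}{D}_F$. By induction on the number of labels that $\bowtie$ cancels in $L_1$ against $D_1$, each iteration isolates a matched pair $(\ell, \ell\colon P)$ and applies \mtylab{Shrink}. The side condition $(F - L_{\text{cur}}) \equiv \ell\colon P, E$ of Shrink is discharged using two facts that are both specific to this special-case regime: (i) $L_1 \subtype F$ guarantees $\ell$ is present in $F$, and (ii) because the special case has no absent presence types and a single global signature $\Sigma$, the presence $P$ recorded in $D_1$ is necessarily the same one recorded in $F$ for $\ell$, so no prior Upcast is required. A symmetric induction using \mtylab{Expand} establishes $\adj{L}{D}_F \To \adj{L_2}{D_2}_F$, with $L_2 \subtype F$ providing the dual side condition.

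The subtype hypothesis $\act{\adj{L_1}{D_1}}{F} \subtype \act{\adj{L_2}{D_2}}{F}$ ensures consistency of the construction: given $L_1 \bowtie D_1 = L_2 \bowtie D_2$, the two acts already agree up to the reordering of distinct labels handled by the effect-context equivalence relation, so no nontrivial \mtylab{Upcast} step is needed in the chain, only possibly \mtylab{Upcast} applications with $D \equiv D'$ to realign ordering when stitching the chain.

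The main obstacle is the positional bookkeeping of $\bowtie$: the definition of $\bowtie$ walks down $D$ from the left, matching against labels in $L$ up to the multiset equivalence on $L$, whereas \mtylab{Shrink} and \mtylab{Expand} are sensitive to the specific positions of the cancelled labels in $L$ and $D$. I expect the cleanest way to handle this is to first use the equivalence relation $\equiv$ on extensions and masks (available inside $\To$ via \mtylab{Upcast} with $D \equiv D'$) to bring each matched label to the head of the mask and the tail of the extension before applying Shrink, and dually for Expand. Pinning down this normalization carefully is the only real content of the proof; once the alignment step is fixed, the Shrink/Expand inductions are routine.
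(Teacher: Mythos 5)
Your proof is correct, but it takes a genuinely different route from the paper's. The paper does not touch the \mtylab{Shrink}/\mtylab{Expand} rules at all: it verifies the \emph{semantic} side of \Cref{lemma:semantic-modtrans}, namely that $\act{\adj{L_1}{D_1}}{F'}\subtype\act{\adj{L_2}{D_2}}{F'}$ for every $F'$ with $F\subtype F'$. In the all-present regime any such $F'$ is $F$ with extra labels appended, and using $L_i\subtype F$ each modality's action commutes with appending those labels, so the assumed subtyping at $F$ propagates to every $F'$; the equivalence of \Cref{lemma:semantic-modtrans} then yields the syntactic transformation. (Notably, the paper's argument never explicitly invokes $L_1\bowtie D_1 = L_2\bowtie D_2$.) You instead build the derivation directly: \mtylab{Shrink} down to the common residue $\adj{L}{D}$ with $(L,D)=L_1\bowtie D_1$, \mtylab{Expand} back up to $\adj{L_2}{D_2}$, and stitch with transitivity (\Cref{lemma:modtrans-vertical}), observing that the subtyping hypothesis is essentially automatic here. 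Both work; the trade-off is that the paper's proof is two lines because all the syntactic completeness work is already paid for in the right-to-left direction of \Cref{lemma:semantic-modtrans}, whereas yours is self-contained at the level of the raw rules but must redo a special case of that completeness argument, including the head-versus-tail alignment of $\bowtie$ against \mtylab{Shrink}/\mtylab{Expand} that you correctly identify as the only delicate point (and which is benign here only because a single global signature forces all occurrences of a label to carry the same operation arrow).
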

\begin{proof}
  We show that for all $F'$ with $F\subtype F'$, we have $\act{\adj{L_1}{D_1}}{F'}\subtype\act{\adj{L_2}{D_2}}{F'}$.
  Since all operations are present in $F$, we have that $F' = F + \ol{l}$ for some collection of labels
  $\ol{l}$. Then we use that $L_1 \subtype F$:
  \begin{align*}
    \act{\adj{L_1}{D_1}}{F'} &= \act{\adj{L_1}{D_1}}{F + \ol{l}}  \\
                             &= D_1 + ((F + \ol{l}) - L_1)        \\
                             &= D_1 + ((F - L_1) + \ol{l})        \\
                             &= \act{\adj{L_1}{D_1}}{F} + \ol{l}
  \end{align*}
  and the same for $\act{\adj{L_2}{D_2}}{F'}$. Since $\act{\adj{L_1}{D_1}}{F}\subtype\act{\adj{L_2}{D_2}}{F}$
  and we can freely permute labels,
  we have that $(\act{\adj{L_1}{D_1}}{F} + \ol{l})\subtype(\act{\adj{L_2}{D_2}}{F} + \ol{l})$.
\end{proof}

The condition that $L_1\bowtie D_1 = L_2\bowtie D_2$ can be checked easily,
where for the composition of modalities we use the fact that
for $\adj{L}{D} = \adj{L_1}{D_1}\circ\adj{L_2}{D_2}$,
we have $L \bowtie D = (L_1,L_2) \bowtie (D_1,D_2)$.

\begin{restatable}[First Modality Transformation]{lemma}{firstModTrans}
\label{lemma:first-mod-trans}
For all $E_1,E_2,E_3$:
\begin{mathpar}
  \inferrule*
  {}
  {(\adj{E_1-E_2}{E_2-E_1} \circ \adj{E_2-E_3}{E_3-E_2})_{E_1} \Leftrightarrow \adj{E_1-E_3}{E_3-E_1}_{E_1}}
\end{mathpar}
\end{restatable}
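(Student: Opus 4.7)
The plan is to apply \Cref{corollary:trans-from-index} in both directions, taking $F = E_1$. This requires verifying three premises for each direction: (i) both masks are sub-rows of $E_1$; (ii) the actions of both modalities on $E_1$ coincide; and (iii) their $\bowtie$-normal forms agree.

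For condition (ii), I will directly compute: $\act{\adj{E_1 - E_2}{E_2 - E_1}}{E_1}$ yields $E_2$, because the labels in $E_1 \cap E_2$ survive the mask $E_1 - E_2$, while the extension $E_2 - E_1$ supplies the remaining labels of $E_2$. Iterating, $\act{\adj{E_2 - E_3}{E_3 - E_2}}{E_2} = E_3$. By the same pattern, $\act{\adj{E_1 - E_3}{E_3 - E_1}}{E_1} = E_3$, so both sides send $E_1$ to $E_3$. Condition (i) is then immediate: $E_1 - E_3 \subtype E_1$, and I can check analogously that the mask of the composed modality is a sub-row of $E_1$.

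The crux is condition (iii): computing the $\bowtie$-normal form of the composed modality. Using the composition formula, it equals $\adj{(E_1 - E_2) + L}{(E_3 - E_2) + D}$ where $(L, D) = (E_2 - E_3) \bowtie (E_2 - E_1)$. The labels shared between $E_2 - E_3$ and $E_2 - E_1$, namely those lying in $E_2$ but in neither $E_1$ nor $E_3$, cancel, leaving $L = E_1 \cap E_2 \setminus E_3$ and $D = E_2 \cap E_3 \setminus E_1$. The composed mask and extension then simplify (set-theoretically) to $E_1 \setminus (E_2 \cap E_3)$ and $E_3 \setminus (E_1 \cap E_2)$ respectively. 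Applying $\bowtie$ a second time to this pair cancels the overlapping labels, which work out to $E_1 \cap E_3 \setminus E_2$, yielding the normal form $(E_1 - E_3,\; E_3 - E_1)$. The right-hand modality $\adj{E_1 - E_3}{E_3 - E_1}$ is already in this normal form, since its mask and extension share no labels by construction.

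The main obstacle will be the two-stage bookkeeping for $\bowtie$ cancellation: labels surviving the first cancellation may still overlap between the composed mask and extension, requiring a second round of reasoning. Fortunately, because \Leff operations are uniformly present per the global signature $\Sigma$, scoped rows behave set-theoretically and the manipulation reduces to elementary set algebra rather than requiring a careful analysis of duplicate-label ordering. Once the normal forms are checked to match and the action premise is discharged, \Cref{corollary:trans-from-index} applied in both directions yields the bidirectional transformation.
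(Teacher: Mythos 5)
Your proposal is correct and follows essentially the same route as the paper's proof: compute that both modalities act on $E_1$ to give $E_3$, check the mask side conditions, and invoke \Cref{corollary:trans-from-index} in both directions. The only difference is that you work out the $\bowtie$-normal-form premise explicitly (correctly arriving at $(E_1-E_3,\;E_3-E_1)$ after the second cancellation of $E_1\cap E_3\setminus E_2$), whereas the paper dismisses that check as routine; your added detail is a faithful elaboration rather than a deviation.
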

\begin{proof}
We can use \Cref{corollary:trans-from-index} since $(E_1 - E_3)\subtype E_1$
and $(E_1 - E_2) + L \subtype E_1$ where $(L,D) = (E_2 - E_3) \bowtie (E_2 - E_1)$.
We have:
\begin{align*}
  \act{\adj{E_1-E_3}{E_3-E_1}}{E_1} &= (E_3-E_1) + (E_1 - (E_1 - E_3)) \\
                              &= (E_3-E_1) + (E_1 \cap E_3) \\
                              &= E_3
\end{align*}
and using this calculation:
\begin{align*}
  \act{\adj{E_1-E_2}{E_2-E_1} \circ \adj{E_2-E_3}{E_3-E_2}}{E_1}
    &= \act{\adj{E_2-E_3}{E_3-E_2}}{\act{\adj{E_1-E_2}{E_2-E_1}}{E_1}} \\
    &= \act{\adj{E_2-E_3}{E_3-E_2}}{E_2} \\
    &= E_3
\end{align*}
\end{proof}

\begin{restatable}[Second Modality Transformation]{lemma}{secondModTrans}
\label{lemma:second-mod-trans}
For all $L,E,F$:
\begin{mathpar}
  \inferrule*
  {}
  {\adj{L + (E - F)}{F - E}_{L + E} \To \adj{(L + E) - F}{F - (L + E)}_{L + E}}
\end{mathpar}
\end{restatable}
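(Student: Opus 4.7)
The plan is to apply \Cref{corollary:trans-from-index} with $L_1 = L + (E-F)$, $D_1 = F - E$, $L_2 = (L+E) - F$, $D_2 = F - (L+E)$, instantiating the corollary's ambient effect context to $L + E$. This reduces the lemma to three obligations: the boundedness conditions $L_1 \subtype L+E$ and $L_2 \subtype L+E$; the semantic inequality $\act{\adj{L_1}{D_1}}{L+E} \subtype \act{\adj{L_2}{D_2}}{L+E}$; and the syntactic equality $L_1 \bowtie D_1 = L_2 \bowtie D_2$.

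The boundedness conditions follow immediately: $L + (E-F) \subtype L + E$ since $E - F \subtype E$, and $(L+E) - F \subtype L + E$ by definition of masking. The semantic inequality will in fact be an equality. On the left, masking $L$ followed by $E - F$ from $L + E$ leaves $E - (E-F)$, which is $E \cap F$, and adding $F - E$ gives $F$. On the right, masking $(L+E) - F$ from $L + E$ leaves $(L+E) \cap F$, which combines with $F - (L+E)$ to give $F$. Both are straightforward telescoping calculations mirroring the proof of \Cref{lemma:first-mod-trans}.

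The main obstacle will be the $\bowtie$ equality, since it is a syntactic, not merely semantic, condition. My approach here is to view scoped rows as multisets of labels, tracking the multiplicity $|\cdot|_\ell$ of each label name. Since this restricted fragment has no absent types, the labels carry uniform operation arrows and the pairs $(L_i', D_i')$ resulting from $\bowtie$ are fully determined by their multiplicities. For the right-hand pair, the supports are already disjoint --- $L_2$ consists of labels exceeding $F$'s capacity while $D_2$ consists of $F$'s residue --- so no cancellation occurs and $|L_2'|_\ell = \max(0, |L|_\ell + |E|_\ell - |F|_\ell)$, $|D_2'|_\ell = \max(0, |F|_\ell - |L|_\ell - |E|_\ell)$. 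For the left-hand pair, the overlap between $L$ and $F - E$ (i.e.\ labels of $L$ lying in $F \setminus E$) is cancelled by $\bowtie$; using the identity $\max(0, x-y) - \max(0, y-x) = x-y$ to simplify, the normalised multiplicities coincide with those on the right. With all three obligations discharged, \Cref{corollary:trans-from-index} yields the transformation.
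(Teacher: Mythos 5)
Your proposal is correct and follows essentially the same route as the paper: apply \Cref{corollary:trans-from-index} with the same instantiation, discharge the two boundedness conditions $L+(E-F)\subtype L+E$ and $(L+E)-F\subtype L+E$, and show by the same telescoping calculations that both modality actions on $L+E$ equal $F$. The only difference is that you explicitly verify the side condition $L_1\bowtie D_1 = L_2\bowtie D_2$ via a per-label multiplicity count, which the paper leaves implicit (remarking after the corollary only that this condition ``can be checked easily''); your count is correct.
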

\begin{proof}
We can use \Cref{corollary:trans-from-index} since $(L + E) - F \subtype L + E$
and $L + (E - F) \subtype L + E$. We have:
\begin{align*}
  \act{\adj{(L + E) - F}{F - (L + E)}}{L + E} &= (F - (L + E)) + ((L + E) - (L + E - F)) \\
                              &= (F - (L + E)) + ((L + E) \cap F) \\
                              &= F
\end{align*}
and:
\begin{align*}
  \act{\adj{L + (E - F)}{F - E}}{L + E} &= (F - E) + ((L + E) - (L + (E - F))) \\
                              &= (F - E) + (E - (E - F)) \\
                              &= (F - E) + (E \cap F) \\
                              &= F
\end{align*}
\end{proof}

\begin{restatable}[Third Modality Transformation]{lemma}{thirdModTrans}
\label{lemma:third-mod-trans}
For all $\ol{\ell_i},E,F$:
\begin{mathpar}
  \inferrule*
  {}
  {(\aex{\ol{\ell_i}}\circ\adj{\ol{\ell_i},E - F}{F - \ol{\ell_i},E})_{E} \To \adj{E - F}{F - E}_{E}}
\end{mathpar}
\end{restatable}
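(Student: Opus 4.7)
The strategy follows \Cref{lemma:first-mod-trans} and \Cref{lemma:second-mod-trans}: I will apply \Cref{corollary:trans-from-index} to reduce the goal to three side conditions — mask subtyping, $\bowtie$-equality of the residues, and equality of the action on $E$. Since $\aex{\ol{\ell_i}} = \adj{\cdot}{\ol{\ell_i}}$, the third clause of modality composition tells me that the composite on the left is of the form $\adj{L_1}{D_1}$, where $(L_1,\_)$ is obtained from $((\ol{\ell_i}, E) - F) \bowtie \ol{\ell_i}$ added to $\cdot$, and $D_1$ is $(F - (\ol{\ell_i}, E))$ added to the $D$-component of the same $\bowtie$. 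The first step of the proof is therefore to compute this composite explicitly.

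For the action calculation, the right-hand side acts on $E$ as $(F - E) + (E - (E - F)) = (F - E) + (E \cap F) = F$, exactly as in \Cref{lemma:first-mod-trans}. For the left-hand side I first reduce $\act{\aex{\ol{\ell_i}}}{E} = \ol{\ell_i}, E$, then compute
\[
\act{\adj{(\ol{\ell_i}, E) - F}{F - (\ol{\ell_i}, E)}}{\ol{\ell_i}, E}
\;=\;
(F - (\ol{\ell_i}, E)) + \bigl((\ol{\ell_i}, E) - ((\ol{\ell_i}, E) - F)\bigr).
\]
Using that $X - (X - F) = X \cap F$ for any row $X$, the second summand simplifies to $(\ol{\ell_i}, E) \cap F$, and the whole expression collapses to $F$. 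Hence the action side condition of \Cref{corollary:trans-from-index} is satisfied.

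It remains to discharge the mask-subtyping requirement and the $\bowtie$-equality of the canonical forms. The mask $E - F$ on the right is clearly a subtype of $E$; for the left, I must check that the cancellation produced by $\bowtie$ in the composition leaves a mask equivalent to $E - F$, and similarly that the resulting extension is equivalent to $F - E$. Because $(E - F)$ and $(F - E)$ are disjoint as rows, $(E - F) \bowtie (F - E)$ is already in canonical form; the goal is then to match the left-hand canonical form to it.

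The main obstacle is this last bookkeeping step: in $((\ol{\ell_i}, E) - F) \bowtie \ol{\ell_i}$, the behaviour of each $\ell_i$ depends on whether $\ell_i$ lies in $F$. If $\ell_i \notin F$ it survives the initial $- F$ and thus cancels against the extension $\ol{\ell_i}$, contributing nothing; if $\ell_i \in F$ it is already removed by $- F$ and therefore reappears in the extension $D_1$, where it is absorbed by $F - (\ol{\ell_i}, E)$ already missing it. A careful case split along these lines, together with the scoped-row identifications of \Cref{fig:equiv-sub-meet} (reordering of distinct labels and absent-label equivalence), shows that the residue is exactly $(E - F, F - E)$, at which point \Cref{corollary:trans-from-index} closes the goal.
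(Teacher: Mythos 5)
Your proposal is correct and follows essentially the same route as the paper: both invoke Corollary~\ref{corollary:trans-from-index} and verify it by computing that each side acts on $E$ to give $F$ (via the same $X - (X-F) = X \cap F$ simplification). If anything, you are more careful than the paper's own proof, which asserts the mask-subtyping conditions and the action equality but leaves the $L_1\bowtie D_1 = L_2\bowtie D_2$ bookkeeping implicit, whereas you spell out the case split on whether each $\ell_i$ lies in $F$.
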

\begin{proof}
We can use \Cref{corollary:trans-from-index} since
$(\aex{\ol{\ell_i}}\circ\adj{\ol{\ell_i},E - F}{F - \ol{\ell_i},E}) = \act{\adj{\ol{\ell_i},E - F}{F - \ol{\ell_i},E}}{\ol{\ell_i},E}$
and $\ol{\ell_i},E - F \subtype \ol{\ell_i},E$ and $E - F \subtype E$. We have $\act{\adj{E - F}{F - E}}{E} = F$ and:
\begin{align*}
  \act{\aex{\ol{\ell_i}}\circ\adj{\ol{\ell_i},E - F}{F - \ol{\ell_i},E}}{E}
    &= \act{\adj{\ol{\ell_i},E - F}{F - \ol{\ell_i},E}}{\act{\aex{\ol{\ell_i}}}{E}} \\
    &= \act{\adj{\ol{\ell_i},E - F}{F - \ol{\ell_i},E}}{\ol{\ell_i},E} \\
    &= F
\end{align*}
\end{proof}

\begin{restatable}[Translating Instantiated Types]{lemma}{instTypes}
\label{lemma:inst-types}
For all \Leff types $A$: $\stransl{A}{E} = \stransl{A[E'/]}{E,E'}$.
\end{restatable}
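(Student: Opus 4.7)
The proof is by straightforward induction on the structure of the \Leff type $A$. The base case $A = \TUnit$ is immediate since $\stransl{\TUnit}{E} = \boxwith{\one}\TUnit$ independently of the ambient effect context, and $\TUnit[E'/] = \TUnit$. The polymorphic case $A = \forall.A'$ is likewise direct: by the substitution rule $(\forall.A')[E'/] = \forall.A'$, and the translation $\stransl{\forall.A'}{E} = \boxwith{\aeq{}}\stransl{A'}{\cdot}$ makes no reference to the outer effect context, so both sides coincide.

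The interesting case is the arrow type $A = \earr{A_1}{A_2}{F}$. Unfolding the translation on each side gives
\[
\stransl{\earr{A_1}{A_2}{F}}{E} = \boxwith{\adj{E-F}{F-E}}(\stransl{A_1}{F}\to\stransl{A_2}{F})
\]
and, using the substitution clause for arrows together with the definition of $\stransl{-}{-}$,
\[
\stransl{(\earr{A_1}{A_2}{F})[E'/]}{E,E'}
= \boxwith{\adj{(E,E')-(F,E')}{(F,E')-(E,E')}}(\stransl{A_1[E'/]}{F,E'}\to\stransl{A_2[E'/]}{F,E'}).
\]
The inductive hypotheses applied to $A_1$ and $A_2$ (with $F$ in place of $E$) equate the domain and codomain components. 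The remaining obligation is to verify that the outer modality decorations agree, which reduces to the two cancellation identities $(E,E')-(F,E') \equiv E-F$ and $(F,E')-(E,E') \equiv F-E$.

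I expect these cancellation identities to be the main work, and I would establish them via a short auxiliary lemma proved by induction on $E'$, using the defining equations of $E-L$ from \Cref{sec:modes}. Since \Leff treats effect contexts as multisets of labels (all operations share a common signature $\Sigma$), intuitively the common $E'$ summand appearing in both the subtrahend and the minuend fully cancels. The subtlety is that \Met's $E - L$ is defined by structural recursion on the left argument rather than as a multiset difference, so I will unfold the definition along the concatenation $(E, E')$ and rely on the fact that labels appearing in the mask but not in the remaining row are silently discarded. Once this cancellation lemma is in hand, closing the arrow case and hence the induction is routine.
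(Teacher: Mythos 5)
Your proposal is correct and follows essentially the same route as the paper's proof: induction on $A$, with the unit and $\forall$ cases immediate and the arrow case reduced, via the inductive hypothesis at effect context $F$, to the cancellation identities $(E,E')-(F,E') \equiv E-F$ and $(F,E')-(E,E') \equiv F-E$. The paper discharges that final step by the same observation you make---that in this fragment all labels are present and drawn from the global signature, so rows may be freely permuted and the common $E'$ summand cancels---merely asserting it rather than spelling out the auxiliary induction you propose.
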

\begin{proof}
By induction on the type $A$.
\begin{description}
\item[Case] $A = \Int$. Trivial.
\item[Case] $A = \forall.A'$. Trivial.
\item[Case] $A = \earr{A'}{B'}{F}$. Then:
\begin{align*}
  \stransl{A}{E} &= \adj{E - F}{F - E}(\stransl{A'}{F} \to \stransl{B'}{F}) \\
  \stransl{A[E'/]}{E,E'} &= \adj{E,E' - F,E'}{F,E' - E,E'}(\stransl{A'[E'/]}{F,E'} \to \stransl{B'[E'/]}{F,E'})
\end{align*}
  By the induction hypothesis we have:
\begin{align*}
  \stransl{A'}{F} &= \stransl{A'[E'/]}{F,E'} \\
  \stransl{B'}{F} &= \stransl{B'[E'/]}{F,E'}
\end{align*}
  Since we can freely permute labels:
\begin{align*}
  \adj{E,E' - F,E'}{F,E' - E,E'} &= \adj{E',E - E',F}{E',F - E',E} \\
                                 &= \adj{E - F}{F - E}
\end{align*}
\end{description}
\end{proof}

\LeffToMet*
\begin{proof}
By induction on the typing derivation $\typl{\Gamma}{M:A}{E}$. We prove this for each rule of the translation.
As a visual aid, we repeat each rule where we replace the translation premises by the \CalcM judgement implied by
the induction hypothesis and the translation in the conclusion by the \CalcM judgement we need to prove.
\begin{description}

\item[Case]
\begin{mathpar}
  \inferrule*[Lab=\rowlab{Var}]
  { }
  {
    \typm{\stransl{\Gamma_1, x : A, \Gamma_2}{E}}{\rebox{x}{A}{E} : \stransl{A}{E}}{E}
  }
\end{mathpar}

We use the $\rebox{x}{A}{E}$ function defined as follows:
\[
\ba{rcl}
\rebox{x}{A}{E} &=&
\begin{cases}
\Box_{\aid{}}\,x, &\text{if } A = \Int \\
\Box_{\adj{E-F}{F-E}}\,x, &\text{if } A = \earr{A'}{B'}{F} \\
\Box_{\aeq{}}\,x, &\text{if } A = \forall. A'
\end{cases} \\
\ea
\]

This function is exactly equivalent to $\Box_\mu\,x$ where $\mu = \topmod{\stransl{A}{E}}$
We use the \tylab{Mod} rule to introduce the box. By cases on the type $A$:
\begin{description}
\item[Case] $A = \Int$. We can use the \tylab{Var} rule since $\cdot \vdash \Int : \Pure$.
\item[Case] $A = \forall.A'$. Then $\stransl{A}{F} = \boxwith{\aeq{}} \stransl{A'}{\cdot}$ for all $F$.
  By rule \mtylab{Abs}, the pure modality transforms into any other modality and so we can use the \tylab{Var} rule.
\item[Case] $A = \earr{A'}{B'}{F}$. Since the \Leff judgement is well-scoped, we have that $\locks{\Gamma_2}$ is
  the composition of transition modalities. Furthermore, $\locks{\Gamma'}\circ\adj{E-F}{F-E} : F \to F'$ for the
  context $F'$ where $x$ as introduced and $x$ is annotated by the modality $\adj{F'-F}{F-F'}_{F'} : F \to F'$.
  By \Cref{lemma:first-mod-trans}, we can use the \tylab{Var} rule.
\end{description}

\item[Case]
\begin{mathpar}
  \inferrule*[Lab=\rowlab{App}]
  {
    \typm{\stransl{\Gamma}{E}}{M' : \stransl{\earr{A}{B}{E}}{E}}{E} \\\\
    \typm{\stransl{\Gamma}{E}}{N' : \stransl{A}{E}}{E} \\
    x\,\meta{fresh}
  }
  {
    \typm{\stransl{\Gamma}{E}}{\Letm{}{\aid{}} x = M' \In x\,N' : \stransl{B}{E}}{E}
  }
\end{mathpar}

We have $\stransl{\earr{A}{B}{E}}{E} = \aid{}(\stransl{A}{E} \to \stransl{B}{E})$. The claim follows
by the \tylab{Letmod} and \tylab{App} rules.

\item[Case]
\begin{mathpar}
  \inferrule*[Lab=\rowlab{Abs}]
  {
    \typm{\stransl{\Gamma,\marker{}{E}, x : A}{F}}{M' : \stransl{B}{F}}{F}
    \\
    \fortrans{\nu \coloneq \adj{E-F}{F-E}} \\
    \fortrans{\mu \coloneq \topmod{\stransl{A}{F}}}
  }
  {
    \typm{\stransl{\Gamma}{E}}{\Box_{\nu}\,(\lambda x^{\stransl{A}{F}} .
      \Letm{}{\mu} x = x \In
      M') : \stransl{\earr{A}{B}{F}}{E}}{E}
  }
\end{mathpar}

We have $\stransl{\Gamma,\marker{}{E}, x : A}{F} = \stransl{\Gamma}{E},\lockwith{\adj{E-F}{F-E}}, x :_{\mu_F} A'$
where $\boxwith{\mu} A' = \stransl{A}{F}$. Further $\stransl{\earr{A}{B}{F}}{E} = \adj{E-F}{F-E}(\stransl{A}{F} \to \stransl{B}{F})$.
The claim follows from the \tylab{Letmod}, \tylab{Abs} and \tylab{Mod} rules.

\item[Case]
\begin{mathpar}
  \inferrule*[Lab=\rowlab{EAbs}]
  {
    \typm{\stransl{\Gamma,\marker{\Lambda}{E}}{\cdot}}{V' : \stransl{A}{\cdot}}{\cdot}
  }
  {
    \typm{\stransl{\Gamma}{E}}{\Box_{\aeq{}}\, V' : \stransl{\forall.A}{E}}{E}
  }
\end{mathpar}

We have $\stransl{\Gamma,\marker{\Lambda}{E}}{\cdot} = \stransl{\Gamma}{E},\lockwith{\aeq{}}$.
Further, $\stransl{\forall.A}{E} = \boxwith{\aeq{}} \stransl{A}{\cdot}$.
The claim follows from the \tylab{Mod} rule.

\item[Case]
\begin{mathpar}
  \inferrule*[Lab=\rowlab{EApp}]
  {
    \typm{\stransl{\Gamma}{E}}{M' : \stransl{\forall.A}{E}}{E} \\
    x\,\meta{fresh} \\
  }
  {
    \typm{\stransl{\Gamma}{E}}{\Letm{}{\aeq{}} x = M' \In x : \stransl{A[E/]}{E}}{E}
  }
\end{mathpar}

We have $\stransl{\forall.A}{E} = \boxwith{\aeq{}} \stransl{A}{\cdot}$.
By \Cref{lemma:inst-types}, $\stransl{A}{\cdot} = \stransl{A[E/]}{E}$.
The claim follows by the \tylab{Letmod} rule.

\item[Case]
\begin{mathpar}
  \inferrule*[Lab=\rowlab{Do}]
  {
    \ell : A \sto B \in \Sigma \\\\
    \typm{\stransl{\Gamma}{\ell,E}}{M' : \stransl{A}{\ell,E}}{\ell,E}
  }
  {
    \typm{\stransl{\Gamma}{\ell,E}}{\Do\ell\;M' : \stransl{B}{\ell,E}}{\ell,E}
  }
\end{mathpar}

Because we only allow pure values in the operation arrows of \Leff, we have
that $\stransl{A}{\ell,E} = \stransl{A}{\cdot}$
and $\stransl{B}{\ell,E} = \stransl{B}{\cdot}$,
where $\ell : \stransl{A}{\cdot} \sto \stransl{B}{\cdot}$ in \CalcM.
The claim follows directly by the \tylab{Do} rule.

\item[Case]
\begin{mathpar}
  \inferrule*[Lab=\rowlab{Mask}]
  {
    \typm{\stransl{\Gamma,\marker{}{L+E}}{E}}{M' : \stransl{A}{E}}{E} \\\\
    \fortrans{\mu_1 \coloneq \topmod{\stransl{A}{E}}} \\
    \fortrans{\mu_2 \coloneq \topmod{\stransl{A}{L+E}}} \\
  }
  {
    \typm{\stransl{\Gamma}{L+E}}{\Letm{}{\amk{L};\mu_1} x = \Mask_L\;M' \In \Box_{\mu_2}\, x : \stransl{A}{L+E}}{L+E}
  }
\end{mathpar}

We have $\stransl{\Gamma,\marker{}{L+E}}{E} = \stransl{\Gamma}{L+E},\lockwith{\adj{(L + E) - E}{E - (L + E)}}$.
By permuting labels, we have\\ $\adj{(L + E) - E}{E - (L + E)} = \amk{L}$. The goal follows by the
\tylab{Letmod}, \tylab{Mask} and \tylab{Mod} rules if we can show that $x$ can be used under the box.
This is clear for integers, since they are pure and otherwise we need to show that
$(\amk{L}\circ\mu_1)_{L+E} \To (\mu_2)_{L+E}$. For $A = \forall.A'$ this is clear since $\mu_1 = \mu_2 = \aeq{}$
and $\amk{L}\circ\aeq{} = \aeq{}$. For functions, this follows from \Cref{lemma:second-mod-trans}.

\item[Case]
\begin{mathpar}
  \inferrule*[Lab=\rowlab{Handler}]
  {
    \typm{\stransl{\Gamma,\marker{}{E}}{\ol{\ell_i},E}}{M' : \stransl{A}{\ol{\ell_i},E}}{\ol{\ell_i},E} \\\\
    \typm{\stransl{\Gamma, x : A}{E}}{N' : \stransl{B}{E}}{E} \\
    [\typm{\stransl{\Gamma, p_i : A_i, r_i : \earr{B_i}{B}{E}}{E}}{N_i' : \stransl{B}{E}}{E}]_i \\\\
    \fortrans{\mu \coloneq \topmod{\stransl{A}{\ol{\ell_i},E}}} \\
    \fortrans{\mu' \coloneq \topmod{\stransl{A}{E}}} \\\\
    \fortrans{N'' \coloneq \Letm{}{\aex{\ol{\ell_i}};\mu} x = x \In \Letm{\mu'}{\aid{}} x = \Box_{\aid{}}\, x \In N'} \\\\
    \fortrans{[\mu_i \coloneq \topmod{\stransl{A_i}{\cdot}} \qquad
               N_i'' \coloneq \Letm{}{\mu_i} p_i = p_i \In
                \Letm{}{\aid} r_i = r_i \In
               N_i']_i} \\\\
    H = \{\Ret x \mapsto N\} \uplus \{ \ell_i\;p_i\;r_i \mapsto N_i \}_i \\
    \fortrans{H' \coloneq \{\Ret x \mapsto N''\} \uplus \{ \ell_i\;p_i\;r_i \mapsto N_i' \}_i} \\
  }
  {
    \typm{\stransl{\Gamma}{E}}{\Handle\;M'\With H' : \stransl{B}{E}}{E}
  }
\end{mathpar}

We have $\stransl{\Gamma,\marker{}{E}}{\ol{\ell_i},E} = \stransl{\Gamma}{E},\lockwith{\adj{E - \ol{\ell_i},E}{\ol{\ell_i},E - E}}$.
By permuting labels, we have $\adj{E - \ol{\ell_i},E}{\ol{\ell_i},E - E} = \aex{\ol{\ell_i}}$.
In the operation clauses, we have that $\stransl{\earr{B_i}{B}{E}}{E} = \aid{}(\stransl{B_i}{E} \to \stransl{B}{E})$.
Because the argument and return of effects are pure, we have that $\stransl{B_i}{E} = \stransl{B_i}{\cdot}$ and
$\stransl{A_i}{E} = \stransl{A}{\cdot}$. We need to unbox the argument $p_i$ and continuation $r_i$ though.
In the return clause, \CalcM gives us $x : \aex{\ol{\ell_i}}\stransl{A}{\ol{\ell_i},E}$, but
we need $x : \stransl{A}{E}$. We achieve this by unboxing $x$ fully and then re-boxing it with the
modality $\mu'$. This is possible for integers because they are pure, for $\forall$s because
of the \mtylab{Abs} rule and for functions due to the modality transformation in \Cref{lemma:third-mod-trans}.

\end{description}

\end{proof}

\subsection{Extensibility of the Encoding}
\label{app:extensible-encoding}

Our encoding in \Cref{sec:encodings} does not consider any extensions
of \Met.
With the extension of effect polymorphism, the encoding certainly
becomes trivial.
Thus, we only consider extensions of data types and polymorphism for value types
and discuss how to extend the encoding.

Recall that in \Cref{sec:encoding-Leff} we always translate types to
modal types and perform greedy unboxing and lazy boxing for variables.
For variables of data types such as a pair $(A, B)$, we just need to
further destruct the pair before greedy unboxing, and reconstruct the
pair when lazy boxing.
This is because the translation on its components might give terms of
type $\boxwith{\mu}A'$ and $\boxwith{\nu}B'$ with different
modalities which require separate unboxing.
For variables of recursive data types, we need to destruct
only to the extent that the data type is unfolded in the function body
(where we may treat recursive invocations as opaque). While this
requires a somewhat global translation, it does not require
destructing and unboxing the recursive data type more than a small
number of times.

The essential reason for the translation being global comes from the
fact that we use let-style unboxing following MTT.
For modalities with certain structure (right adjoints), it is possible
to use Fitch-style unboxing~\citep{Clouston18} which allows terms to
be directly unboxed without binding~\citep{Gratzer23,abs-2303-02572}.
We are interested in exploring whether we could extend \Met to use
Fitch-style unboxing and thus give a compositional local encoding for
recursive data types.
Fortunately, these issues appear to only cause problems for encoding
but not in practice.
Functional programs typically use pattern-matching in a structured way
that plays nicely with automatic unboxing of bidirectional typing.

Extending the encoding with polymorphism for value types is tricky as the source
calculus \Leff is not stable under value type substitution.
For instance, the following substitution breaks the condition that
function arrows only refer to the lexically closest effect variable:
$
(\forall\gray{\evar'}.\alpha)[(\earr{\TUnit}{\TUnit}{\geffrow{}{\evar}})/\alpha]
= \forall\gray{\evar'}.\earr{\TUnit}{\TUnit}{\geffrow{}{\evar}}
$.
This exemplifies the fragility of the syntactic approach of \Frank.
It is possible to still define the encoding by forcing the substituted
type to satisfy the lexical restriction.
We leave the full development of such an encoding as future work.

\FloatBarrier
\section{Specification and Implementation of \SurfaceMet}
\label{app:surface-lang}

In this section, we provide the syntax and typing rules for
\SurfaceMet introduced in \Cref{sec:surface-lang} and the elaboration
from it to \Met.
We also briefly discuss our prototype implementation.

\subsection{Syntax and Typing Rules}
\label{app:surface-lang-spec}

The syntax of \SurfaceMet is shown in~\Cref{fig:syntax-maetel}.
We include extensions of data types and polymorphism in
\Cref{sec:extensions}.
For the latter we require explicit type application $M\;A$ (explicit
type abstraction is optional since the checking mode tells us when to
introduce type abstraction).

\begin{figure}[htbp]
  \begin{syntax}
  \slab{Types}    &A,B  &::= & A\to B \mid {\boxwith{\mu} A}
    \mid \alpha \mid \Pair{A}{B} \mid A + B \\
  \slab{Guarded Types}    &G  &::= & A\to B
    \mid \alpha \mid \Pair{A}{B} \mid A + B \\
  \slab{Masks}          &L   &::= & \cdot \mid \ell,L \\
  \slab{Extensions}     &D   &::= & \cdot \mid \ell : P, D \\
  \slab{Effect Contexts}&E,F &::= & \cdot \mid \ell : P, E \\
  \slab{Modalities}     &\mu &::= & \aeq{E} \mid \adj{L}{D} \\
  \slab{Kinds}    &K &::= & \Pure \mid \Any \mid \Effect \\
  \slab{Contexts}       &\Gamma    &::=& \cdot \mid \Gamma, x:A
                                            \mid \Gamma, \alpha:K
                                            \mid {\Gamma,\lockwith{\mu_F}} \\
  \slab{Terms}          &M,N  &::= & x \mid \lambda x.M \mid M\;N \mid {M:A} \mid M\;A
                              \mid  {\Do\ell\; M} \\
                        &     &\mid& {\Mask_L\;M}\mid \Handle\;M\With H \\
                        &     &\mid& (M, N) \mid \Case M \Of (x, y) \mapsto N \\
                        &     &\mid& \Inl M \mid \Inr M
                              \mid \Case M \Of \{\Inl x \mapsto M_1, \Inr y \mapsto M_2\} \\
  \slab{Values}         &V,W  &::= & x \mid \lambda x.M \mid {V:A} \mid V\;A
                        \mid (V, W) \mid \Inl V \mid \Inr V\\
  \slab{Handlers}      &H    &::= & \{ \Ret x \mapsto M \}
                              \mid  \{ (\ell:A\sto B) \; p \; r \mapsto M \} \uplus H
  \end{syntax}
  \caption{Syntax of \SurfaceMet.}
  \label{fig:syntax-maetel}
\end{figure}

The bidirectional typing rules for \SurfaceMet are shown
in~\Cref{fig:typing-maetel}.

We repeat the definition of $\meta{across}$ used in \btylab{Var} here
for easy reference.
\[\ba{rcl}
\meta{across}(\Gamma,A,\nu,F) &=&
\begin{cases}
  A, &\text{if } \Gamma\vdash A : \Pure \\
  \boxwith{\zeta} G, &\text{otherwise, where } A = \boxwith{\ol{\mu}}{G}
    \text{ and } \rresidual{\ol{\mu}_F}{\nu_F} = \zeta_E
\end{cases}
\ea\]
For $\mu_F : E\to F$ and $\nu_F : F' \to F$, the right residual
$\rresidual{\mu_F}{\nu_F}$ is a partial operation defined as follows.
\[\ba{r@{\,}c@{\,}lcl}
{\nu_F}
&\backslash&
{\aeq{E}_F}
&=& \aeq{E}_{F'} \\
{\adj{L'}{D'}_F}
&\backslash &
{\adj{L}{D}_F}
&=&
\begin{cases}
  \adj{\toSet{D'} + (L - L')}{D + F|_{L'-L}}_{D'+(F-L')},
  &\text{if } \meta{present}(F|_{L'-L})\\
  \text{fail}, &\text{otherwise}
\end{cases}\\
{\aeq{E}_F}
&\backslash &
{\adj{L}{D}_F}
&=& \text{fail}
\ea\]
We define $\toSet{D}$ as converting an extension to a mask by taking
the multiset of all its labels.
We define $E|_L$ as the extension derived by extracting the entries in
$E$ with labels in $L$ from the left.
The predicate $\meta{present}(D)$ checks if all labels in $D$ are present.

\begin{minipage}[t]{0.5\textwidth}
\[\ba{rcl}
\toSet{\cdot} &=& \cdot \\
\toSet{\ell:P,D} &=& \ell,\toSet{D} \\
\ea\]
\end{minipage}
\begin{minipage}[t]{0.5\textwidth}
\[\ba{rcl}
(\ell:P,E)|_{\ell,L} &=& \ell:P, E|_L \\
(\ell':P,E)|_{\ell,L} &=& E|_{\ell,L} \\
\evar |_{\ell,L} &=& \text{fail}
\ea\]
\end{minipage}
\vspace{.5\baselineskip}

\btylab{Mod} introduces a lock and \btylab{Forall} introduces a type
variable into the context, respectively.
\btylab{Annotation} is standard for bidirectional typing.
\btylab{Switch} not only switches the direction from checking to
inference, but also transforms the top-level modalities when there is
a mismatch.
\btylab{Abs} is standard. Both \btylab{App} and \btylab{TApp} unbox
the eliminated term $M$ when it has top-level modalities.
\btylab{Do} is standard. For masks and handlers, we have typing rules
in both checking and inference modes.
For \btylab{HandlerInfer}, we use a partial join operation $A\join_{\Gamma,F}
B$ to join the potentially different types of different branches.
The join operation fails when $A$ and $B$ are different types modulo
top-level modalities; otherwise, it tries to transform the top-level
modalities of $A$ and $B$ to the same one.
As a special case, if $A$ and $B$ give some absolute guarded type $G$
after removing top-level modalities, the join operation succeeds and
directly returns $G$, which is a general enough solution because an
absolute type has no restriction on its accessibility.

We define join on types as follows.
\[\ba{r@{\;}c@{\;}lcl}
\boxwith{\ol{\mu}}G &\join_{\Gamma,F}& \boxwith{\ol{\nu}}G
  &=&
\begin{cases}
  G, &\text{if } \Gamma\vdash G : \Pure \\
  \boxwith{({\ol{\mu}} \join_F {\ol{\nu}})} G, &\text{otherwise} \\
\end{cases}
\ea\]
In order to define $\mu\join_F\nu$, we first define some auxiliary join operations.

We define the join of presence types $P\join P'$ as follows, in
order to obtain the minimal type $P''$ such that $P\subtype P''$
and $P'\subtype P''$.
\[\ba{r@{\;}c@{\;}lcl}
\Abs &\join& P &=& P \\
P    &\join& \Abs &=& P \\
P    &\join& P' &=& 
\begin{cases}
P, &\text{if } P \equiv P' \\
\text{fail}, &\text{otherwise}
\end{cases}
\ea\]

We define the join of effect contexts $E\join E'$ as follows, in order
to obtain the minimal effect context $F$ such that $E\subtype F$ and
$E'\subtype F$.
\[\ba{r@{\;}c@{\;}lcl}
\cdot &\join& \cdot &=& \cdot \\
\ell:P,E &\join& \ell:P',E' &=& \ell:(P\join P'), (E\join E') \\
\ea\]

We define the join of an extension and an effect context $D\join E$ as
follows, creating an extension based on $D$ by joining the presence types
with those of corresponding labels in $E$.
\[\ba{r@{\;}c@{\;}lcl}
\cdot &\join& E &=& \cdot \\
\ell:P,D &\join& \ell:P',E' &=& \ell:(P\join P'), (D\join E') \\
\ea\]

We define the meet of extensions $D\wedge D'$ as follows.
\[\ba{r@{\;}c@{\;}lcl}
\cdot &\wedge& D &=& \cdot \\
\ell:P,D &\wedge& \ell:P',D' &=& \ell:(P\join P'), (D\wedge D') \\
\ell:P,D &\wedge& D' &=& D\wedge D', \quad\text{where } \ell\notin\dom{D'} \\
\ea\]
Note that we take the join of the presence types when the labels are
present in both extensions.

We write $L\cup L'$ and $L\cap L'$ for the standard union and
intersection of multisets.

We define the join of modalities $\mu\join_F\nu$ at mode $F$ as
follows, where
if the join operation succeeds, we have $\mu_F\To (\mu\join_F\nu)_F$
and $\nu_F\To (\mu\join_F\nu)_F$, and for any other $\zeta$ such that
$\mu_F\To\zeta_F$ and $\nu_F\To\zeta_F$, we have $(\mu\join_F\nu)_F
\To \zeta_F$.
\[\ba{r@{\;}c@{\;}lcl}
\aeq{E} &\join_F& \aeq{E'} &=& \aeq{E\join E'} \\
\aeq{E} &\join_F& \adj{L}{D} &=&
\begin{cases}
  \adj{L}{D\join E}, &\text{if } E \subtype D\join E +(F-L) \\
  \text{fail}, &\text{otherwise}
\end{cases}
\\
\adj{L}{D} &\join_F& \adj{L'}{D'} &=&
\begin{cases}
  \adj{L\cap L'}{D\wedge D'}, &
    \bl
    \text{if } \adj{L}{D}_F\To\adj{L\cap L'}{D \wedge D'}_F \\
    \text{and } \adj{L'}{D'}_F\To\adj{L\cap L'}{D \wedge D'}_F \\
    \el\\
  \text{fail},& \text{otherwise}
\end{cases}
\\

\ea\]
The third case may appear surprising since it takes the meet of
masks and extensions.
It works because the \mtylab{Shrink} rule in \Cref{sec:modalities}
allows us to remove corresponding elements from masks and extensions
simultaneously.
The side conditions guarantee that both modalities can be transformed
to their join.

The introduction rules of data types are standard.
For their elimination, we have both checking and inference version.
Both \btylab{CrispPair} and \btylab{CrispSum} extract the top-level
modalities of the data values and distribute them to the types of the
variables for their components.
We also use the join operation to unify the different branches of
\btylab{CrispSumInfer}.

\begin{figure}[htbp]
\raggedright
\boxed{\typmi{\Gamma}{M}{A}{E}{}}
\boxed{\typmc{\Gamma}{M}{A}{E}}
\hfill

\begin{mathpar}
\inferrule*[Lab=\btylab{Var}]
{
  \nu_F = \locks{\Gamma'} : E\to F\\
  B = \meta{across}(\Gamma, A, \nu, F)
}
{\typmi{\Gamma,x:A,\Gamma'}{x}{B}{E}{}}

\inferrule*[Lab=\btylab{Mod}]
{
  \typmc{\Gamma,\lockwith{\mu_F}}{V}{A}{E} \\
  \mu_F : E \to F
}
{\typmc{\Gamma}{V}{\boxwith{\mu} A}{F}}

\inferrule*[Lab=\btylab{Forall}]
{
  \typmc{\Gamma,\alpha : K}{V}{A}{E} \\
}
{\typmc{\Gamma}{V}{\forall\alpha:K . A}{E}}

\inferrule*[Lab=\btylab{Annotation}]
{
  \typmc{\Gamma}{V}{A}{E} \\
}
{\typmi{\Gamma}{V:A}{A}{E}{}}

\inferrule*[Lab=\btylab{Switch}]
{
  \typmi{\Gamma}{M}{\boxwith{\ol{\mu}} G}{E}{} \\
  \Gamma\vdash ({\ol{\mu}}, G) \To {\ol{\nu}} \atmode{E}
}
{\typmc{\Gamma}{M}{\boxwith{\ol{\nu}} G}{E}}

\inferrule*[Lab=\btylab{Abs}]
{
  \typmc{\Gamma, x : A}{M}{B}{E}
}
{\typmc{\Gamma}{\lambda x . M}{A \to B}{E}}

\inferrule*[Lab=\btylab{App}]
{
  \typmi{\Gamma}{M}{\boxwith{\ol{\mu}}(A \to B)}{E}{} \\\\
  {\ol{\mu}}_E \To \one_E \\
  \typmc{\Gamma}{N}{A}{E} \\
}
{\typmi{\Gamma}{M\; N}{B}{E}{}}

\inferrule*[Lab=\btylab{TApp}]
{
  \typmi{\Gamma}{M}{\boxwith{\ol{\mu}}(\forall\alpha:K . B)}{E}{} \\\\
  {\ol{\mu}}_E \To \one_E \\
  \Gamma\vdash A : K
}
{\typmi{\Gamma}{M\;A}{B[A/\alpha]}{E}{}}

\inferrule*[Lab=\btylab{Do}]
{
  E = \ell:A\sto B, F \\\\
  \typmc{\Gamma}{M}{A}{E} \\
}
{\typmi{\Gamma}{\Do \ell \; M}{B}{E}{}}

\inferrule*[Lab=\btylab{MaskCheck}]
{
  \typmc{\Gamma,\lockwith{\amk{L}_F}}{M}{A}{F-L}
}
{\typmc{\Gamma}{\Mask_L\; M}{\boxwith{\amk{L}}A}{F}}

\inferrule*[Lab=\btylab{MaskInfer}]
{
  \typmi{\Gamma,\lockwith{\amk{L}_F}}{M}{A}{F-L}{}
}
{\typmi{\Gamma}{\Mask_L\; M}{\boxwith{\amk{L}}A}{F}{}}

\inferrule*[Lab=\btylab{HandlerInfer}]
{
  D = \{\ell_i : A_i \sto B_i\}_i \\
  \typmi{\Gamma, \lockwith{\aex{D}_E}}{M}{A}{D+E}{} \\
  \typmi{\Gamma, x : \boxwith{\aex{D}} A}{N}{B'}{E}{} \\\\
  [\typmi{\Gamma, p_i : A_i, {r_i}: B_i \to B'}{N_i}{B_i}{E}{}]_i \\
  B = B' (\join_{\Gamma,E} B_i)_i
}
{\typmi{\Gamma}{\Handle\;M\With \{\Ret x \mapsto N\} \uplus \{ (\ell_i:A_i\sto B_i)\;p_i\;r_i \mapsto N_i \}_i}{B}{E}{}}

\inferrule*[Lab=\btylab{HandlerCheck}]
{
  D = \{\ell_i : A_i \sto B_i\}_i \\
  \typmi{\Gamma, \lockwith{\aex{D}_E}}{M}{A}{D+E}{} \\
  \typmc{\Gamma, x : \boxwith{\aex{D}} A}{N}{B}{E} \\\\
  [\typmc{\Gamma, p_i : A_i, {r_i}: B_i \to B}{N_i}{B}{E}]_i \\
}
{\typmc{\Gamma}{\Handle\;M\With \{\Ret x \mapsto N\} \uplus \{ (\ell_i:A_i\sto B_i)\;p_i\;r_i \mapsto N_i \}_i}{B}{E}}

\inferrule*[Lab=\btylab{Pair}]
{
  \typmc{\Gamma}{M}{A}{E} \\
  \typmc{\Gamma}{N}{B}{E} \\
}
{\typmc{\Gamma}{(M,N)}{\Pair{A}{B}}{E}}

\inferrule*[Lab=\btylab{Inl}]
{
  \typmc{\Gamma}{M}{A}{E} \\
}
{\typmc{\Gamma}{\Inl M}{A+B}{E}}

\inferrule*[Lab=\btylab{Inr}]
{
  \typmc{\Gamma}{M}{B}{E} \\
}
{\typmc{\Gamma}{\Inr M}{A+B}{E}}

\inferrule*[Lab=\btylab{CrispPairInfer}]
{
  \typmi{\Gamma}{V}{\boxwith{\ol{\mu}}(\Pair{A}{B})}{E}{} \\\\
  \typmi{\Gamma, x:\boxwith{\ol{\mu}} {A}, y:\boxwith{\ol{\mu}} {B}}{M}{A'}{E}{}
}
{\typmi{\Gamma}{\Case V\Of (x,y)\mapsto M}{A'}{E}{}}

\inferrule*[Lab=\btylab{CrispPairCheck}]
{
  \typmi{\Gamma}{V}{\boxwith{\ol{\mu}}(\Pair{A}{B})}{E}{} \\\\
  \typmc{\Gamma, x:\boxwith{\ol{\mu}} {A}, y:\boxwith{\ol{\mu}} {B}}{M}{A'}{E}{}
}
{\typmc{\Gamma}{\Case V\Of (x,y)\mapsto M}{A'}{E}{}}

\inferrule*[Lab=\btylab{CrispSumInfer}]
{
  \typmi{\Gamma}{V}{\boxwith{\ol{\mu}}(A+B)}{E}{} \\\\
  \typmi{\Gamma, x:\boxwith{\ol{\mu}}{A}}{M_1}{A_1}{E}{} \\\\
  \typmi{\Gamma, y:\boxwith{\ol{\mu}}{B}}{M_2}{A_2}{E}{}
}
{
  {\Gamma}\vdash {\Case V\Of \{\Inl x \mapsto M_1, \Inr y\mapsto M_2\}} \\\\
  \mathrel{\gray{\Rightarrow}} {A_1\join_{\Gamma,E} A_2}\atmode{E}{}
}

\inferrule*[Lab=\btylab{CrispSumCheck}]
{
  \typmi{\Gamma}{V}{\boxwith{\ol{\mu}}(A+B)}{E}{} \\\\
  \typmc{\Gamma, x:\boxwith{\ol{\mu}}{A}}{M_1}{A'}{E}{} \\\\
  \typmc{\Gamma, y:\boxwith{\ol{\mu}}{B}}{M_2}{A'}{E}{}
}
{
  {\Gamma}\vdash {\Case V\Of \{\Inl x \mapsto M_1, \Inr y\mapsto M_2\}} \\\\
  \mathrel{\gray{\Leftarrow}}{A'}\atmode{E}{}
}
\end{mathpar}
\caption{Bidirectional typing rules for \SurfaceMet.}
\label{fig:typing-maetel}
\end{figure}

\subsection{Elaboration and Implementation}
\label{app:surface-lang-implementation}

\renewcommand{\transto}[1]{\ \hl{\dashrightarrow #1}}
\renewcommand{\fortrans}[1]{\hl{#1}}

It is easy to elaborate \SurfaceMet terms into \Met terms in a
type-directed manner.
We formally define the elaboration alongside the typing rules in
\Cref{fig:elaboration-maetel-one,fig:elaboration-maetel-two}.
We have $\typmi{\Gamma}{M}{A}{E}{}\transto{N}$ for the inference mode
and $\typmc{\Gamma}{M}{A}{E}\transto{N}$ for the checking mode, both
of which elaborates $M$ in \SurfaceMet to $N$ in \Met.

We use the following auxiliary functions in the elaboration.

\[\ba{rcl}
\meta{unmod}(M; \ol{\mu}) &=& (\lambda y . \Letmod_{\ol{\mu}}\; y = y \In y)\; M
  \\
\meta{unvar}(x; A; M) &=& \Letmod_{\ol{\mu}}\; {x} = x \In M
  \quad\text{ where } A = \boxwith{\ol{\mu}}G\\
\\
\meta{join}_{\Gamma,E}(M_1 : \boxwith{\ol{\mu_1}}G_1, M_2 : \boxwith{\ol{\mu_2}}G_2,\cdots) &=&
  \bl
  (\Letmod_{\ol{\mu_1}}\; x = M_1 \In \Mod_{\ol\zeta}\; x,\\
  \hphantom{(}\Letmod_{\ol{\mu_2}}\; x = M_2 \In \Mod_{\ol\zeta}\; x,
  \cdots)
  \el \\
\span\span\quad\text{ where } \boxwith{\ol{\mu_1}}G_1 \join_{\Gamma,E} \boxwith{\ol{\mu_2}}G_2\join_{\Gamma,E}\cdots = \boxwith{\ol{\zeta}}G
\\
\ea\]

The core idea of the elaboration is similar to the greedy unboxing
strategy of the encoding we have in \Cref{app:spec-encoding}.
For \btylab{Abs} (and also handler rules), we immediately fully unbox
the bound variables.
For \btylab{Mod}, we insert explicit boxing. For \btylab{Var}, we
re-box them with the appropriate modality $\zeta$.
For \btylab{Switch}, we insert an unboxing followed by boxing.
For \btylab{App} and \btylab{TApp}, we unbox $\ol{\mu}$.
For \btylab{HandlerCheck} and \btylab{HandlerInfer}, we unbox the
bound variables in handler clauses. We do not need to unbox the
continuation functions since they have no top-level modality.
For \btylab{CrispPairCheck/Infer} and \btylab{CrispSumCheck/Infer}, we
unbox $V$ before case splitting.
Also, when the $\wedge_E$ operation is used (such as in
\btylab{CrispSumInfer} and \btylab{HandlerInfer}), we use $\meta{join}$ to
unbox the terms correspondingly.

\begin{figure}[htbp] \rulesize
\raggedright
\boxed{\typmi{\Gamma}{M}{A}{E}{}\transto{N}}
\boxed{\typmc{\Gamma}{M}{A}{E}\transto{N}}
\hfill

\begin{mathpar}
\inferrule*[Lab=\btylab{Var}]
{
  \nu_F = \locks{\Gamma'} : E\to F\\
  \boxwith{\ol{\zeta}}G = \meta{across}(\Gamma, A, \nu, F) \\
}
{\typmi{\Gamma,x:A,\Gamma'}{{x}}{\boxwith{\ol{\zeta}}G}{E}{} \transto{\Mod_{\ol{\zeta}}\;{x}}}

\inferrule*[Lab=\btylab{Mod}]
{
  \mu_F : E \to F \\\\
  \typmc{\Gamma,\lockwith{\mu_F}}{V}{A}{E} \transto{V'}
}
{\typmc{\Gamma}{V}{\boxwith{\mu} A}{F} \transto{\Mod_\mu\;V'}}

\inferrule*[Lab=\btylab{Forall}]
{
  \typmc{\Gamma,\alpha : K}{V}{A}{E} \transto{V'} \\
}
{\typmc{\Gamma}{V}{\forall\alpha:K . A}{E} \transto{\Lambda\alpha^K.V'}}

\inferrule*[Lab=\btylab{Annotation}]
{
  \typmc{\Gamma}{V}{A}{E} \transto{V'} \\
}
{\typmi{\Gamma}{V:A}{A}{E}{} \transto{V'}}

\inferrule*[Lab=\btylab{Switch}]
{
  \typmi{\Gamma}{M}{\boxwith{\ol{\mu}} G}{E}{} \transto{M'} \\
  (G, {\ol{\mu}}) \To {\ol{\nu}} \atmode{E} \\\\
  \fortrans{N = \Letmod_{\ol{\mu}}\; x = M' \In \Mod_{\ol{\nu}}\; x}
}
{\typmc{\Gamma}{M}{\boxwith{\ol{\nu}} G}{E} \transto{N}}

\inferrule*[Lab=\btylab{Abs}]
{
  \typmc{\Gamma, x : A}{M}{B}{E} \transto{M'} \\\\
  \fortrans{M'' = {\lambda x . \meta{unvar}(x;A;M')}}
}
{\typmc{\Gamma}{\lambda x . M}{A \to B}{E} \transto{M''}}

\inferrule*[Lab=\btylab{App}]
{
  \typmi{\Gamma}{M}{\boxwith{\ol{\mu}}(A \to B)}{E}{}
  \transto{M'} \\
  {\ol{\mu}}_E \To \one_E \\\\
  \fortrans{M'' = \meta{unmod}(M';\ol{\mu})} \\
  \typmc{\Gamma}{N}{A}{E} \transto{N'} \\
}
{\typmi{\Gamma}{M\; N}{B}{E}{} \transto{M''\;N'}}

\inferrule*[Lab=\btylab{TApp}]
{
  \typmi{\Gamma}{M}{\boxwith{\ol{\mu}}(\forall\alpha:K . B)}{E}{}
  \transto{M'} \\
  {\ol{\mu}}_E \To \one_E \\\\
  \fortrans{M'' = \meta{unmod}(M';\ol{\mu})} \\
  \Gamma\vdash A : K
}
{\typmi{\Gamma}{M\;A}{B[A/\alpha]}{E}{} \transto{M''\;A}}

\inferrule*[Lab=\btylab{Do}]
{
  E = \ell:A\sto B, F \\\\
  \typmc{\Gamma}{M}{A}{E} \transto{M'} \\
}
{\typmi{\Gamma}{\Do \ell \; M}{B}{E}{} \transto{\Do\ell\;M'}}

\inferrule*[Lab=\btylab{MaskCheck}]
{
  \typmc{\Gamma,\lockwith{\amk{L}_F}}{M}{A}{F-L} \transto{M'}
}
{\typmc{\Gamma}{\Mask_L\; M}{\boxwith{\amk{L}}A}{F} \transto{\Mask_L\;M'}}

\inferrule*[Lab=\btylab{MaskInfer}]
{
  \typmi{\Gamma,\lockwith{\amk{L}_F}}{M}{A}{F-L}{} \transto{M'}
}
{\typmi{\Gamma}{\Mask_L\; M}{\boxwith{\amk{L}}A}{F}{} \transto{\Mask_L\;M'}}

\inferrule*[Lab=\btylab{HandlerCheck}]
{
  D = \{\ell_i : A_i \sto B_i\}_i \\
  \typmi{\Gamma, \lockwith{\aex{D}_E}}{M}{A}{D+E}{} \transto{M'} \\
  \typmc{\Gamma, x : \boxwith{\aex{D}} A}{N}{B}{E} \transto{N'} \\
  [\typmc{\Gamma, p_i : A_i, {r_i}: B_i \to B}{N_i}{B}{E}\transto{N_i'}]_i \\
}
{\typmc{\Gamma}{\Handle\;M\With \{\Ret x \mapsto N\} \uplus \{ (\ell_i:A_i\sto B_i)\;p_i\;r_i \mapsto N_i \}_i}{B}{E} \\
\transto{\Handle\;M'\With \{\Ret x \mapsto \meta{unvar}(x;\boxwith{\aex{D}}A;N')\}
  \uplus \{ (\ell_i:A_i\sto B_i)\;p_i\;r_i \mapsto \meta{unvar}(p_i; A_i; N_i') \}_i}
}

\inferrule*[Lab=\btylab{HandlerInfer}]
{
  D = \{\ell_i : A_i \sto B_i\}_i \\
  \typmi{\Gamma, \lockwith{\aex{D}_E}}{M}{A}{D+E}{} \transto{M'} \\
  \typmi{\Gamma, x : \boxwith{\aex{D}} A}{N}{B'}{E}{} \transto{N'} \\
  [\typmi{\Gamma, p_i : A_i, {r_i}: B_i \to B'}{N_i}{B_i}{E}{} \transto{N_i'}]_i\\
  B = B' (\join_{\Gamma,E} B_i)_i \\
  \fortrans{N'',(N_i'')_i = \meta{join}_{\Gamma,E}(N' : B',(N_i' : B_i')_i)}
}
{\typmi{\Gamma}{\Handle\;M\With \{\Ret x \mapsto N\} \uplus \{ (\ell_i:A_i\sto B_i)\;p_i\;r_i \mapsto N_i \}_i}{B}{E}{} \\
\transto{\Handle\;M'\With \{\Ret x \mapsto \meta{unvar}(x;\boxwith{\aex{D}}A;N'')\} \uplus
  \{ (\ell_i:A_i\sto B_i)\;p_i\;r_i \mapsto \meta{unvar}(p_i; A_i; N_i'') \}_i}
}

\end{mathpar}
\caption{Elaboration from \SurfaceMet to \Met (part I).}
\label{fig:elaboration-maetel-one}
\end{figure}

\begin{figure}[htbp] \rulesize
\raggedright
\boxed{\typmi{\Gamma}{M}{A}{E}{}\transto{N}}
\boxed{\typmc{\Gamma}{M}{A}{E}\transto{N}}
\hfill

\begin{mathpar}
\inferrule*[Lab=\btylab{Pair}]
{
  \typmc{\Gamma}{M}{A}{E} \transto{M'} \\
  \typmc{\Gamma}{N}{B}{E} \transto{N'} \\
}
{\typmc{\Gamma}{(M,N)}{\Pair{A}{B}}{E} \transto{(M', N')}}

\inferrule*[Lab=\btylab{Inl}]
{
  \typmc{\Gamma}{M}{A}{E} \transto{M'} \\
}
{\typmc{\Gamma}{\Inl M}{A+B}{E} \transto{\Inl M'}}

\inferrule*[Lab=\btylab{Inr}]
{
  \typmc{\Gamma}{M}{B}{E} \transto{M'} \\
}
{\typmc{\Gamma}{\Inr M}{A+B}{E} \transto{\Inr M'}}

\inferrule*[Lab=\btylab{CrispPairInfer}]
{
  \typmi{\Gamma}{V}{\boxwith{\ol{\mu}}(\Pair{A}{B})}{E}{} \transto{V'} \\\\
  \typmi{\Gamma, x:\boxwith{\ol{\mu}} {A}, y:\boxwith{\ol{\mu}} {B}}{M}{A'}{E}{}
  \transto{M'}
}
{\typmi{\Gamma}{\Case V\Of (x,y)\mapsto M}{A'}{E}{}\\\\
\transto{
  \Letmod_{\ol{\mu}}\; x = V' \In \Casey_{\ol{\mu}}\; x \Of (x,y)\mapsto M'
}
}

\inferrule*[Lab=\btylab{CrispPairCheck}]
{
  \typmi{\Gamma}{V}{\boxwith{\ol{\mu}}(\Pair{A}{B})}{E}{} \transto{V'} \\\\
  \typmc{\Gamma, x:\boxwith{\ol{\mu}} {A}, y:\boxwith{\ol{\mu}} {B}}{M}{A'}{E}{}
  \transto{M'}
}
{\typmc{\Gamma}{\Case V\Of (x,y)\mapsto M}{A'}{E}{}\\\\
\transto{
  \Letmod_{\ol{\mu}}\; x = V' \In \Casey_{\ol{\mu}}\; x \Of (x,y)\mapsto M'
}
}

\inferrule*[Lab=\btylab{CrispSumInfer}]
{
  \typmi{\Gamma}{V}{\boxwith{\ol{\mu}}(A+B)}{E}{} \transto{V'} \\
  \typmi{\Gamma, x:\boxwith{\ol{\mu}}{A}}{M_1}{A_1}{E}{} \transto{M_1'} \\
  \typmi{\Gamma, y:\boxwith{\ol{\mu}}{B}}{M_2}{A_2}{E}{} \transto{M_2'} \\
  \fortrans{M_1'', M_2'' = \meta{join}_{\Gamma,E}(M_1:A_1, M_2:A_2)}
}
{\typmi{\Gamma}{\Case V\Of \{\Inl x \mapsto M_1, \Inr y\mapsto M_2\}}{A_1\join_{\Gamma,E} A_2}{E}{}
\\\\ \transto{
  \Letmod_{\ol{\mu}}\; x = V' \In \Casey_{\ol{\mu}}\; x \Of \{\Inl x \mapsto M_1'', \Inr y\mapsto M_2''\}
}
}

\inferrule*[Lab=\btylab{CrispSumCheck}]
{
  \typmi{\Gamma}{V}{\boxwith{\ol{\mu}}(A+B)}{E}{} \transto{V'} \\
  \typmc{\Gamma, x:\boxwith{\ol{\mu}}{A}}{M_1}{A'}{E}{} \transto{M_1'} \\
  \typmc{\Gamma, y:\boxwith{\ol{\mu}}{B}}{M_2}{A'}{E}{} \transto{M_2'} \\
}
{\typmc{\Gamma}{\Case V\Of \{\Inl x \mapsto M_1, \Inr y\mapsto M_2\}}{A'}{E}{}
\\\\ \transto{
  \Letmod_{\ol{\mu}}\; x = V' \In \Casey_{\ol{\mu}}\; x \Of \{\Inl x \mapsto M_1', \Inr y\mapsto M_2'\}
}
}
\end{mathpar}
\caption{Elaboration from \SurfaceMet to \Met (part II).}
\label{fig:elaboration-maetel-two}
\end{figure}

We implement a prototype of \SurfaceMet with all features mentioned
above as well as algebraic data types and pattern matching.
We do not encounter any challenges in generalising the pairs and sums
to algebraic data types.
In our implementation, we do not strictly follow the conventional
bidirectional typing approach, which distinguishes between the
checking and inference mode as in the above rules.
Instead, we use the form $\typmp{\Gamma}{M}{A}{E}{S}$ where each rule
has an input shape $S$ and output type $A$, similar to contextual
typing~\citep{XuO24}.
A shape $S$ is a type with some holes.
When $S$ is empty, we are in inference mode; when $S$ is a complete
type, we are in checking mode; otherwise, we can still use $S$ to pass
in partial type information which could allow us to type check more
programs.

Our implementation supports polymorphism with explicit type
instantiation.
As we have discussed in \Cref{sec:surface-lang}, it is natural to
extend it with inference for polymorphism, following the literature on
bidirectional typing~\citep{DunfieldK13,ZhaoO22,CuiJO23}.
We plan to explore this extension in the future.

\end{document}